\newcommand{\dwedge}{\curlywedge}
\newcommand{\om}{\omega}
\newcommand{\epsi}{\epsilon}
\newcommand{\nn}{\nonumber}
\newcommand{\scrF}{\mathscr{F}}
\newcommand{\midwedge}{\text{\Large$\wedge$}}
\newcommand{\midfwedge}{\text{\large$\wedge$}}
\newcommand{\dsf}{{\mathsf{d}}}
\def\DD{{\rm D}}
\def\slasha#1{\setbox0=\hbox{$#1$}#1\hskip-\wd0\hbox to\wd0{\hss\sl/\/\hss}}
\def\periodb#1{\setbox0=\hbox{$#1$}#1\hskip-\wd0\hbox to\wd0{-}}
\newcommand{\CA}{\mathcal{A}}    			% cal-letters
\newcommand{\CB}{\mathcal{B}}
\newcommand{\CC}{\mathcal{C}}
\newcommand{\CCD}{\mathscr{D}}
\newcommand{\CF}{\mathcal{F}}
\newcommand{\CJ}{\mathcal{J}}
\newcommand{\CCM}{\mathscr{M}}
\newcommand{\CP}{\mathcal{P}}
\newcommand{\CCP}{\mathscr{P}}
\newcommand{\CT}{\mathcal{T}}
\newcommand{\CV}{\mathcal{V}}
\newcommand{\CCV}{\mathscr{V}}
\newcommand{\CCW}{\mathscr{W}}
\newcommand{\CX}{\mathcal{X}}
\newcommand{\CZ}{\mathcal{Z}}
\newcommand{\CE}{\mathcal{E}}
\newcommand{\frg}{\mathfrak{g}}				% frak-letters
\newcommand{\FR}{\mathbbm{R}}     			% field of real numbers
\newcommand{\RZ}{\mathbbm{Z}}     			% ring of integers
\newcommand{\dd}{\mathrm{d}}     			% total differential
\newcommand{\LL}{\mathrm{L}}     			% adjoint action
\newcommand{\sG}{\mathsf{G}}
\newcommand{\sSO}{\mathsf{SO}}
\newcommand{\comment}[1]{}     				% remark
\def\tyng(#1){\hbox{\tiny$\yng(#1)$}}			% small Young diagram
\def\tyoung(#1){\hbox{\tiny$\young(#1)$}}			% small Young diagram
\newcommand{\beq}{\begin{eqnarray}}
\newcommand{\eeq}{\end{eqnarray}}
\definecolor{outrageousorange}{rgb}{1.0, 0.43, 0.29}
\newcommand{\Tr}{\mathrm{Tr}}
\theoremstyle{plain}
\newtheorem{lemma}[equation]{Lemma}
\theoremstyle{definition}
\theoremstyle{remark}
\def\beq{\begin{equation}}
\def\bee{\begin{equation}}
\def\eeq{\end{equation}}
\def\bea{\begin{eqnarray}}
\def\eea{\end{eqnarray}}
\def\ba{\begin{align}}
\def\ea{\end{align}}
\numberwithin{equation}{section}
\begin{document}

\begin{titlepage}

\renewcommand{\thefootnote}{\fnsymbol{footnote}}

\begin{flushright}
\small
{\sf EMPG--20--03}
\end{flushright}

\begin{center}

\vspace{1cm}

\baselineskip=24pt

{\Large\bf $\boldsymbol{L_{\infty}}$-Algebras of Einstein--Cartan--Palatini Gravity}

\baselineskip=14pt

\vspace{1cm}

{\bf Marija Dimitrijevi\'c \'Ciri\'c}${}^{\,(a)\,,\,}$\footnote{Email: \ {\tt
    dmarija@ipb.ac.rs}} \ \ \ \ \ {\bf Grigorios Giotopoulos}${}^{\,(b)\,,\,}$\footnote{Email: \ {\tt
    gg42@hw.ac.uk}} \\[2mm] {\bf Voja Radovanovi\'c}${}^{\,(a)\,,\,}$\footnote{Email: \ {\tt
    rvoja@ipb.ac.rs}} \ \ \ \ \ {\bf Richard
  J. Szabo}${}^{\,(b)\,,\,}$\footnote{Email: \ {\tt R.J.Szabo@hw.ac.uk}}
\\[6mm]

\noindent  ${}^{(a)}$ {\it Faculty of Physics, University of
  Belgrade}\\ {\it Studentski trg 12, 11000 Beograd, Serbia}
\\[3mm]

\noindent  ${}^{(b)}$ {\it Department of Mathematics, Heriot-Watt University\\ Colin Maclaurin Building,
  Riccarton, Edinburgh EH14 4AS, U.K.}\\ and {\it Maxwell Institute for
Mathematical Sciences, Edinburgh, U.K.} \\ and {\it Higgs Centre
for Theoretical Physics, Edinburgh, U.K.}
\\[30mm]

\end{center}

\begin{abstract}
\noindent
We give a detailed account of the cyclic $L_\infty$-algebra
formulation of general relativity with cosmological constant in the
Einstein--Cartan--Palatini formalism on
spacetimes of arbitrary dimension and signature, which encompasses all
symmetries, field equations and Noether identities of gravity without matter fields. We present a local
formulation as
well as a global covariant framework, and  an
explicit isomorphism between the two $L_\infty$-algebras in the case
of parallelizable spacetimes. By  duality, we show that our $L_\infty$-algebras
describe the complete BV--BRST formulation of
Einstein--Cartan--Palatini gravity. We give a general description of
how to extend on-shell redundant symmetries in topological gauge theories
to off-shell correspondences between symmetries in terms of
quasi-isomorphisms of $L_\infty$-algebras. We use this to extend the on-shell
equivalence between gravity and
Chern--Simons theory in three dimensions to an explicit 
$L_\infty$-quasi-isomorphism between differential graded Lie
algebras which applies off-shell and for degenerate dynamical metrics. In contrast, we show that there is no morphism between the
$L_\infty$-algebra underlying
gravity and the differential graded Lie algebra
governing $BF$ theory in four dimensions. 
\end{abstract}

\end{titlepage}

{\baselineskip=10pt
\tableofcontents
}

\setcounter{footnote}{0}
\renewcommand{\thefootnote}{\arabic{footnote}}

\newpage

\section{Introduction}

Recent developments in string theory have suggested that the
low-energy effective dynamics of closed strings in non-geometric flux
compactifications may be governed by noncommutative and even
nonassociative deformations of
gravity~\cite{Blumenhagen:2010hj,Lust:2010iy,Blumenhagen:2011ph,Mylonas:2012pg,Blumenhagen:2013zpa,Mylonas:2013jha}. The
framework of nonassociative differential geometry was developed in
this context
in~\cite{Aschieri:2015roa,Blumenhagen:2016vpb,Aschieri:2017sug}. However,
the metric aspects of the theory have proved more difficult to develop
fully; in particular, a suitable generalization of the
Einstein--Hilbert action is not known. On the other hand,
noncommutative and nonassociative deformations of gravity are possible
to study in the Einstein--Cartan
formulation~\cite{Barnes:2015uxa,Barnes:2016cjm}, and this is the main
motivation behind the present paper: We wish to understand the
symmetries of these deformed gravity theories, their field equations
and their Lagrangian formulations.

One possible
path towards systematically understanding the symmetries and dynamics of noncommutative and nonassociative gravity is through the language of $L_\infty$-algebras.
$L_\infty$-algebras are generalizations of differential graded Lie algebras with infinitely-many graded antisymmetric
brackets, related to each other by higher homotopy versions of the
Jacobi identity. Their first appearence in the physics literature can be traced
back to higher spin gauge theories, where closure of the gauge algebra
necessitates using field dependent gauge parameters~\cite{Berends:1984rq}.
They first appeared systematically
in closed bosonic string field theory, where they govern the
``generalized'' gauge symmetries and
dynamics of the theory~\cite{Zwiebach:1992ie}: both the gauge
transformations and field equations of the theory involve
infinitely-many higher brackets of a cyclic $L_\infty$-algebra. They were systematically treated in the mathematics literature~\cite{LadaStasheff92}, where
they were shown to be dual to differential graded commutative
algebras. In~\cite{Linfty} it was suggested that the complete data of
classical field theories fit into truncated versions of $L_\infty$-algebras
with finitely many non-vanishing brackets, again encoding both gauge
transformations and dynamics; the prototypical examples are pure gauge
theories such as Yang--Mills theory and Chern--Simons theory. This was shown much earlier by~\cite{Stasheff97} to be a consequence of the duality with
the BV--BRST formalism, the details of which were further explained recently by~\cite{BVChristian}. Indeed, the BV--BRST complex is the familiar
physics incarnation of the duality between differential graded commutative algebras and
$L_\infty$-algebras: One may directly convert the BV complex of a classical field theory to
an $L_\infty$-algebra, and \emph{vice versa}. This explains why the
``generalized'' gauge symmetries and dynamics of every classical
perturbative field theory are organised by an underlying
$L_\infty$-algebra structure.\footnote{This holds only for polynomial field theories whose underlying
spaces of fields are vector spaces or affine spaces. Otherwise, one is restricted to the pertubation theory around a classical solution.}

The framework of $L_\infty$-algebras naturally seems to allow the
possibility for encoding noncommutativity and
nonassociativity, particularly the covariance of curvature fields and closure of
the gauge algebras which no longer follow the usual classical rules in
general. Indeed, $L_\infty$-algebras are known to play a crucial role in deformation theory, a famous example being Kontsevich's
formality theorem in deformation quantization whose proof is based on
$L_\infty$-quasi-isomorphisms of differential graded Lie algebras~\cite{Kontsevich:1997vb}. 
It was shown in~\cite{Blumenhagen:2018kwq,Kupriyanov:2019cug} that
noncommutative and nonassociative versions of the standard gauge
theories fit the same prescription as their classical counterparts,
though typically with infinitely-many brackets. This suggests the possibility of encoding nonassociative gravity in
the language of $L_\infty$-algebras, a setting where the symmetries
and dynamics would become more transparent and perhaps
an action functional could even be identified. However, there is
even more power in the approach: The $L_\infty$-algebra formulation can also
treat non-Lagrangian field theories, which have no action principle.

Classical general relativity on a $d$-dimensional manifold $M$ with
metric $g$ and dynamics governed by the
Einstein--Hilbert action functional 
\begin{align} \label{eq:EHaction}
S_{\textrm{\tiny EH}}(g) = \frac1{2\,\kappa^2} \, \int_M\, 
  (R-2\,\Lambda) \, \sqrt{-g} \ \dd^dx
\end{align}
cannot be directly interpreted as a gauge theory of principal bundle
connections. Of course, it does define a `generalized' gauge theory in
a certain extended sense whose symmetries are diffeomorphisms of $M$, and the corresponding BV formalism can be
developed as in e.g.~\cite{Cattaneo:2015xca}. However, this requires
working on the space of non-degenerate metric tensors (as does the very
definition \eqref{eq:EHaction}), which is an open
subset of the vector space of all symmetric rank~$2$ tensors
on $M$, and so does not fit naturally into an
$L_\infty$-algebra framework wherein the space of dynamical fields is
required to be a vector space. The way around this is to remember that
the $L_\infty$-algebra formulation works at the perturbative level and
to linearize the theory by expanding the metric $g$ around a chosen background. The space of metric
fluctuations $h$ is now a vector space, but the $L_\infty$-algebra
formulation will involve infinitely-many brackets of $h$ from the expansion
of the Einstein equations coming from \eqref{eq:EHaction} about the fixed background, as described
in~\cite{Linfty}; the $L_\infty$-algebra approach to general
relativity in this linearized setting is also discussed
by~\cite{Reiterer:2018wcb,Nutzi:2018vkl}. 

To avoid the introduction of an infinity of brackets, one may instead appeal to the Einstein--Cartan formulation which
rewrites general relativity as a gauge
theory on a principal bundle over $M$; the corresponding action is the
Palatini action whose definition allows for degenerate configurations, hence having a linear space of fields. This is the theory that we shall work with in this
paper; we call it the Einstein--Cartan--Palatini (ECP)
formulation of gravity.\footnote{It is also (perhaps more correctly) known as the
Einstein--Cartan--Sciama--Kibble theory.} The purpose of this paper is
then twofold. Firstly, we will express ECP gravity in the framework of
$L_\infty$-algebras, which should be familiar to experts in the BV
formalism, but perhaps not to a broader audience; the BV--BRST
formulation in four dimensions is developed in~\cite{ECBV} and the $L_{\infty}$-algebras we present in $d=4$ may also be obtained by direct dualization, one of which we show explicitly. Working in the  framework of $L_{\infty}$-algebras,  including degenerate metrics, allows for inversion up to homotopy of the relevant morphisms, a fact we make use in the main text. The physical (or otherwise) nature of degenerate solutions has been long studied, see for instance~\cite{degenerated'AuriaRegge}~\cite{degenerateTseytlin}~\cite{degenerateGiddings}, but we shall not concern ourselves with this matter in this paper. Secondly,
the formalism of this paper is the first step towards setting up a
framework for a new approach to noncommutative and nonassociative
deformations of gravity, and a larger class of theories satisfying a certain module property. It is in the course of thinking about this
latter problem that we realised a complete and explicit account of the first order
formalism for general relativity in the framework of
$L_\infty$-algebras does not seem to be available in the literature,
and in the following we focus on the classical case; its
noncommutative and nonassociative deformations will be treated in
subsequent papers, see \cite{NCProc} for a glimpse of the advantages of the approach. We translate the ECP formulation of
gravity in arbitrary spacetime dimension $d$ 
into the $L_\infty$-algebra framework, including the dynamics, the
local gauge and diffeomorphism symmetries, and the corresponding Noether identities; as we work on a Lie algebraic
level, the signature of spacetime has no bearing on our calculations
and we shall usually keep it arbitrary. Our formalism sets the stage
for many future investigations into the role played by
$L_\infty$-algebras in classical general relativity. For instance, deformations of the ECP functional and equivalences to other field theories via $L_{\infty}$-quasi-isomorphisms could be investigated. Furthermore, one can reinterpret and construct on-shell tree-level graviton scattering
amplitudes as brackets for the minimal model corresponding to our ECP
$L_\infty$-algebra, along the lines of~\cite{Nutzi:2018vkl}. Moreover, quasi-isomorphisms have been recently shown to underlie spontaneous symmetry breaking in terms of perturbation theory in gauge theories \cite{SymmBreaking}, that is when the underlying $L_{\infty}$-algebras are interepreted as (derived) tangent complexes around classical solutions. Identifying such equivalences for perturbations of gravity around classical backgrounds, breaking full diffeomorphism symmetry, is certainly of interest.

From the perspective of non-geometric string theory, it is natural to
restrict the spacetime manifold $M$ to flat space $\FR^d$, hence all
bundles considered are trivial. In this paper we will focus mostly on
non-covariant (local) gauge transformations, which are applicable only
on parallelizable manifolds $M$. This enables us to make most contact
with the existing literature on the ECP approach to general
relativity, where global issues are typically ignored. Under some
constraints, this is not a big restriction. For example, all
orientable three-manifolds are parallelizable. In the initial value
problem in general relativity, one usually assumes that the spacetime $M$
is globally hyperbolic. In four dimensions, global hyperbolicity
implies $M\simeq \FR\times N$, so that if the Cauchy surface $N$ is orientable
then the spacetime $M$ is also parallelizable. However, one can also
consider more general spacetime manifolds where this approach is
insufficient. We shall address this point in the following and
demonstrate how to modify the $L_\infty$-algebra formulation to a
global covariant structure, which illustrates the power of working in
the full $L_\infty$-algebra picture; for example, while the local
dynamics of three-dimensional general relativity can be formulated entirely in terms of
differential graded Lie algebras, the covariant framework requires
extending to the larger category of $L_\infty$-algebras as higher
brackets are introduced. The covariant structure we describe is essentially dual to the BV-BRST formalism developed in \cite{ECBV}, and in this sense should be viewed as a review, however in the $L_{\infty}$-algebra picture we clarify the geometric meaning behind the properties of the covariant Lie derivatives and different terms appearing in the BV differential.

Nevertheless, the non-covariant $L_\infty$-algebra structure for
parallelizable spacetimes will avoid the global issues involved when
twisting the framework to noncommutative principal
bundles~\cite{NCProc}. The global approach that we present in this
paper would be much more involved from the perspective of twist deformation quantization,
where one would have to deal with finite noncommutative gauge
transformations as putative ``transition functions''. Furthermore,
the noncommutative theory developed in~\cite{NCProc} and subsequent papers will be viewed as
a low-energy effective theory of gravity which encodes
corrections due to noncommutativity. Since this asymptotic expansion
of the deformation
quantization is formal, and usually only makes sense on affine spaces, global issues are ignored from the outset so
that the extraction of explicit first order corrections is immediate. 

\subsubsection*{Summary of results and outline}

The main results and outline of the remainder of this paper are as
follows. Sections~\ref{sec:Linftyreview} and~\ref{sec:ECPgravity} 
review the main background material needed in the rest of the
paper. In Section~\ref{sec:Linftyreview} we introduce the basic
notions surrounding cyclic $L_\infty$-algebras and their morphisms
that we will use, and how they completely determine the symmetries and
dynamics of classical perturbative field theories with generalized
gauge symmetries; we further describe the duality with differential
graded commutative algebras which connects the $L_\infty$-algebra formalism
with the BV--BRST formalism. In Section~\ref{sec:ECPgravity} we
introduce the geometric formulation of Einstein--Cartan--Palatini
gravity with cosmological constant $\Lambda$ in arbitrary spacetime dimension $d$ and signature, including
a description of its local gauge and diffeomorphism symmetries, its
action functional and field equations, and the
corresponding Noether identities.

In Section~\ref{sec:TQFT} we consider the $L_\infty$-algebra
formulation of a simple class of topological field theories that are
related to our gravity theories, and which have gauge and shift
symmetries. The $L_\infty$-algebras in these instances are
differential graded Lie algebras. By virtue of their topological
character, these theories are diffeomorphism-invariant, but
diffeomorphisms are redundant symmetries: they are equivalent on-shell
to local gauge and shift transformations with field dependent
parameters. We demonstrate how to 
extend this correspondence between symmetries to an \emph{off-shell}
equivalence by explicitly constructing a
quasi-isomorphism to an extended $L_\infty$-algebra. Working in the
larger category of $L_\infty$-algebras is crucial for this equivalence, as the
morphism is not a quasi-isomorphism of differential graded Lie algebras. In dimensions $d\geq4$ these theories
also exhibit a simple instance of on-shell higher gauge symmetries. These
observations are used in our later attempts to connect ECP theories
with topological gauge theories in the $L_\infty$-algebra framework. 

In Section~\ref{sec:ECPLinftyalg}, we present our main construction of
the cyclic $L_\infty$-algebra determining ECP gravity, in any dimension $d$
and spacetime signature. We give an explicit construction of the
brackets and the cyclic pairing underlying both local gauge and
diffeomorphism symmetries, the field equations and corresponding Noether identities, and the action
functional. The structure of the $L_\infty$-algebra is largely
dependent on the spacetime dimension $d$. For instance, only for $d=3$
is the $L_\infty$-algebra a differential graded Lie algebra. We
proceed in Section~\ref{sec:BV-BRST} to review the BV--BRST
formalism for ECP gravity in arbitrary dimension and signature, developed recently in \cite{ECBV} for d=4, and
show that it is dual to our formulation in terms of $L_\infty$-algebras.

The
construction works either locally on $M$, or globally if the spacetime
$M$ is parallelizable. We discuss the incompatibility issues with this
construction in the case of non-parallelizable spacetimes where the
underlying bundles need not be trivial, and more generally (even when
all bundles are trivial) the incompatibility of infinitesimal
diffeomorphisms of the physical spacetime with finite gauge transformations. We rectify the
problem by defining a `covariant' version of our $L_\infty$-algebra, dual to the construction of \cite{ECBV},
which encompasses the symmetries and dynamics of ECP gravity on
general spacetimes with general bundles, and which is compatible with finite gauge transformations. We review the geometric meaning of the ``covariant'' Lie derivative appearing, which has already been  widely used in the first order literature and beyond, and clarify its relation with the new brackets. The covariant framework
illustrates the necessity of working in the category of $L_\infty$-algebras: As it
always introduces higher brackets, the covariant $L_\infty$-algebra is
never a differential graded Lie algebra.  It also demonstrates the
importance of including Noether identities into the underlying cochain
complex in order to capture the covariance of the Euler--Lagrange
derivatives of the theory. For parallelizable spacetimes, we present
a (strict) isomorphism between the local and covariant
$L_\infty$-algebras, dual to the symplectomorphism for four-dimensional
gravity in the BV formalism developed in~\cite{ECBV}.

We conclude by applying our constructions to some lower-dimensional cases
of particular interest. In Section~\ref{sec:3dgrav} we look at
three-dimensional gravity, whose underlying ECP $L_\infty$-algebra is
a differential graded Lie algebra. In this case, gravity is known to
be equivalent to a Chern--Simons gauge theory, where the
diffeomorphism symmetry is recovered on-shell by the gauge symmetries
of Chern--Simons theory. Using the framework we develop in
Section~\ref{sec:TQFT}, we construct an explicit 
$L_\infty$-quasi-isomorphism between the differential graded Lie algebras of the
two theories, which extends the equivalence both off-shell and to
degenerate metrics. This problem is also addressed within the strictly non-degenerate setting using the BV
formalism in~\cite{Cattaneo:2017ztd} by providing a (different) symplectomorphism to $BF$ theory. Our result may be viewed as an extension to the degenerate sector. The equivalence is also addressed in \cite{Berktav:2019sgl} using stacks. Finally, in
Section~\ref{sec:4dgrav} we consider four-dimensional gravity, whose
underlying ECP $L_\infty$-algebra is no longer a differential graded
Lie algebra. Abstract strictification theorems~\cite{Berger2007} imply
that any $L_\infty$-algebra is quasi-isomorphic to a differential
graded Lie algebra, though in practise the construction of the
quasi-isomorphism is very
difficult and not very convenient to make explicit. Applying this to four-dimensional
gravity, the strictification of its ECP $L_\infty$-algebra could possibly
correspond to some deformation of $BF$ theory in four
dimensions, as reviewed in~\cite{Freidel:2012np}. Indeed, we show that there cannot exist any
$L_\infty$-morphism between the $L_\infty$-algebra underlying
four-dimensional gravity and the differential graded Lie algebra of
$BF$ theory.

Two appendices at the end of the paper contain illustrative examples
of the details involved in the long cumbersome calculations required
to check the various homotopy relations for the ECP
$L_\infty$-algebras and the $L_\infty$-morphisms that we present in
the main text, and to establish the duality with the BV--BRST
complex of ECP gravity. In view of the duality, one may start from the BV-BRST framework and derive the algebras we present. However, the $L_\infty$-algebras presented in the text have been constructed using the bootstrap approach, unless otherwise explicitly noted. The duality then served as an additional consistency check with known results. Indeed, this work serves as a starting point in twisting the corresponding $L_{\infty}$-algebras and one needs not be fluent in BV--BRST to follow the procedure~\cite{NCProc}, which applies to a general class of theories whose $L_{\infty}$-algebras are modules of a certain Hopf algebra. For this reason, we present the material starting directly from the $L_{\infty}$-algebra picture while making contact with BV--BRST later on. We hope these detailed calculations are
instructive and provide a useful reference for further investigations
into the $L_\infty$-algebra picture of gravity. They are utilized
extensively in~\cite{NCProc} and subsequent papers.

\section{${L_{\infty}}$-algebras, differential graded algebras and
  field theory}
\label{sec:Linftyreview}

In this section we review the required algebraic constructions, and their applications to classical field theories, that are needed in this paper. 

\subsection{$L_{\infty}$-algebras}
\label{sec:Linfty}

We start by introducing notions of $L_\infty$-algebras, which form the central concept in this paper.

\subsubsection*{Brackets and homotopy relations}

An $L_\infty$-algebra is a $\RZ$-graded vector space $V=\bigoplus_{k\in \RZ}\, V_{k}$ equipped with graded antisymmetric multilinear maps
\begin{align*}
\ell_n: \midwedge^n V \longrightarrow V \ , \quad  v_1\wedge \dots\wedge v_n \longmapsto \ell_n (v_1,\dots,v_n)
\end{align*}
for each $n\geq1$, which we call $n$-brackets. The graded antisymmetry translates to 
\begin{equation}\label{eq:gradedantisym}
\ell_n (\dots, v,v',\dots) = -(-1)^{|v|\,|v'|}\, \ell_n (\dots, v',v,\dots) \ ,
\end{equation}
where we denote the degree of a homogeneous element $v\in V$ by $|v|$.
The $n$-bracket is a map of degree $|\ell_n|=2-n$, that is
\begin{equation*}
\big|\ell_n(v_{1}, \dots ,v_{n})\big| = 2-n +\sum_{j=1}^n \, |v_j| \ .
\end{equation*}

The $n$-brackets $\ell_n$ are required to fulfill infinitely many
identities ${\cal J}_n(v_1,\dots,v_n)=0$ for each $n\geq1$, called homotopy relations, with
\begin{align} \label{eq:calJndef}
{\cal J}_n(v_1,\dots,v_n) := \sum^n_{i=1}\, (-1)^{i\,(n-i)} \
  \sum_{\sigma\in{\rm Sh}_{i,n-i}} \, & \chi(\sigma;v_1,\dots,v_n) \\ &
                                                                      \qquad
                                                                      \times
                                                                      \ell_{n+1-i}\big(
                                                                      \ell_i(v_{\sigma(1)},\dots,
                                                                      v_{\sigma(i)}),
                                                                      v_{\sigma(i+1)},\dots
                                                                      ,v_{\sigma(n)}\big)
                                                                      \
                                                                        , \nn
\end{align}
where, for each $i=1,\dots,n$, the second sum runs over $(i,n-i)$-shuffled permutations
$\sigma\in S_n$ of degree $n$ which are restricted as 
\begin{equation*}
\sigma(1)<\dots<\sigma(i) \qquad \mbox{and} \qquad \sigma(i+1)<\dots < \sigma(n) \ .
\end{equation*}
The Koszul sign $\chi(\sigma;v_1,\dots,v_n)=\pm\, 1$ is determined from the grading by
\begin{align*}
v_{\sigma(1)}\wedge\cdots\wedge v_{\sigma(n)} = \chi(\sigma;v_1,\dots,v_n) \ v_1\wedge\cdots\wedge v_n \ .
\end{align*}

For example, the first three identities are given by
\begin{align}
0&={\cal J}_1(v) = \ell_1\big(\ell_1(v)\big) \ , \nn \\[6pt]
0&={\cal J}_2(v_1,v_2)  = \ell_1\big(\ell_2(v_1,v_2)\big) - \ell_2\big(\ell_1(v_1),v_2\big) - (-1)^{|v_1|}\, \ell_2\big(v_1, \ell_1(v_2)\big) \ , \nn \\[6pt]
0&={\cal J}_3(v_1,v_2,v_3) \label{I3} \\[4pt]
 & = \ell_1\big(\ell_3(v_1,v_2,v_3)\big) \nn \\
& \quad + \ell_3\big(\ell_1(v_1),v_2,v_3\big) + (-1)^{|v_1|}\, \ell_3\big(v_1, \ell_1(v_2), v_3\big) + (-1)^{|v_1|+|v_2|}\, \ell_3\big(v_1,v_2, \ell_1(v_3)\big) \nn \\
& \quad + \ell_2\big(\ell_2(v_1,v_2),v_3\big) + (-1)^{(|v_1|+|v_2|)\,|v_3|}\, \ell_2\big(\ell_2(v_3,v_1),v_2\big) +  (-1)^{(|v_2|+|v_3|)\,|v_1|}\, \ell_2\big(\ell_2(v_2,v_3),v_1\big) \ . \nn
\end{align}
The first identity states that the map $\ell_1:V\to V$ is a
differential making $V$ into a cochain complex
\begin{align*}
\cdots \xrightarrow{ \ \ \ell_1 \ \ } V_k \xrightarrow{ \ \ \ell_1 \ \ }
  V_{k+1} \xrightarrow{ \ \ \ell_1 \ \ } \cdots \ .
\end{align*}
The second identity states that $\ell_1$ is a graded
derivation with respect to the $2$-bracket $\ell_2$, that is,
$\ell_2:V_k\wedge V_l\to V_{k+l}$ is a cochain map. For $\ell_3=0$
the third identity is just the graded Jacobi identity for the
$2$-bracket $\ell_2$, while for $\ell_2=0$ it gives the graded Leibniz
rule for the differential $\ell_1$ with respect to the $3$-bracket
$\ell_3$; in general it expresses the coherence condition that makes the Jacobiator for $\ell_2$ on $V_k\wedge V_l\wedge
V_m\to V_{k+l+m-1}$ a
cochain homotopy, that is, the Jacobi identity is violated by
a homotopy. In this sense $L_\infty$-algebras are (strong) homotopy
deformations of differential graded Lie algebras which are the special
cases where the ternary and all higher brackets vanish: $\ell_n=0$ for
all~$n\geq3$. In general, the homotopy relations for $n\geq3$ are
generalized Jacobi identities; for later use, we note that the identity $\CJ_4=0$ is
given by
\begin{align}
\CJ_4(v_1,v_2,v_3, v_4)  
& = \ell_1\big(\ell_4(v_1,v_2,v_3,v_4)\big) \nn \\
& \quad - \ell_4\big(\ell_1(v_1),v_2,v_3, v_4\big) - (-1)^{|v_1|}\,
  \ell_4\big(v_1, \ell_1(v_2), v_3, v_4\big) \nn \\ 
&\quad -(-1)^{|v_1|+|v_2|}\, \ell_4\big(v_1,v_2, \ell_1(v_3), v_4 \big) 
- (-1)^{|v_1|+|v_2|+|v_3|}\, \ell_4\big(v_1,v_2, v_3, \ell_1(v_4)
  \big) \nn \\
& \quad - \ell_2\big(\ell_3(v_1,v_2, v_3),v_4\big) +
  (-1)^{|v_3|\,|v_4|}\, \ell_2\big(\ell_3(v_1, v_2, v_4),v_3\big) \nn \\
& \quad +  (-1)^{(1+|v_1|)\,|v_2|}\, \ell_2\big(v_2, \ell_3(v_1,v_3, v_4)\big) 
-  (-1)^{|v_1|}\, \ell_2\big(v_1, \ell_3(v_2,v_3, v_4)\big) \nn \\
& \quad + \ell_3 \big(\ell_2(v_1,v_2),v_3, v_4 \big) - (-1)^{|v_2|\,|v_3|}\, \ell_3 \big(\ell_2(v_1,v_3),v_2, v_4 
\big) \nn \\
& \quad + (-1)^{(|v_2|+|v_3|)\,|v_4|}\,\ell_3 \big(\ell_2(v_1,v_4),v_2, v_3 \big) 
+ \ell_3\big(v_1, \ell_2(v_2,v_3),v_4 \big) \nn \\
& \quad + (-1)^{|v_3|\,|v_4|}\,\ell_3 \big(v_1, \ell_2(v_2,v_4), v_3 \big) 
+ \ell_3\big(v_1, v_2, \ell_2(v_3,v_4)\big) \ . \label{I4}
\end{align}

\subsubsection*{$L_\infty$-morphisms}

The natural notion of a homomorphism from an $L_\infty$-algebra
$(V,\{\ell_n\})$ to another $L_\infty$-algebra $(V',\{\ell_n'\})$
consists of a collection of multilinear graded
antisymmetric maps
\begin{align*}
\psi_n:\midwedge^n V \longrightarrow V' \ , \quad  v_1\wedge \dots\wedge v_n \longmapsto \psi_n (v_1,\dots,v_n)
\end{align*}
of degree $|\psi_n|=1-n$ for each $n\geq1$, which satisfies an
appropriate identity intertwining the two sets of brackets. The
required identity is specified through
 the somewhat cumbersome relations
\begin{align}
& \sum_{i=1}^{n}\, (-1)^{i\,(n-i)} \ \sum_{\sigma\in{\rm Sh}_{i,n-i}}\,
  \chi(\sigma;v_1,\dots,v_n) \
  \psi_{n+1-i}\big(\ell_i(v_{\sigma(1)},\dots,v_{\sigma(i)}),v_{\sigma(i+1)},
  \dots, v_{\sigma(n)}\big) \nn \\[4pt]
&\hspace{2cm} =\sum_{k=1}^n\,\frac{1}{k!} \, (-1)^{\frac12\,k\,(k-1)} \ \sum_{i_1+\cdots+ i_k=n} \ \sum_{\sigma\in
  {\rm Sh}_{i_1,\dots,i_k}} \, (-1)^{\CZ(\sigma;v_1,\dots,v_n)} \,
  \chi(\sigma;v_1,\dots,v_n) \label{eq:morphismrels} \\
& \hspace{6cm} \times
  \ell_k'\big(\psi_{i_1}(v_{\sigma(1)},\dots,v_{\sigma(i_1)}), \dots,
  \psi_{i_k}(v_{\sigma(n-i_{k}+1)},\dots,v_{\sigma(n)})\big)
  \ , \nn
\end{align}
where, for each $k=1,\dots,n$, the fifth
sum runs over $(i_1,\dots,i_k)$-shuffled permutations $\sigma\in S_n$
which preserve the ordering within each block of length
$i_1,\dots,i_k$ of the partition of $n=i_1+\cdots+i_k$. The
additional sign factor is given by
\begin{align*}
\CZ(\sigma;v_1,\dots,v_n) = \sum_{j=1}^{k-1}\,(k-j)\,i_j +
  \sum_{j=2}^{k}\, (1-i_j) \ \sum_{l=1}^{i_1+\cdots+i_{j-1}} \, \big|v_{\sigma(l)}\big|
\end{align*}
for $\sigma\in {\rm Sh}_{i_1,\dots,i_k}$. Such a homomorphism is
called an $L_\infty$-morphism. Notice that the left-hand side of
\eqref{eq:morphismrels} is formally identical to the homotopy
relations \eqref{eq:calJndef} with $\ell_{n+1-i}$ replaced by
$\psi_{n+1-i}$.

The first condition for $n=1$ (internal
degree $1$) is given by
\begin{align*}
\psi_1\big(\ell_1(v)\big) = \ell_1'\big(\psi_1(v)\big) \ , 
\end{align*}
which states that the map $\psi_1:V\to V'$ is a 
cochain
map with respect to the $n=1$ differentials, that is, it defines a map of
the cochain complexes underlying the $L_\infty$-algebras that acts
degreewise as
$$
\xymatrix{
\cdots \ar[rr]^{\ell_1} & & V_k \ar[rr]^{\!\!\!\ell_1} \ar[d]_{\psi_1}
& & V_{k+1}
\ar[rr]^{\ell_1} \ar[d]_{\psi_1} & & \cdots \\
\cdots \ar[rr]^{\ell_1'} & & V_k' \ar[rr]^{\!\!\!\ell_1'} & & V_{k+1}'
\ar[rr]^{\ell_1'} & & \cdots
}
$$
and so descends to a homomorphism of the corresponding cohomology
groups
\begin{align} \label{eq:cohmor}
\psi_{1\ast}: H^{\bullet}(V,\ell_{1}) \longrightarrow
  H^{\bullet}(V',\ell_{1}') \ .
\end{align}
The second condition for $n=2$ (internal
degree $0$) reads
\begin{align*}
\psi_1\big(\ell_2(v_1,v_2)\big)-\ell_2'\big(\psi_1(v_1),\psi_1(v_2)\big)
                          &= \ell_1'\big(\psi_2(v_1,v_2)\big) + \psi_2\big(\ell_1(v_1),v_2\big)
  + (-1)^{|v_1|} \, \psi_2\big(v_1,\ell_1(v_2)\big) \ , 
\end{align*}
which means that $\psi_1$ preserves the $2$-brackets up
to a homotopy given by $\psi_2$. In particular, if $\psi_n=0$ for all
$n\geq2$, then $\psi_1$ generalizes a homomorphism of differential
graded Lie algebras. On the other hand, even if the underlying
$L_\infty$-algebras are differential graded Lie algebras, an
$L_\infty$-morphism is not generally a morphism of differential graded
Lie algebras.
The third condition for $n=3$ (internal
degree $-1$) is
\begin{align*}
& \psi_{3}\big(\ell_{1}(v_{1}),v_{2},v_{3}\big) +
  (-1)^{|v_{1}|}\,\psi_{3}\big(v_{1},\ell_{1}(v_{2}),v_{3}\big) +
  (-1)^{|v_{1}|+|v_{2}|}\, \psi_{3}(v_{1},v_{2},\ell_{1}(v_{3})\big) -
  \ell_{1}'\big(\psi_{3}(v_{1},v_{2},v_{3})\big) \\
& \hspace{1cm} + \psi_{1}\big(\ell_{3}(v_{1},v_{2},v_{3})\big) -
  \ell_{3}'\big(\psi_{1}(v_{1}),\psi_{1}(v_{2}),\psi_{1}(v_{3})\big) \\[4pt]
& \ =
  (-1)^{|v_{1}|}\,\ell_{2}'\big(\psi_{1}(v_{1}),\psi_{2}(v_{2},v_{3})\big)
  -(-1)^{|v_{2}|\,(1+|v_{1}|)}\,
  \ell_{2}'\big(\psi_{1}(v_{2}),\psi_{2}(v_{1},v_{3})\big) \\
& \hspace{3cm} +(-1)^{|v_{3}|\,(1+|v_{1}|+|v_{2}|)}\,
  \ell_{2}'\big(\psi_{1}(v_{3}),\psi_{2}(v_{1},v_{2})\big) \\
& \quad \ - \psi_{2}\big(\ell_{2}(v_{1},v_{2}),v_{3}\big) -
  (-1)^{(|v_{1}|+|v_{2}|)\,|v_{3}|}\,
  \psi_{2}\big(\ell_{2}(v_{3},v_{1}),v_{2})\big) -
  (-1)^{(|v_{2}|+|v_{3}|)\,|v_{1}|}\,\psi_{2}\big(\ell_{2}(v_{2},v_{3}),v_{1}\big)
  \ .
\end{align*}
In
general, an $L_\infty$-morphism preserves the $n$-brackets up to
homotopy.

From the perspective of the underlying cochain complexes, the natural
notion of isomorphism between $L_\infty$-algebras would be an
$L_\infty$-morphism whose induced map \eqref{eq:cohmor} 
is an isomorphism on the cohomology of
the complexes; in this case we call the collection $\{\psi_n\}$ an
$L_\infty$-quasi-isomorphism and say that the $L_\infty$-algebras are
quasi-isomorphic. Quasi-isomorphism defines an equivalence relation on
the broader set of all $L_\infty$-algebras~\cite{Kontsevich:1997vb}, in contrast to the category of
differential graded Lie algebras where not every quasi-isomorphism has
a homotopy inverse. A stronger notion demands that the
degree~$0$ cochain map $\psi_1:V\to V'$ itself is an isomorphism of
the underlying vector spaces; in this case the collection $\{\psi_n\}$ is called an $L_\infty$-isomorphism and the
$L_\infty$-algebras are said to be isomorphic. From an
$L_\infty$-isomorphism one can reconstruct all brackets of one
$L_\infty$-algebra from the brackets of the other by using the relations
\eqref{eq:morphismrels} if the inverse $\psi_1^{-1}$ is known explicitly. Both notions of
isomorphism between $L_\infty$-algebras will play a role in this
paper.

\subsubsection*{Cyclic pairings}

We will be particularly interested in the case where an
$L_\infty$-algebra $(V,\{\ell_n\})$ is further endowed with a graded symmetric non-degenerate bilinear pairing $\langle-,-\rangle:V\otimes V\to\FR$ which is cyclic in the sense that
\begin{align*}
\langle v_0,\ell_n(v_1,v_2,\dots,v_n)\rangle = (-1)^{n+(|v_0|+|v_n|)\,n+|v_n| \,\sum_{i=0}^{n-1}\,|v_i|} \ \langle v_n,\ell_n(v_0,v_1,\dots,v_{n-1})\rangle
\end{align*}
for all $n\geq1$. This is the natural notion of an inner product on an
$L_\infty$-algebra, and if such a pairing
exists the resulting algebraic structure is called a cyclic
$L_\infty$-algebra. If in addition the pairing is odd, say of degree $-p$, then the only non-vanishing pairings are $\langle-,-\rangle:V_k\otimes V_{p-k}\to \FR$ for $2k<p$ and the cyclicity condition simplifies to
\begin{align*}
\langle\ell_n(v_0,v_1,\dots,v_{n-1}),v_n\rangle = (-1)^{(|v_0|+1)\,n} \ \langle v_0,\ell_n(v_1,\dots,v_n)\rangle \ .
\end{align*}
Given $L_\infty$-algebras $(V,\{\ell_n\})$ and
$(V',\{\ell_n'\})$ which are endowed with cyclic pairings
$\langle-,-\rangle$ and $\langle-,-\rangle'$, then an
$L_\infty$-morphism $\{\psi_n\}$ between them is cyclic if it additionally
preserves the pairing in the sense that
\begin{align*}
\langle\psi_1(v_1),\psi_1(v_2)\rangle' &= \langle v_1,v_2\rangle \ ,
\end{align*}
and
\begin{align}\label{eq:cyclicitycond}
\sum_{i=1}^{n-1} \, (-1)^{i-1+(n-i-1)\,\sum_{j=1}^i\,|v_j|} \,
  \langle\psi_i(v_1,\dots,v_i),\psi_{n-i}(v_{i+1},\dots,v_n)\rangle' &=
                                                                      0
                                                                      \ ,
\end{align}
for all $n\geq3$ and $v_1,\dots,v_n\in V$.

\subsection{Differential graded commutative algebras}
\label{sec:dgca}

There is a duality between semifree differential graded commutative
algebras and $L_{\infty}$-algebras of finite type, which we describe following the well-known mathematical treatment, see e.g.~\cite{LadaStasheff92,LadaMarkl94}. 
A differential graded commutative algebra is a graded commutative
algebra $A=\bigoplus_{k\in \RZ}\, A_{k}$, whose multiplication we
denote by $\,\cdot\,$, which is endowed with a differential of degree~$1$ which
is a graded derivation, that is, a linear map $\dd:A_{k} \rightarrow A_{k+1}$ such that $\dd^{2}=0$ and 
\begin{align*}
\dd(a\cdot b) &= \dd(a)\cdot b + (-1)^{|a|}\, a\cdot \dd(b)
\end{align*}
for all $a,b \in A$ with $a$ homogeneous. 
A graded commutative algebra is said to be of finite type if it is degreewise finite-dimensional; it is called semifree if it is isomorphic to the symmetric tensor algebra of a graded vector space. Semifree differential graded commutative algebras of finite type are in one-to-one correspondence with $L_{\infty}$-algebras of finite type~\cite{LadaStasheff92,LadaMarkl94}. We briefly spell out one direction of this correspondence which will be of use later on.

Let $V=\bigoplus_{k\in \RZ}\, V_{k}$ be a graded vector
space. Consider its suspension which is the degree shifted graded space $\scrF:=V[1]$, that is, $\scrF_{k}=V_{k+1}$, and note that the two are related by the trivial suspension isomorphism which decreases the grading by~$1$:
\begin{align*}
s:V  \longrightarrow \scrF \ , \quad 
v \longmapsto {}^sv
\end{align*}
with $|{}^sv|=|v|-1$.\footnote{One can think of the suspension $\scrF=V[1]$ as the tensor product
$\FR s\otimes V$ with elements
${}^sv=s\otimes v$, where $s$ is a fixed degree~$1$ element which has
a dual $s^{-1}$ with $s^{-1}\,s=1=-s\,s^{-1}$.} This induces an isomorphism of antisymmetric and
symmetric tensor algebras respectively, which is given by
\begin{align*}
s^{\otimes n}: \midwedge^n V \longrightarrow \text{\Large$\odot$}^n \scrF \ , \quad
v_{1} \wedge \cdots \wedge v_{n} \longmapsto (-1)^{\sum_{j=1}^{n-1}\,(n-j)\,|v_{j}|}\ {}^sv_{1}\odot \cdots \odot {}^sv_{n} 
\end{align*}
on each tensor power for $n\geq1$ and on homogeneous elements, and extended linearly. 

Using these trivial isomorphisms, one may equivalently identify an
$L_{\infty}$-algebra structure on $V$ as a coderivation $\DD$ of
degree~$1$ on $\text{\Large$\odot$}^\bullet \scrF$ viewed as a free cocommutative
coalgebra,\footnote{One should actually 
consider the reduced tensor coalgebra, that is, excluding the zeroth
tensor power which is a copy of $\FR$.} such that $\DD^{2}=0$. Explicitly, if $\Delta_\scrF:\text{\Large$\odot$}^\bullet \scrF\to\text{\Large$\odot$}^\bullet \scrF\otimes \text{\Large$\odot$}^\bullet \scrF$ is the free coproduct then
\begin{align*}
\Delta_\scrF\circ\DD = (\DD\otimes 1 + 1\otimes \DD)\circ\Delta_\scrF \ ,
\end{align*}
which implies that the coderivation
$\DD:\text{\Large$\odot$}^\bullet \scrF \rightarrow
\text{\Large$\odot$}^\bullet \scrF$ is completely determined by its
image in $\scrF$ according to the decomposition
$$
{\rm pr}_{\scrF}\circ\DD=\sum_{n=1}^\infty \, \DD_{n}
$$
with degree~$1$ component maps $\DD_{n}: \text{\Large$\odot$}^{n} \scrF
\rightarrow \scrF$, where ${\rm pr}_\scrF:\text{\Large$\odot$}^\bullet
\scrF \rightarrow\scrF$ is the projection to $\scrF$. Then the relation to the graded antisymmetric $n$-brackets defined on $V$, $\ell_{n}: \midwedge^{n} V \to V$, is given by 
\begin{align} \label{eq:ellndef}
\ell_{n} := s^{-1} \circ \DD_{n} \circ s^{\otimes n}: \midwedge^{n} V \xrightarrow{ \ s^{\otimes n} \ } \text{\Large$\odot$}^{n} \scrF \xrightarrow{ \ \DD_n \ } \scrF \xrightarrow{ \ s^{-1} \ } V \ .
\end{align}
The homotopy relations are then equivalent to $\DD^{2}=0$. 

This coalgebra picture is arguably harder to work with: its main
advantage is that the homotopy relations become simple and natural,
and in principle easier to check. Moreover, when
$V$ is of finite type one may unambiguously pass to the dual algebra
$\text{\Large$\odot$}^\bullet \scrF^{*}$ which is a graded commutative
algebra under the usual symmetric tensor product. The coderivation
then dualizes to a graded derivation 
$$
Q:=\DD^{\ast}:\text{\Large$\odot$}^\bullet \scrF^{*} \longrightarrow \text{\Large$\odot$}^\bullet \scrF^{*}
$$ 
of degree $1$, such that $Q^{2}=0$ if and only if $\DD^{2} =0$. It
follows that an $L_{\infty}$-algebra structure on a graded vector
space of finite type is equivalent to a differential derivation of the
symmetric algebra of its suspended dual vector space.

Another advantage of the coalgebra formulation is that the notion of an
$L_\infty$-morphism becomes more natural and transparent. A
morphism between two $L_\infty$-algebras determined by codifferential
coalgebras $(\text{\Large$\odot$}^\bullet
\scrF,\DD)$ and $(\text{\Large$\odot$}^\bullet
\scrF',\DD')$ is then given by a cohomomorphism of codifferential coalgebras:
$$
\Psi:\big(\text{\Large$\odot$}^\bullet
\scrF,\DD\big) \longrightarrow \big(\text{\Large$\odot$}^\bullet
\scrF',\DD'\big) \ ,
$$ 
that is, a degree~$0$ cohomomorphism
of the underlying free cocommutative coalgebras which intertwines the
codifferentials: $\Psi\circ\DD=\DD'\circ \Psi$; with the same sign
conventions for the components $\psi_n$ of
$\Psi$ as used in \eqref{eq:ellndef}, this single
cohomomorphism is equivalent to the collection of maps $\{\psi_n\}$ defining
the $L_\infty$-morphism. The dual
map $\Psi^*:\text{\Large$\odot$}^\bullet
\scrF^*\to \text{\Large$\odot$}^\bullet
\scrF^{\prime\,\ast}$ is a degree-preserving algebra homomorphism which intertwines the
corresponding derivations: 
$$
Q\circ\Psi^*=\Psi^*\circ Q' \ .
$$
An $L_\infty$-quasi-isomorphism in this picture is then naturally an algebra
homomorphism which is an isomorphism between the degree~$0$ cohomology groups of the
differentials $Q$ and $Q'$, whereas an $L_\infty$-isomorphism is
equivalently a coalgebra isomorphism of the corresponding coalgebras.

Finally, a graded symmetric non-degenerate pairing $\langle -,- \rangle$ on $V$ translates into a graded
antisymmetric pairing on the suspension $\scrF := V[1]$ defined by
$\langle -,- \rangle \circ (s^{-1}\otimes s^{-1})$. Since it is
non-degenerate it qualifies as a graded symplectic pairing on
$\scrF$. This then canonically induces a ``constant'' graded
symplectic two-form $\om \in \Omega^{2}(\scrF)$, which enables one to
thus view $\scrF$ not only as a graded symplectic vector space but
also as a graded symplectic manifold. Cyclicity is then equivalent to
$Q$-invariance of $\om$~\cite{BVChristian}. To get an intuition for
the condition \eqref{eq:cyclicitycond}, which is a shifted version of the condition from \cite{Kajiura:2003ax}, consider the case where $\scrF$
is concentrated in degree~$0$. Then by restricting a coalgebra morphism $\Psi: \text{\Large$\odot$}^\bullet
\scrF \rightarrow \text{\Large$\odot$}^\bullet
\scrF' $ to the diagonal of $\text{\Large$\odot$}^\bullet
\scrF$, the condition that the non-linear smooth map
${\rm pr}_{\scrF'}\circ \Psi_{|_{\scrF}}:\scrF \rightarrow \scrF'$ is a
symplectomorphism (of manifolds) is equivalent to the cyclicity
conditions of Section~\ref{sec:Linfty}.

\subsection{Generalized gauge field theories}
\label{sec:Linftygft}

We will now sketch how the algebraic constructions of this section find natural applications to the  treatment of classical field theories. The description we give is meant to provide a small bridge between the physics community interested in bootstrapping $L_{\infty}$-algebras, the dual BV--BRST construction and the properly rigorous treatment in terms of derived geometry, where the end product is identified as the tangent complex to a derived stack (at the trivial solution)\cite{Costello}. The procedure outlined should be rather viewed as an algorithm in identifying the spaces and brackets of the algebras in question.

\subsubsection*{Geometric formulation}

In the physical applications of relevance to this paper, the kinematical data of a (classical) field
theory with generalized
gauge symmetries on an oriented $d$-dimensional manifold $M$\footnote{The discussion applies for non-orientable manifolds, but the pairings appearing in the following should instead map into the density bundle over $M$.} are
encoded in two vector spaces $V_{0}$ and $V_{1}$, which are respectively the vector spaces of
(infinitesimal) gauge parameters and dynamical
fields, being typically sections of vector bundles over M. That is, we shall assume $V_{0}:=\Omega^{l}(M,\mathcal{C})$ and $V_{1}:=\Omega^{k}(M,\CV)$ where  $\CV,\CC $ are vector bundles over M and $0\leq l,k \leq d$.\footnote{In the case where $V_{1}$ is an affine space, such as a space of connections, the following discussion is trivially modified by fixing a reference element.} The gauge transformations acting on the dynamical fields generate a distribution $\CCD$ on $V_1$, which may not be necessarily involutive on the whole of $V_{1}$. By `generalized' gauge transformations we mean
that we include symmetries which are not restricted to vertical automorphisms
of principal bundles, and so go beyond the usual realm of standard
gauge theory. We shall also supplement the picture with the vector spaces  $V_{2}:= \Omega^{d-k}(M,\check{\CV}\,)$ and $V_{3}:=\Omega^{d-l}(M,\check{\CC}\,)$, where $\check{\CV}, \check{\CC}$ denote the dual bundles. The physical content of these spaces will be explained below. In practice it is often convenient to identify the internal dual bundles $\check{\CV}, \check{\CC}$ with certain isomorphic bundles through internal non-degenerate pairings entering the definition of field theories; we shall see this explicitly in the theories considered in later sections.

The
dynamics of the field theory is specified by an action functional
$S:V_1\to\FR$, which is a local function of the fields and their jets that is gauge-invariant: $\delta_\lambda S=0$ for all $\lambda\in V_0$, where
$\delta_\lambda=\delta\circ\iota_\lambda + \iota_\lambda\circ\delta$
with $\delta$ the exterior derivative on $V_1$ and $\iota_\lambda$ the
contraction with $\lambda\in\Gamma(\CCD)$\footnote{We abuse notation slightly and identify the gauge parameter with the vector field it generates on $V_{1}$.}. Its variation $\delta S$ is a
section of the cotangent bundle of
$V_1$ whose fibers we shall consider of degree 2, that is $\delta S:V_1\rightarrow T^*[-3]V_1$, where $[k]$ denotes a degree shift by $k\in\RZ$. Since $V_1$ is a vector space, we may define its cotangent bundle as the product of $V_{1}$ with its dual, for which the correct model in this infinite-dimensional setting is precisely $V_{2}$ defined above. That is
\begin{align}\label{eq:cotancomplex}
T^*[-3]V_1:=V_1\times V_2 \ .
\end{align}

In more detail, the variation of the action functional $\delta S:V_1\to
T^*[-3]V_1$, restricted to sections of compact support, is a map
$$
\delta S:V_1\longrightarrow V_1\times V_2
$$
acting by
$\delta S_{|A} (\delta A)=\langle \delta A,\CF(A)\rangle$ on tangent
vectors $\delta A \in T_{A} V_{1} \simeq V_{1}$ at a point $A \in
V_{1}$ in the space of fields. The non-degenerate
pairing $\langle -, - \rangle:V_1\times V_2\to\FR$ appearing is the natural pairing, that is, using the duality of the internal bundles and wedging the spacetime form parts to get a top form, following with an integration over $M$, while $\CF(A)\in V_{2}$ denotes the Euler--Lagrange derivatives of the functional $S$. Using the above interpretation, we say $V_{2}$ is the ``space of field equations''. The classical solution space, i.e. ``on-shell'' configurations, is defined as those $A\in V_{1}$ such that $\CF(A)=0$. Thus the field equations are enforced by intersecting
the image of $\delta S$ with the image of the zero section of
$T^*V_1$, which defines the critical Euler--Lagrange locus of the
action functional $S$; the distribution $\CCD$ is
involutive on this locus, that is, the gauge transformations close
on-shell. 

The natural pairing extends similarly to $T^{*}[-3] V_{0}:= V_{0} \times V_{3}$,
where now one may use it to prove Noether's second theorem: For all $\lambda \in V_{0}$, gauge-invariance of the action functional implies 
\begin{align*}
\delta_{\lambda} S = \langle \delta_{\lambda} A, \CF(A) \rangle = 0 \
  ,
\end{align*}
so that taking the Sturm--Liouville adjoint $\dsf_{A}$, with respect
to the pairing, of $\delta_{\lambda}$ viewed as a differential
operator acting on a gauge parameter $\lambda$ of compact support
amounts to $\langle \lambda, \dsf_{A} \CF(A)\rangle=0$. Here
$\dsf_{A}$ acts as a local differential operator on $V_{2}$, which may
depend on the fields, and its
image is valued in $V_{3}$. By non-degeneracy of the extended pairing, the `Bianchi
identities' $\dsf_{A} \CF(A)=0$, expressing local differential
relations among the Euler--Lagrange derivatives for any infinitesimal local
symmetry $\delta_\lambda A$, hold \emph{off-shell}, which is simply a reformulation of Noether's second theorem. Given the above interpretation, we say $V_{3}$ is the ``space of Noether identities''. The converse of Noether's second theorem is a means of recovering gauge symmetries of an action functional $S$ which may be unknown \emph{a priori}.\footnote{See e.g.~\cite{Avery:2015rga} for the proofs and comparisons of
  Noether's first and second theorems, and~\cite{Henneaux:1989jq} for their relation
  to gauge symmetries in physics.} 

From this geometric perspective, a classical generalized gauge theory is completely
determined by the moduli space $\CCM$ of its Euler--Lagrange locus $\CF(A)=0$
modulo gauge transformations. Two gauge theories are then physically
equivalent, in the sense that there is a bijection between their physical states, if the corresponding
moduli spaces of solutions to the field equations are isomorphic: $\CCM\simeq\CCM'$. In the case of reducible symmetries\footnote{For example, in the case where symmetries arise from a group action which is not free, i.e. has non-trivial stabilisers.}, the graded vector space may be extended by adjoining $V_{-1}$ containing the ``higher'' gauge parameters and $V_{4}$ its dual containing the ``higher'' Noether identities in a similar vein as above. Higher level reducibility parameters and their duals may also be added, if they occur. Classically this augmentation does not offer much new information, however it is essential in the dual (quantum) BV-BRST formulation  where one is interested in resolving degeneracies for the purpose of path integral methods.

\subsubsection*{$L_\infty$-algebra formulation}
We can translate this geometric picture into the structure of a
four-term $L_\infty$-algebra by linearizing the gauge transformations,
field equations and Noether identities to obtain
the cochain complex
\begin{align}\label{eq:cochaincomplex}
V_0\xrightarrow{ \ \ \ell_1 \ \ }V_1\xrightarrow{ \ \ \ell_1 \ \ 
  }V_2\xrightarrow{ \ \ \ell_1 \ \ }V_3
\end{align}
corresponding to the underlying graded vector space $V:= T^{\star}[-3] \big(V_{0}\oplus V_{1}\big)$, that is
$$
V=V_{0}\oplus V_{1}\oplus V_{2} \oplus V_3 \ .
$$
We then equip this complex with suitable
higher brackets corresponding to the non-linear parts of the theory subject to the homotopy relations in order the recover
the full symmetries and dynamics of the generalized gauge theory.

Given $\lambda \in V_{0}$ and $A\in V_{1}$, the gauge
variations are encoded as the maps $A\mapsto A+\delta_\lambda A$ where
\begin{align} \label{gaugetransfA}
\delta_{\lambda}A=\sum_{n =0}^\infty \, \frac{1}{n!}\, (-1)^{\frac12\,{n\,(n-1)}}\, \ell_{n+1}(\lambda,A,\dots,A) \ \in \ V_{1} \ ,
\end{align}
where the brackets involve $n$ insertions of the field
$A$. 
The Euler--Lagrange derivatives are encoded as
\begin{align}\label{EOM} 
\mathcal{F}(A)=\sum_{n =1}^\infty \, \frac{1}{n!}\, (-1)^{\frac12\,{n\,(n-1)}}\, \ell_{n}(A,\dots,A) \ \in \ V_{2} \ ,
\end{align}
with the covariant gauge variations
\begin{align} \label{gaugetransfF}
\delta_{\lambda} \mathcal{F}=\sum_{n =0}^\infty \, \frac{1}{n!}\, (-1)^{\frac12\,{n\,(n-1)}}\, \ell_{n+2}(\lambda,\mathcal{F},A,\dots,A) \ \in \ V_{2} \ .
\end{align}
We define successive applications of gauge variations by
\begin{align*}
(\delta_{\lambda_1}\,\delta_{\lambda_2})A := \sum_{n =0}^\infty \,
  \frac{1}{n!}\, (-1)^{\frac12\,{n\,(n+1)}}\,
  \ell_{n+2}(\lambda_2,\delta_{\lambda_1} A,A,\dots,A) \ .
\end{align*}
The closure relation for the gauge algebra then has the form
\begin{align}\label{eq:closure}
[\delta_{\lambda_1},\delta_{\lambda_2}]A
  = \delta_{[\![\lambda_1,\lambda_2]\!]_A}A  + \Delta_{\lambda_1,\lambda_2}A\ ,
\end{align}
where 
\begin{align*}
[\![\lambda_1,\lambda_2]\!]_A = -\sum_{n=0}^\infty \, \frac1{n!}\,
  (-1)^{\frac12\,{n\,(n-1)}}\, \ell_{n+2}(\lambda_1,\lambda_2,A,\dots,A)
  \ \in \ V_0 \ ,
\end{align*}
and
\begin{align*}
\Delta_{\lambda_1,\lambda_2}A = \sum_{n=0}^\infty \, \frac1{n!} \,
  (-1)^{\frac12\,(n-2)\,(n-3)} \,
  \ell_{n+3}(\lambda_1,\lambda_2,\CF,A,\dots,A) \ \in \ V_1 \ .
\end{align*}
The distribution
$\CCD\subset TV_1$ spanned by the gauge parameters is involutive on-shell, that is, when $\CF(A)=0$, and on this locus the gauge algebra generally depends on the fields $A$.
The homotopy relations guarantee that the Jacobi
identity is generally satisfied for any triple of maps $\delta_{\lambda_1}$,
$\delta_{\lambda_2}$ and~$\delta_{\lambda_3}$.
The Noether identities are encoded by
\begin{align}\label{eq:Noether}
\dsf_A\CF = \sum_{n=0}^\infty \, \frac1{n!} \, (-1)^{\frac12\, n\, (n-1)} \, \ell_{n+1}(\CF,A,\dots,A) \ \in \ V_3 \ ,
\end{align}
which vanishes identically as a consequence of the homotopy relations ${\cal J}_n(A,\dots,A)=0$, for all $n\geq1$, of the $L_\infty$-algebra. 

Given two generalized gauge theories with underlying
$L_\infty$-algebras $(V,\{\ell_n\})$ and $(V',\{\ell_n'\})$, an 
$L_\infty$-morphism $\{\psi_n\}$ between them relates their
classical moduli spaces $\CCM$ and $\CCM'$ in the following way~\cite{BVChristian}. A
gauge field $A\in V_1$ is sent by an $L_\infty$-morphism $\{\psi_n\}$ into the
gauge field
\begin{align}\label{eq:fieldmorphism}
A'(A) = \sum_{n=1}^\infty\, \frac1{n!}\, (-1)^{\frac12\,n\,(n-1)}\,
  \psi_n(A,\dots,A) \ \in \ V_1' \ ,
\end{align}
such that the corresponding Euler--Lagrange derivative $\CF(A)\in V_2$
is mapped to
\begin{align} \label{eq:ELmorphism}
\CF'(A') = \CF'(\CF,A) = \sum_{n=0}^\infty\, \frac1{n!}\, (-1)^{\frac12\,n\,(n-1)}\,
  \psi_{n+1}(\CF,A,\dots,A) \ \in \ V_2' \ .
\end{align}
It follows that the Euler--Lagrange locus $\CF(A)=0$ is mapped to the
Euler--Lagrange locus $\CF'(A')=0$. An $L_\infty$-morphism also sends
gauge orbits into gauge orbits: A gauge variation $\delta_\lambda A$
for $\lambda\in V_0$ is mapped by $\{\psi_n\}$ to the gauge variation
$\delta_{\lambda'}A'$ where
\begin{align}\label{eq:parmorphism}
\lambda'(\lambda,A) = \sum_{n=0}^\infty\, \frac1{n!}\, (-1)^{\frac12\,n\,(n-1)}\,
  \psi_{n+1}(\lambda,A,\dots,A) \ \in \ V_0' \ .
\end{align}
It follows that gauge equivalence classes of on-shell solutions
$\CF(A)=0$ are sent to gauge equivalence classes of on-shell solutions
$\CF'(A')=0$:
\begin{align}
A'(A+\delta_\lambda A) &= A'(A)+\delta'_{\lambda'(\lambda,A)}A'(A) \ ,
                         \nn \\[4pt]
\CF'(\CF+\delta_\lambda\CF,A+\delta_\lambda A) &=
                                                 \CF'(\CF,A)+\delta'_{\lambda'(\lambda,A)}\CF'(\CF,A)
                                                 \ ,
\label{eq:gaugemorphism}\end{align}
with the closure relation \eqref{eq:closure} mapping as
\begin{align}\label{eq:closuremorphism}
A'(A+\delta_{[\![\lambda_1,\lambda_2]\!]_A}A+\Delta_{\lambda_1,\lambda_2}A)
  = A'(A) +
  \delta'_{[\![\lambda_1',\lambda_2']\!]'_{A'} +
  \lambda'(\lambda_2,\delta_{\lambda_1}A) -
  \lambda'(\lambda_1,\delta_{\lambda_2}A)}A'(A) +
  \Delta'_{\lambda_1',\lambda_2'}A' \ .
\end{align}
In particular, if $\{\psi_n\}$ is an
$L_\infty$-quasi-isomorphism, then the corresponding moduli spaces
$\CCM$ and $\CCM'$ of physical states are
isomorphic~\cite{Kontsevich:1997vb,Kajiura:2001ng,Fukaya:2001uc}. However,
it is important to note that the converse is not true, and it may
happen that two classical field theories have isomorphic moduli spaces
while their underlying $L_\infty$-algebras are \emph{not}
quasi-isomorphic; we shall see some explicit instances of this later on.
The
transformations \eqref{eq:fieldmorphism}--\eqref{eq:closuremorphism} are interpreted in terms of Seiberg--Witten
maps in~\cite{Blumenhagen:2018shf}.

If the symmetries themselves
  have non-trivial symmetries, that is, there are further gauge
  redundancies in the description and the gauge symmetries are
  reducible, then the cochain complex \eqref{eq:cochaincomplex} may be extended into negative degrees $V_{-k}$ for $k\geq1$, which are the spaces of ``higher gauge transformations'', together with their duals $V_{k+3}$; the space $V_{-1}$ is the vector space of gauge transformations of the gauge parameters, $V_{-2}$ contains gauge variations of the gauge transformations of gauge parameters, and so on. These higher gauge symmetries are encoded as
\begin{align}\label{eq:highergauge}
\delta_{(\lambda_{-k-1},A)}\lambda_{-k}=\sum_{n=0}^\infty\, \frac1{n!} \, (-1)^{\frac12\,n\,(n+1)} \, \ell_{n+1}(A,\dots,A,\lambda_{-k-1}) \ \in \ V_{-k} \ ,
\end{align}
where $\lambda_{-k}\in V_{-k}$ for $k\geq0$.
At the classical level in which we are presently working, their
inclusion is purely algebraic and only serves to alter the cohomology
$H^\bullet(V,\ell_1)$ of the underlying cochain complex at its
extremities, leaving the moduli space $\CCM$ of classical states
unchanged. 

One way to encode the action functional of the gauge field theory is
via a symmetric non-degenerate bilinear pairing $\langle -,-\rangle :
V \otimes V\to\FR$ of degree~$-3$, as described earlier, which makes $V$ into a cyclic $L_\infty$-algebra. More specifically, the pairing is only defined when restricted to compactly supported sections due to the usual underlying integration. Said otherwise, the pairing is fiber-wise non-degenerate when viewed as a map into the density bundle. The only non-trivial pairings are
\begin{align*}
\langle -,-\rangle : V_1 \otimes V_2 \longrightarrow\FR \qquad \mbox{and} \qquad \langle -,-\rangle : V_0 \otimes V_3 \longrightarrow\FR \ ,
\end{align*}
and we shall explicitly make use of the cyclicity properties
\begin{align}\label{cyclicity} 
\langle A_{0},\ell_{n}(A_{1},A_2, \dots ,A_{n})\rangle=\langle A_{1},\ell_{n}(A_{0}, A_2,\dots ,A_{n})\rangle
\end{align}
and
\begin{align} \label{eq:cyclicity2}
\langle \lambda,\ell_{n+1}(E,A_{1}, \dots ,A_{n})\rangle=-\langle E,\ell_{n+1}(\lambda, A_1,\dots ,A_{n})\rangle
\end{align}
for all $\lambda\in V_0$, $A_0,A_1,\dots,A_{n}\in V_1$, $E\in V_2$ of suitable compact support and $n\geq1$.
In this case it is easy to see that the field equations $\mathcal{F}(A) = 0$ follow from varying the action functional defined as 
\begin{align} \label{action}
S(A) := \sum_{n=1}^\infty \, \frac{1}{(n+1)!}\, (-1)^{\frac12\,{n\,(n-1)}}\, \langle A, \ell_{n}(A,\dots,A)\rangle \ ,
\end{align}
since then cyclicity implies $\delta S(A)=\langle\CF,\delta A\rangle$.
Cyclicity also implies
\begin{align*}
\delta_\lambda S(A)=\langle\CF,\delta_\lambda A\rangle=-\langle\dsf_A\CF,\lambda\rangle \ ,
\end{align*}
so that gauge invariance of the action functional $\delta_\lambda
S(A)=0$ is then equivalent to the Noether identities
$\dsf_A\CF(A)=0$. Cyclic $L_\infty$-morphisms relate the action
functionals of generalized gauge theories.

\subsubsection*{BV--BRST formulation}

The convention we adopt here for the grading of our field theory
$L_{\infty}$-algebra is opposite to the convention of \cite{Linfty}. This bears no significant
mathematical effect, except for rendering more direct the duality to the BV--BRST
formalism, which makes precise the relation between the
$L_\infty$-algebra formulation and the geometric formulation based on the cotangent bundle
\eqref{eq:cotancomplex}; we review this framework in Section~\ref{sec:BV-BRST} for the specific gauge field
theories of interest in this paper. In particular, this gives a
rigorous description of the quotient defining
the moduli space $\CCM$ of classical solutions, by combining the Koszul--Tate resolution of the quotient by the ideal of Euler--Lagrange derivatives and the Chevalley--Eilenberg resolution of the quotient by gauge transformations. The duality between the
$L_{\infty}$-algebras for classical gauge field theories and their
BV--BRST formalism is precisely the duality discussed in
Section~\ref{sec:dgca}, see e.g.~\cite{BVChristian} for an extensive
review; our sign conventions also differ
from those of~\cite{BVChristian} where a sign factor
$(-1)^{\frac12\,n\,(n-1)}$ is included in the definition of the $n$-brackets
  \eqref{eq:ellndef} as well as in the definition of the $L_\infty$-morphism
  $\{\psi_n\}$ corresponding to a cohomomorphism $\Psi$. In this dual
  formulation, $L_\infty$-quasi-isomorphisms relate
  physically equivalent generalized gauge field theories at the classical level of moduli spaces and observables, and can also provide useful information in their pertubative quantisation; see e.g. \cite{ChristianLoops} \cite{Mnev:2017notes}. We will point out instances of this throughout the text. Reducible symmetries now become important as gauge parameters are promoted to dynamical fields in the BV--BRST framework, called ``ghosts'', and higher gauge parameters become ``ghosts-for-ghosts'', and so on with the purpose of resolving degeneracies of the corresponding action functional.

Strictly speaking, the $L_{\infty}$-algebras that arise in field
theories are not of finite type over $\FR$. However, the underlying
vector spaces consist of sections of vector bundles, and the brackets
are polydifferential operators of finite degree. Hence the
$L_{\infty}$-algebras are `local' in the sense of
\cite{Costello}. Thus after factoring the brackets through the
appropriate jet bundles, these are all pointwise of finite type over
$\FR$. This will be enough for the formal dualization that is needed
for our purposes in the following.
  
\section{$d$-dimensional gravity in the Einstein--Cartan--Palatini formalism}
\label{sec:ECPgravity}

In this section we introduce the generalized gauge field theories of interest in this paper. We review the formulation of general relativity in arbitrary dimension $d$ and signature within the Einstein--Cartan formulation, which enables us to treat gravity as a generalized gauge theory on a principal bundle. We shall then review the Palatini action functional for Einstein--Cartan gravity and the role played by Noether identities in this theory.

\subsection{Fields}
\label{sec:ECPgaugesym}

The data of the Einstein--Cartan--Palatini (ECP) formulation of
gravity in $d$ dimensions is as follows. The background spacetime consists of a smooth 
$d$-dimensional oriented manifold $M$ that admits a pseudo-Riemannian structure of signature $(p,q)$, with
$$
d= p+q \ ,
$$
where all constructions of this paper hold for any spacetime signature
$(p,q)$. Let $\CCV$ be a vector bundle on
$M$, isomorphic to the tangent bundle $TM$, which is equipped with a
fixed metric $\eta$ of signature $(p,q)$ and an orientation on its
fibers; we shall sometimes refer to $\CCV$ as the ``fake tangent bundle''.
The field content then consists of two
fields which are \emph{a priori} independent: a coframe field and a
spin connection.

The coframe field is an orientation-preserving bundle map $e:TM\to\CCV$, covering the identity,
from the tangent bundle $TM$ to the vector
bundle $\CCV$; it can be regarded as a one-form
on $M$ valued in $\CCV$ and used to pull back $\eta$ to a 
(possibly
degenerate) metric $g=e^*\eta$ on $M$ of indefinite signature
$(p,q)$. When the spacetime $M$ is parallelizable, one can
take $\CCV=M\times\FR^{p,q}$ to be the trivial bundle and regard $e$ as
a globally defined one-form on $M$ with values in $\FR^{p,q}$. In that case we write
$e=e^{a}\,{\tt E}_{a}\in \Omega^{1}(M,\FR^{p,q})$, where
$e^a\in\Omega^1(M)$ can be expanded as $e^a=e^a_\mu\, \dd x^\mu$ in a
local holonomic coframe on $M$. This satisfies\footnote{Here and in
  the following we always use the Einstein summation convention: repeated upper and lower indices are implicitly summed over.}
$e^{a}_{\mu}\,\eta_{ab}\,e^{b}_{\nu}=g_{\mu \nu}$, $\mu,\nu=1,\dots,d$
which gives the components of the (possibly degenerate) dynamical
pseudo-Riemannian metric $g$ in any local
coordinate chart of $M$, where $\eta$ is the standard metric
on $\FR^{p,q}$ and ${{\tt E}_{a}}$, $a=1,\dots,d$ form the canonical
oriented pseudo-orthonormal basis of
$\FR^{p,q}$. When $M$ is not parallelizable the discussion which follows
will only be valid on local
trivialisations of $\CCV$. This subtlety is usually
ignored in treatments of the first-order formalism for general
relativity in the literature, as for physical applications such global
issues are usually irrelevant, and in this paper we will also follow
this convention for the most part. We shall return to this point in Section~\ref{sec:covariant}.

The spin connection $\omega$ is an $\sSO_+(p,q)$-connection on the
principal $\sSO_+(p,q)$-bundle $\CCP\to M$ associated to $\CCV$, so that
$\omega\in\Omega^1(\CCP,\mathfrak{so}(p,q))$; it corresponds to local pseudo-orthogonal transformations of the
coframe field that are connected to the identity, which are
parameterized by maps from $M$ to the connected component $\sSO_+(p,q)$ of the indefinite special orthogonal group $\sSO(p,q)$. With the same caveats as discussed above for the coframe
fields, we shall usually regard it as a one-form $\omega\in
\Omega^{1}(M,\mathfrak{so}(p,q))$ and write it as\footnote{Here and in
  the following the square parantheses always mean antisymmetrization over the enclosed indices.}
$$
\omega=\omega^{a}{}_{b}\,{\tt E}^{b}{}_{a}=\omega^{ab}\,{\tt E}_{ba}=\om^{ab}\,{\tt E}_{[ba]}
$$
with $\omega^{ab}=-\omega^{ba} \in \Omega^{1}(M)$, where indices are
raised and lowered with the metric $\eta$ and ${\tt E}^{b}{}_{a}$ are the
$d\times d$ matrix units with matrix elements
$({\tt E}^b{}_a)_{cd}=\delta_{ac}\,\delta^b{}_d$. 

The covariant derivative of $e$ defines the torsion
$T$ of the $\sSO_+(p,q)$-connection which can be regarded as a two-form
on $M$ valued in $\CCV$, while the
covariant derivative of $\omega$ defines its
curvature $R$ which can be regarded as a two-form on $M$ with values in the
endomorphism bundle of the vector bundle
$\CCV$ with structure group $\sSO_+(p,q)$; equivalently, $R$ can be
regarded as valued in the
second exterior power $\midwedge^2\,\CCV$ of $\CCV$, which is isomorphic to the vector bundle $\CCP\times_{\rm ad}\mathfrak{so}(p,q)$ associated to $\CCP$ by the adjoint representation of the structure group. Explicitly,
\begin{align*}
T:=\dd^{\omega} e &= \dd e + \omega \wedge e \ \in \
  \Omega^{2}(M,\CCV) \ , \\[4pt]
R:=\dd^{\om}\om &= \dd\omega + \tfrac{1}{2}\, [\omega,\omega] \ \in \
  \Omega^{2}\big(M,\CCP\times_{\rm ad}\mathfrak{so}(p,q)\big) \ .
\end{align*}
In a local trivialisation, the wedges here mean matrix multiplication followed by the exterior
product on form entries, that is, $\omega \wedge e=
\omega^{a}{}_{b} \wedge e^{b} \, {\tt E}_{a}$ and $\frac{1}{2}[\om,\om]
=\omega^{a}{}_{b} \wedge \omega^{bc} \, {\tt E}_{[ca]}$. The Bianchi identities are
\begin{align} \label{eq:BianchiTR}
\dd^\om T=R\wedge e \qquad \mbox{and} \qquad \dd^\om R=0 \ .
\end{align}

\subsection{Gauge symmetries}
\label{sec:ECPgauge}

The natural internal symmetries of the fields are the
finite $\sSO_+(p,q)$-transformations $h:M\to \CCP\times_{\rm Ad}\sSO_+(p,q)$ given by
\begin{align*}
e\longmapsto h^{-1}\,e \qquad \mbox{and} \qquad \omega\longmapsto
 h^{-1}\,\omega\, h + h^{-1}\,\dd h \ ,
\end{align*}
corresponding to vertical automorphisms of the principal $\sSO_+(p,q)$-bundle
$\CCP\to M$ which cover the identity diffeomorphism on $M$. There is also the action of finite diffeomorphisms $\phi:M\to M$ given by the
pullbacks 
\begin{align*}
e\longmapsto \phi^*e \qquad \mbox{and} \qquad \omega\longmapsto
  \phi^*\omega \ ,
\end{align*}
which map the fields to sections and connections on the corresponding
pullback bundles. The infinitesimal version of diffeomorphisms is not compatible with the global structure of the fields $(e,\om)$, as we will
discuss further in Section~\ref{sec:covariant}. For the time being we
shall work with the local formulation of the field content $(e,\om)$
of the Einstein--Cartan--Palatini theory and hence consider only the
infinitesimal versions of these symmetries; that is, we consider a
parallelizable spacetime $M$ and identify the fields as globally
defined one-forms $e\in\Omega^1(M,\FR^{p,q})$ and $\om\in\Omega^1(M,\mathfrak{so}(p,q))$.

For any infinitesimal gauge parameter function $\rho:M \rightarrow
\mathfrak{so}(p,q)$, which corresponds to a local pseudo-orthogonal
rotation, we may write $\rho=\rho^{ab}\,{\tt E}_{ba}$ for
$\rho^{ab}=-\rho^{ba}\in C^\infty(M)$. Then the internal (infinitesimal) gauge transformations are given by
\begin{align}\label{eq:gaugerot}
\delta_{\rho}e=-\rho \cdot e\qquad \mbox{and} \qquad
\delta_{\rho}\omega=\dd^\om\rho:=\dd\rho + [\omega,\rho] \ .                           
\end{align}
Since $e$ transforms in the fundamental representation of $\sSO_+(p,q)$,\footnote{In
  gauge theory parlance, the coframe field $e$ could be regarded as a
  matter field, as it is a section of $\CCV\otimes T^*M$ with $\CCV$ the vector bundle associated to $\CCP$ by the fundamental representation, but since $e$ represents a gravitational field we refrain from using this terminology to avoid confusion.} the notation $\rho \cdot e$ literally means matrix multiplication of a vector: $\rho \cdot e= \rho^{a}{}_{b}\,e^{b}\,{\tt E}_{a}$. 

In addition to local pseudo-orthogonal rotations, there is also the standard diffeomorphism gauge symmetry of general relativity. Infinitesimal diffeomorphisms correspond to vector fields on the spacetime $M$. For an infinitesimal diffeomorphism $\xi \in \Gamma(TM)$, the corresponding gauge transformations are given by the action of the Lie derivative $\LL_\xi$ on forms:
\begin{align}\label{eq:gaugediff}
\delta_{\xi}e=\LL_{\xi}e \qquad \mbox{and} \qquad
\delta_{\xi}\omega=\LL_{\xi} \omega \ .                       
\end{align}
The Lie derivatives can be evaluated by using Cartan's `magic formula'
\begin{align} \label{eq:CartanLie}
\LL_\xi = \dd\circ\iota_\xi+\iota_\xi\circ\dd
\end{align}
where $\iota_\xi$
denotes contraction with the vector field $\xi$. 

Altogether, the involutive
symmetry distribution $\CCD$ on the space of fields 
\begin{align} \label{eq:ECPfieldspace}
\Omega^1(M,\FR^{p,q})\times\Omega^1\big(M,\mathfrak{so}(p,q)\big)
\end{align}
is the Lie
algebra of gauge symmetries generated by the action of the semi-direct product 
\begin{align} \label{eq:ECPgaugealg}
\Gamma(TM)\ltimes\Omega^0\big(M,\mathfrak{so}(p,q)\big) \ .
\end{align}
Under the infinitesimal symmetries $(\xi,\rho)\in
\Gamma(TM)\times\Omega^0\big(M,\mathfrak{so}(p,q)\big)$, the torsion
and curvature fields transform as
\begin{align*}
\delta_{(\xi,\rho)} T = \LL_\xi T -\rho\cdot T \qquad \mbox{and} \qquad \delta_{(\xi,\rho)} R = \LL_\xi R + [R,\rho] \ .
\end{align*}

Since $\om^{ab}=-\om^{ba}$, under the isomorphism $\mathfrak{so}(p,q)
\simeq \midwedge^{2}(\FR^{p,q})$ given by $\omega^{ab}\,{\tt E}_{[ba]}
\mapsto \om^{[ab]}\,{\tt E}_{a}\wedge {\tt E}_{b}$, we will consider the connection as an element
$\om=\om^{ab}\,{\tt E}_{a}\wedge {\tt E}_{b}\in \Omega^{1}(M,\midwedge^{2}(\FR^{p,q}))$ and the curvature
as an element $R=R^{ab}\,{\tt E}_{a}\wedge {\tt E}_{b}\in
\Omega^{2}(M,\midwedge^{2}(\FR^{p,q}))$. This identification has the
advantage of making our formulas later on much more compact, avoiding
extensive use of indices. This also shows that there is
an isomorphism of representations of $\sSO_+(p,q)$ given by
\begin{align*}
[\rho,\omega]\longmapsto \rho\cdot (\omega^{ab}\,{\tt E}_a\wedge
  {\tt E}_b)=\rho^a{}_c\,\omega^{cb}\, {\tt E}_a\wedge {\tt E}_b +
  \rho^b{}_c\,\omega^{ac}\, {\tt E}_a\wedge {\tt E}_b \ ,
\end{align*}
where the right-hand side is computed by using the Leibniz rule (or
trivial coproduct) for the action of $\mathfrak{so}(p,q)$ on the
two-vector representation $\midwedge^{2}(\FR^{p,q})$.
This can be used in comparing the actions of a spin connection on
another spin connection in the adjoint representation and on a coframe
field in the vector representation to get the useful identity
\begin{lemma}
If $e_{1}, \dots ,e_{d-2}\in \Omega^1(M,\FR^{p,q})$ and
$\om,\om'\in\Omega^1(M,\mathfrak{so}(p,q))$, then
\begin{align}\label{weirdformula} 
e_{1}\dwedge \cdots \dwedge e_{d-2} \dwedge [\omega ,
  \om'\,]= \sum_{i=1}^{d-2}\, e_{1}\dwedge \cdots
  \widehat{e_{i}} \cdots \dwedge e_{d-2} \dwedge (\om \wedge
  e_{i}) \dwedge \om'
\end{align}
 in $d\geq3$ dimensions, where the $\dwedge$-product
means the exterior products of both the differential form parts
and the internal vector space parts,\footnote{The $\dwedge$-product gives $\Omega^{\bullet}(M,\midfwedge^{\bullet}\FR^{p,q})$ the structure of a graded commutative algebra under the combined form degrees. It does not associate with the $\wedge$-product acting via multi-vector representations, as will be apparent in calculations below.} while $\widehat{e_i}$ means
omission of the $i$-th term in the product. 
\end{lemma}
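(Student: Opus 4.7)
My plan rests on a single conceptual point: every element of $\mathfrak{so}(p,q)$ is $\eta$-traceless, and the natural representation of $\mathfrak{so}(p,q)$ on the top exterior power $\midwedge^d \FR^{p,q}$ (obtained by iterating the fundamental representation via the Leibniz rule on multi-vectors) is given by the ordinary matrix trace, so $\mathfrak{so}(p,q)$ acts trivially on $\midwedge^d \FR^{p,q}$. Consequently, when this action is promoted, by wedging of differential form parts, to an action of $\mathfrak{so}(p,q)$-valued forms on $\midwedge^\bullet \FR^{p,q}$-valued forms, it annihilates every form that takes values in $\midwedge^d \FR^{p,q}$.

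First I would apply this to the product $Y := e_{1}\dwedge \cdots \dwedge e_{d-2} \dwedge \om'$, whose internal exterior degree is $(d-2)+2 = d$ and which therefore lies in $\Omega^{d-1}(M, \midwedge^d \FR^{p,q})$; the above observation then forces $\om \cdot_L Y = 0$, where $\om\cdot_L$ is the action just described. Next I would unpack this vanishing by applying the graded Leibniz rule iteratively, viewing $\om \cdot_L$ as a degree-one derivation of the $\dwedge$-algebra. The two elementary ingredients are exactly those recalled just before the lemma: $\om \cdot_L e_i = \om \wedge e_i$ for the fundamental representation, and $\om \cdot_L \om' = [\om,\om']$ for the adjoint representation under the bivector identification of $\mathfrak{so}(p,q)$. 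Iterating produces a vanishing sum of $d-1$ terms in which $\om$ acts once on each of $e_1, \dots, e_{d-2}$ and once on $\om'$. Isolating the $[\om,\om']$ contribution on one side and then relocating each 2-form factor $(\om \wedge e_i)$ from its original $i$-th slot to the rightmost slot, immediately before $\om'$, will yield the asserted identity.

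The only non-routine step will be the sign bookkeeping. The derivation sign in the graded Leibniz rule depends only on form degree, whereas transpositions under $\dwedge$ separate the differential form and internal exterior degree contributions---this is precisely the subtlety flagged by the footnote on the $\dwedge$-product. Three distinct sign sources must be assembled and verified to be mutually compatible: the cumulative Leibniz sign as $\om$ is hopped past each preceding factor in $Y$, the sign from transferring the $[\om,\om']$ term to the opposite side of the equation, and the rearrangement sign from pushing each 2-form $(\om \wedge e_i)$ past the $d-2-i$ remaining coframes. I expect the main obstacle to be verifying that these three contributions multiply uniformly to $+1$ for every $i$, so that no residual signs survive in the final expression; once that is confirmed the identity drops out immediately, uniformly in $d\geq 3$ and in the signature $(p,q)$.
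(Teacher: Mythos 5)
Your proposal follows essentially the same route as the paper's own proof: both hinge on the triviality of the $\mathfrak{so}(p,q)$-action on the top exterior power $\midwedge^d\FR^{p,q}$, applied to $e_1\dwedge\cdots\dwedge e_{d-2}\dwedge\om'$, followed by the graded Leibniz rule to split off the $[\om,\om'\,]$ term and distribute $\om$ over the coframes. The sign bookkeeping you flag does close up uniformly (the paper records only the single intermediate sign $(-1)^{d-2}$ from hopping $\om$ past the coframe block), so your plan is sound as stated.
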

\begin{proof}
Using the invariance of a top exterior vector in $\FR^{p,q}$ under $\sSO_+(p,q)$-transformations yields
\begin{align*}
0=\omega\, \wedge \, (e_1\dwedge\cdots \dwedge e_{d-2}\dwedge\omega'\,)= \omega\, \wedge \,(e_1\dwedge\cdots \dwedge e_{d-2})\dwedge\omega' + (-1)^{d-2} e_1\dwedge\cdots \dwedge e_{d-2}\dwedge[\omega,\omega'\,] \ ,
\end{align*}
and expanding $\omega\, \wedge \,(e_1\dwedge\cdots \dwedge e_{d-2})$ using
the Leibniz rule gives the sum on the right-hand side of~\eqref{weirdformula}. 
\end{proof}
Throughout the text, when  considering gravity we shall use the $\dwedge$-product to denote the double exterior product, while the $\wedge$-product will be reserved for acting via the multivector representation and wedging the respective spacetime forms, unless otherwise explicitly noted.
\subsection{Field equations}
\label{sec:ECPnd}

Given the field content from Section~\ref{sec:ECPgaugesym}, the action functional for Einstein--Cartan--Palatini gravity in $d>2$
dimensions is then given by
\begin{align} \label{eq:ECPaction}
S_{\textrm{\tiny ECP}}(e,\omega) :=\frac1{2\,\kappa^2} \, \int_M\,
  \Tr\Big(\frac{1}{d-2}\,e^{d-2}\dwedge R +\frac{1}{d}\,\Lambda\,
  e^{d}\Big) \ ,
\end{align} 
where the square of the 
parameter $\kappa\in\FR$ gives the gravitational constant and $\Lambda\in\FR$ is the cosmological constant. The integrands
in \eqref{eq:ECPaction}
are $d$-forms on $M$ with values in $\midwedge^d\,\CCV$, and since $\CCV$
has structure group $\sSO_+(p,q)$ it carries a natural volume form such
that the Hodge duality operator $\Tr:\Omega^d(M,\midwedge^d\,\CCV)\to \Omega^d(M)$
extracts the canonical choice of function defined by a section of
$\midwedge^d\,\CCV$. The powers are taken with respect to the
$\dwedge$-product, so that $ e^{d-2} \dwedge R$ and
$e^{d}$ are elements in $\Omega^{d}(M, \midwedge^{d}\,\CCV)$. In a local trivialization,
the Hodge duality operator $\Tr :
\midwedge^{d}(\FR^{p,q}) \rightarrow \FR$ on the internal vector space has the normalization
\begin{align*}
\Tr({\tt E}_{a_1}\wedge\cdots\wedge {\tt E}_{a_d})=\varepsilon_{a_1\cdots a_d} \ ,
\end{align*}
where $\varepsilon_{a_1\cdots a_d}$ is the Levi--Civita symbol in $d$ dimensions, thus yielding an $\FR$-valued $d$-form on the $d$-dimensional manifold $M$, which we can integrate. The action functional \eqref{eq:ECPaction} is invariant under both finite and infinitesimal gauge symmetries of Section~\ref{sec:ECPgauge}. For $d>3$ one can also add to \eqref{eq:ECPaction} higher curvature terms $e^{d-2\,k}\dwedge R^k$ which give Lovelock theories of gravity, but we will shall not consider such extensions in this paper.

The field equations which follow from varying the action functional
\eqref{eq:ECPaction} with respect to compactly supported $(e,\omega)$ are given by
\begin{align}
\CF_e(e,\omega):=e^{d-3}\dwedge R +\Lambda\, e^{d-1} &=0 \ \in \
                                      \Omega^{d-1}\big(M,\text{\Large$\wedge$}^{d-1}\,\CCV\big) \ , \nonumber \\[4pt] 
\CF_\omega(e,\omega):= e^{d-3}\dwedge T &=0 \ \in \
                               \Omega^{d-1}\big(M,\midwedge^{d-2}
                               \,\CCV\big) 
                                     \ ,
\label{eq:ECPeom}\end{align}
where the first equation comes from varying with respect to $e$ and
the second equation with respect to $\omega$ using the
identity \eqref{weirdformula}. In the 
case of a gravitational field that has no singularities, where the
coframe field $e$ is everywhere invertible and so defines a non-degenerate metric
$g=\eta_{ab}\, e^a\otimes e^b$, this formulation is equivalent
on-shell to the metric formulation with the Einstein--Hilbert action
functional \eqref{eq:EHaction}: In this case the second equation is equivalent to
$T=0$, which is just the
condition that the $\sSO_+(p,q)$-connection is torsion-free. This can be uniquely solved (up to gauge equivalence) to give
$\omega$ in terms of $e$, which may then be identified with the Levi--Civita
connection for the metric $g$; metric compatibility follows from the
fact that $\omega$ is an $\sSO_+(p,q)$-connection. The first
equation is then equivalent to the usual vacuum Einstein field equations with
cosmological constant. 

Thus the Einstein--Cartan--Palatani field theory, including degenerate coframes, only recovers the
standard Einstein--Hilbert metric formulation of general relativity
on-shell as a particular subspace of its Euler--Lagrange locus. The
two theories are generally not equivalent, not even on-shell. However, it is
important in our case to allow for degenerate coframes, so that the
space of fields \eqref{eq:ECPfieldspace} is indeed a vector space,
which is a necessary requirement for the $L_\infty$-algebra
formulation. This is not merely a technical burden, since the resulting module structures of the space of fields and that of the $L_{\infty}$-algebras'\footnote{The structures turn out to be modules of the enveloping algebra of vector fields on $M$.} render the theory very amenable to twisting methods \cite{NCProc}. Furthermore, it allows for the study of invertible morphisms up to homotopy in the $L_{\infty}$-algebras category which have direct physical content. This is in marked contrast to the Einstein--Hilbert
formulation, which necessitates a restriction to non-degenerate
metrics in order to write down the action functional
\eqref{eq:EHaction}; however, the space of non-degenerate metrics is
not a vector space. In the remainder of
this paper, the gravitational constant will not play any role and we
shall normalize the action functional \eqref{eq:ECPaction} so that
$2\,\kappa^2=1$. 

\subsection{Noether identities}

Noether's first theorem, which asserts the existence of a weakly
conserved current for each global symmetry of an action functional, is
not relevant in the present context because the action functional for  general relativity does not
have global symmetries.\footnote{Of course, specific solutions may
  have symmetries generated by Killing vectors that can be used to
  produce conserved quantities, but there are no global symmetries
  \emph{a priori}.} On the other hand, Noether's second theorem, which
relates the Euler--Lagrange derivatives of an action functional
off-shell, applies to the local pseudo-orthogonal and diffeomorphism gauge
symmetries of the Einstein--Cartan--Palatini theory, see
e.g.~\cite{Barnich:2016rwk,Montesinos:2018ujm}. The corresponding
Noether identities follow from gauge-invariance of the action
functional~\eqref{eq:ECPaction}:
\begin{align*}
0=\delta_{(\xi,\rho)}S_{\textrm{\tiny ECP}}(e,\omega) = \int_M\, \Tr\Big(\CF_e(e,\omega)\dwedge\delta_{(\xi,\rho)}e+\CF_\omega(e,\omega)\dwedge\delta_{(\xi,\rho)}\omega\Big) \ ,
\end{align*}
where $\delta_{(\xi,\rho)}(e,\omega):=(\delta_\xi e +
\delta_\rho e,\delta_\xi\omega+\delta_\rho\omega)$, and then varying this equation with respect to compactly supported $(\xi,\rho)$ using the explicit expressions for the gauge transformations
from \eqref{eq:gaugerot} and \eqref{eq:gaugediff}. This leads to a pair of strong differential identities among the Euler--Lagrange derivatives:
\begin{align}\label{eq:ECPNoether}
\dsf_{(e,\om)}(\CF_e,\CF_\omega) = (0,0) \ \in \ 
  \Omega^1\big(M,\Omega^d(M)\big) \times \Omega^d\big(M,\midwedge^{d-2}(\FR^{p,q})\big) \ ,
\end{align}
where
\begin{align}
\dsf_{(e,\om)}(\CF_e,\CF_\omega) 
:= \Big( - \dd x^\mu\otimes\Tr \big( &
                                                \iota_\mu \dd
                                                e \dwedge \CF_{e} +
                                                (-1)^{d-1}\,\iota_\mu \dd\om
                                                \dwedge \CF_{\om}   
                                                -\iota_\mu e
                                                \dwedge \dd \CF_{e} \nn \\ & -
                                                (-1)^{d-1}\,\iota_\mu \om\dwedge
                                                \dd
                                                \CF_{\om}\big) \ , \ -\frac{d-1}2 \, \CF_e\wedge e+(-1)^{d-1}\, \dd^\om\CF_\om \Big) \ , \label{eq:ECPcurrents}
\end{align}
and $\iota_\mu$ denotes the contraction with vectors
$\partial_\mu=\frac\partial{\partial x^\mu}$ of the 
local holonomic frame dual to the basis $\{\dd x^\mu\}$ of one-forms
in a local coordinate chart on $M$. Here
  we identify the vector space of one-forms valued in $d$-forms $\Omega^1\big(M,\Omega^d(M)\big) $ with
  $\Omega^1(M) \otimes\Omega^d(M)$, and
$$
\CF_{e} \wedge e:= \big(\CF_{e}^{a_{1}\cdots a_{d-2}\, k}\wedge
\eta_{kl}\,  e^{l}\big) \, {\tt E}_{a_{1}}\wedge \cdots \wedge {\tt
  E}_{a_{d-2}} \ ,
$$
that is, one uses the flat metric to identify $e$ with an element of
$\Omega^{1}(M, (\FR^{p,q})^{\ast})$ and then contracts with the
multivector part in $\midwedge^{d-1} (\FR^{p,q})$, and takes the exterior
product of the differential forms.\footnote{For $d=3$ this reduces to the action of $\mathfrak{so}(p,q)\simeq \midfwedge^{2}(\FR^{p,q})$ on $\FR^{p,q}$.}
  
The converse of Noether's second theorem can be used to work backwards from this identity to deduce that the action functional \eqref{eq:ECPaction} has local pseudo-orthogonal and diffeomorphism gauge symmetries~\cite{Montesinos:2018ujm}.
The first component is the Noether identity
corresponding to local diffeomorphism invariance
$\delta_\xi S_{\textrm{\tiny ECP}}=0$. The second component gives the Noether
identity corresponding to the local pseudo-orthogonal gauge symmetry
$\delta_\rho S_{\textrm{\tiny ECP}}=0$, which also follows from the first Bianchi identity in \eqref{eq:BianchiTR} by taking the covariant
derivative of the second Euler--Lagrange derivative in \eqref{eq:ECPeom}:
\begin{align*}
\dd^\om\CF_\om &= (d-3)\, \dd^\om e\dwedge
                 e^{d-4}\dwedge T + (-1)^{d-3}\, e^{d-3} \dwedge \dd^\om T \\[4pt]
&= (d-3)\, T\dwedge  e^{d-4} \dwedge T + (-1)^{d-3}\, e^{d-3}\dwedge R\wedge e\\[4pt]
&= (-1)^{d-3}\, \frac{d-1}2\, \big(\CF_e-\Lambda\, e^{d-1}\big)\wedge e \\[4pt]
&= (-1)^{d-3}\, \frac{d-1}2\, \CF_e\wedge e \ ,
\end{align*}
where in the third equality we used $T\dwedge T=0$ and the
first Euler--Lagrange derivative from \eqref{eq:ECPeom}, while in the last
equality we used $(e\dwedge e)\wedge e =
(e^a\wedge\eta_{bc}\,e^b\wedge e^c)\,{\tt E}_a=0$. The overall
prefactor appears due to our convenient conventions on
$\midwedge^{\bullet}(\FR^{p,q})$, whereby the contraction in question
may be checked to act as a derivation with respect to the exterior product up to the overall factors.
For further discussion of 
Noether's second theorem in the first order formalism for gravity,
see~\cite{Barnich:2016rwk}.

\section{Topological gauge theories with redundant symmetries}
\label{sec:TQFT}

The ECP action functional \eqref{eq:ECPaction} bears a striking
similarity with the action functionals of a class of topological field
theories known as $BF$ theories. This observation has been the source
of intense investigations into the formulation of gravity as a
deformation of $BF$ theories, and particularly in certain approaches
to quantum gravity (see e.g.~\cite{Freidel:2012np} for a review). We
will come back to these connections in Sections~\ref{sec:3dgrav} and~\ref{sec:4dgrav}.
Before coming to the $L_\infty$-algebra formulation of the ECP gravity
theory from Section~\ref{sec:ECPgravity}, it is therefore a useful warm-up to look at some simpler examples of Schwarz-type topological gauge theories\footnote{See e.g.~\cite{Birmingham:1991ty} for a review.} whose dynamical $L_\infty$-algebras are straightfoward to formulate, yet they involve many features of the more complicated $L_\infty$-algebras that we look at in later sections. They also serve to illustrate some important points concerning the roles of diffeomorphisms, redundant symmetries, and classical equivalences between field theories, that will be particularly pertinent to the discussions in Sections~\ref{sec:3dgrav} and~\ref{sec:4dgrav}.

\subsection{Chern--Simons theory in the $L_\infty$-algebra formalism}
\label{sec:CStheory}

Chern--Simons theory in three spacetime dimensions provides
the basic example of an $L_\infty$-algebra in gauge theory with a very natural and
simple bracket structure, see e.g.~\cite{Linfty,BVChristian}. For
later use, we shall spell out the details and use them to illustrate
how redundant symmetries of a field theory are naturally handled by the $L_\infty$-algebra
framework. 

Let $\sG$ be a Lie group
whose Lie algebra $\frg$ is endowed with an invariant quadratic form
$\Tr_\frg:\frg\otimes\frg\to\FR$; invariance means
$\Tr_\frg([X,Y]_\frg\otimes Z)=\Tr_\frg(X\otimes[Y,Z]_\frg)$ for all
$X,Y,Z\in\frg$, where $[-,-]_\frg$ is the Lie bracket in $\frg$. Let $\CCP\to M$ be a principal
$\sG$-bundle on an oriented
three-dimensional manifold $M$, which for simplicity we assume to be trivial,
$\CCP=M\times\sG$, so that its connections can be regarded as one-forms
on $M$ with values in $\frg$; this restriction will be enough for our
purposes later on. The Lie bracket on $\frg$ is extended to $\Omega^\bullet(M,\frg):=\Omega^\bullet(M)\otimes\frg$ by
\begin{align*}
[\alpha\otimes X,\beta\otimes Y]_\frg := \alpha\wedge \beta \otimes [X,Y]_\frg \ .
\end{align*}

The Chern--Simons action functional
for a gauge field $A\in\Omega^1(M,\frg)$ is then defined by \footnote{Here and in BF theory the wedge symbol within the Trace pairings denotes simply wedging the spacetime form parts, in contrast to the ECP convention.}
\begin{align} \label{eq:CSaction}
S_{\textrm{\tiny CS}}(A) := \frac12\, \int_M\, \Tr_\frg\Big(A\wedge\dd A +
\frac13\, A\wedge [A, A]_\frg\Big) \ .
\end{align}
This action functional is invariant under the
gauge transformations 
\begin{align} \label{eq:CSgaugetransf}
\delta_\lambda A=\dd^A\lambda:=\dd\lambda+[A,\lambda]_\frg \ ,
\end{align}
where $\lambda\in\Omega^0(M,\frg)$.
The
Chern--Simons field equations state that the $\sG$-connection $A$
is flat, that is, its curvature vanishes:
\begin{align} \label{eq:CSeom}
\CF(A):=F=\dd^A
A=\dd A+\tfrac12\,[A,A]_\frg=0 \ \in \ \Omega^2(M,\frg) \ ,
\end{align}
while the Noether identities are equivalent to the Bianchi
identity\footnote{Note that this is in contrast to what happens in
  Einstein--Cartan--Palatini theory, where the first Bianchi identity
  in \eqref{eq:BianchiTR} implies the Noether
identity for local pseudo-orthogonal rotations, but is not generally equivalent to it.}
\begin{align} \label{eq:CSNoether}
\dsf_A \CF:=\dd^A F=\dd F+ [A, F]_\frg=0 \ \in \ \Omega^3(M,\frg) \ .
\end{align}
The classical moduli space $\CCM_{\textrm{\tiny CS}}$ of physical states of Chern--Simons
theory is thus the moduli space of flat $G$-connections on the
three-manifold $M$.

The cochain complex underlying the $L_\infty$-algebra of Chern--Simons gauge theory is simply the de~Rham
complex in three dimensions with values in the Lie algebra $\frg$:
\begin{align*}
0 \xrightarrow{ \ \ \ \ \ } \Omega^0(M,\frg) \xrightarrow{ \ \ \dd \ \ } \Omega^1(M,\frg)
  \xrightarrow{ \ \ \dd \ \ } \Omega^2(M,\frg) \xrightarrow{ \ \ \dd \
  \ }
  \Omega^3(M,\frg) \xrightarrow{ \ \ \dd \ \ } 0 \ .
\end{align*}
The corresponding graded vector space 
$$
V^{\textrm{\tiny CS}}=\Omega^\bullet(M,\frg) = \Omega^0(M,\frg) \ \oplus \
\Omega^1(M,\frg) \ \oplus \ \Omega^2(M,\frg) \ \oplus \ \Omega^3(M,\frg)
$$
has non-zero graded
components $V^{\textrm{\tiny CS}}_k=\Omega^k(M,\frg)$ for $k=0,1,2,3$, whose elements we
denote respectively by $\lambda$, $A$, $\CA$ and $\mit\Lambda$. This yields a
four-term $L_\infty$-algebra with $1$-bracket defined by the exterior
derivative as
$$
\ell^{\textrm{\tiny CS}}_1(\lambda) = \dd\lambda \ \in \
V_1^{\textrm{\tiny CS}} \ , \quad \ell_1^{\textrm{\tiny CS}}(A) = \dd
A \ \in \ V_2^{\textrm{\tiny CS}}\qquad
\mbox{and} \qquad \ell^{\textrm{\tiny CS}}_1(\CA) = \dd \CA \ \in \
V_3^{\textrm{\tiny CS}} \ .
$$
The $2$-brackets are given by the Lie bracket of $\frg$ as
\begin{align*}
\ell^{\textrm{\tiny CS}}_2(\lambda_1,\lambda_2) &=
                                                  -[\lambda_1,\lambda_2]_\frg
                                                  \ \in \
                                                  V_0^{\textrm{\tiny
                                                  CS}} \ , \\[4pt]
\ell^{\textrm{\tiny CS}}_2(\lambda,A) &= -[\lambda,A]_\frg \ \in \
                                        V_1^{\textrm{\tiny CS}} \ , \\[4pt]
\ell^{\textrm{\tiny CS}}_2(\lambda,\CA) &= -[\lambda,\CA]_\frg \ \in \
                                          V_2^{\textrm{\tiny CS}} \ , \\[4pt]
\ell^{\textrm{\tiny CS}}_2(\lambda,{\mit\Lambda}) &=
                                                    -[\lambda,{\mit\Lambda}]_\frg
                                                    \ \in \
                                                    V_3^{\textrm{\tiny
                                                    CS}} \ ,
  \\[4pt]
\ell^{\textrm{\tiny CS}}_2(A_1,A_2) &= -[A_1, A_2]_\frg \ \in \
                                      V_2^{\textrm{\tiny CS}} \ , \\[4pt]
\ell^{\textrm{\tiny CS}}_2(A,\CA) &= - [A, \CA]_\frg \ \in \
                                    V_3^{\textrm{\tiny CS}} \ ,
\end{align*}
with all other brackets vanishing. Thus the gauge theory is organised by a differential graded Lie algebra: The homotopy relations in this case
easily follow from the nilpotence and Leibniz rule for the exterior
derivative $\dd$, together with the Jacobi identity for the Lie
bracket of $\frg$. One easily verifies the kinematical
and dynamical structure of Chern--Simons gauge theory from these
brackets as designed by the prescription of
Section~\ref{sec:Linftygft}; in particular the gauge
transformations, field equations and Noether identities are encoded as
\begin{align*}
\delta_\lambda A&=\ell_1^{\textrm{\tiny
                  CS}}(\lambda)+\ell_2^{\textrm{\tiny CS}}(\lambda,A)
                  \ , \\[4pt]
F&=\ell_1^{\textrm{\tiny CS}}(A)-\frac12\,\ell_2^{\textrm{\tiny
   CS}}(A,A) \ , \\[4pt]
\delta_\lambda F &= \ell_2^{\textrm{\tiny CS}}(\lambda,F) \ , \\[4pt]
\dd^A F&=\ell_1^{\textrm{\tiny CS}}(F)-\ell_2^{\textrm{\tiny CS}}(A,F)
         \ .
\end{align*}

This can be made into a cyclic
$L_\infty$-algebra by defining the pairing of $\frg$-valued forms in
complementary degrees:
\begin{align} \label{eq:CSpairing}
\langle\alpha,\beta\rangle^{\textrm{\tiny CS}} := \int_M\,
\Tr_\frg(\alpha\wedge\beta) \ ,
\end{align}
for $\alpha\in\Omega^p(M,\frg)$ and $\beta\in\Omega^{3-p}(M,\frg)$
with $p=0,1,2,3$. This defines a cyclic non-degenerate pairing (of
degree~$-3$) on $V_1^{\textrm{\tiny CS}}\otimes V_2^{\textrm{\tiny
    CS}}\to\FR$ and $V_0^{\textrm{\tiny CS}}\otimes V_3^{\textrm{\tiny
    CS}}\to\FR$, where cyclicity follows from the invariance of the
quadratic form on the Lie algebra $\frg$. The adjoint of the exterior
derivative $\dd:\Omega^p(M,\frg)\to\Omega^{p+1}(M,\frg)$ with respect
to this inner product is $(-1)^{p+1}\,\dd$, which yields the
appropriate cyclicity structure from which the Chern--Simons action
functional \eqref{eq:CSaction} is reproduced according to the general
prescription of Section~\ref{sec:Linftygft}:
\begin{align*}
S_{\textrm{\tiny CS}}(A) = \frac12\,\langle A,\ell_1^{\textrm{\tiny
  CS}}(A)\rangle^{\textrm{\tiny CS}} -\frac1{3!}\,\langle
  A,\ell_2^{\textrm{\tiny CS}}(A,A)\rangle^{\textrm{\tiny CS}} \ .
\end{align*}
In this sense, the
$L_\infty$-algebra formulation of field theories is a generalization
of Chern--Simons theory, which is dual to the BV formalism for
Lagrangian field theories; see~\cite{BVChristian} for further details
on this perspective. 

\subsection{Diffeomorphisms as redundant symmetries}
\label{sec:diffredundant}

Chern--Simons theory is also the prototypical example of a topological
field theory: Its field equation simply state that the
$\sG$-connection $A$ is flat. In addition, the theory is background independent:
There is no background structure assumed on the manifold $M$, aside
from its orientation. Thus the action functional is constructed
solely of differential forms built from the dynamical field, and
hence it is automatically invariant under orientation-preserving
diffeomorphisms of $M$.\footnote{Orientation-\emph{reversing} diffeomorphisms
change the sign of the Chern--Simons action functional.} In
particular, the action of any infinitesimal diffeomorphism $\xi \in
\Gamma(TM)$ on a connection $A$ is via the Lie derivative: 
\begin{align}\label{eq:deltaxiA}
\delta_{\xi} A := \LL_{\xi} A \ ,
\end{align}
which leaves the Chern--Simons action functional \eqref{eq:CSaction}
invariant:
\begin{align*}
\delta_{\xi} S_{\textrm{\tiny CS}}=0 \ .
\end{align*} 
The corresponding Noether identity may then be read off by using the Cartan formula \eqref{eq:CartanLie} and integrating by parts to get the strong differential identity
\begin{align*}
\dd x^\mu\otimes\Tr_\frg(\iota_\mu\dd A\wedge F)-\dd x^\mu\otimes\Tr_\frg(\iota_\mu A\otimes\dd F) = 0 \ \in \ \Omega^1\big(M,\Omega^3(M)\big) \ .
\end{align*}

One may wonder how to reconcile this apparently ``extra'' symmetry
with the usual gauge symmetry of Chern--Simons theory, and whether this
can be incorporated in the $L_{\infty}$-algebra framework. Of course,
the answer to the first question is well-known to experts. The crucial
point is that the action of any vector field $\xi \in \Gamma(TM)$ may
be compensated by the action of a specially chosen gauge
transformation $\lambda_{\xi} \in \Omega^{0}(M,\frg)$, up to a term
proportional to the field equations. To see this, note that the
variation \eqref{eq:deltaxiA} can be written as
\begin{align*}
\delta_{\xi} A&= \dd \iota_{\xi} A + \iota_{\xi} \dd A \\[4pt]
&= \dd \iota_{\xi} A + \iota_{\xi} \dd A  +\tfrac{1}{2}\,
  \iota_{\xi}[A,A]_\frg - \tfrac{1}{2}\,\iota_{\xi}[A,A]_\frg \\[4pt]
&=\dd \iota_{\xi} A + [A,\iota_{\xi}A]_\frg +\iota_{\xi} F \\[4pt]
&=: \dd^{A} \lambda_{\xi} +\iota_{\xi} \CF(A)
\end{align*}
 where we defined the ``field dependent'' gauge transformation
 $\lambda_{\xi}:= \iota_{\xi} A \, \in \Omega^{0}(M,\frg)$. Thus the
 gauge orbits of the $\Gamma(TM)$-action are included in the gauge
 orbits of the standard gauge transformations, on-shell. In other
 words, the traditional moduli space of physical states for
 Chern--Simons theory $\CCM_{\textrm{\tiny CS}}$ is unchanged if one
 further quotients by the action of (infinitesimal)
 diffeomorphisms.\footnote{This statement holds at the
   infinitesimal level, but fails for finite diffeomorphisms which
   are not connected to the identity.} One then declares the extra
 symmetries to be \emph{redundant} -- they do not extend the
 distribution $\CCD$ of gauge transformations on the space of fields.
 
We will now demonstrate that the $L_\infty$-algebra framework actually
serves as a useful tool for encoding redundant symmetries and 
their relation to the smaller Lie algebra generating the distribution
$\CCD$ of symmetries of the corresponding generalized gauge
theory. Indeed, the vector space $V^{\textrm{\tiny CS}}$ underlying the $L_{\infty}$-algebra of Chern--Simons theory may be extended as
 \begin{align}\label{eq:CSVext}
 V_{\textrm{\tiny CS}}^{\textrm{ext}}:=
  \Gamma(TM)\times \Omega^{0}(M,\frg) \ \oplus \
   \Omega^1(M,\frg) \ \oplus \ \Omega^2(M,\frg) \ \oplus \
   \Omega^{1}\big(M,\Omega^{3}(M)\big)\times \Omega^{3}(M,\frg) \ .
 \end{align}
That is, we simply extend the space of gauge transformations $V^{\textrm{\tiny CS}}_{0}$ to include the extra diffeomorphism gauge symmetry, with elements $\xi\in\Gamma(TM)$, while also extending the space $V_{3}^{\textrm{\tiny CS}}$ to accommodate for the corresponding Noether identity, with elements $\CX\in\Omega^1\big(M,\Omega^3(M)\big)$. The brackets are then modified to accommodate for the action of diffeomorphisms on the various spaces:
 The $1$-bracket is modified trivially as
\begin{align}\label{eq:CSell1ext}
 \ell^{\textrm{ext}}_1(\xi, \lambda) = \dd\lambda \ , \quad \ell_1^{\textrm{ext}}(A) = \dd A \qquad
 \mbox{and} \qquad \ell^{\textrm{ext}}_1(\CA) = (0,\dd \CA) \ ,
\end{align}
 while the modified $2$-brackets are given by
 \begin{align}
 \ell^{\textrm{ext}}_2\big((\xi_{1},\lambda_1)\,,\,(\xi_{2},\lambda_2)\big) &= \big([\xi_{1},\xi_{2}]\,,\, \LL_{\xi_{1}} \lambda_{2} -\LL_{\xi_{2}} \lambda_{1}-[\lambda_1,\lambda_2]_\frg \big)\ , \nn \\[4pt]
 \ell^{\textrm{ext}}_2\big((\xi,\lambda)\,,\,A\big) &= \LL_{\xi} A-[\lambda,A]_\frg \ , \nn \\[4pt]
 \ell^{\textrm{ext}}_2\big((\xi,\lambda)\,,\,\CA\big) &= \LL_{\xi} \CA-[\lambda,\CA]_\frg \ , \label{eq:CSell2ext} \\[4pt]
 \ell^{\textrm{ext}}_2\big((\xi,\lambda)\,,\,(\CX ,{\mit\Lambda})\big) &=\big
 (\dd x^\mu\otimes\Tr_{\frg}(\iota_\mu \dd
 \lambda\otimes {\mit\Lambda})
 +\LL_{\xi}{\CX}\,,\,-[\lambda,\mit\Lambda]_{\frg} +\LL_{\xi} {\mit\Lambda}\big) \ ,
 \nn \\[4pt]
 \ell^{\textrm{ext}}_2(A_1,A_2) &= -[A_2, A_1]_\frg \ , \nn \\[4pt]
 \ell^{\textrm{ext}}_2(A,\CA) &= \big( \dd x^{\mu}\otimes
                                \Tr_{\frg}(\iota_{\mu} \dd A\wedge
                                \CA)-\dd x^{\mu}\otimes
                                \Tr_{\frg}(\iota_{\mu}A\otimes
                                \dd\CA)\,,\,- [A, \CA]_\frg\big) \ . \nn
 \end{align}
In particular, the first $2$-bracket is the Lie bracket for the
extended semi-direct product gauge algebra
$\Gamma(TM)\rtimes\Omega^0(M,\frg)$. The proof of the homotopy
relations for these brackets is formally identical to the proof we
present for the brackets of three-dimensional gravity in
Appendix~\ref{app:3dhomotopy}. These brackets encode all the dynamical data of
the gauge theory as prescribed in Section~\ref{sec:Linftygft}, now
including the action of diffeomorphisms and the corresponding Noether
identity. 

Since $V_1^{\rm ext}=V_1^{\textrm{\tiny CS}}$ and $V_2^{\rm
  ext}=V_2^{\textrm{\tiny CS}}$, the cyclic pairing
$\langle-,-\rangle^{\rm ext}$ is given by \eqref{eq:CSpairing} on
$V_1^{\rm ext}\otimes V_2^{\rm ext}$. It is further extended on $V^{\textrm{ext}}_{0}\otimes V^{\textrm{ext}}_{3}$ by defining 
$$
\langle (\xi,\lambda)\,,\, (\CX,{\mit\Lambda}) \rangle^{\textrm{ext}}
:= \int_{M}\, \iota_{\xi} \CX + \int_M\, \Tr_\frg(\lambda\otimes{\mit\Lambda}) \ .
$$ 
The extended brackets may then be easily checked to be cyclic with respect to
this pairing as well; the calculation in question is carried out in
the case of ECP gravity later on.

\subsection{$L_\infty$-quasi-isomorphism}
 
Although the moduli spaces of classical
solutions in the two $L_{\infty}$-algebras from
Sections~\ref{sec:CStheory} and~\ref{sec:diffredundant} are the same, the
cohomologies of the cochain complexes generated by the differentials
$\ell_{1}^{\textrm{\tiny CS}}$ and $\ell_{1}^{\textrm{ext}}$ are not isomorphic; this is immediately apparent by
looking at the cohomology in degree~$0$ for the two
$L_\infty$-algebras: 
$$
H^0\big(V ^{\textrm{\tiny CS}},\ell_1 ^{\textrm{\tiny CS}}\big) =
H^0(M,\frg) \qquad \mbox{and} \qquad H^0\big(V_{\textrm{\tiny CS}}^{\rm
  ext},\ell_1 ^{\textrm{ext}}\big) =
\Gamma(TM)\times H^0(M,\frg)  \ .
$$
This means that there cannot exist an
$L_\infty$-quasi-isomorphism between the two
$L_\infty$-algebras. However, there is no contradiction here: While
two field theories with quasi-isomorphic $L_{\infty}$-algebras have
isomorphic classical moduli spaces, the converse need not necessarily
hold.

We can nevertheless describe the classical equivalence between the two $L_\infty$-algebra
formulations of Chern--Simons gauge theory as an $L_\infty$-quasi-isomorphism in the
following way. The redundancy may be encoded in the
$L_{\infty}$-algebra framework by further extending the complex
defined by the vector space $V_{\textrm{\tiny CS}}^{\rm ext}$. This
is done by introducing a copy of the redundant subspace in degree $-1$: 
$$
V_{-1}^{\textrm{ext}}:= \Gamma(TM) \ .
$$
As with the rest of the complex, this should be supplemented with
its dual in degree~$4$: 
$$
V_{4}^{\textrm{ext}}:= \Omega^1\big(M,\Omega^{3}(M)\big) \ .
$$
We denote elements of $V^{\textrm{ext}}_{-1}$ by $\check{\xi}$ and elements of
 $V^{\textrm{ext}}_{4}$ by $\check{\CX}$ to distinguish them from their copies in
 degrees~$0$ and~$3$, respectively. The issue with the cohomology of
 $\ell_1^{\textrm{ext}}$ is then fixed by extending it as the
 inclusion $\ell_1^{\textrm{ext}}:V^{\textrm{ext}}_{-1}\to V^{\textrm{ext}}_0$ 
 and the projection $\ell_1^{\textrm{ext}}:V^{\textrm{ext}}_3\to V^{\textrm{ext}}_4$:
 \begin{align*}
 \ell_{1}^{\textrm{ext}}\big(\check{\xi}\, \big):= \big(\check{\xi},0\big) \ \in \
                                  V_{0}^{\rm ext} \qquad \mbox{and} \qquad
 \ell_{1}^{\textrm{ext}} (\CX,{\mit\Lambda}):= \CX \ \in \ V_{4}^{\rm ext} \ .
 \end{align*}
The differential condition  $\ell_{1}^{\textrm{ext}}\circ
\ell_{1}^{\textrm{ext}}=0$ is still satisfied on the extended
complex, but now the cohomologies agree as expected:
$H^\bullet(V^{\textrm{\tiny CS}},\ell_1^{\textrm{\tiny
    CS}})=H^\bullet(V_{\textrm{\tiny CS}}^{\rm ext},\ell_1^{\rm ext})$.

To complete the $L_{\infty}$-algebra extension, one should extend
the $2$-bracket $\ell_{2}^{\textrm{ext}}$ while still satisfying
the remaining homotopy relations. It is easy to see that the following
definition does the job:
\begin{align*}
\ell_{2}^{\textrm{ext}}\big(\check{\xi}\,,\,(\xi,\lambda)\big)&:=\big[\check{\xi},\xi\big]
                                               \ \in \ V ^{\textrm{ext}}_{-1} \ , \\[4pt]
\ell_{2}^{\textrm{ext}} \big(\check{\xi},A\big)&:=
                                                      \big(0,\iota_{\check{\xi}}
                                                      A\big) \ \in \
                                                      V ^{\textrm{ext}}_{0} \ , \\[4pt]
\ell_{2}^{\textrm{ext}} \big(\check{\xi},\CA\big)&:=
                                                        \iota_{\check{\xi}}
                                                        \CA \ \in \
                                                        V ^{\textrm{ext}}_{1} \ , \\[4pt]
\ell_{2}^{\textrm{ext}}\big(\check{\xi}\,,\,(\CX,{\mit\Lambda})\big)&:=\iota_{\check{\xi}}
                                               {\mit\Lambda} \ \in  \
                                               V ^{\textrm{ext}}_{2} \ , \\[4pt]
\ell_{2}^{\textrm{ext}} \big(\check{\xi}, \check{\CX}\big)&:=
                                                                 \big(\LL_{\check{\xi}}
                                                                 \check{\CX},0\big)
                                                                 \ \in
                                                                 \
                                                                 V^{\rm ext}_{3}
                                                                 \ , \\[4pt]
\ell_{2}^{\textrm{ext}}\big((\xi,\lambda)\,,\,\check{\CX}\big)&:=\LL_{\xi} \check{\CX}
                                               \ \in \ V ^{\textrm{ext}}_{4}  \ , \\[4pt]
\ell_{2}^{\textrm{ext}}\big(A\,,\,(\CX,{\mit\Lambda})\big)&:=\dd
                                                                 x^{\mu}\otimes
                                                                 \Tr_{\frg}(\iota_{\mu}A\otimes{\mit\Lambda})
                                                                 \ \in
                                                                 \
                                                                 V^{\textrm{ext}}_{4}
                                                                 \ , \\[4pt]
\ell_{2}^{\textrm{ext}} (\CA_{1},\CA_{2})&:=\dd x^{\mu}\otimes
                                                \Tr_{\frg}(\iota_{\mu}\CA_{1}\wedge\CA_{2})
                                                \ \in \ V ^{\textrm{ext}}_{4} \ ,
\end{align*}
where also $\ell_{2}^{\textrm{ext}}\big(\check{\xi}_{1},\check{\xi}_{2}\big):=0$ as
this lands in $V ^{\textrm{ext}}_{-2}=0$. The pairing is further
extended to $V ^{\textrm{ext}}_{-1}\otimes V ^{\textrm{ext}}_4$ by 
$$
\big\langle \check{\xi}, \check{\CX} \big\rangle ^{\textrm{ext}} := \int_{M}\,
\iota_{\check{\xi}} \check{\CX} \ , 
$$
and the new brackets are cyclic with respect to this pairing as
well. The proofs of the homotopy relations in this case follow exactly
as for the calculations we spell out for gravity in
Appendix~\ref{app:3dhomotopy}, with the only non-trivial (but straightforward) check
occuring in the graded Jacobi identity on a pair of gauge parameters
$(\xi,\lambda)$, and two Euler--Lagrange derivatives $\CA_1$ and
$\CA_2$.

The redundancy of the space of gauge transformations $V_0^{\rm ext}=\Gamma(TM)\times \Omega^{0}(M,\frg)$
should really be regarded in terms of the ``higher gauge transformations''
\eqref{eq:highergauge} which act on it by the redundant symmetries in $V^{\rm
  ext}_{-1}$: Any $(\xi,\lambda) \in V^{\rm ext}_{0}$ generates the gauge
transformation 
$$
\delta^{\rm ext}_{(\xi,\lambda)} A = \LL_{\xi}A + \dd^{A}\lambda \ ,
$$ 
which is on-shell equivalent to the gauge transformation generated by 
$$
(\xi',\lambda'):=(\xi,\lambda) +\delta^{\rm ext}_{(\check{\xi},A)}(\xi,\lambda)
$$ 
for any $\check{\xi} \in V^{\rm ext}_{-1}$ with 
\begin{align*}
\delta^{\rm ext}_{(\check{\xi},A)} (\xi,\lambda):= \ell ^{\textrm{ext}}_{1}\big(\check{\xi}\, \big)
                                    - \ell ^{\textrm{ext}}_{2}\big(A,\check{\xi}\,
  \big)=\big(\check{\xi},-\iota_{\check{\xi}}A\big) \ \in \
  V^{\textrm{ext}}_{0} \ .
\end{align*}
Given this equivalence, all vector fields are redundant in the
following sense: For any $(\xi,\lambda) \in V^{\rm ext}_{0}$, pick
$\check{\xi}:= -\xi \in V^{\rm ext}_{-1}$. Then 
\begin{align*}
\delta^{\rm ext}_{(\xi',\lambda')}A=\delta^{\rm ext}_{(0,\lambda+\iota_{\xi}A)} A 
\end{align*}
which is equivalent to $\delta^{\rm ext}_{(\xi,\lambda)}A$ up to terms
involving Euler--Lagrange derivatives. 

This on-shell equivalence is made precise in terms of an
\emph{off-shell} $L_\infty$-quasi-isomorphism in the following way.
One can eliminate the redundant gauge symmetries by using a
quasi-isomorphism $\{\psi_n ^{\textrm{\tiny CS}}\}$ from the standard Chern--Simons
$L_{\infty}$-algebra of Section~\ref{sec:CStheory} to the extended
$L_\infty$-algebra here, with the component multilinear graded antisymmetric maps
\begin{align*}
\psi_n ^{\textrm{\tiny CS}}:\midwedge^n V^{\textrm{\tiny
  CS}}\longrightarrow V_{\textrm{\tiny CS}}^{\rm ext} \ ,
\end{align*}
of degree $|\psi_n^{\textrm{\tiny CS}}|=1-n$ for $n\geq1$.
 For $\psi ^{\textrm{\tiny CS}}_{1}:V ^{\textrm{\tiny CS}}\to
 V_{\textrm{\tiny CS}}^{\rm ext}$, one uses the canonical embedding of
 the underlying vector space $V^{\textrm{\tiny CS}}$ into
 $V_{\textrm{\tiny CS}}^{\rm ext}$ to define a map of underlying cochain complexes
$$
\xymatrix{
 & & V_0 ^{\textrm{\tiny CS}}\ar[rr]^{\ell_1^{\textrm{\tiny CS}}} \ar[dd]_{\psi_1^{\textrm{\tiny CS}}}& & V^{\textrm{\tiny CS}}_1 \ar[rr]^{\!\!\!\ell_1^{\textrm{\tiny CS}}} \ar[dd]_{\psi^{\textrm{\tiny CS}}_1}
& & V^{\textrm{\tiny CS}}_{2}
\ar[rr]^{\ell^{\textrm{\tiny CS}}_1} \ar[dd]_{\psi^{\textrm{\tiny CS}}_1} \ar[rr]^{\ell_1^{\textrm{\tiny CS}}}& & V_3^{\textrm{\tiny CS}} \ar[dd]_{\psi_1^{\textrm{\tiny CS}}} & & \\
& & & & & & & & & & \\
V_{-1}^{\rm ext}\ar[rr]^{\ell_1^{\rm ext}} & & V_0^{\rm ext} \ar[rr]^{\ell_1^{\rm ext}} & & V_1^{\rm ext} \ar[rr]^{\!\!\!\ell_1^{\rm ext}} & & V_{2}^{\rm ext} \ar[rr]^{\ell_1^{\rm ext}} & & V_3^{\rm ext}
\ar[rr]^{\ell_1^{\rm ext}} & & V_4^{\rm ext}
}
$$
given by
\begin{align*}
\psi ^{\textrm{\tiny CS}}_{1}(\lambda)=(0,\lambda) \ , \quad \psi_1
  ^{\textrm{\tiny CS}} (A)=A \ , \quad \psi_1 ^{\textrm{\tiny
  CS}}(\CA)=\CA \qquad \mbox{and} \qquad \psi_1 ^{\textrm{\tiny
  CS}}({\mit\Lambda})=(0,{\mit\Lambda}) \ .
\end{align*}
For $\psi_{2}^{\textrm{\tiny CS}}:\midwedge^2 V^{\textrm{\tiny CS}}\to V_{\textrm{\tiny
    CS}}^{\rm ext}$, the only non-vanishing components are
\begin{align*}
\psi_{2}^{\textrm{\tiny CS}} (A,{\mit\Lambda})&:=-\big(\dd x^{\mu}\otimes
                           \Tr_{\frg}(\iota_{\mu} \dd \lambda \otimes
                           {\mit\Lambda})\,,\,0\big) \ \in \ V ^{\textrm{ext}}_{3} \ , \\[4pt]
\psi_{2}^{\textrm{\tiny CS}} (\CA_{1},\CA_{2})&:=-\big(\dd x^{\mu}\otimes
                           \Tr_{\frg}(\iota_{\mu}\CA_{1}\wedge\CA_{2})\,,\,0\big)
                           \ \in \ V ^{\textrm{ext}}_{3} \ .
\end{align*}
We set $\psi^{\textrm{\tiny CS}}_n=0$ for all $n\geq3$.

It is easy to check that this defines an $L_{\infty}$-morphism, as one
sees by writing out the two sides of the relations
\eqref{eq:morphismrels}. Furthermore, it is obviously a
quasi-isomorphism, since $\psi^{\textrm{\tiny CS}}_{1}$ descends to the identity on the
corresponding cohomology groups $H^\bullet(V ^{\textrm{\tiny
    CS}},\ell_1 ^{\textrm{\tiny CS}}) = H^\bullet(V ^{\rm ext},\ell_1 ^{\textrm{ext}})$, and is easily
checked to be cyclic. In particular, this provides a simple example of
an $L_\infty$-morphism between differential graded Lie algebras which
is not exactly a differential graded Lie algebra morphism, but only up
to homotopy proportional to $\psi^{\textrm{\tiny CS}}_{2}$: As a
quasi-isomorphism of differential graded Lie algebras, it has an
inverse only in the category of $L_\infty$-algebras. This quasi-isomorphism provides a further way to see why infinitessimal diffeomorphisms are safely ignored in the pertubative quantisation of Chern-Simons.

\subsection{$BF$ theories in the $L_\infty$-algebra formalism}
\label{sec:BFtheory}

The next primary example of a topological field theory with the same
properties is a $BF$ theory,
and the exact analogue of the story spelled out thus far in this
section can be easily adapted to $BF$ theories in arbitrary dimension $d$ and for
any gauge group $\sG$.
The $BF$ action functional is given by
\begin{align}\label{eq:BFaction}
S_{\textrm{\tiny BF}}(B,A):=\int_{M}\,\Tr_\CCW( B\wedge F)
\end{align}
where $M$ is an oriented $d$-dimensional manifold, $F= \dd^{A} A$ is
the curvature of a connection one-form $A \in \Omega^{1}(M,\frg)$
valued in a Lie algebra $\frg$, and $B$ is a $(d{-}2)$-form valued in
a fixed vector space $\CCW$ which is a $\frg$-module. The pairing $\Tr_\CCW:
\CCW\otimes \frg \rightarrow \FR$ is assumed to be invariant under the $\frg$-action:
\begin{align*}
\Tr_\CCW\big( (X \cdot w)\otimes Y + w\otimes[X, Y]_\frg\big)
  =0 \ ,
\end{align*}
for $w\in \CCW$ and $X,Y\in\frg$. 
In the conventional definition of $BF$ theory~\cite{Birmingham:1991ty}, one usually takes
$\CCW=\frg$, so that both fields are valued in the same Lie algebra, with $\CCW$
regarded as a $\frg$-module under the canonical adjoint action of the
Lie algebra on itself. However, this more general formulation will act
as a nice stepping stone between different theories suitable for our purposes.

The field equations are readily seen to be 
\begin{align*}
\CF_{B}& := F= \dd A+ \tfrac{1}{2}\,[A,A]_\frg = 0 \ \in \
         \Omega^{2}(M,\frg) \ , \nn \\[4pt]
\CF_{A}&:= \dd^{A}B= \dd B + A\wedge B = 0 \ \in \ \Omega^{d-1}(M,\CCW) \ ,
\end{align*}
where $A\wedge B$ computes the exterior product of the form
components while pairing the components in $\frg$ and $\CCW$ via the
$\frg$-action. Thus the Euler--Lagrange locus of $BF$ theory are pairs
of a flat $\sG$-connection on $M$ and a covariantly constant
$(d{-}2)$-form valued in a representation $\CCW$ of $\frg$.

The action functional \eqref{eq:BFaction} is invariant under standard (infinitesimal) gauge transformations $\rho \in \Omega^{0}(M,\frg)$ acting as 
\begin{align*}
\delta_{\rho} (B,A) =\big(-\rho \cdot B\,,\, \dd \rho + [A,\rho]_\frg\big) \ .
\end{align*}
Compared to Chern--Simons theory, however, for $d\geq3$ there is the extra ``shift'' symmetry generated by $(d{-}3)$-forms $\tau \in \Omega^{d-3}(M,\CCW)$ valued in $\CCW$, which act as
\begin{align} \label{eq:shift}
\delta_{\tau}(B,A):= (\dd^{A}\tau, 0)= (\dd\tau + A\wedge \tau, 0) \ .
\end{align}
This symmetry is on-shell reducible in dimensions $d\geq4$. The corresponding Noether identities coincide with the usual ``Bianchi identities''
\begin{align*}
\dsf_{(B,A)}(\CF_B,\CF_A):=\big((-1)^{d-3}\,\dd^{A}\CF_{B}\,,\,\dd^{A}\CF_{A}-\CF_{B} \wedge B\big) = (0,0) \ \in \ \Omega^3(M,\frg) \times \Omega^{d}(M,\CCW) \ ,
\end{align*}
where the overall sign only serves to eliminate signs in the cyclic pairing and brackets below. 

The cyclic $L_{\infty}$-algebra of $BF$ theory in $d$ dimensions is given by the underlying graded vector space 
\begin{align*}
V^{\textrm{\tiny BF}}:= V_{0}^{\textrm{\tiny BF}} \oplus V_{1}^{\textrm{\tiny BF}} \oplus V_{2}^{\textrm{\tiny BF}} \oplus V_3^{\textrm{\tiny BF}}
\end{align*}
where
\begin{align}\label{eq:ndBF}
V_{0}^{\textrm{\tiny BF}}&=\Omega^{d-3}(M,\CCW) \times \Omega^{0}\big(M,\frg\big) \ , \nn
\\[4pt]
V_{1}^{\textrm{\tiny BF}}&= \Omega^{d-2}(M,\CCW) \times
\Omega^{1}\big(M,\frg\big) \ ,  \\[4pt]
V_{2}^{\textrm{\tiny BF}}&=\Omega^{2}\big(M,\frg\big) \times
\Omega^{d-1}(M,\CCW) \ , \nn \\[4pt]
V_3^{\textrm{\tiny BF}}&= \Omega^{3}(M,\frg)\times \Omega^d(M,\CCW)
\ . \nn
\end{align} 
We denote gauge parameters by $(\tau,\rho)\in V_{0}^{\textrm{\tiny BF}}$, dynamical fields by $(B,A)\in V_{1}^{\textrm{\tiny BF}}$, Euler--Lagrange derivatives by $(\CB,\CA)\in V_{2}^{\textrm{\tiny BF}}$, and Noether identities by $(\CT,\CP)\in V_{3}^{\textrm{\tiny BF}}$. The non-trivial brackets are then
\begin{flalign} 
\ell_{1}^{\textrm{\tiny BF}}(\tau,\rho)&=(\dd \tau,\dd\rho) \ \in \ V_{1}^{\textrm{\tiny BF}} \ , \nn \\[4pt]
\ell_{1}^{\textrm{\tiny BF}}(B,A)&=(\dd A,\dd B) \ \in \ V_{2}^{\textrm{\tiny BF}} \ , \nn
\\[4pt]
\ell_{1}^{\textrm{\tiny BF}}(\CB,\CA)&=(\dd \CB,\dd \CA) \ \in \ V_{3}^{\textrm{\tiny BF}} \ , \nn 
\\[6pt]
\ell_{2}^{\textrm{\tiny BF}}\big((\tau_{1},\rho_{1})\,,\,(\tau_{2},\rho_{2})\big)&=\big(-\rho_{1} \cdot \tau_{2} +\rho_{2} \cdot \tau_{1}\,,\,-[\rho_{1},\rho_{2}]_\frg\big) \ \in \ V_{0}^{\textrm{\tiny BF}} \ , \nn
\\[4pt]
\ell_{2}^{\textrm{\tiny BF}}\big((\tau,\rho)\,,\,(B,A)\big)&=\big(-\rho \cdot B
+A\wedge \tau \,,\, -[\rho,A]_\frg\big) \ \in \ V_{1}^{\textrm{\tiny BF}} \
, \label{eq:ndBFbrackets} \\[4pt]
\ell_{2}^{\textrm{\tiny BF}}\big((\tau,\rho) \,,\, (\CB,\CA)\big)&=\big(-
[\rho, \CB]_\frg \,,\, -\rho\cdot \CA + \CB\wedge \tau \big) \ \in
\ V_{2}^{\textrm{\tiny BF}} \ , \nn \\[4pt]
\ell_{2}^{\textrm{\tiny BF}}\big((\tau,\rho)\,,\,(\CT,\CP)\big)&= \big
(-[\rho,\CT]_\frg\,,\,-\rho \cdot \CP + (-1)^{d-3} \, \CT \wedge \tau
\big) \ \in \ V_3^{\textrm{\tiny BF}} \ , \nn \\[4pt] 
\ell_{2}^{\textrm{\tiny BF}}\big((B_{1},A_{1})\,,\,(B_{2},A_{2})\big)&=-\big([A_{1},A_{2}]_\frg\, ,\, A_{1} \wedge B_{2} +A_{2} \wedge B_{1} \big) \ \in \ V_{2}^{\textrm{\tiny BF}} \ , \nn \\[4pt]
\ell_{2}^{\textrm{\tiny BF}}\big((B,A)\,,\,(\CB,\CA)\big)&=-\big([A,\CB]_\frg\,,\, A\wedge \CA - \CB \wedge B\big)  \ \in \ V_3^{\textrm{\tiny BF}} \ , \nn
\end{flalign}
while all remaining brackets vanish. Thus the dynamics of $BF$ theory in any dimension $d$ is also organised by a differential graded Lie algebra.
The cyclic pairing is given naturally as
\begin{align*}
\langle(B,A)\,,\,(\CB,\CA) \rangle^{\textrm{\tiny BF}} &:= \int_{M}\, \Tr_\CCW( B\wedge \CB + \CA\wedge A) \ , \\[4pt]
\langle(\tau,\rho)\,,\,(\CT,\CP) \rangle^{\textrm{\tiny BF}} &:= \int_{M}\, \Tr_\CCW(\tau\wedge \CT + \rho \wedge \CP ) \ .
\end{align*}

As with Chern--Simons theory, BF theory is invariant under infinitesimal diffeomorphisms, parameterized by vector fields in $\Gamma(TM)$, acting via the Lie derivative. However, these are again redundant: for any $\xi \in \Gamma(TM)$ we compute
\begin{align}
\delta_{\xi}(B,A):\!\!&= (\LL_{\xi} B,\LL_{\xi} A) \nn \\[4pt]
&=(\dd \iota_{\xi} B + \iota_{\xi} \dd B, \dd \iota_{\xi} A +\iota_{\xi} \dd A) \nn\\[4pt]
&=\big(\dd \tau_{\xi} +\iota_{\xi} \dd B + \iota_{\xi}(A\wedge B)-\iota_{\xi}(A\wedge B)\, , \, \dd \rho_{\xi}+ \iota_{\xi} \dd A +\iota_{\xi} [A,A]_\frg - \iota_{\xi}[A,A]_\frg\big) \nn \\[4pt]
&=\big(\dd \tau_{\xi} + \iota_{\xi} \dd^{A} B+A\wedge\tau_{\xi} -\rho_{\xi}\cdot B \, , \, \dd\rho_{\xi} +[A,\rho_{\xi}]_\frg +\iota_{\xi}\dd^{A}A\big )\nn \\[4pt]
&=\big( \dd^{A} \tau_{\xi} - \rho_{\xi}\cdot B +\iota_{\xi} \CF_{A}\, , \, \dd^{A}\rho_{\xi} + \iota_{\xi} \CF_{B} \big) \nn \\[4pt]
&=\delta_{(\tau_{\xi},\rho_{\xi})}(B,A) +(\iota_{\xi} \CF_{A}\, , \, \iota_{\xi} \CF_{B}) \label{eq:BFdiff}
\end{align} 
where we defined $(\tau_{\xi},\rho_{\xi}):= (\iota_{\xi}B,\iota_{\xi} A) \, \in V_{0}^{\textrm{\tiny BF}}.$ Thus again the diffeomorphism symmetry is redundant, as the further quotient of the moduli space of classical solutions $\CCM_{\textrm{\tiny BF}}$ has no effect.

Of course, one may now augment the $BF$ $L_\infty$-algebra with this symmetry and its corresponding Noether identity, as we did for Chern--Simons theory. One can then follow the same procedure by adding the redundancy at $V_{-1}^{\textrm{ext}}$ and its dual at $V_4 ^{\textrm{ext}}$, and finally one ends up with a cyclic $L_{\infty}$-algebra that is quasi-isomorphic to the one constructed here, which did not include the redundant symmetries to begin with. One also observes that the Einstein--Cartan--Palatini action functional \eqref{eq:ECPaction} with $\Lambda=0$ is a special instance of the $BF$ action functional \eqref{eq:BFaction} with $\frg=\mathfrak{so}(p,q)$, $\CCW=\midwedge^{d-2}\FR^{p,q}$ and the dynamical fields $(B,A)=(e^{d-2},\om)$; however, this restriction of the field $B$ to diagonal decomposable forms in $\Omega^{d-2}(M,\midwedge^{d-2}\FR^{p,q})$ breaks the shift symmetry of $BF$ theory in $d\geq4$ dimensions. We will see how to interpret this observation in the $L_\infty$-algebra framework later on.

Here we have worked in a local formulation of the field theory. There
are different perspectives on what the global structure of the fields
$(B,A)$ should be. The usual approach is to view $A$ as a connection
on a principal $\sG$-bundle over $M$ and $B$ as a $(d{-}2)$-form
valued in the vector bundle associated to the representation $\CCW$. In
the case of non-trivial bundles, the diffeomorphism invariance has to
be treated in a somewhat different way, and this will be explored in
the special case of ECP gravity in Sections~\ref{sec:covariant}
and~\ref{sec:ECPiso}. There are also other interpretations, such as
that of higher gauge theory which considers the fields as forms valued
in a strict Lie 2-algebra \cite{BFasHigher}, or equivalently in a 2-term $L_\infty$-algebra. We shall not delve into
this interpretation in the present paper, which however is interesting
in view of the $L_{\infty}$-algebra framework under consideration.

\subsection{Higher shift symmetries}
\label{sec:highershift}

Another new feature of $BF$ theories, compared to Chern--Simons theory, is that they possess additional
redundant symmetries, in addition to diffeomorphisms: The shift
symmetry \eqref{eq:shift} is on-shell reducible in dimensions
$d\geq4$. This means that, strictly speaking, we should also include
in \eqref{eq:ndBF} the negatively-graded vector spaces
$V_{-k}^{\textrm{\tiny BF}} = \Omega^{d-3-k}(M,\CCW)$ for
$k=1,\dots,d-3$, which parameterize ``higher gauge transformations'',
together with their duals $V_{k+3}^{\textrm{\tiny
    BF}}=\Omega^{k+3}(M,\frg)$ and the obvious brackets in
\eqref{eq:ndBFbrackets}. We spell out this out explicitly for
$BF$ theories in the simplest case $d=4$.

Any element $\tau \in \Omega^{1}(M,\CCW)$ generates the shift symmetry
\begin{align*}
\delta_{\tau} B= \dd^{A}\tau \ \in \ \Omega^{2}(M,\CCW) \ . 
\end{align*}
Now take $\tau':=\tau + \dd^{A}\epsi\, \in \Omega^{1}(M,\CCW)$ for any $\epsi \in \Omega^{0}(M,\CCW)$. Then this generates the shift symmetry 
\begin{align*}
\delta_{\tau'} B &= \dd^{A}\tau' \\[4pt]
&= \dd^{A}\tau + (\dd^{A})^{2} \epsi \\[4pt]
&= \delta_{\tau}B + F\cdot \epsi \\[4pt]
&= \delta_{\tau}B + \CF_{B} \cdot \epsi \ ,
\end{align*}
and so the two transformations differ by a term proportional to an
Euler--Lagrange derivative, that is, $\tau$ and $\tau'=: \tau +
\delta_{(\epsi,A)}\tau$  generate the same symmetry on-shell. This
leads to a further redundancy in the subspace of $V^{\textrm{\tiny
    BF}}_{0}$ generating the distribution of shift symmetries on
$V^{\textrm{\tiny BF}}_{1}$; the redundancy lives in the subspace of
covariantly exact one-forms valued in $\CCW$. We may parameterize this by
$V^{\textrm{\tiny BF}}_{-1}:= \Omega^{0}(M,\CCW)$; for $d=4$ this is
enough and no further gauge redundancy in the description exists,
while in higher dimensions the form degrees change and similarly higher-to-higher gauge transformations are required, and so on.

Let us now describe the complete extended $BF$ $L_\infty$-algebra. We
extend the cochain complex by introducing 
\begin{align*}
V^{\textrm{\tiny BF}}_{-1}:= \Omega^{0}(M,\CCW) \qquad \text{and} \qquad
  V^{\textrm{\tiny BF}}_{4}:= \Omega^{4}(M,\frg) \ ,
\end{align*}
and denote the corresponding elements by $\epsi \in V^{\textrm{\tiny
    BF}}_{-1}$ and $\CE \in V^{\textrm{\tiny BF}}_{4}$. The brackets
\eqref{eq:ndBFbrackets} are extended as 
\begin{align*}
\ell_{1}^{\textrm{\tiny BF}}(\epsi)=(\dd \epsi,0 )  \ \in \
  V^{\textrm{\tiny BF}}_{0} \qquad \text{and} \qquad
  \ell_{1}^{\textrm{\tiny BF}}(\CT,\CP)=\dd \CT \ \in \
  V^{\textrm{\tiny BF}}_{4} \ ,
\end{align*}
together with
\begin{align*}
\ell_{2}^{\textrm{\tiny BF}}\big(\epsi\,,\,(\tau,\rho)\big)&= \rho
                                                             \cdot
                                                             \epsi \
                                                             \in \
                                                             V^{\textrm{\tiny
                                                             BF}}_{-1}
                                                             \ , \\[4pt]
  \ell_{2}^{\textrm{\tiny BF}}\big(\epsi\,,\,(B,A) \big)&=-(A\cdot
                                                          \epsi,0) \
                                                          \in \
                                                          V^{\textrm{\tiny
                                                          BF}}_{0} \ ,
  \\[4pt]
  \ell_{2}^{\textrm{\tiny BF}}\big(\epsi\,,\,(\CB,\CA)\big)&=-(\CB
                                                             \cdot
                                                             \epsi,0)
                                                             \ \in \
                                                             V^{\textrm{\tiny
                                                             BF}}_{1}
                                                             \ , \\[4pt]
  \ell_{2}^{\textrm{\tiny BF}}\big(\epsi\,,\,
  (\CT,\CP)\big)&=-(0,\CT\cdot \epsi) \ \in \ V^{\textrm{\tiny
                  BF}}_{2} \ , \\[4pt]
  \ell_{2}^{\textrm{\tiny BF}}(\epsi,\CE)&=(0,\CE \cdot \epsi) \ \in \
                                           V^{\textrm{\tiny BF}}_{3} \
  , \\[4pt]
  \ell_{2}^{\textrm{\tiny
  BF}}\big((\tau,\rho)\,,\,\CE\big)&=-[\rho,\CE]_\frg \ \in \
                                     V^{\textrm{\tiny BF}}_{4} \ , \\[4pt]
  \ell_{2}^{\textrm{\tiny
  BF}}\big((B,A)\,,\,(\CT,\CP)\big)&=[A,\CT]_\frg \ \in \
                                     V^{\textrm{\tiny BF}}_{4} \ , \\[4pt]
  \ell_{2}^{\textrm{\tiny
  BF}}\big((\CB_{1},\CA_{1})\,,\,(\CB_{2},\CA_{2})\big)&=[\CB_{1},\CB_{2}]_\frg
                                                         \ \in \
                                                         V^{\textrm{\tiny
                                                         BF}}_{4} \ .
\end{align*}
The pairing extends naturally as 
\begin{align*}
\langle \epsi, \CE \rangle^{\textrm{\tiny BF}} = \int_{M}\,
  \Tr_{\CCW}(\epsi \wedge \CE) \ .
\end{align*}
All checks of the homotopy and cyclicity relations follow as before without any
genuine novelty.

Observe in contrast with the redundancy of diffeomorphisms, one cannot simply pass to the subcomplex by ``deleting'' $V^{\textrm{\tiny BF}}_{-1}$ and $V^{\textrm{\tiny BF}}_{4}$ via a quasi-isomorphism since $H^{-1}(V ^{\textrm{\tiny BF}},\ell_1 ^{\textrm{\tiny 
BF}}) =\FR$ does not vanish. This is because there is no clear splitting of the gauge parameters $V^{\textrm{\tiny BF}}_{0}$ into reducible and irreducible components, as it happens with diffeomorphisms. It is in this sense that the extented BF complex is crucial in the pertubative BV quantisation to fully resolve degeneracies, while diffeomorphisms may be safely ignored.
\subsection{Three-dimensional $BF$ and Chern--Simons theories}
\label{sec:3dBFCS}

$BF$ theories in three dimensions are particularly interesting in the present context. For $d=3$, the formulation of Section~\ref{sec:BFtheory} is equivalent to the formulation of Section~\ref{sec:CStheory} for a specific Chern--Simons gauge theory: one takes the Lie algebra of the Chern--Simons theory to be the semi-direct product 
$$
\hat{\frg}:= \CCW \rtimes \frg \ , 
$$
where we view the vector space $\CCW$ as an abelian Lie algebra and use the action of $\frg$ on $\CCW$ to define the Lie bracket on $\hat{\frg}$. The invariant non-degenerate pairing on $\CCW\otimes \frg$ extends to $\hat{\frg}\otimes \hat{\frg}$, acting trivially on the $\frg \otimes \frg$ and $\CCW\otimes \CCW$ parts, and by symmetry to the rest. Then any gauge field $\hat{A}\in \Omega^{1}(M,\hat{\frg})$ has a unique decomposition 
$$
\hat{A}=(B,A)
$$ 
with $B\in\Omega^1(M,\CCW)$ and $A\in\Omega^1(M,\frg)$. With these choices, some simple algebra shows that the Chern--Simons action functional coincides with the $BF$ action functional in three dimensions:
\begin{align*}
S_{\textrm{\tiny CS}}(\hat{A})&= \frac{1}{2}\,\int_{M}\,\Tr_{\hat\frg}\Big( \hat{A}\wedge\dd \hat{A} +\frac{1}{3}\, \hat{A}\wedge [\hat{A},\hat{A}]_{\hat\frg} \Big) \nn \\[4pt]
&=\frac{1}{2}\,\int_{M}\, \Tr_{\hat\frg}\Big( (B,A)\wedge(\dd B,\dd A) +\frac{1}{3}\,\big(B,A\big)\wedge\big(A\wedge B+ A\wedge B,[A,A]_\frg \big) \Big) \\[4pt]
&= \frac{1}{2}\,\int_{M}\,\Tr_\CCW\Big( 2 \, B\wedge\dd A +\frac{2}{3}\, A\wedge( A\wedge B) +\frac{1}{3} \, B \wedge [A,A]_\frg \Big) \\[4pt]
&= \int_{M}\,\Tr_\CCW\Big(B\wedge \dd A + \frac{1}{2}\, B\wedge [A,A]_\frg\Big) \\[4pt]
&= \int_{M}\, \Tr_\CCW( B\wedge F) \\[4pt] &= S_{\textrm{\tiny BF}}(B,A) \ .
\end{align*}

The gauge symmetries map into each other as expected: the gauge transformation
\begin{align*}
\delta_{(\tau,\rho)}(B,A)=\big(\delta_{\tau} B + \delta_{\rho}B\,,\, \delta_\tau A+ \delta_{\rho} A\big):=\big(\dd \tau + A\cdot  \tau - \rho\cdot B\,,\,\dd \rho + [A,\rho]_\frg \big)
\end{align*}
with $\hat\lambda=(\tau,\rho)\in\Omega^0(M,\hat\frg)$ maps to 
\begin{align*}
\delta_{\hat\lambda} \hat{A} :\!\!&= \dd\hat\lambda +[\hat{A}, \hat\lambda]_{\hat\frg} \\[4pt]
&=(\dd\tau, \dd \rho) +\big[(B,A)\,,\, (\tau,\rho)\big]_{\hat\frg} \\[4pt] 
&=\big(\dd \tau + A\cdot\tau -\rho\cdot B\,,\,\dd\rho +[A,\rho]_\frg\big) \ .
\end{align*}
The two field theories are essentially related by identity type redefinitions. The same holds of course at the level of the underlying $L_{\infty}$-algebras, which are (strictly) isomorphic: the isomorphism is given by $\hat\psi_{1}: V^{\textrm{\tiny BF}} \rightarrow V^{\textrm{\tiny CS}}$ as above, collecting the respective fields in each degree. In this sense, higher-dimensional $BF$ theory is one way of generalizing Chern--Simons gauge theory to higher dimensions.

\section{Einstein--Cartan--Palatini ${L_\infty}$-algebras}
\label{sec:ECPLinftyalg}

We are now ready to move on to the main constructions of this paper: the cyclic $L_\infty$-algebras underlying the ECP formalism from Section~\ref{sec:ECPgravity}.

\subsection{Brackets}
\label{sec:LinftyECP}

We will first write down the general form of the $L_{\infty}$-algebra
structure for ECP gravity in an arbitrary
dimensionality $d>2$ discussed in Section~\ref{sec:ECPgravity} for the case when the spacetime $M$ is parallelizable. Recalling the $d$-dimensional field equations
\eqref{eq:ECPeom} and Noether identities \eqref{eq:ECPNoether}, the vector space $V$ underlying the corresponding
$L_\infty$-algebra is 
\begin{align}\label{eq:ECPvectorspace}
V:= V_{0} \oplus V_{1} \oplus V_{2}\oplus V_3
\end{align}
where 
\begin{align*}
V_{0}&=\Gamma(TM)\times \Omega^{0}\big(M,\mathfrak{so}(p,q)\big) \ ,  \\[4pt] 
V_{1}&= \Omega^{1}(M,\FR^{p,q}) \times
       \Omega^{1}\big(M,\mathfrak{so}(p,q)\big) \ , \\[4pt]
V_{2}&=\Omega^{d-1}\big(M,\midwedge^{d-1}(\FR^{p,q})\big)\times
       \Omega^{d-1}\big(M,\midwedge^{d-2}(\FR^{p,q})\big) \ , \\[4pt]
V_3&= \Omega^{1}\big(M,\Omega^{d}(M)\big)
                                    \times
                                    \Omega^d\big (M,\midwedge^{d-2}(\FR^{p,q})\big)
       \ .
\end{align*} 
In the following we denote gauge parameters by $(\xi,\rho)\in V_0$,
dynamical fields by $(e,\omega)\in V_1$, Euler--Lagrange derivatives by
$(E,{\mit\Omega})\in V_2$, and Noether identities by $({\CX},{\CP})\in V_3$.

The brackets on $V$ may then be given as follows. 
The non-vanishing $1$-brackets are defined by 
\begin{align*}
	\ell_{1}(\xi,\rho)=(0,\dd\rho) \ \in \ V_{1} \  \qquad \mbox{and} \qquad \ell_{1}(E,{\mit\Omega})=(0,(-1)^{d-1}\, \dd {\mit\Omega}) \ \in \ V_{3} \ ,
\end{align*}
while the non-vanishing $2$-brackets are defined by
\begin{align}
	\ell_{2}\big((\xi_{1},\rho_{1})\,,\,(\xi_{2},\rho_{2})\big)&=\big(
[\xi_{1},\xi_{2}]\,,\,-[\rho_{1},\rho_{2}]+{\xi_{1}}(\rho_{2})
- {\xi_{2}}(\rho_{1})\big)  \ \in \ V_{0} \ , \nn \\[4pt]
	\ell_{2}\big((\xi,\rho)\,,\,(e,\omega)\big)&=\big(-\rho
        \cdot e +\LL_{\xi}e\,,\, -[\rho,\omega]+\LL_{\xi}\om\big) \
        \in \ V_{1}\ , \nn \\[4pt]
	\ell_{2}\big((\xi,\rho)\,,\,(E,{\mit\Omega})\big)&=\big(- \rho\cdot
        E+\LL_{\xi}E\,,\, -\rho
        \cdot {\mit\Omega} +\LL_{\xi}{\mit\Omega}\big) \ \in \ V_{2} \
                                                              , \nn \\[4pt]
\ell_{2}\big((\xi,\rho)\,,\,({\CX},{\CP})\big)&=\big
                                                                     (\dd
                                                                     x^\mu\otimes\Tr(\iota_\mu \dd
                                                                     \rho
                                                                     \dwedge
                                                                     {\CP})
                                                                     +\LL_{\xi}{\CX}\,,\,-\rho
                                                                     \cdot
                                                                     {\CP}
                                                                     +\LL_{\xi}
                                                                     {\CP}\big)
                                                                     \
                                                                     \in
                                                                     \
                                                                     V_3
                                                                     \
                                                                     ,
                                                                     \label{eq:ell2gaugefield}
  \\[4pt]
\ell_{2}\big((e,\om)\,,\,(E,{\mit\Omega})\big)&=\Big(\dd x^\mu\otimes\Tr \big( \iota_\mu \dd e \dwedge E + (-1)^{d-1}\, \iota_\mu \dd\om \dwedge {\mit\Omega} \nn   -\iota_\mu e \dwedge \dd E - (-1)^{d-1}\, \iota_\mu \om\dwedge \dd {\mit\Omega}\big) \,, \nn \\ 
&\phantom{{}\big(\Tr ( \iota_\mu \dd e \dwedge E +{}}
  \frac{d-1}2 \, E\wedge e - (-1)^{d-1}\, \omega \wedge {\mit\Omega} \Big) \ \in \ V_3 \ . \nn
\end{align}
The first bracket of \eqref{eq:ell2gaugefield} is simply the Lie
bracket of the gauge algebra \eqref{eq:ECPgaugealg}.
The multivector action of a gauge transformation
$\rho\in\Omega^0(M,\mathfrak{so}(p,q))$ on the Euler--Lagrange
derivatives $(E,{\mit\Omega})$ and on the Noether identity $\CP$ is via the trivial coproduct $\Delta_0(\rho)=\rho\otimes 1+
1\otimes\rho$ on the
$\dwedge$-products of fields, in the fundamental representation for $e$
and in the adjoint
representation for $\omega$. For example, on $ e_{1} \dwedge
\dd e_{2} \in \Omega^{3}(M,\midwedge^{2}(\FR^{p,q}))$ the action is
$$
\rho \cdot( e_{1} \dwedge \dd e_{2})= (\rho \cdot
e_{1})\dwedge \dd e_{2} + e_{1} \dwedge(\rho \cdot \dd e_{2}) 
$$
while on $ e \dwedge \dd \omega\in\Omega^3(M,\midwedge^3(\FR^{p,q}))$ the action is
$$
\rho \cdot (e\dwedge \dd \omega)= (\rho \cdot e)
\dwedge \dd\omega+ e \dwedge [\rho, \dd \omega] \ .
$$
Similarly, the Lie derivative $\LL_{\xi}$ acts on
${\CX}\in\Omega^{1}(M)\otimes \Omega^{d}(M)$ via the Leibniz rule,
that is,
the trivial coproduct $\Delta_0(\LL_\xi)=\LL_\xi\otimes1+1\otimes\LL_\xi$.

Next, consider the brackets involving only dynamical fields. The $d{-}2$-bracket is defined by
\begin{align*}
& \ell_{d-2}\big((e_{1},\omega_{1})\,,\, \dots
  \,,\,(e_{d-2},\om_{d-2})\big) \\[4pt]
& \hspace{2cm} = (-1)^{\frac{1}{2}\,(d-2)\,(d-3)}\, \sum_{ \sigma\in
  S_{d-2}}\, \big ( e_{\sigma(1)} \dwedge e_{\sigma(2)} \dwedge  \cdots  \dwedge e_{\sigma(d-3)}\dwedge \dd \omega_{\sigma(d-2)} \,,\, \\
& \hspace{8cm}  e_{\sigma(1)} \dwedge e_{\sigma(2)} \dwedge
  \cdots \dwedge \dd e_{\sigma(d-2)}\big) \\[4pt]
& \hspace{2cm} =	(-1)^{\frac{1}{2}\,(d-2)\,(d-3)}\, (d-3)!\,
  \sum_{i=1}^{d-2}\, \big( e_{1}\dwedge  \cdots \widehat{e_{i}} \cdots  \dwedge
   e_{d-2} \dwedge \dd \om_{i} \,,\, \\
& \hspace{8cm} 
  e_{1}\dwedge \cdots 
  \widehat{e_{i}} \cdots \dwedge  e_{d-2} \dwedge \dd e_{i} \big) \ \in \ V_{2} \ .
\end{align*}
The $d{-}1$-bracket is defined by
\begin{align}
& \ell_{d-1}\big((e_{1},\omega_{1})\,,\,
  \dots\,,\,(e_{d-1},\om_{d-1})\big) \nn \\[4pt]
& \quad =(-1)^{\frac{1}{2}\,(d-1)\,(d-2)}\, \sum_{ \sigma \in
  S_{d-1}}\, \big(e_{\sigma(1)} \dwedge \cdots e_{\sigma(d-3)}\dwedge
\tfrac12\,[\omega_{\sigma(d-2)}, \om_{\sigma(d-1)}] + \Lambda \,
  e_{\sigma(1)}\dwedge \cdots \dwedge  e_{\sigma(d-1)} \,,\, \nn \\ 
& \hspace{6cm} e_{\sigma(1)} \dwedge  \cdots \dwedge e_{\sigma(d-3)} \dwedge (\om_{\sigma(d-2)}\wedge e_{\sigma(d-1)}) \big) \nn \\[4pt]
& \quad =(-1)^{\frac{1}{2}\,(d-1)\,(d-2)}\, (d-3)!\,
\sum_{\stackrel{\scriptstyle i,j=1}{\scriptstyle i\neq j}}^{d-1}\,
\big(e_{1}\dwedge \cdots  \widehat{e_{i}\dwedge e_{j}} \cdots \dwedge
  e_{d-1} \dwedge \tfrac12\,[\om_{i}, \om_{j}] \,,\,  \label{eq:ell1def} \\
& \hspace{7cm} e_1\dwedge \cdots
  \widehat{e_{i}\dwedge e_{j}}  \cdots  \dwedge e_{d-1} \dwedge (\om_{i} \wedge e_{j})\big) \nn \\
&\hspace{2cm} +(-1)^{\frac{1}{2}\,(d-1)\,(d-2)}\, (d-1)!\, \Lambda \,
(e_{1}\dwedge \cdots \dwedge e_{d-1},0) \ \in \ V_{2} \ . \nn
\end{align} 
The first expressions for the brackets $\ell_{d-2}$ and $\ell_{d-1}$
show that they are manifestly symmetric on $V_{1}$ as required. The
second simplified expressions are obtained by noting that $e_{i}
\dwedge e_{j}=e_{j}\dwedge e_{i}$. For $d>3$ the $1$-bracket on fields
is given by
\bea\nn
\ell_{1}(e,\omega)=(0,0) \ \in \ V_{2} \ ,
\eea
while for $d>4$ the $2$-bracket on fields is
\bea\nn
\ell_{2}\big((e_{1},\omega_{1})\,,\,(e_{2},\omega_{2})\big)=(0,0)
        \ \in \ 
        V_{2} \ .
\eea
The remaining brackets are all
identically zero for all $d\geq3$.

By construction, these brackets encode the gauge transformations,
field equations and Noether identities of pure Einstein--Cartan--Palatini gravity in
dimensions $d \geq 3$ as given in
\eqref{gaugetransfA}--\eqref{eq:Noether}. For example, the gauge
transformations are encoded by
\begin{align*} 
\delta_{(\xi,\rho)}(e,\omega)
&=\big(\delta_{(\xi,\rho)}e\,,\,\delta_{(\xi,\rho)}\omega\big)
                                  \\[4pt]
&=\big(-\rho\cdot e+\LL_{\xi}e\,,\, \dd\rho - [\rho,\omega] +
  \LL_{\xi}\om\big) \\[4pt]
&=\ell_{1}(\xi,\rho) + \ell_{2}\big((\xi,\rho) \,,\,
  (e,\omega)\big) \ \in \ V_1 \ ,
\end{align*}
with the closure relation
\begin{align} \label{eq:gaugealgebra}
\big[\delta_{(\xi_1,\rho_1)}\,,\,\delta_{(\xi_2,\rho_2)}\big](e,\omega)
  =\delta_{-\ell_2((\xi_1,\rho_1),(\xi_2,\rho_2))}(e,\om) 
\end{align}
reflecting the module structure of the space of fields
\eqref{eq:ECPfieldspace} for the Lie algebra of gauge transformations
\eqref{eq:ECPgaugealg}. Similarly, one also easily verifies 
$$
\delta_{(\xi,\rho)}(\CF_e,\CF_\om) =
\ell_2\big((\xi,\rho)\,,\,(\CF_e,\CF_\om)\big) \ \in \ V_2 \ ,
$$
and the Noether identities are encoded
through
\begin{align*}
\dsf_{(e,\om)}(\CF_e,\CF_\om) =
  \ell_1(\CF_e,\CF_\om)+\ell_2\big((\CF_e,\CF_\om)\,,\,(e,\om)\big) \
  \in \ V_3 \ .
\end{align*}

The quickest and most economical way to obtain these brackets is by bootstrapping
\cite{Linfty}. One writes out the gauge transformations, Euler--Lagrange derivatives
and Noether identities, and separates the orders of fields within the pairing as in Section~\ref{sec:Linftygft}. Then we demand that they are
equal to a specific expansion in terms of linear, bilinear, trilinear,
etc. brackets as in \eqref{gaugetransfA}, \eqref{EOM} and \eqref{eq:Noether}, and one
reads off the brackets by direct comparison together with the demand of cyclicity \eqref{action}. The extra non-zero
brackets may be obtained by demanding that the homotopy relations
hold. For example, the extra $2$-bracket
$\ell_{2}((\xi_{1},\rho_{1}),(\xi_{2},\rho_{2}))$ carries
information about the Lie algebra structure of the gauge
transformations, and of the action of $\Gamma(TM)$ on
$\Omega^{0}(M,\mathfrak{so}(p,q))$.  Indeed, this is the way we recovered the $L_{\infty}$-algebra. The ``disadvantage'' of the approach is that one needs to check the homotopy relations explicitly.  Another way to get the brackets is by developing and then dualizing the BV--BRST
complex of the field theory~\cite{BVChristian}, which has the advantage of automatically
guaranteeing the homotopy relations, but at the cost of lengthy and cumbersome
dualization calculations. The proof of the homotopy relations is tedious, but largely independent of the spacetime dimension
$d$. We will explicitly prove the homotopy
relations in Appendix~\ref{app:3dhomotopy} in the simplest case $d=3$, the proof for higher
dimensions being similar but requiring special care of the extra coframe field
factors which are manifested in the form of higher brackets. We review and dualise the BV--BRST formalism of ECP  developed by~\cite{ECBV} for $d=4$ in Section~\ref{sec:BV-BRST}, confirming the above $L_{\infty}$-algebra structure.

\subsection{Cyclic pairing}
\label{sec:cyclicpairing}

Given the brackets of Section~\ref{sec:LinftyECP}, we wish to write
the Einstein--Cartan--Palatini action functional \eqref{eq:ECPaction} as in
\eqref{action}. For this, we need a suitable non-degenerate bilinear pairing $\langle -,-
\rangle:V_{1} \otimes V_{2} \rightarrow \FR$, which we shall show is
given by
\begin{align} \label{eq:ECPpairing}
\langle (e,\om) \,,\, (E,{\mit\Omega}) \rangle:= 
\int_{M}\, \Tr \big(e\dwedge E+ (-1)^{d-1}\,\om \dwedge {\mit\Omega} \big) = 
\int_{M}\, \Tr \big(e\dwedge E+ {\mit\Omega} \dwedge \om \big) \ .
\end{align}
This can be extended to make \eqref{eq:ECPvectorspace} into a cyclic
$L_\infty$-algebra by introducing an additional pairing
$\langle-,-\rangle:V_0\otimes V_3\to\FR$ given by
\begin{align} \label{eq:ECPpairing2}
\langle(\xi,\rho)\,,\,({\CX},{\CP})\rangle := \int_M\,
  \iota_\xi{\CX} + \int_M\, \Tr\big(\rho\dwedge{\CP}\big) \ .
\end{align}
The most general possible bilinear pairing could in principle include
two arbitrary constants in front of each integrand. However, cyclicity
demands they are set to unity, and we will now show that the pairings \eqref{eq:ECPpairing} and \eqref{eq:ECPpairing2}
indeed have the right cyclicity properties. The only non-trivial checks
required in \eqref{cyclicity} are for the brackets $\ell_{d-2}$ and
$\ell_{d-1}$. The explicit demonstration of the cyclicity \eqref{eq:cyclicity2}
for the pairing \eqref{eq:ECPpairing} is
elementary as it involves only the bracket $\ell_2$ from \eqref{eq:ell2gaugefield}.

Let us first establish cyclicity with respect to the bracket
$\ell_{d-2}$. 
\begin{lemma}
If $(e_i,\omega_i)\in V_1$ for $i=0,1,\dots,d-2$, then
\begin{align*}
\big\langle (e_{0},\om_{0}) \,,\, \ell_{d-2}\big((e_{1},\omega_{1}),
  \dots ,(e_{d-2},\om_{d-2}) \big) \big\rangle =\big\langle
  (e_{1},\om_{1}) \,,\, \ell_{d-2}\big((e_{0},\omega_{0}),
  (e_2,\omega_2),\dots ,(e_{d-2},\om_{d-2}) \big) \big\rangle \ .
\end{align*}
\end{lemma}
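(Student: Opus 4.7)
The plan is to rewrite the left-hand side as a manifestly totally symmetric expression in the arguments $(v_0,v_1,\ldots,v_{d-2})$, so that the required symmetry under $v_0\leftrightarrow v_1$ is immediate. Expanding the left-hand side using the pairing \eqref{eq:ECPpairing} and the definition of $\ell_{d-2}$ splits it into two families of terms: those pairing $e_0$ with summands containing $\dd\omega_i$, and those pairing $\omega_0$ with summands containing $\dd e_i$.

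The crucial algebraic step is to recognise the second family as a total derivative. Using the Leibniz rule for $\dd$ together with the bi-graded commutativity of the $\dwedge$-product, under which the swap sign between homogeneous elements $\alpha,\beta$ is $(-1)^{|\alpha|_f\,|\beta|_f+|\alpha|_v\,|\beta|_v}$ with $|\cdot|_f$ the spacetime form-degree and $|\cdot|_v$ the internal exterior-algebra degree, one verifies the identity
\begin{align*}
\sum_{i=1}^{d-2}\, e_1\dwedge\cdots\widehat{e_i}\cdots\dwedge e_{d-2}\dwedge \dd e_i = (-1)^{d-1}\,\dd\big(e_1\dwedge\cdots\dwedge e_{d-2}\big) \ .
\end{align*}
Indeed, the Leibniz rule produces a sign $(-1)^{i-1}$ from moving $\dd$ past $i-1$ coframes, while transporting the resulting $\dd e_i$ through the $d-2-i$ remaining coframes yields a further $(-1)^{d-2-i}$ (each swap of a bi-degree $(2,1)$ element past a bi-degree $(1,1)$ element contributing a minus sign), which combine to the uniform overall factor.

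Substituting this identity, the factor $(-1)^{d-1}$ appearing in \eqref{eq:ECPpairing} is absorbed, and an integration by parts via Stokes' theorem using $\dd(\omega_0\dwedge\mathfrak{e})=\dd\omega_0\dwedge\mathfrak{e}-\omega_0\dwedge\dd\mathfrak{e}$ (with $\mathfrak{e}:=e_1\dwedge\cdots\dwedge e_{d-2}$) transfers the exterior derivative onto $\omega_0$. Since $\dd\omega_0$ has bi-degree $(2,2)$ and each $e_j$ has bi-degree $(1,1)$, all pairwise commutation signs are trivial, so after combining with the first family the left-hand side collapses to
\begin{align*}
(-1)^{\frac{(d-2)(d-3)}{2}}\,(d-3)!\,\int_M\,\Tr\Big(\sum_{i=0}^{d-2}\,\mathcal{E}_{\hat\imath}\dwedge \dd\omega_i\Big) \ ,
\end{align*}
where $\mathcal{E}_{\hat\imath}$ denotes the $\dwedge$-product of all $e_j$ with $j\in\{0,\ldots,d-2\}\setminus\{i\}$ (well-defined independently of ordering by bi-graded commutativity). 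This expression is manifestly invariant under every permutation of the arguments, and in particular under $v_0\leftrightarrow v_1$, establishing the lemma.

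The principal obstacle is the careful tracking of signs from two distinct gradings: the spacetime form-degree alone governs the Leibniz rule for $\dd$, whereas the bi-graded commutativity of $\dwedge$ involves both form- and vector-degrees. Each reordering step therefore mixes the two conventions, and obtaining the correct sign in the rewriting of the alternating sum over $\dd e_i$ as a total derivative is easy to miscompute by a sign.
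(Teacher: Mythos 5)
Your proof is correct. The key identity
\begin{align*}
\sum_{i=1}^{d-2}\, e_1\dwedge\cdots\widehat{e_i}\cdots\dwedge e_{d-2}\dwedge \dd e_i = (-1)^{d-1}\,\dd\big(e_1\dwedge\cdots\dwedge e_{d-2}\big)
\end{align*}
checks out: the Leibniz sign $(-1)^{i-1}$ and the transport sign $(-1)^{d-2-i}$ do combine to the uniform $(-1)^{d-3}=(-1)^{d-1}$, and your bi-graded commutation rule $(-1)^{|\alpha|_f|\beta|_f+|\alpha|_v|\beta|_v}$ is the one the paper actually uses in its own manipulations (e.g.\ $\om\dwedge e=-e\dwedge\om$), even though the footnote introducing $\dwedge$ speaks loosely of ``combined form degrees''. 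The paper's proof uses the same two ingredients --- integration by parts and this total-derivative identity --- but deploys them more locally: it integrates by parts only on the $i=1$ terms of each family, cancels the resulting pieces pairwise against terms produced by expanding $\dd(e_2\dwedge\cdots\dwedge e_{d-2})$, and then observes that what survives is symmetric under the single transposition $0\leftrightarrow 1$. You instead integrate by parts on the entire $\om_0$-family at once and land on the closed form $(-1)^{\frac12(d-2)(d-3)}\,(d-3)!\,\int_M\Tr\big(\sum_{i=0}^{d-2}\mathcal{E}_{\hat\imath}\dwedge\dd\om_i\big)$, which is invariant under the full symmetric group on $\{0,\dots,d-2\}$ rather than just the required transposition; this is a genuinely cleaner endpoint (it exhibits the pairing as the polarization of the $e^{d-2}\dwedge\dd\om$ term of the action) obtained at no extra cost. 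The only caveat, shared with the paper, is the implicit compact-support assumption needed to discard the boundary term in Stokes' theorem.
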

\begin{proof}
We shall ignore the overall constants since they are the same on both
sides of this equality. Thus writing out the left-hand side we obtain
\begin{flalign}
&\int_M\, \Tr \bigg( \sum_{i=1}^{d-2}\, \big(e_{0}\dwedge e_{1}
\dwedge  \cdots  \widehat{e_{i}}  \cdots \dwedge e_{d-2}\dwedge \dd
\om_{i} + (-1)^{d-1}\, \om_0 \dwedge e_{1}\dwedge  \cdots
\widehat{e_{i}} \cdots  \dwedge e_{d-2} \dwedge \dd e_{i} \big) \bigg) \nn \\
&\hspace{1cm} = \int_M\, \Tr \bigg( e_{0}\dwedge e_{2} \dwedge  \cdots
\dwedge e_{d-2} \dwedge \dd \om_{1} + \sum_{i =2}^{d-2}\,
\big(e_{0}\dwedge e_{1}\dwedge  \cdots  \widehat{e_{i}} \cdots \dwedge
e_{d-2} \dwedge \dd \om_{i} \big) \label{eq:LHSelln-2cyclic} \\
&\hspace{1cm} \phantom{{}= \int_M\, \Tr \bigg( {}}  +
(-1)^{d-1}\, \om_0\dwedge e_{2}\dwedge \cdots \dwedge e_{d-2} \dwedge
\dd e_{1} \nn \\
&\hspace{1cm} \hspace{1cm} \phantom{{}= \int_M\, \Tr \bigg( {}}
+ (-1)^{d-1}\, \sum_{i=2}^{d-2}\, \big(\omega_0 \dwedge e_{1} \cdots
\widehat{e_{i}}  \cdots  \dwedge e_{d-2} \dwedge \dd e_{i} \big) \bigg) \ . \nn
\end{flalign}
Integrating by parts on the first and
third terms, and dropping exact forms since we only consider coframes with compact support, we get
\begin{flalign*}
&\int_M\, \Tr \bigg ( (-1)^{d-3}\,\om_1 \dwedge e_{2}\dwedge
\cdots  \dwedge e_{d-2} \dwedge \dd e_{0} -\om_1\dwedge e_{0}\dwedge \dd (e_{2} \dwedge  \cdots  \dwedge e_{d-2}) \\
&\phantom{{}= \int_M\, \Tr \bigg( {}} + \sum_{i=2}^{d-2}\,
\big(e_{0}\dwedge e_{1}\dwedge \cdots  \widehat{e_{i}} \cdots  \dwedge
e_{d-2} \dwedge \dd \om_{i} \big) + \om_0\dwedge e_{1} \dwedge \dd (e_{2} \dwedge  \cdots  \dwedge e_{d-2}) \\
&\phantom{{}= \int_M\, \Tr \bigg( {}}+e_{1} \dwedge e_{2}\dwedge
\cdots \dwedge e_{d-2} \dwedge \dd \om_{0} + (-1)^{d-1}\, \sum_{i=2}
^{d-2}\, \big(\om_0\dwedge e_{1}\dwedge \cdots \widehat{e_{i}} \cdots
\dwedge e_{d-2} \dwedge \dd e_{i}\big) \bigg) \ .
\end{flalign*}
Now using
\begin{flalign*}
\om_0\dwedge e_{1}\dwedge \dd( e_{2} \dwedge  \cdots  \dwedge e_{d-2}) 
=-(-1)^{d-3}\, \sum_{i=2}^{d-2}\, \big(\om_0 \dwedge e_{1}\dwedge \cdots
\widehat{e_{i}} \cdots \dwedge e_{d-2} \dwedge \dd e_{i} \big) \nn
\end{flalign*} 
we see that the fourth and sixth terms cancel. Substituting similarly for the second term, we get
\begin{flalign*}
&\int_M\, \Tr \bigg ((-1)^{d-1}\, \om_1 \dwedge e_{2}\dwedge
\cdots  \dwedge e_{d-2} \dwedge \dd e_{0} + 
(-1)^{d-1}\, \sum_{i=2}^{d-2}\, \big(  \om_1 \dwedge e_{0}\dwedge \cdots
\widehat{e_{i}} \cdots  \dwedge e_{d-2} \dwedge \dd e_{i} \big) \\
&\phantom{{}= \int_M\, \Tr \bigg( {}} + \sum_{i=2}^{d-2}\,
\big(e_{1}\dwedge e_{0}\dwedge \cdots  \widehat{e_{i}} \cdots  \dwedge
e_{d-2} \dwedge \dd \om_{i} \big)+ e_{1} \dwedge e_{2} \dwedge \cdots
\dwedge e_{d-2} \dwedge \dd \om_{0}  \bigg) \ .
\end{flalign*} 
This is just the equality \eqref{eq:LHSelln-2cyclic} with the indices
$1$ and $0$ interchanged, showing that the pairing is indeed cyclic under
$\ell_{d-2}$ as claimed.
\end{proof}

Next we establish cyclicity with respect to the bracket
$\ell_{d-1}$. 
\begin{lemma}
If $(e_i,\omega_i)\in V_1$ for $i=0,1,\dots,d-1$, then
\begin{align*}
\big\langle (e_{0},\om_{0}) \,,\, \ell_{d-1}\big((e_{1},\omega_{1}),
  \dots ,(e_{d-1},\om_{d-1}) \big) \big\rangle =\big\langle
  (e_{1},\om_{1}) \,,\, \ell_{d-1}\big((e_{0},\omega_{0}),
  (e_2,\omega_2),\dots ,(e_{d-1},\om_{d-1}) \big) \big\rangle \ .
\end{align*}
\end{lemma}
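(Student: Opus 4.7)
The plan is to mimic the proof of the preceding lemma for $\ell_{d-2}$, but to replace integration by parts—unavailable here since $\ell_{d-1}$ contains no exterior derivative—with the algebraic identity \eqref{weirdformula}. As before, I would drop the common overall multiplicative constants on both sides and split the statement into its cosmological and curvature contributions, since $\ell_{d-1}$ is a sum of two kinds of terms (those built from $[\om,\om]$ and $\Lambda\,e\dwedge\cdots\dwedge e$ in the first component, and those built from $\om\wedge e$ in the second component).

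The cosmological piece is essentially immediate: the $\Lambda$-part of the first component of $\ell_{d-1}((e_1,\om_1),\dots,(e_{d-1},\om_{d-1}))$, paired against $e_0\in\Omega^1(M,\FR^{p,q})$ via \eqref{eq:ECPpairing}, contributes a multiple of $\int_M\Tr(e_0\dwedge e_1\dwedge\cdots\dwedge e_{d-1})$. This expression is manifestly invariant under $e_0\leftrightarrow e_1$ because each $e_i$ carries both form degree and internal degree equal to~$1$, so $e_i\dwedge e_j=e_j\dwedge e_i$ under the $\dwedge$-product.

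For the curvature piece, I would expand $\langle(e_0,\om_0),\ell_{d-1}(\dots)\rangle$ using the simplified form of $\ell_{d-1}$ in \eqref{eq:ell1def}, and partition the double sum over $(i,j)\in\{1,\dots,d-1\}^2$ with $i\neq j$ into three regimes: (A)~$i,j\in\{2,\dots,d-1\}$; (B)~$i=1$, $j\neq 1$; (C)~$j=1$, $i\neq 1$. Regime~(A) is immediately invariant under $e_0\leftrightarrow e_1$ by commutativity of the $\dwedge$-product, just as for the cosmological piece. In regimes~(B) and~(C), I would apply the identity \eqref{weirdformula} to the first integrand, taking the $d-2$ coframes to be $e_0,e_2,\dots,\widehat{e_j},\dots,e_{d-1}$ and $\om=\om_1$, $\om'=\om_j$; this converts $e_0\dwedge\cdots\dwedge[\om_1,\om_j]$ into a sum of terms of the form $(\text{remaining $e$'s})\dwedge(\om_1\wedge e_k)\dwedge\om_j$. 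The summand with $k=0$ would then be isolated and rewritten using cyclicity of $\Tr$ and the antisymmetry $[\om_0,\om_1]=-[\om_1,\om_0]$ so as to reconstruct precisely the $\om_1\dwedge\cdots\dwedge(\om_0\wedge e_j)$ integrand appearing on the RHS after $0\leftrightarrow 1$, while the remaining summands with $k\neq 0$ contribute structures that are again symmetric under $e_0\leftrightarrow e_1$ and cancel in pairs. Regime~(C) follows by relabelling $i\leftrightarrow j$.

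The hard part will be sign and combinatorial bookkeeping: one must track simultaneously the factor $(-1)^{d-1}$ of the cyclic pairing \eqref{eq:ECPpairing}, the sign prefactor $(-1)^{\frac12\,(d-1)\,(d-2)}$ and multiplicity $(d-3)!$ in \eqref{eq:ell1def}, and the signs produced by commuting $e$'s through the $\dwedge$-product when applying \eqref{weirdformula}, in order to verify that the $k=0$ contributions correctly reproduce the $\om_0$-pairing against $(\om_1\wedge e_j)$ on the RHS. Once this matching is settled in regime~(B), assembling the three regimes together with the cosmological piece yields the desired equality.
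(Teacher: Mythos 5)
Your cosmological piece and regime~(A) are fine, and reaching for \eqref{weirdformula} in place of integration by parts is the right instinct, but the way you deploy it leaves a genuine gap. The non-trivially-symmetric part of the pairing is the \emph{second} component, $(-1)^{d-1}\int_M\Tr\big(\om_0\dwedge\sum_{i\neq j}e_1\dwedge\cdots\widehat{e_i\dwedge e_j}\cdots\dwedge e_{d-1}\dwedge(\om_i\wedge e_j)\big)$, and your plan never addresses it. The paper applies \eqref{weirdformula} precisely to \emph{this} term (after moving $\om_0$ to the end using $\om\dwedge e=-e\dwedge\om$), which collapses the entire double sum into $\sum_i e_1\dwedge\cdots\widehat{e_i}\cdots\dwedge e_{d-1}\dwedge[\om_i,\om_0]$; once everything sits in commutator form, the four resulting groups of terms are either individually symmetric under $0\leftrightarrow 1$ or swap into each other. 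You instead apply \eqref{weirdformula} to the first component's $[\om_1,\om_j]$ terms, and the matching you propose does not hold: the $k=0$ summand is $(\cdots)\dwedge(\om_1\wedge e_0)\dwedge\om_j$, in which $\om_1$ acts on $e_0$ in the vector representation and $\om_j$ is free, whereas $\om_1\dwedge\cdots\dwedge(\om_0\wedge e_j)$ has $\om_1$ free and $\om_0$ acting on $e_j$. No reordering under $\Tr$ converts one into the other, because reordering cannot change which connection acts on which coframe; and the $k\neq 0$ summands have no pairwise cancellation mechanism, nor do they account for the untouched $\om_0$-paired terms.

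A second, smaller error: $[\om_0,\om_1]\neq -[\om_1,\om_0]$. For $\mathfrak{so}(p,q)$-valued one-forms the bracket is \emph{symmetric} (the form-degree sign and the Lie-bracket sign cancel), and the paper's argument uses $[\om_i,\om_j]=[\om_j,\om_i]$ essentially, both to rewrite the first-component sum over $i<j$ and to see that the final term $e_2\dwedge\cdots\dwedge e_{d-1}\dwedge[\om_1,\om_0]$ is invariant under $0\leftrightarrow 1$. Using the antisymmetry you state would introduce spurious signs. The fix is to redirect \eqref{weirdformula} at the $\om_0$-paired second-component terms as above; the commutator terms in the first component then need no manipulation beyond separating off the contributions with $i=1$ or $j=1$.
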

\begin{proof}
Ignoring again overall prefactors, we compute
\begin{flalign*}
& \big\langle (e_{0},\om_{0}) \,,\, \ell_{d-1}\big((e_{1},\omega_{1}),
\dots ,(e_{d-1},\om_{d-1}) \big) \big\rangle \\
& \hspace{3cm} =\nn  \int_M\, \Tr \bigg(\,
\sum_{\stackrel{\scriptstyle i,j=1}{\scriptstyle i\neq j}}^{d-1}\, \big(e_{0}\dwedge e_{1}\dwedge \cdots  \widehat{e_{i}\dwedge e_{j}} \cdots  \dwedge e_{d-1}\dwedge \tfrac12\,[\om_{i}, \om_{j}] \big) 
\\  & \hspace{5cm} + (-1)^{d-1}\, \sum_{\stackrel{\scriptstyle
    i,j=1}{\scriptstyle i\neq j}}^{d-1}\, \big(\om_0 \dwedge e_{1} \dwedge  \cdots \widehat{e_{i}\dwedge e_{j}} \cdots  \dwedge
e_{d-1}\dwedge (\om_{i}\wedge e_{j}) \big) \bigg) \ .
\end{flalign*}
Here we dropped the cosmological constant term as it is easily seen to
contribute cyclically to this pairing, since $e_0\dwedge
e_1=e_1\dwedge e_0$.
We wish to write this in a manifestly symmetric form under the
exchange of the indices $1$ and $0$. Since $[\om_{i},
\om_{j}]=[\om_{j}, \om_{i}]$, the first term may be rewritten as 
\begin{flalign*}
& \sum_{\stackrel{\scriptstyle i,j=1}{\scriptstyle i\neq j}}^{d-1}\,
\big(e_{0}\dwedge e_{1}\dwedge \cdots  \widehat{e_{i}\dwedge e_{j}}
\cdots  \dwedge e_{d-1}\dwedge \tfrac12\,[\om_{i}, \om_{j}] \big) \\
& \hspace{4cm} = 2\, \sum_{\stackrel{\scriptstyle i,j=1}{\scriptstyle
    i< j}}^{d-1}\, \big( e_{0} \dwedge e_{1}\dwedge \cdots \widehat{e_{i}\dwedge e_{j}} \cdots  \dwedge e_{d-1}\dwedge \tfrac12\,[\om_{i}, \om_{j}] \big)\\[4pt]
& \hspace{6cm} =\sum_{\stackrel{\scriptstyle i,j=2}{\scriptstyle
    i< j}}^{d-1}\, \big( e_{0} \dwedge e_{1}\dwedge \cdots \widehat{e_{i}\dwedge e_{j}} \cdots  \dwedge e_{d-1}\dwedge [\om_{i}, \om_{j}] \big) \\
&  \hspace{7cm} + \sum_{j=2}^{d-1}\, \big( e_{0} \dwedge
\widehat{e_{1}} \cdots  \widehat{e_{j}} \cdots  \dwedge e_{d-1}\dwedge
[\om_{1}, \om_{j}] \big) \ .
\end{flalign*}
For the second term, we use $\om \dwedge e = -e \dwedge \om $ to get
\begin{flalign*}
(-1)^{d-1}\, (-1)^{d-3}\, \sum_{\stackrel{\scriptstyle i,j=1}{\scriptstyle i\neq
    j}}^{d-1}\, \big( e_{1}\dwedge  \cdots
\widehat{e_{i}\dwedge e_{j}}& \cdots  \dwedge e_{d-1}\dwedge (\om_{i}\wedge e_{j})\dwedge \om_{0} \big) \\ & =
\sum_{\stackrel{\scriptstyle i,j=1}{\scriptstyle
    i\neq j}}^{d-1}\, \big( e_{1}\dwedge \cdots  \widehat{e_{i}\dwedge
  e_{j}} \cdots  \dwedge e_{d-1} \dwedge (\om_{i}\wedge e_{j}) \dwedge
\om_0 \big) \ .
\end{flalign*}
Next we use the identity \eqref{weirdformula} to write this as
\begin{flalign*}
\sum_{i=1}^{d-1}\, \big(e_{1}\dwedge  \cdots \widehat{e_{i}}
\cdots  \dwedge e_{d-1} \dwedge [\om_{i} , \om_{0}]\big)&= \sum_{i=2}^{d-1}\, \big(e_{1}\dwedge  \cdots \widehat{e_{i}} \cdots \dwedge e_{d-1}\dwedge[\om_{i}, \om_{0}]\big) \\
& \hspace{1cm} + e_{2}\dwedge  \cdots  \dwedge e_{d-1}\dwedge
[\om_{1} , \om_{0}] \ .
\end{flalign*}
Finally collecting everything together, we get
\begin{flalign*}
\big\langle (e_{0},\om_{0}) \,,\,& \ell_{d-1}\big((e_{1},\omega_{1}),
\dots ,(e_{d-1},\om_{d-1}) \big) \big\rangle \\ 
&=\int_M\, \Tr \bigg(\, \sum_{\stackrel{\scriptstyle
    i,j=2}{\scriptstyle i< j}}^{d-1}\, \big( e_{0} \dwedge e_{1}\dwedge \cdots \widehat{e_{i}\dwedge e_{j}} \cdots  \dwedge e_{d-1}\dwedge [\om_{i}, \om_{j}] \big) \\
&\phantom{{}=\int_M\, \Tr \bigg( {}} + \sum_{j =2}^{d-1}\, \big( e_{0} \dwedge \widehat{e_{1}} \cdots  \widehat{e_{j}} \cdots  \dwedge e_{d-1}\dwedge [\om_{1}, \om_{j}] \big) \\
&\phantom{{}=\int_M\, \Tr \bigg( {}}+\sum_{i=2}^{d-1}\,
\big(e_{1}\dwedge  \cdots \widehat{e_{i}} \cdots \dwedge
e_{d-1}\dwedge[\om_{i}, \om_{0}]\big)+ e_{2}\dwedge  \cdots
\dwedge e_{d-1}\dwedge [\om_{1} , \om_{0}]\bigg ) \ .
\end{flalign*}
This expression is manifestly symmetric under exchange of the
indices $1$ and $0$: The first term is invariant since
$e_{1}\dwedge e_{0}=e_{0}\dwedge e_{1}$, the last term is unchanged
since $[\om_{1}, \om_{0}]=[\om_{0},\om_{1}]$, and the remaining
terms map into each other under the exchange. This completes the
proof of cyclicity of the pairing \eqref{eq:ECPpairing}.
\end{proof}

Now we establish cyclicity with respect to the bracket
$\ell_{2}$. We only exhibit the non-trivial check, with the rest following similarly.
\begin{lemma}
	If $(\xi_i,\rho_i)\in V_0$ for $i=0,1$ and $(\CX,\CP) \in V_{3}$, then
	\begin{align*}
	\big\langle (\xi_{0},\rho_{0}) \,,\, \ell_{2}\big((\xi_{1},\rho_{1}),
	(\CX,\CP) \big) \big\rangle =-\big\langle
	(\xi_{1},\rho_{1}) \,,\, \ell_{2}\big((\xi_{0},\rho_{0}),
	(\CX,\CP) \big) \big\rangle =\big\langle
	(\CX,\CP) \,,\, \ell_{2}\big((\xi_{0},\rho_{0}),
	(\xi_{1},\rho_{1}) \big) \big\rangle \ .
	\end{align*}
\end{lemma}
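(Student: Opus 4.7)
My plan is to verify both equalities by expanding using the explicit formulas for the $2$-brackets in \eqref{eq:ell2gaugefield} and the pairings \eqref{eq:ECPpairing2}, then matching terms via three basic tools: (i) integration by parts on $M$, which kills $\LL_{\xi}$ of any top form under the compact support assumption; (ii) invariance of $\Tr$ under $\mathfrak{so}(p,q)$, yielding the identity $\Tr([\rho_0,\rho_1]\dwedge\CP)+\Tr(\rho_1\dwedge\rho_0\cdot\CP)=0$; (iii) the Cartan-type identity $\iota_{\xi_0}\LL_{\xi_1}=\LL_{\xi_1}\iota_{\xi_0}-\iota_{[\xi_1,\xi_0]}$, applied to $\CX\in\Omega^1(M,\Omega^d(M))$ viewed as a bilinear map $\Gamma(TM)\to\Omega^d(M)$.

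First I would expand $\langle(\xi_0,\rho_0)\,,\,\ell_2((\xi_1,\rho_1),(\CX,\CP))\rangle$ into four pieces. The $\LL_{\xi_1}\CX$ piece gives $\int_M\,\iota_{\xi_0}\LL_{\xi_1}\CX=\int_M\,\iota_{[\xi_0,\xi_1]}\CX$ after discarding the boundary term; the $\LL_{\xi_1}\CP$ piece gives $\int_M\,\Tr(\rho_0\dwedge\LL_{\xi_1}\CP)=-\int_M\,\Tr(\xi_1(\rho_0)\dwedge\CP)$ by integration by parts; the $-\rho_1\cdot\CP$ piece gives $-\int_M\,\Tr([\rho_0,\rho_1]\dwedge\CP)$ by the $\mathfrak{so}(p,q)$-invariance of $\Tr$; and the cross term involving $\Tr(\iota_\mu\dd\rho_1\dwedge\CP)$ contracted with $\xi_0^\mu$ simply equals $\int_M\,\Tr(\xi_0(\rho_1)\dwedge\CP)$ since $\iota_{\xi_0}\dd\rho_1=\xi_0(\rho_1)$ on the $0$-form $\rho_1$.

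Collecting, the LHS reduces to
\begin{equation*}
\int_M\,\iota_{[\xi_0,\xi_1]}\CX+\int_M\,\Tr\big((\xi_0(\rho_1)-\xi_1(\rho_0)-[\rho_0,\rho_1])\dwedge\CP\big) \ .
\end{equation*}
This manifestly equals $\langle(\CX,\CP)\,,\,\ell_2((\xi_0,\rho_0),(\xi_1,\rho_1))\rangle$ by directly matching against the bracket formula for $\ell_2$ on $V_0\otimes V_0$, establishing the second equality. The first equality then follows by the same expansion for $\langle(\xi_1,\rho_1)\,,\,\ell_2((\xi_0,\rho_0),(\CX,\CP))\rangle$ with the roles of $0$ and $1$ swapped, which produces exactly the negative of the above expression since $[\xi_1,\xi_0]=-[\xi_0,\xi_1]$ and the remaining expression is manifestly antisymmetric in $(\xi_0,\rho_0)\leftrightarrow(\xi_1,\rho_1)$.

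The main obstacle is not conceptual but bookkeeping: one must correctly interpret the action of $\LL_{\xi_1}$ on an element $\CX\in\Omega^1(M,\Omega^d(M))$ (using the trivial coproduct) so that $\iota_{\xi_0}\LL_{\xi_1}\CX$ is unambiguous modulo boundary terms, and one must consistently apply the invariance of $\Tr$ with the correct sign when the $\rho\cdot$-action is extended by the Leibniz rule to the internal $\dwedge$-product. Once these are pinned down, the proof is just matching terms.
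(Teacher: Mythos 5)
Your proposal is correct and follows essentially the same route as the paper's proof: expand the bracket and pairing, use $\int_M\LL_\xi(\text{top form})=0$ together with the Cartan identity to convert $\int_M\iota_{\xi_0}\LL_{\xi_1}\CX$ into $\int_M\iota_{[\xi_0,\xi_1]}\CX$, use $\mathfrak{so}(p,q)$-invariance of $\Tr$ for the $\rho\cdot\CP$ term, and read off the manifest antisymmetry in $(0\leftrightarrow1)$ and the match with $\ell_2$ on $V_0\otimes V_0$. The only difference is presentational: the paper carries out the $\iota_{\xi_0}\LL_{\xi_1}\CX$ computation explicitly on decomposable elements $\CX=\CX_1\otimes\CX_d$ to justify the trivial-coproduct bookkeeping you flag as the main obstacle.
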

\begin{proof} By linearity, it suffices to prove the result for
  decomposable $\CX= \CX_{1}\otimes \CX_{d} \in
  \Omega^{1}(M,\Omega^{d}(M))$ with $\CX_{1}$ a one-form and $\CX_{d}$ a
  $d$-form on $M$. Then
	\begin{align*}
	\int_{M}\, \iota_{\xi_{0}}\LL_{\xi_{1}}\CX &= \int_{M}\, \big( \iota_{\xi_{0}}\LL_{\xi_{1}}\CX_{1} \, \CX_{d}+\iota_{\xi_{0}}\CX_{1} \, \dd \iota_{\xi_{1}} \CX_{d}\big) \\[4pt]
	&= \int_{M}\,\big(\iota_{\xi_{0}}\LL_{\xi_{1}}\CX_{1} \, \CX_{d} -\dd \iota_{\xi_{0}} \CX_{1}\wedge \iota_{\xi_{1}} \CX_{d}\big) \\[4pt]
	&= \int_{M}\, \big( \iota_{\xi_{0}} \LL_{\xi_{1}} \CX_{1} \, \CX_{d} -\iota_{\xi_{1}} \dd \iota_{\xi_{0}} \CX_{1} \, \CX_{d}\big) \\[4pt]
	&= \int_{M}\, \big(\iota_{\xi_{0}}\LL_{\xi_{1}}\CX_{1}\, \CX_{d} - \LL_{\xi_{1}} \iota_{\xi_{0}} \CX_{1}\, \CX_{d}\big) \\[4pt]
	&= \int_{M}\, \iota_{[\xi_{0},\xi_{1}]} \CX \ ,
	\end{align*}
where we firstly used the trivial coproduct to distribute the Lie derivative, then applied Cartan's magic formula, integrated by parts and used the derivation property of the contraction. Lastly we used the Cartan identity
\begin{align}\label{eq:Cartanidiota}
\iota_{[\xi_1,\xi_0]} = \LL_{\xi_1}\circ\iota_{\xi_0} -
  \iota_{\xi_0}\circ\LL_{\xi_1} \ .
\end{align}
The final equality says that the initial quantity on the
left-hand side is antisymmetric under the exchange of the vector
fields $\xi_{0}$ and $\xi_{1}$. Using this, the left-hand side of the
cyclicity identity expands as
\begin{align*}
\big\langle (\xi_{0},\rho_{0}) \,,\, \ell_{2}\big((\xi_{1},\rho_{1}),
(\CX,\CP) \big) \big\rangle &=\big\langle (\xi_{0},\rho_{0})\,,\,
                              \big(\dd x^{\mu}\otimes \Tr(\iota_{\mu}
                              \dd \rho_{1} \dwedge
                              \CP)+\LL_{\xi_{1}}\CX, -\rho_{1}\cdot
                              \CP +\LL_{\xi_{1}} \CP\big)\big\rangle \\[4pt]
&=\int_M\, \iota_{[\xi_{0},\xi_{1}]}\CX + \int_{M}\, \Tr \big( \iota_{\xi_{0}} \dd \rho_{1} \dwedge
  \CP - \rho_{0} \dwedge  \rho_{1}\cdot
  \CP + \rho_{0} \dwedge \LL_{\xi_{1}} \CP\big) \\[4pt]
&=\int_M\, \iota_{[\xi_{0},\xi_{1}]}\CX + \int_{M}\, \Tr \big(-\rho_{1} \dwedge \LL_{\xi_{0}} \CP + [\rho_{1},\rho_{0}]\dwedge \CP +\rho_{0}\dwedge \LL_{\xi_{1}}\CP \big)
\end{align*}
where we used used the derivation propery of the contraction,
integrated by parts and used the invariance of a top exterior vector
in $\FR^{p,q}$ under $\mathfrak{so}(p,q)$ rotations. This is
manifestly antisymmetric under the exchange of indices $0$ and $1$,
thus proving the first cyclicity identity. Further manipulating, noting that $\iota_{\xi_{0}}\dd \rho_{1} = \LL_{\xi_{0}} \rho_{1}$ since $\rho_{1}$ is a zero-form, this is also equal to
\begin{align*}
\int_M\, \iota_{[\xi_{0},\xi_{1}]} \CX + \int_{M}\, \Tr \big((\LL_{\xi_{0}}\rho_{1}-\LL_{\xi_{1}}\rho_{0})\dwedge \CP - [\rho_{0},\rho_{1}]\dwedge \CP \big)
\end{align*}
which gives the final cyclicity identity.
\end{proof}

Finally, the proof of the cyclicity of $\big\langle (\xi,\rho) \,,\, \ell_{2}\big((e,\om),
(E,\Omega) \big) \big\rangle
$ contains essentially no new ideas, apart from the fact that
$\rho\cdot e \dwedge E = - \frac{d-1}{2}\, \rho \dwedge (E\wedge e)$,
which follows from $0=(\rho\dwedge E)\wedge e$ since the internal
vector space exterior products combine to a $d{+}1$-form in $d$ dimensions, and then distributing the contractions.
To summarise, we have completely encoded the dynamical content
of pure ECP gravity in $d>2$ dimensions in terms of
an $L_{\infty}$-algebra equipped with a suitable cyclic pairing. In
Sections~\ref{sec:3dgrav} and~\ref{sec:4dgrav} we will consider two physically
relevant examples explicitly to illustrate
this structure.

\subsection{Covariant $L_\infty$-algebra}
\label{sec:covariant}

We will now discuss the covariance problems associated with our
$L_\infty$-algebra for the ECP theory, and how
to deal with them. This is done using the well known covariant Lie derivative~\cite{Jackiw1980}~\cite{SUGRAbook}~\cite{Prabhu2017}, whose geometric meaning we review and meet its avatar in the resulting brackets. The resulting $L_{\infty}$-algebra is completely dual to the covariant version of the BV-BRST complex developed in \cite{ECBV} for $d=4$.

\subsubsection*{Finite gauge transformations}

Recall from Sections~\ref{sec:ECPgaugesym}
and~\ref{sec:ECPgauge} that in
order for these formulas to make sense, one firstly has to consider a
parallelizable spacetime manifold $M$ and {fix} the `fake tangent
bundle' $\CCV = M \times \FR^{p,q}$.\footnote{Alternatively,
  they make sense in any fixed local trivialization of $\CCV$,
  but then one has to face the problem of patching together the locally defined
$L_\infty$-algebras in a suitable way to an `$L_\infty$-algebroid stack' on the
  spacetime $M$. In the present paper we instead follow a more concrete
  approach to this problem, which is detailed in the following.} Under this choice, the
coframe field $e$ may be viewed globally as a one-form on $M$ valued in
$\FR^{p,q}$ and the connection $\omega$ as a one-form valued in $\mathfrak{so}(p,q)$.  For our original motivation on parallelisable manifolds and noncommutative or nonassociative deformations this suffices~\cite{NCProc}, however one runs into issues if finite gauge transformations and the possible non-parallelisability of spacetime are taken into account. The coframe is globally encoded as a one-form $\tilde{e} \in \Omega^{1}(\CCP,\FR^{p,q})$ and the connection is globally
encoded as a one-form
$\tilde\omega\in\Omega^1(\CCP,\mathfrak{so}(p,q))$, on the associated principal $\sSO_+(p,q)$-bundle
$\CCP\rightarrow M$. Given a local trivialisation of $\CCP$, or
equivalently a local section $s:U \rightarrow \CCP$ for $U\subset M$, one defines the
gauge field $\om:=s^{*}\tilde{\om} \in
\Omega^{1}(U,\mathfrak{so}(p,q))$. Given another local section $s':U\to\CCP$
with $\omega':=s^{\prime\ast}\tilde\omega$, the two pullbacks are 
related by 
$$
\om'= h^{-1} \, \om \, h +h^{-1}\, \dd h
$$
where $h:U \rightarrow \sSO_+(p,q)$ is the finite gauge transformation
defined by $s'=s \, h$. 

The ``problem'' arises with the diffeomorphism symmetry of the theory:
given an infinitessimal diffeomorphism of the base manifold $M$,
parameterized by a vector field $\xi \in \Gamma(TM)$, it is clear that 
$$
\om' + \LL_{\xi} \om' \neq {h}^{-1}\, (\om + \LL_{\xi} \om) \, {h} + {h}^{-1}\,
\dd{h}
$$ 
for any ${h}:U \rightarrow \sSO_+(p,q)$, and furthermore one cannot
identify any section of $ \CCP$ which pulls back $\tilde\omega$ to $\om +\LL_{\xi} \om$. That is, the expressions $\om+ \LL_{\xi}\om$ and $\om' + \LL_{\xi}\om'$ no longer define a
  connection on the same bundle $\CCP$. This is apparent if one uses
  finite diffeomorphisms $\phi:M\to M$ of the base, where the pullbacks of the
  gauge fields $\phi^*\omega$ and $\phi^*\omega'$ do define a
  connection, but on the pullback principal bundle $\phi^*\CCP$ over
  $M$ instead. Thus, strictly speaking, our approach applies only when
  one {completely disregards global structures} and thus also finite
  gauge transformations, viewing the fields as globally defined
  defined one-forms on the base space $\om \in
  \Omega^{1}(M,\mathfrak{so}(p,q))$ and $ e\in \Omega^{1}(M,\FR^{p,q})$
  which transform only under infinitesimal gauge transformations as
  expected.
\subsubsection*{Covariantization of diffeomorphisms}

We have seen that a Lie derivative of the gauge field on the base space no longer defines a connection on the same principal
bundle, even if the bundle is trivial, and similarly for the coframe
field when viewed as a section of $T^{\ast}M\otimes \CCV$. An equivalent way to spot the issue is from the fact that the action of the Lie derivative
does not commute with the action of a finite gauge transformation
$h:U\rightarrow \sSO_+(p,q)$; for example, on the coframe field 
\begin{align*}
\LL_{\xi} (h^{-1} \, e) \neq  h^{-1}\, \LL_{\xi} e \ .
\end{align*}
Equivalently, for an infinitesimal pseudo-orthogonal rotation $\rho:
M\rightarrow \mathfrak{so}(p,q)$, 
$$
[\delta_{\xi},\delta_{\rho}] \neq 0 \ ,
$$
as was already implied by the semi-direct product structure of the gauge
algebra \eqref{eq:ECPgaugealg}.

The resolution comes by identifying the correct way to act directly on the global fields $\tilde{e}$, $\tilde{\om}$ living on $\CCP$, where they appear as genuine one-forms valued in fixed vector spaces. We briefly describe this and skip the straightforward differential geometrical calculations, since they are well known and not relevant for the rest of the section. Each connection identifies a horizontal distribution $\mathrm{Hor}(\CCP)\subset T\CCP$ via its kernel, splitting the tangent bundle $T\CCP =\mathrm{Vert}(\CCP)\oplus  \mathrm{Hor}(\CCP)$ such that $\mathrm{Hor}(\CCP)\cong TM$ via the differential of the bundle projection $\pi:\CCP\rightarrow M$ and $\mathrm{Vert}(\CCP)\equiv \ker(\dd \pi)$. Using this identification, $\Gamma(TM)\cong \Gamma(\mathrm{Hor}(\CCP))$ as vector spaces, and so we may act with the unique lift $\tilde{\xi} \in \Gamma(\mathrm{Hor}(\CCP))$ of any $\xi \in \Gamma(TM)$, via the Lie derivative of the total space instead. Notice although the connection gives a lift $\Gamma(TM)\cong \Gamma(\mathrm{Hor}(\CCP))$, this is not a Lie algebra morphism, that is \begin{align}\label{nonint} [\tilde{\xi_{1}},\tilde{\xi_{2}}]=\widetilde{[\xi_{1},\xi_{2}]} + \iota_{\tilde{\xi_{2}}} \iota_{\tilde{\xi_{1}}} \tilde{R}
\end{align} expressing that the non-integrability of the horizontal distribution is controlled by the curvature of the connection. Acting on the global fields $(\tilde{e}, \, \tilde{\om})$ using the equivariance, horizontal and vertical properties of the fields along with Cartan calculus on $\CCP$, \begin{align*}
	(\LL_{\tilde{\xi}}\, \tilde{e},\LL_{\tilde{\xi}}\,  \tilde{\om})= \big((\dd^{\tilde{\om}}\circ \iota_{\tilde{\xi}} + \iota_{\tilde{\xi}} \circ \dd^{\tilde{\om}})\, \tilde{e}\,, \, \iota_{\tilde{\xi}} \, \tilde{R} \big)
	\end{align*} 
where the right hand side is manifestly horizontal and equivariant. Hence $(\tilde{e},\tilde{\om})+ (\LL_{\tilde{\xi}}\, \tilde{e},\LL_{\tilde{\xi}}\,  \tilde{\om})$ define a coframe and a connection as expected. Using a local section $s:U \rightarrow \CCP$ and equivariance, the expressions pull down to define the infinitessimal action we are after
\begin{align} \label{covgauge}
\delta_{\xi}^{\rm cov}(e,\om):=\big(\LL_{\xi}^{\om}e\,,\, \iota_{\xi}R\big)
=\big(\LL_{\xi} e + \iota_{\xi}\om \cdot e\,,\, \iota_{\xi}\dd \om +
[\iota_{\xi}\om,\om]\big) \ ,
\end{align}
where the {\emph{covariant Lie derivative}}$$\LL_{\xi}^{\om}:= \dd^{\om} \circ \iota_{\xi} +\iota_\xi \circ
\dd^{\om}$$ is defined on the spacetime as an appropriate modification of the Cartan formula \eqref{eq:CartanLie}. Notice, due to $\eqref{nonint}$ this does not form a Lie algebra action of $\Gamma(TM)$ on the space of fields; however, we will see the $L_{\infty}$-algebra framework is sufficient to accomodate such situations. As a further check, one may confirm by working directly on the base spacetime manifold that the above infinitesimal action is indeed covariant, i.e. it commutes with local pseudo-orthogonal
rotations:
$$
[\delta_{\xi}^{\rm cov},
\delta_{\rho}]
(e,\om)=(0,0) \ .
$$
In particular, they commute with finite pseudo-orthogonal rotations which
are connected to the identity, that is, the fields
$(e,\om)+\delta^{\rm cov}_{\xi}(e,\om)$ then do form a proper section
of a vector bundle and a connection on its associated principal
bundle, as expected from the total space formulation. Obviously, the above discussion applies for gauge fields with any internal group $G$ and for matter fields valued in any $G$-representation. As such, covariant Lie derivatives (also known as `covariant general coordinate
transformations') have appeared in various contexts, to name a few: Studying symmetries and conserved quantities of gauge theories on fixed background spacetimes~\cite{Jackiw1980}; More specifically, these produce the correct symmetric and gauge invariant energy-momentum tensor in Minkowski spacetime, avoiding the ``adhoc'' Belinfante procedure. They also appear necessarily in the closure of local supersymmetry transformations in supergravity~\cite{SUGRAbook}. Furthermore, they have been recently used to study black hole thermodynamic laws in the case where non-trivial bundle topologies underlie the dynamical fields \cite{Prabhu2017}. 

The action functional \eqref{eq:ECPaction} is indeed invariant under
the new covariant diffeomorphisms. For example, one can check 
\begin{align*}
\delta^{\rm cov}_{\xi} (e^{d-2}\dwedge R)= \LL_{\xi}(e^{d-2}\dwedge R) + \iota_{\xi}\om \cdot (e^{d-2}\dwedge R)
\end{align*}
where the first term vanishes upon integration over $M$ by the usual
diffeomorphism invariance, and the second term vanishes by invariance under local pseudo-orthogonal transformations. Both at the field transformation and at the action functional level we see the two diffeomorphism actions differ by a local rotation, thus they are equivalent. We will see this equivalence translates to the corresponding cyclic $L_{\infty}$-algebras being isomorphic.
Expressing $\delta^{\rm cov}_\xi S_{\textrm{\tiny ECP}} =0$ through 
\begin{align*}
\delta^{\rm cov}_{\xi}S_{\textrm{\tiny ECP}}= \int_{M}\,\Tr\big(\CF_{e} \dwedge \delta_{\xi}^{\rm cov}e + \CF_{\om} \dwedge \delta_{\xi}^{\rm cov}\om\big)
\end{align*}
and isolating $\xi$ as previously, the Noether identity corresponding
to the covariant diffeomorphism modifies only the first component of
\eqref{eq:ECPcurrents} to
\begin{align}
\dsf_{(e,\om)}^{\rm cov}(\CF_e,\CF_\omega) := \bigg( & \dd
                                                       x^\mu\otimes\Tr
                                                       \Big(\iota_{\mu}e\dwedge
                                                       \dd\CF_{e} -
                                                       \iota_{\mu} \dd
                                                       e \dwedge
                                                       \CF_{e} -
                                                       (-1)^{d-1}\,
                                                       \iota_{\mu} \dd
                                                       \om \dwedge
                                                       \CF_{\om} \label{covnoether}\\ &
                                                                      +
                                                                      \iota_{\mu}\om\dwedge\Big[\frac{d-1}2\,
                                                                      \CF_{e}\wedge
                                                                      e-
                                                                      (-1)^{d-1}\,
                                                                      \om \wedge \CF_{\om}\Big]
                                               \Big) \,,\,
                                                              -\frac{d-1}2
                                                                      \,
                                                              \CF_e\wedge
                                                              e+(-1)^{d-1}\,
                                                                      \dd^\om\CF_\om
                                                              \bigg) \
                                                              . \nn
\end{align}

\subsubsection*{Covariant brackets}

We shall now spell out the brackets of the bootstrapped $L_\infty$-algebra
corresponding to the covariant gauge transformations
\eqref{covgauge}. The $L_{\infty}$-algebra we obtain turns out to be dual to the covariant BV differential obtained in \cite{ECBV} for the case of $d=4$. The brackets involving solely dynamical fields and the Euler--Lagrange derivatives are
the same as those of Section~\ref{sec:LinftyECP}, as only the gauge
transformations and Noether identities are affected in the covariant
formulation, but not the dynamics. The underlying vector space is
locally as before, however we now consider the possibly non-trivial
bundle structures properly. For this, we parameterize
$\sSO_+(p,q)$-connections on the principal bundle $\CCP\to M$ by
one-forms $\Omega^{1}\big(M,\CCP \times_{\rm ad}
\mathfrak{so}(p,q)\big)$ on the base $M$ valued in the adjoint bundle of $\CCP$, in the usual way by fixing some reference
connection $\omega_0$.
Then the graded vector space of the covariant $L_\infty$-algebra is
$$
V^{\rm cov}:= V_{0}^{\rm cov} \oplus V_{1}^{\rm cov} \oplus V_{2}^{\rm
  cov} \oplus V_3^{\rm cov}
$$
where
\begin{align*} 
V_{0}^{\rm cov}&=\Gamma(TM)\times \Omega^{0}\big(M,\CCP \times_{\rm ad} \mathfrak{so}(p,q)\big) \ , \nn
\\[4pt]
V_{1}^{\rm cov}&= \Omega^{1}(M,\CCV) \times
\Omega^{1}\big(M,\CCP \times_{\rm ad} \mathfrak{so}(p,q)\big) \ , \\[4pt]
V_{2}^{\rm cov}&=\Omega^{d-1}\big(M,\midwedge^{d-1}\,\CCV\big) \times
\Omega^{d-1}\big(M,\midwedge^{d-2}\,\CCV\big) \ , \nn \\[4pt]
V_3 ^{\rm cov}&= \Omega^1\big(M,\Omega^d(M)\big)\times\Omega^d\big(M,\midwedge^{d-2}\,\CCV\big) 
\ . \nn
\end{align*} 
We denote elements of these vector spaces with the same symbols as
previously.

We will only write out the brackets $\ell_n^{\rm cov}$ which differ from those of the
non-covariant formulation of Section~\ref{sec:LinftyECP}. The only
existing brackets from \eqref{eq:ell2gaugefield} which are modified
are
\begin{align} 
\ell_{2}^{\rm cov}\big((\xi_{1},\rho_{1})\,,\,(\xi_{2},\rho_{2})\big)&=\big([\xi_{1},\xi_{2}]\,,\,-[\rho_{1},\rho_{2}] \big) \ , \nn 
\\[4pt]
\ell_{2}^{\rm cov}\big((\xi,\rho)\,,\,(e,\om)\big)&=
\big(-\rho\cdot e+\LL_\xi e\,,\,-[\rho,\om]+\iota_\xi\dd\om\big) \ ,
                                                    \nn \\[4pt]
\ell_{2}^{\rm cov}\big((\xi,\rho)\,,\,(E,\mit\Omega)\big)&=\big(
                                                           \LL_\xi E
                                                           -\rho\cdot
                                                           E\,,\, \dd
                                                           \iota_{\xi}
                                                           \mit\Omega
                                                           - \rho
                                                           \cdot
                                                           \mit\Omega\big)
                                                           \
                                                           , \label{eq:ell2cov}
  \\[4pt]                                                    
\ell_{2}^{\rm cov}\big((\xi,\rho)\,,\,({\CX},{\CP})\big)&=\big (\LL_{\xi}{\CX}\,,\,-\rho\cdot{\CP}\big) \ , \nn
\\[4pt]
\ell_{2}^{\rm cov}\big((e,\om)\,,\,(E,{\mit\Omega})\big)&=\Big(\dd
                                                          x^\mu\otimes\Tr
                                                          \big(
                                                          \iota_\mu
                                                          \dd e
                                                          \dwedge E +
                                                          (-1)^{d-1}\,
                                                          \iota_\mu
                                                          \dd\om
                                                          \dwedge
                                                          {\mit\Omega}
                                                          \nn
                                                          -\iota_\mu e
                                                          \dwedge \dd
                                                          E) \,,\, \nn
  \\ & \hspace{6cm}
 \frac{d-1}2 \, E\wedge e - (-1)^{d-1}\, \omega \wedge {\mit\Omega} \big) \ . \nn
\end{align}
There are also a number of new higher brackets that emerge. The
non-trivial covariant $3$-brackets are given by
\begin{align}
\ell_3^{\rm
  cov}\big((\xi_{1},\rho_{1})\,,\,(\xi_{2},\rho_{2})\,,\,(e,\om)\big)&=\big(0\,,\,-\iota_{\xi_{1}}\iota_{\xi_{2}}\dd
                                                                             \om\big)
                                                                             \
                                                                             ,
  \nn \\[4pt]
\ell_{3}^{\rm
  cov}\big((\xi_{1},\rho_{1})\,,\,(\xi_{2},\rho_{2})\,,\,(\CX,\CP)\big)&=\big(0\,,\,-\dd
                                                                         \iota_{\xi_{1}}
                                                                         \iota_{\xi_{2}}
                                                                         \CP\big)
                                                                         \
                                                                         ,
                                                                         \nn
  \\[4pt]
\ell_{3}^{\rm cov}\big((\xi,\rho)\,,\,(e_{1},\om_{1})\,,\,(e_{2},\om_{2})\big)&=-\big(\iota_{\xi}\om_{1}\cdot
                                                                  e_{2}
                                                                  +\iota_{\xi}
                                                                  \om_{2}
                                                                  \cdot
                                                                  e_{1}\,,\,
                                                                  [\iota_{\xi}\om_{1},\om_{2}]
                                                                  +[\iota_{\xi}\om_{2},\om_{1}]
                                                                  \big)
                                                                  \ ,
  \nn \\[4pt]
\ell_{3}^{\rm
  cov}\big((\xi,\rho)\,,\,(E,{\mit\Omega})\,,\,(e,\om)\big)&=\Big(\iota_{\xi}\om\cdot E\,,\,
                                                             \om
                                                             \wedge
                                                             \iota_{\xi}{\mit\Omega}
                                                             +(-1)^{d-1}\,
                                                             \frac{d-1}2
                                                             \,\iota_{\xi}(E\wedge e)
                                                             \Big) \
                                                             , \label{eq:ell3cov}
  \\[4pt]
\ell_{3}^{\rm
  cov}\big((\xi,\rho)\,,\,(e,\om)\,,\,(\CX,\CP)\big)&= \big( \dd x^{\mu} \otimes \Tr(\iota_{\mu} \iota_{\xi} \dd \om \dwedge \CP) \,,\, 0\big) \
  , \nn \\[4pt]
\ell_{3}^{\rm
  cov}\big((E,{\mit\Omega})\,,\,(e_{1},\om_{1})\,,\,(e_{2},\om_{2})\big)&=-\Big(\frac{d-1}2\,
                                                                          \dd
                                                                          x^{\mu}\otimes \Tr\big(\iota_{\mu}\om_{1}\dwedge(E\wedge e_{2})+\iota_{\mu} \om_{2} \dwedge (E\wedge e_{1})\big) \nn \\
& \hspace{1cm} - (-1)^{d-1}\, \dd x^\mu\otimes\Tr\big(\iota_{\mu} \om_{1}\dwedge(\om_{2}\wedge {\mit\Omega})
  + \iota_{\mu} \om_{2}\dwedge(\om_{1}\wedge {\mit\Omega} )
  \big)\,,\,0\Big) \nn
\end{align}
while the non-trivial covariant $4$-brackets are given by
\begin{align}
\ell_{4}^{\rm cov}\big((\xi_{1},\rho_{1})\,,\,(\xi_{2},\rho_{2})\,,
  \,(e_{1},\om_{1})\,,\,(e_{2},\om_{2})\big) &=
  \big(0\,,\,\iota_{\xi_{1}}\iota_{\xi_{2}}[\om_{1},\om_{2}]\big) \ ,
  \nn \\[4pt]
\ell_{4}^{\rm cov}\big((\xi_{1},\rho_{1})\,,\,(\xi_{2},\rho_{2})\,,
  \,(e,\om)\,,\,(\CX,\CP)\big) &= \big(0\,,\,(-1)^{d-1}\, \om \wedge \iota_{\xi_{1}}\iota_{\xi_{2}} \CP \big)
  \ , \label{eq:ell4cov} \\[4pt]
\ell_{4}^{\rm cov}\big((\xi,\rho)\,,\,(e_{1},\om_{1})\,,
  \,(e_{2},\om_{2})\,,\,(\CX,\CP)\big) &= \big(\dd x^{\mu} \otimes
                                         \Tr(\iota_{\mu}\iota_{\xi}[\om_{1},\om_{2}]\dwedge
                                         \CP)\,,\,0 \big) \ . \nn
\end{align}
On all other fields and in all other degrees, the covariant brackets
coincide with the brackets of Section~\ref{sec:LinftyECP}:
$\ell_n^{\rm cov}=\ell_n$ otherwise. The proof of the homotopy
relations for these covariant brackets is discussed in
Appendix~\ref{app:3dcovhomrels}.

It is straightforward to check that the covariant brackets continue to
encode all kinematical and dynamical information about 
Einstein--Cartan--Palatini gravity,
now incorporating the covariant infinitesimal action of
diffeomorphisms discussed previously. Strictly speaking, the brackets
written above only make sense on local trivializations of the
underlying vector bundles. However, the $\ell^{\rm cov}_{n}$-polynomial
expressions of physical interest patch up to global objects, by
covariance. For example, the gauge transformations of the dynamical
fields are now given by
\begin{align*}
\delta^{\rm cov}_{(\xi,\rho)}(e,\om)=\ell^{\rm cov}_{1}(\xi,\rho)+
  \ell^{\rm cov}_{2}\big((\xi,\rho)\,,\,(e,\om)\big) -\frac{1}{2}\,
  \ell^{\rm cov}_{3}\big((\xi,\rho)\,,\,(e,\om)\,,\,(e,\om)\big) \ \in
  \ V^{\rm cov}_{1} \ ,
\end{align*}
where a non-trivial $3$-bracket $\ell^{\rm cov}_{3}$ arises because
\eqref{covgauge} now involves third degree polynomial combinations.
Similarly, one may read off the brackets from the polynomial
expression for the Noether identities, which now includes a non-trivial
$3$-bracket $\ell^{\rm cov}_{3}$ by \eqref{covnoether}, so that
\begin{align*}
\dsf^{\rm cov}_{(e,\om)}(\CF_{e},\CF_{\om})=\ell^{\rm
  cov}_{1}(\CF_{e},\CF_{\om})+\ell^{\rm
  cov}_{2}\big((\CF_{e},\CF_{\om})\,,\,(e,\om)\big) -
  \frac{1}{2}\,\ell^{\rm
  cov}_{3}\big((\CF_{e},\CF_{\om})\,,\,(e,\om)\,,\,(e,\om)\big) \ \in
  \ V^{\rm cov}_{3} \ .
\end{align*}

However, it is not immediately obvious that these new brackets encode
the expected covariance of the Euler--Lagrange derivatives, that is,
\begin{align*}
\delta^{\rm cov}_{(\xi,\rho)}(\CF_{e},\CF_{\om})=\ell^{\rm cov}_{2}\big((\xi,\rho)\,,\,(\CF_{e},\CF_{\om})\big) +\ell^{\rm cov}_{3}\big((\xi,\rho)\,,\,(\CF_{e},\CF_{\om})\,,\,(e,\om)\big) \ .
\end{align*}
The part concerning local pseudo-orthogonal rotations is immediate, so that expanding the right-hand side for $\rho=0$ we confirm
\begin{align*}
\big(\LL_{\xi}\CF_{e}\,,\, \dd \iota_{\xi} \CF_{\om}\big)&+
                                                           \Big(\iota_{\xi}\om\cdot
                                                           \CF_{e}\,,\,\om
                                                           \wedge
                                                           \iota_{\xi}
                                                           \CF_{\om}+(-1)^{d-1}\,
                                                           \frac{d-1}2
                                                           \,\iota_{\xi}(\CF_{e}\wedge
                                                           e)\Big)\\[4pt]&=\Big(\LL^{\om}_{\xi}
                                                                           \CF_{e}\,,\,
                                                                           \LL_{\xi}
                                                                           \CF_{\om}
                                                                           -\iota_{\xi}
                                                                           \dd
                                                                           \CF_{\om}-\iota_{\xi}(\om
                                                                           \wedge
                                                                           \CF_{\om})
                                                                           +
                                                                           \iota_{\xi}\om
                                                                           \cdot
                                                                           \CF_{\om}
                                                                           +(-1)^{d-1}\,
                                                                           \frac{d-1}2\,
                                                                           \iota_{\xi}(\CF_{e}
                                                                           \wedge
                                                                           e)\Big)
  \\[4pt]&=\bigg( \LL_{\xi}^{\om} \CF_{e}\,,\, \LL_{\xi}^{\om}
           \CF_{\om}
           -\iota_{\xi}\Big(\dd^{\om}\CF_{\om}-(-1)^{d-1}\,
           \frac{d-1}2 \,\CF_{e}\wedge
           e\Big)\bigg) \\[4pt]
&=\big(\LL_{\xi}^{\om} \CF_{e}\,,\, \LL_{\xi}^{\om} \CF_{\om}\big) \\[4pt]
&=\delta^{\rm cov}_{(\xi, 0)}(\CF_{e},\CF_{\om}) \ ,
\end{align*}
where in the first equality we used $\LL_{\xi}= \iota_{\xi}\circ\dd
+\dd\circ \iota_{\xi}$ and $\iota_{\xi}(\om\wedge
\CF_{\om})=\iota_{\xi}\omega\cdot \CF_{\om} - \om \wedge \iota_{\xi} \CF_{\om}$,
together with the definition of $\LL_{\xi}^{\om}$ acting on the
Euler--Lagrange derivatives which are forms valued in vector bundles
associated to multivector representations of ${\sf SO}_+(p,q)$. In
the second equality we used again the definition of $\LL_{\xi}^{\om}$
together with $\dd^{\om}$, while in the fourth equality we used the
Noether identity corresponding to invariance under local
pseudo-orthogonal rotations. From this perspective, the input of the Noether identities
is crucial. The naive bootstrap method, excluding the demand of cyclicity, would
result in a simpler version of the brackets avoiding the use of the
Noether identities. However, the resulting $L_{\infty}$-algebra would
not be cyclic with respect to the natural pairing introduced in
Section~\ref{sec:cyclicpairing}: The requirement of cyclicity modifies
the brackets via the application of the Noether identities.

Another new feature which appears here is in the closure of the gauge
transformations. The covariant brackets also encode these, but now in
the more general sense \eqref{eq:closure} where the
bracket of the gauge algebra is field-dependent (but closure still
holds off-shell):
\begin{align*}
\big[\delta^{\rm cov}_{(\xi_{1},\rho_{1})} \,,\, \delta^{\rm cov}_{(\xi_{2},\rho_{2})}\big](e,\om)
= \delta^{\rm cov}_{[\![(\xi_{1},\rho_{1}),(\xi_{2},\rho_{2})]\!]^{\rm cov}_{(e,\om)}}(e,\om) \ ,
\end{align*}
where 
\begin{align}
[\![(\xi_{1},\rho_{1})\,,\,(\xi_{2},\rho_{2})]\!]^{\rm cov}_{(e,\om)}
  &= -\,\ell_{2}^{\rm
    cov}\big((\xi_{1},\rho_{1})\,,\,(\xi_{2},\rho_{2})\big) \nn
  \\ & \quad \,-
    \ell_{3}^{\rm
    cov}\big((\xi_{1},\rho_{1})\,,\,(\xi_{2},\rho_{2})\,,\,(e,\om)\big)
  +\frac{1}{2}\,\ell^{\rm
       cov}_{4}\big((\xi_{1},\rho_{1})\,,\,(\xi_{2},\rho_{2})\,,\,(e,\om)\,,\,(e,\om)\big) \nn \\[4pt]
&= \big(-[\xi_1,\xi_2]\,,\, [\rho_1,\rho_2]+\iota_{\xi_1}\iota_{\xi_2}R\big)
       \ . 
\label{eq:covnonLie}
\end{align}
This encodes directly the possible non-integrability of the horizontal lifting corresponding to each connection \eqref{nonint}. In particular, this means that the space of fields $V_1^{\rm cov}$
does \emph{not} form a module over the Lie algebra of gauge transformations
on $V_0^{\rm cov}$, in marked contrast with the non-covariant
approach, see \eqref{eq:gaugealgebra}. This formula is also noted
in~\cite{ECBV}, albeit in the dual and shifted picture in which the Lie derivative is viewed as an odd operator, where it is shown that the usual Cartan identity
\begin{align}\label{eq:CartanidLiecomm}
\LL_{[\xi_1,\xi_2]}=\LL_{\xi_1}\circ\LL_{\xi_2}-\LL_{\xi_2}\circ\LL_{\xi_1}
\end{align}
is violated by the covariant Lie derivative
$\LL^\omega_\xi$ via a term involving the action of the
contracted curvature 
$\iota_{\xi_1} \iota_{\xi_2} R\in \Omega^{0}\big(M,\CCP \times_{\rm ad}
\mathfrak{so}(p,q)\big)$, as in \eqref{eq:covnonLie}. 

\subsection{Cyclic $L_\infty$-isomorphism}
\label{sec:ECPiso}

In this section we have introduced two $L_\infty$-algebra formulations
of ECP gravity, one local and the other capturing the requisite
covariance properties for non-trivial spacetimes $M$. We
would now like to show that these two formulations are physically
equivalent locally, in the sense discussed in
Section~\ref{sec:Linftygft}. In fact, we exhibit a stronger result:
In the case where the underlying manifold $M$ is parallelizable, the two
theories are equivalent in the sense that their underlying
$L_{\infty}$-algebras are isomorphic. Indeed, all vector bundles in
question are then trivial, and so the underlying vector spaces of the
covariant and non-covariant formulations are identical, $V^{\rm cov}=
V$. The $L_{\infty}$-morphism we present here has been constructed partly via the help of dualisation from the symplectomorphism demonstrated in \cite{ECBV} for d=4. In fact since the map does not interact with dynamics, it has formally the same form in any dimension.

Let $\{\psi^{\rm cov}_n\}$ be the collection of multilinear graded antisymmetric
maps 
$$
\psi^{\rm cov}_n: \midwedge^{n} V^{\rm cov} \longrightarrow V \ ,
$$ 
of degree $|\psi^{\rm cov}_n|=1-n$ for $n\geq1$, defined as follows:
$\psi^{\rm cov}_1:V^{\rm cov}\to V$ is the identity map
$$
\psi^{\rm cov}_1(v) = v
$$
for all $v\in V^{\rm cov}$, the map $\psi^{\rm cov}_2:\midwedge^{2} V^{\rm cov}
\rightarrow V$ has only non-trivial components given by
\begin{align*}
\psi^{\rm cov}_2\big( (\xi,\rho)\,,\, (e,\om) \big) &=  \big(0\,,\,-\iota_{\xi}\om\big) \ \in
  \ V_{0} \ , \\[4pt]
\psi^{\rm cov}_2\big((\xi,\rho)\,,\,(\CX,\CP)\big) &=
                                                     \big(0\,,\,-(-1)^{d-1}\,
                                                     \iota_{\xi}\CP\big) \ \in
                                           \ V_{2} \ , \\[4pt]
\psi^{\rm cov}_2\big((e,\om)\,,\,(\CX,\CP)\big) &= \big(-\dd x^{\mu}\otimes \Tr(
                                        \iota_\mu\om \dwedge \CP)\,,\,0\big) \
                                        \in \ V_{3} \ ,
\end{align*}
while $\psi^{\rm cov}_{n}=0$ for all $n\geq 3$. Then $\{\psi^{\rm cov}_n\}$ is a
cyclic $L_\infty$-isomorphism between the cyclic $L_\infty$-algebras
$\big(V^{\rm cov}, \{\ell^{\rm cov}_{n}\}, \langle -, - \rangle\big)$
and $\big(V,\{\ell_{n}\}, \langle-,- \rangle\big)$. One easily
verifies the Seiberg--Witten maps from
\eqref{eq:fieldmorphism}--\eqref{eq:gaugemorphism} in this instance with
\begin{align*}
(e,\om) ^{\rm cov} =(e,\om) \ , \quad (\CF_e,\CF_\om)^{\rm cov} =(\CF_e,\CF_\om) \qquad \mbox{and} \qquad (\xi,\rho)^{\rm cov} =(\xi,\rho-\iota_\xi\om) \ ,
\end{align*}
so that $\delta^{\rm
  cov}_{(\xi,\rho)}(e,\om)=\delta_{(\xi,\rho-\iota_\xi\om)}(e,\om)$
and $\delta^{\rm
  cov}_{(\xi,\rho)}(\CF_e,\CF_\om)=\delta_{(\xi,\rho-\iota_\xi\om)}(\CF_e,\CF_\om)$
with the gauge algebra mapping as
\begin{align*}
[\![(\xi_{1},\rho_{1})\,,\,(\xi_{2},\rho_{2})]\!]^{\rm cov}_{(e,\om)}
  = \ & -\ell_2\big((\xi_1,\rho_1-\iota_{\xi_1}\om) \,,\,
  (\xi_2,\rho_2-\iota_{\xi_2}\om)\big) \nn \\ & + \big(\xi_2-\xi_1\,,\,\rho_2-\rho_1 +
  \iota_{\xi_1}\delta_{(\xi_2,\rho_2-\iota_{\xi_2}\om)}\om -
  \iota_{\xi_2}\delta_{(\xi_1,\rho_1-\iota_{\xi_1}\om)}\om\big) \ .
\end{align*}

Despite their simplicity, showing that the maps $\{\psi^{\rm cov}_n\}$ satisfy the required relations
\eqref{eq:morphismrels} of an $L_\infty$-morphism is, like the proof
of the homotopy relations, a tedious calculation which
largely does not depend on the spacetime dimension $d$. We give the
proof for the case $d=3$ in Appendix~\ref{app:3dmorrels}; the proof is similar for
$d \geq 4$.

Because $\{\psi^{\rm cov}_n\}$ is an $L_\infty$-morphism, since
$\psi^{\rm cov}_1$ is the identity it follows that $\{\psi^{\rm cov}_n\}$ is an
$L_\infty$-isomorphism. To check cyclicity of the map, since the
cyclic pairing is the same on both $L_\infty$-algebras of the gravity
theory, it follows immediately that 
\begin{align*}
\langle\psi^{\rm cov}_{1}(v_{1}), \psi^{\rm cov}_{1}(v_2) \rangle =\langle v_{1}, v_{2}
  \rangle
\end{align*}
for all $v_{1},v_{2} \in V^{\rm cov}$, again because $\psi^{\rm cov}_{1}$ is the identity.
Furthermore, it is a straightforward degreewise calculation to check
that
\begin{align*}
(-1)^{|v_1|}\,\langle \psi^{\rm cov}_{1}(v_{1}),\psi^{\rm cov}_{2}(v_{2},v_{3}) \rangle -
  \langle \psi^{\rm cov}_{2}(v_{1},v_{2}),\psi^{\rm cov}_{1}(v_{3})\rangle =0
\end{align*}
and
\begin{align*}
\langle\psi^{\rm cov}_2(v_1,v_2),\psi^{\rm cov}_2(v_3,v_4)\rangle = 0 \ ,
\end{align*}
for all $v_{1},v_{2},v_{3},v_4 \in V^{\rm cov}$. The remaining
cyclicity relations are all trivial.

\section{BV--BRST formalism for Einstein--Cartan--Palatini gravity}
\label{sec:BV-BRST}

The duality between differential graded commutative algebras and
$L_{\infty}$-algebras of finite type discussed in
Section~\ref{sec:dgca} converts the BV complex of a classical field
theory into an $L_\infty$-algebra as described in \cite{Linfty}, and
\emph{vice versa}.  In this section we shall
explicitly demonstrate this fact in the case of the non-covariant
ECP formalism, after reviewing the BRST complex of ECP gravity~\cite{BB,MSS,Piguet} and its augmented BV-BRST version~\cite{ECBV}, following the conventions of \cite{BVChristian}, where one may find a detailed introduction to the subject. The covariant BV-BRST complex of \cite{ECBV} proceeds analogously and is indeed dual to the covariant $L_\infty$-algebra presented in the last section.

\subsection{BRST complex}
\label{sec:ECPBRST}

The BRST complex for ECP
gravity in $d$ dimensions is obtained as the Chevalley--Eilenberg
resolution for the quotient of the space of fields
\eqref{eq:ECPfieldspace} by the gauge algebra
\eqref{eq:ECPgaugealg}. It has underlying vector space
\begin{align*}
\mathscr{F}_{\textrm{\tiny BRST}}&={\scrF_{\textrm{\tiny BRST}}}\,_0 \ 
                                   \oplus \ {\scrF_{\textrm{\tiny
                                   BRST}}}\,_{-1} \ ,
\end{align*}
where
\begin{align*}
{\scrF_{\textrm{\tiny BRST}}}\,_0 &= \Omega^{1}(M,\FR^{p,q})\times
  \Omega^{1}\big(M,\mathfrak{so}(p,q) \big) \ , \\[4pt]
{\scrF_{\textrm{\tiny BRST}}}\,_{-1} &= \Gamma [1](TM)\times \Omega^{0} [1]\big(M,\mathfrak{so}(p,q)
  \big) \ ,
\end{align*}
so that the dynamical fields are elements $(e,\omega) \in
{\mathscr{F}_{\textrm{\tiny BRST}}}\,_{{0}}$ and the gauge parameters
are elements\footnote{Strictly speaking these should be denoted as
  $({}^s\xi,{}^s\rho)$, where $(\xi,\rho)$ are the gauge parameters
  introduced in Section~\ref{sec:ECPgravity} and $s$ is the suspension isomorphism in \eqref{eq:shiftiso} below, but we do not indicate $s$ explicitly in order to streamline our formulas in the following.}
$(\xi,\rho) \in {\mathscr{F}_{\textrm{\tiny BRST}}}\,_{{-1}}$ with $e=e^a\, {\tt E}_a$,
$\omega=\omega^{ab}\, {\tt E}_{ba}$ and $\rho=\rho^{ab}\,
{\tt E}_{ba}$. In the
language of the BRST formalism, the elements of the odd degree spaces of gauge
parameters, which define sections of a distribution $\CCD\subset T
{\mathscr{F}_{\textrm{\tiny BRST}}}\,_{{0}}$ with a degree shift of~$1$, are called ghosts. 

On a local chart for $M$ with coordinates $x=(x^\mu)$, the fields are
expanded in holonomic bases as $e=e^{a}_{\mu}(x) \, \dd x^{\mu}\,
{\tt E}_{a}$ and $\xi=\xi^\mu(x) \, \partial_\mu$, where
$\partial_\mu=\frac\partial{\partial x^\mu}$, and similarly for the
rest of the fields. Abusing notation slightly, we shall consider the
components $e^{a}_{\mu}$ as elements of the dual space
$\mathscr{F}_{\textrm{\tiny BRST}}^\star$, thus viewing $e^{a}_{\mu}(x)$ as
coordinate functions on the infinite-dimensional vector space
$\Omega^{1}(M,\FR^{p,q})$ via the evaluation map
\begin{align*}  
e^{\prime\,a}_{\mu}{} _{|_{x}}: \Omega^{1}(M,\FR^{p,q}) \longrightarrow \FR \ , \quad
e \longmapsto e^{a}_{\mu}(x) \ ,
\end{align*}
and similarly for the rest of the fields. Abusing notation slightly,
we will sometimes drop the primes in the following.

The BRST differential $Q_{\textrm{\tiny
    BRST}}$ should act on a suitable space of functionals of the
field complex, which we denote by $\mathcal O(\scrF_{\textrm{\tiny BRST}})$. The precise
definition of this space will not be of concern to us, and it is often
different depending on the context and goals. For our purposes, the
following naive description will suffice: Consider
$\scrF_{\textrm{\tiny BRST}}^{\star}:={\sf Hom}(\scrF_{\textrm{\tiny BRST}},\FR)$, the space
of (continuous) $\FR$-linear functionals on $\scrF_{\textrm{\tiny
    BRST}}$; note that these are \emph{not} sections of the dual
  bundles. This space includes the coordinate maps
$e^{\prime\,a}_{\mu}{}_{|_{x}}$ above, as well as maps factoring through
the jet bundles, such as $\partial_{\nu}
e^{\prime\,a}_{\mu}{}_{|_{x}}$, which extract the values of
derivatives of the fields at a point in a specified coordinate chart
of the underlying manifold $M$. We shall take 
$
\mathcal O (\scrF_{\textrm{\tiny BRST}}):=
\text{\Large$\odot$}^{\bullet}_{\FR}\, \scrF_{\textrm{\tiny
    BRST}}^{\star},
$
the symmetric tensor algebra over $\FR$ of the dual of $\mathscr{F}_{\textrm{\tiny BRST}}$, as the space of polynomial functionals on the field complex. The usual subtleties regarding the topology of $\CF_{\textrm{\tiny BRST}}^{\star}$ and in which category the tensor product is taken are treated in detail in \cite{Costello,Costello2}. For our purposes, it will be safe to treat this tensor product formally as the algebraic tensor product.

Then $Q_{\textrm{\tiny
    BRST}}$ is a degree~$1$ derivation 
$Q_{\textrm{\tiny BRST}}: \text{\Large$\odot$}_\FR^\bullet \scrF^{\star}_{\textrm{\tiny BRST}}
\rightarrow \text{\Large$\odot$}_\FR^\bullet \scrF^{\star}_{\textrm{\tiny BRST}}$ such
that $Q_{\textrm{\tiny BRST}}^{2}=0$. By virtue of being a
derivation, $Q_{\textrm{\tiny BRST}}$ is completely determined by its action on the basis elements
of $\scrF^{\star}_{\textrm{\tiny BRST}}$. For Einstein--Cartan--Palatini gravity,
this takes the form~\cite{BB,MSS,Piguet} 
\begin{align}
Q_{\textrm{\tiny BRST}} e^{a}_{\mu}  &= \LL_{\xi} e^{a}_{\mu} -(\rho\cdot
                 e)^{a}_{\mu}  =(\xi^{\nu}\odot \partial_{\nu}
                 e^{a}_{\mu} +
                 e^{a}_{\nu}\odot \partial_{\mu}\xi^{\nu}) -
                 \rho^{a}{}_{b}\odot e^{b}_{\mu} \ , \nn \\[4pt]
Q_{\textrm{\tiny BRST}}\om^{ab}_{\mu} &= \LL_{\xi} \om^{ab}_{\mu} + \dd^\om\rho ^{ab}{}_{\mu}
                  = (\xi^{\nu}\odot \partial_{\nu}
                  \om^{ab}_{\mu}+\om^{ab}_{\nu}\odot \partial_{\mu}
                  \xi^{\nu}) + \partial_{\mu} \rho^{ab}\odot 1 +
                  \om^{ac}_{\mu} \odot \rho_c{}^{b} -
                  \rho^{a}{}_{c}\odot \om^{cb}_{\mu} \ , \nn \\[4pt]
Q_{\textrm{\tiny BRST}}\xi^{\mu}&=\tfrac{1}{2}\, [\xi,\xi]^{\mu} =
            \xi^{\nu}\odot \partial_{\nu}\xi^{\mu} \ , \nn \\[4pt]
Q_{\textrm{\tiny BRST}}\rho^{ab}&= \LL_\xi\rho^{ab}
                                     -\tfrac12\,[\rho,\rho]^{ab} = 
\xi^{\nu} \odot \partial_{\nu} \rho^{ab}
               -\rho^{a}{}_{c} \odot \rho^{cb} \ . \label{eq:BRST}
\end{align}
The BRST operator $Q_{\textrm{\tiny BRST}}$ encodes the symmetries of
ECP theory, that is, physical
(gauge-invariant) states modulo gauge transformations are classes in the degree~$0$ cohomology
of the BRST complex; in particular, the action functional
$S_{\textrm{\tiny ECP}}$ is a cocycle in degree~$0$: $Q_{\textrm{\tiny BRST}}S_{\textrm{\tiny ECP}}=0$. In a non-holonomic basis of vector fields $\{p_{\alpha}\}$ for
$\Gamma(TM)$ with Lie brackets
$[p_{\alpha},p_{\beta}]=f^{\gamma}{}_{\alpha\beta}\, p_{\gamma}$, the
third BRST
transformation takes the form
\begin{align*}
Q_{\textrm{\tiny BRST}}\xi^{\alpha}=\tfrac{1}{2}\,
[\xi,\xi]^{\alpha} = \tfrac{1}{2}\, f^{\alpha}{}_{\beta \gamma}\,
\xi^{\beta}\odot \xi^{\gamma} + \xi^{\beta}\odot
p_{\beta}(\xi^{\alpha}) \ , 
\end{align*}
which illustrates the similarity to the Chevalley--Eilenberg
differential of a finite-dimensional Lie algebra. However, 
this is \emph{not} the Chevalley--Eilenberg dual of $\scrF_{\textrm{\tiny
    BRST}}$ as a $C^\infty(M)$-module; for instance, restricting to $\big(\Gamma(TM), [-,-]\big) $, this would give the de~Rham complex $\big(\Omega^{\bullet}(M), \dd\big)$ instead.

The BRST differential contains all of the kinematical gauge structure of
ECP gravity. For this, we note that the
differential $Q_{\textrm{\tiny BRST}}$ dualizes to a codifferential
$$
\DD_{\textrm{\tiny BRST}}=Q_{\textrm{\tiny BRST}}^{\star}
:\text{\Large$\odot$}^\bullet \scrF_{\textrm{\tiny BRST}}
\longrightarrow \text{\Large$\odot$}^\bullet \scrF_{\textrm{\tiny
    BRST}} \ ,
$$
which may be decomposed as 
$$
{\rm pr}_{\scrF_{\textrm{\tiny BRST}}}\circ \DD_{\textrm{\tiny BRST}}=\sum_{n=1}^\infty \, {\DD_{\textrm{\tiny BRST}}}\,_{n}
$$
where the components are maps ${\DD_{\textrm{\tiny BRST}}}\,_{n}: \text{\Large$\odot$}^{n} \scrF_{\textrm{\tiny BRST}}
\rightarrow \scrF_{\textrm{\tiny BRST}}$. Introduce
the suspension isomorphism $s: \scrF_{\textrm{\tiny BRST}}[-1] \rightarrow
\scrF_{\textrm{\tiny BRST}}$, which induces an isomorphism of graded algebras given by 
\begin{align}\label{eq:shiftiso}
s^{\otimes n}: \midwedge^n \scrF_{\textrm{\tiny BRST}}[-1] & \longrightarrow
             \text{\Large$\odot$}^n \scrF_{\textrm{\tiny BRST}} \ , \nn \\
{}^{s^{-1}}v_{1} \wedge \cdots \wedge {}^{s^{-1}}v_{n} & \longmapsto
               (-1)^{\sum_{j=1}^{n-1}\, (n-j)\, |{}^{s^{-1}}v_{j}|} \,
               v_{1}\odot \cdots \odot v_{n} \ .
\end{align}
Then the graded antisymmetric brackets of the kinematical part of the
$L_{\infty}$-algebra that we defined in Section~\ref{sec:LinftyECP} are
given exactly by 
\begin{align*} 
\ell_{n} := s^{-1} \circ {\DD_{\textrm{\tiny BRST}}}\,_{n} \circ s^{\otimes n}: \midwedge^{n}
  \scrF_{\textrm{\tiny BRST}}[-1] \longrightarrow
  \scrF_{\textrm{\tiny BRST}}[-1] \ ,
\end{align*} 
and nilpotence $Q_{\textrm{\tiny BRST}}^{2}=0$ translates to the
homotopy relations for the brackets \cite{BVChristian}. Dualizing back
and forth in this infinite-dimensional case is a delicate issue;
however, the formal dualization below make sense because one may
interpret the brackets $\ell_{n}$ as maps between respective jet bundles, and then dualize pointwise.  

We will now indicate how to explicitly calculate these brackets. For a diffeomorphism $\xi \in {\scrF_{\textrm{\tiny BRST}}}\,_{-1}$ and a coframe field $e \in {\scrF_{\textrm{\tiny BRST}}}\,_{0}$, using the natural duality pairing $\langle-|-\rangle$ between $\text{\Large$\odot$}^\bullet\scrF_{\textrm{\tiny BRST}}$ and $\text{\Large$\odot$}_\FR^\bullet\scrF_{\textrm{\tiny BRST}}^\star$ we get
\begin{align*}
\langle Q_{\textrm{\tiny BRST}}e^{\prime\,a}_{\mu} | \xi \odot e \rangle &=
                                                                \langle
                                                                \xi^{\prime\,\nu}
                                                                \odot \partial_{\nu}
                                                                e^{\prime\,a}_{\mu}
                                                                +e^{\prime\,a}_{\nu}
                                                                \odot \partial_{\mu}
                                                                \xi^{\prime\,\nu}|
                                                                \xi
                                                                \odot
                                                                e
                                                                \rangle
  \\[4pt] 
& =(-1)^{|e'|\, |\xi'|} \, (\xi^{\nu}\,\partial_{\nu}{e}^{a}_{\mu}+{e}^{a}_{\nu}\,\partial_{\mu}\xi^{\nu}) \\[4pt]
&=\langle e^{\prime\,a}_{\mu}| \LL_{\xi}e \rangle \\[4pt]
&=: \langle e^{\prime\,a}_{\mu}| (-1)^{|Q_{\textrm{\tiny BRST}}|\,|e'|}\,
  {\DD_{\textrm{\tiny BRST}}}\,_{2}(\xi \odot e) \rangle \\[4pt]
&= \langle e^{\prime\,a}_{\mu}|{\DD_{\textrm{\tiny BRST}}}\,_{2}(\xi \odot e) \rangle
\end{align*}
where we used $|Q_{\textrm{\tiny BRST}}|=1$, $|e'|=0$ and $|\xi|=-1$. Hence ${\DD_{\textrm{\tiny BRST}}}\,_{2}(\xi \odot e)=\LL_{\xi}e$, and so\footnote{Yet another
  exterior product appears here: A wedge product inside a bracket
  denotes the exterior product of $\midwedge^\bullet
  \scrF_{\textrm{\tiny BRST}}[-1]$.} 
\begin{align*}
\ell_{2}({}^{s^{-1}}\xi \wedge {}^{s^{-1}} e)&= s^{-1} \circ {\DD_{\textrm{\tiny
                                     BRST}}}\,_{2} \circ (s\otimes
                                     s)({}^{s^{-1}}\xi\wedge {}^{s^{-1}} e)
  \\[4pt]
&= (-1)^{|{}^{s^{-1}}\xi| \, |s|}\, s^{-1} \circ {\DD_{\textrm{\tiny BRST}}}\,_{2} (\xi \odot e ) \\[4pt]
&=s^{-1} \circ {\DD_{\textrm{\tiny BRST}}}\,_{2} (\xi \odot e) \\[4pt]
&= \LL_{{}^{s^{-1}}\xi} {}^{s^{-1}}e \ ,
\end{align*} 
which agrees with \eqref{eq:ell2gaugefield}. Similarly, for a local 
pseudo-orthogonal rotation $\rho\in{\scrF_{\textrm{\tiny BRST}}}\,_{-1}$ we calculate
\begin{align*}
\langle Q_{\textrm{\tiny BRST}}e^{\prime\,a}_{\mu}| \rho \odot e\rangle &= \langle
                                                          -\rho^{\prime\,a}{}_{b}
                                                          \odot
                                                          e^{\prime\,b}_{\mu}
                                                          | \rho
                                                          \odot e
                                                          \rangle
  \\[4pt]
&=-(-1)^{|\rho'|\, |e'|}\,\rho^{a}{}_{b}\, {e}^{b}_{\mu} \\[4pt]
&=-\langle e^{\prime\,a}_{\mu} | \rho \cdot e\rangle \\[4pt]
&=: \langle e^{\prime\,a}_{\mu}|-(-1)^{|Q_{\textrm{\tiny BRST}}|\,|e'|}\,{\DD_{\textrm{\tiny BRST}}}\,_2(\rho \odot e)\rangle \ .
\end{align*} 
Hence ${\DD_{\textrm{\tiny BRST}}}\,_2(\rho \odot e)=- \rho \cdot e$, and so
\begin{align*}
\ell_{2}({}^{s^{-1}}\rho \wedge {}^{s^{-1}} e)&= s^{-1} \circ
                                         {\DD_{\textrm{\tiny BRST}}}\,_{2} \circ (s\otimes
                                         s)({}^{s^{-1}}\rho\wedge {}^{s^{-1}}
                                         e) \\[4pt]
&= (-1)^{|{}^{s^{-1}}\rho| \, |s|}\, s^{-1} \circ {\DD_{\textrm{\tiny BRST}}}\,_{2} (\rho \odot e ) \\[4pt]
&=s^{-1} \circ {\DD_{\textrm{\tiny BRST}}}\,_{2} (\rho \odot e) \\[4pt]
&=- {}^{s^{-1}}\rho \cdot {}^{s^{-1}}e \ ,
\end{align*} 
which also agrees with the corresponding brackets in \eqref{eq:ell2gaugefield}.
In a similar fashion one dualizes the rest of $Q_{\textrm{\tiny BRST}}$ and
recovers all of the brackets containing the kinematical gauge structure of the
field theory, which in this case are guaranteed to satisfy the
corresponding homotopy relations, since $Q_{\textrm{\tiny BRST}}^{2}=0$.

\subsection{BV--BRST complex}
\label{sec:ECPBVBRST}

We now need to augment the kinematical gauge structure provided in the
BRST formalism of Section~\ref{sec:ECPBRST} by the dynamical data
comprising the field equations and the Noether identities. Starting from the classical BRST complex $\scrF_{\textrm{\tiny
    BRST}}$ from Section~\ref{sec:ECPBRST}, the BV complex provides the Koszul--Tate resolution of its degree~$0$ cohomology (gauge
equivalence classes of fields) modulo the ideal of Euler--Lagrange derivatives. It is defined
as the functionals $\mathcal O(\scrF_{\textrm{\tiny BV}})$ on its shifted cotangent bundle:
\begin{align} \label{eq:BVT*}
\scrF_{\textrm{\tiny BV}}:= T^{*}[-1]\scrF_{\textrm{\tiny BRST}} \ .
\end{align}
This takes the form
\begin{align}\label{eq:JBV}
\scrF_{\textrm{\tiny BV}}={\scrF_{\textrm{\tiny BV}}}\,_{-1} \ \oplus
  \ {\scrF_{\textrm{\tiny BV}}}\,_{0} \ \oplus \ 
  {\scrF_{\textrm{\tiny BV}}}\,_{1} \ \oplus \ {\scrF_{\textrm{\tiny BV}}}\,_{2} \ ,
\end{align}
with
\begin{align*}
{\scrF_{\textrm{\tiny BV}}}\,_{-1} &= \Gamma [1](TM)\times \Omega^{0} [1]\big(M,\mathfrak{so}(p,q)
  \big) \ , \\[4pt]
{\scrF_{\textrm{\tiny BV}}}\,_{0} &= \Omega^{1}(M,\FR^{p,q})\times
  \Omega^{1}\big(M,\mathfrak{so}(p,q) \big) \ , \\[4pt]
{\scrF_{\textrm{\tiny BV}}}\,_{1} &=
                                    \Omega^{d-1}[-1]\big(M,\midwedge^{d-1}(\FR^{p,q})
                                    \big) \times \Omega^{d-1}[-1]\big(M,\midwedge^{d-2}
                                                            (\FR^{p,q})\big) \ , \\[4pt] 
{\scrF_{\textrm{\tiny BV}}}\,_{2} &=
                                    \Omega^{1}[-2]\big(M,\Omega^{d}(M)\big)
                                    \times
                                    \Omega^{d}[-2]\big(M,\midwedge^{d-2}(\FR^{p,q})\big)
                                    \ , 
\end{align*}
where elements of the degree~$1$ spaces are called antifields, which we denote by
$(e^{\dagger}, \om^{\dagger})$, while elements of the degree~$2$ spaces are called
antighosts,\footnote{These are not the antighost fields which are sometimes introduced in the BRST formalism, but this terminology is convenient in the present context.} denoted by $(\xi^{\dagger}, \rho^{\dagger})$. The antifields and
antighosts are transversal sections to the gauge orbits in the space of
fields and ghosts (odd gauge
parameters) respectively, and they may be paired with fields and ghosts using the pairing which defines the
action functional $S_{\textrm{\tiny ECP}}$ to give an $\FR$-valued $d$-form on
$M$. 

Explicitly, the pairing in the Einstein--Cartan--Palatini action
functional may be regarded as a non-degenerate bilinear form
\begin{align*}
\Tr(- \dwedge -):
  \Omega^{d-k}\big(M,\midwedge^{d-k} (\FR^{p,q})\big) \otimes
  \Omega^{k}\big(M,\midwedge^{k}(\FR^{p,q})\big) \longrightarrow
  \Omega^{d}(M)
\end{align*}
which defines the BV pairing $\langle-,-\rangle_{\textrm{\tiny BV}}$ and dualizes the fields and ghosts to antifields and antighosts
through the assignments
\begin{align*}
 e\in \Omega^{1}(M,\FR^{p,q}) &\implies e^{\dagger} \in
                              \Omega^{d-1}\big(M,
                              \midwedge^{d-1}(\FR^{p,q}) \big) \ , \\[4pt] 
 \om\in \Omega^{1}\big(M,\midwedge^{2} (\FR^{p,q})\big) &\implies
                                                            \om^{\dagger}\in
                                                            \Omega^{d-1}\big(M,\midwedge^{d-2}
                                                            (\FR^{p,q})\big)
  \ , \\[4pt] 
 \rho \in \Omega^{0}\big(M,\midwedge^{2} (\FR^{p,q})\big)
                            &\implies \rho^{\dagger} \in
                              \Omega^{d}\big(M,\midwedge^{d-2}(\FR^{p,q})\big)
  \ , \\[4pt]
 \xi \in \Gamma(TM) &\implies \xi^{\dagger} \in
                      \Omega^{1}\big(M,\Omega^{d}(M)\big) \ ,
\end{align*}
where the final duality is defined by $\iota_{\xi}\xi^{\dagger} \in
\Omega^{d}(M)$. Here the spaces of antifields and antighosts correspond exactly to
the spaces of Euler--Lagrange derivatives and of Noether identities; indeed, the graded vector space
\eqref{eq:JBV} is just the
vector space \eqref{eq:ECPvectorspace} underlying our
$L_\infty$-algebra with degrees shifted by~$1$:
\begin{align*}
{\scrF_{\textrm{\tiny BV}}}=V[1] \ ,
\end{align*} 
because of the degree shift by $-1$ in the cotangent bundle \eqref{eq:BVT*}.

The space $\scrF_{\textrm{\tiny BV}}$, being a cotangent bundle, further has the structure of a graded
(infinite-dimensional) $(-1)$-symplectic manifold, where the canonical
symplectic two-form is given by 
\begin{align} \label{eq:BVomega}
\boldsymbol\om_{\textrm{\tiny BV}}:= \int_{M}\, \Tr\big(\delta e \dwedge
  \delta e^{\dagger} +\delta \om \dwedge \delta \om^{\dagger} - \delta
  \rho \dwedge \delta \rho^{\dagger}\big) - \int_M\, \iota_{\delta \xi}
  \delta\xi^{\dagger} \ .
\end{align}
As expected, this is the shifted (and dual) version of the cyclic pairing \ref{eq:ECPpairing}. It induces a graded Poisson bracket $\{-,-\}_{\textrm{\tiny BV}}$ of
smooth functions on $\scrF_{\textrm{\tiny BV}}$, that we consider to
be the space $\text{\Large$\odot$}_\FR^\bullet \scrF_{\textrm{\tiny
    BV}}^{\star}$ as in Section~\ref{sec:ECPBRST}, which is called the antibracket. In this sense, fields and their
antifield partners may be regarded as canonically conjugate
variables. 

We would now like to define the BV extension of the action functional $S_{\textrm{\tiny
    ECP}}$ to a local action functional $S_{\textrm{\tiny BV}}$ on
$\scrF_{\textrm{\tiny BV}}$ of degree~$0$ that satisfies the classical master
equation
\begin{align}\label{eq:CME}
\{S_{\textrm{\tiny BV}},S_{\textrm{\tiny BV}}\}_{\textrm{\tiny BV}}=0 \ .
\end{align}
This is always possible for BV complexes which are built out of
underlying BRST complexes~\cite{BV81}, and it automatically extends
the BRST differential $Q_{\textrm{\tiny BRST}}$ on
$\scrF_{\textrm{\tiny BRST}}$ to the BV differential $Q_{\textrm{\tiny
    BV}}$ on $\scrF_{\textrm{\tiny BV}}$ as the cotangent lift
\begin{align*}
Q_{\textrm{\tiny BV}}F:= \{S_{\textrm{\tiny
                           BV}},F\}_{\textrm{\tiny BV}} = -\int_M \ \sum_{A\in\{e,\om,\rho,\xi\}} \, \bigg(\Big\langle\frac{\delta S _{\textrm{\tiny BV}}}{\delta A}\,,\,\frac{\delta F}{\delta A^\dag}\Big\rangle^\star_{\textrm{\tiny BV}} + \Big\langle\frac{\delta S _{\textrm{\tiny BV}}}{\delta A^\dag}\,,\,\frac{\delta F}{\delta A}\Big\rangle^\star_{\textrm{\tiny BV}} \bigg) \ ,
\end{align*}  
where $\langle-,-\rangle^\star_{\textrm{\tiny BV}}$ denotes the dual BV pairing and  $\frac{\delta}{\delta e}:=\frac\delta{\delta e_\mu^a}\otimes{\tt F}^a\,\partial_\mu$ with ${\tt F}^a$ dual to ${\tt E}_a$ and $\partial_\mu$ dual to $\dd x^\mu$, and similarly for the other fields.
The classical master equation \eqref{eq:CME} together with the graded
Jacobi identity for the antibracket guarantee that $Q_{\textrm{\tiny BV}}^{2}=0$. 

A solution to \eqref{eq:CME} is given by the BV pairing as~\cite{BV81}
\begin{align}\label{eq:BVaction}
S_{\textrm{\tiny BV}}=S_{\textrm{\tiny ECP}} + \int_{M}\, \Tr
  \big(Q_{\textrm{\tiny BRST}}e\dwedge e^{\dagger} +Q_{\textrm{\tiny BRST}}\om
  \dwedge \om^{\dagger} - Q_{\textrm{\tiny BRST}}\rho \dwedge
  \rho^{\dagger}\big) - \int_M\, \iota_{Q_{\textrm{\tiny
  BRST}}\xi}\xi^{\dagger} \ ,
\end{align}
where $Q_{\textrm{\tiny BRST}}e:=Q_{\textrm{\tiny BRST}}e^{a}_{\mu}\otimes
{\tt E}_{a}\, \dd x^{\mu}\in\text{\Large$\odot$}_\FR^\bullet \scrF_{\textrm{\tiny BRST}}^{\star}\otimes
\scrF_{\textrm{\tiny BRST}}$, and similarly for the other fields. This defines the \emph{minimal} part of the
BV extension of the Einstein--Cartan--Palatini theory; the
non-minimal part of the action functional is trivial from our
classical perspective as it involves products of auxiliary
fields and antighosts, and so is the $Q_{\textrm{\tiny BV}}$-exact
variation of a gauge fixing fermion. Explicitly, the minimal BV extension of the action functional for $d$-dimensional gravity reads
\begin{align*}
S_{\textrm{\tiny BV}} &= \int_M\, 
  \Tr\Big(\frac{1}{d-2} \,e^{d-2}\dwedge R +\frac{1}{d}\,\Lambda\,
  e^{d} + \big(\LL_\xi e-\rho\cdot e\big)\dwedge e^\dag +
                         \big(\LL_\xi\om+\dd^\om\rho\big)\dwedge\om^\dag
  \Big) \\ & \quad - \frac12\, \int_M \, \Tr\big((2\,\LL_\xi\rho-
             [\rho,\rho])\dwedge\rho^\dag\big) - \frac12\,
             \int_M\, \iota_{[\xi,\xi]}\xi^\dag \ .
\end{align*}

From the form of the BV action
functional \eqref{eq:BVaction}, it immediately follows that the
pullback of the BV differential to the natural Lagrangian
submanifold of $\scrF_{\textrm{\tiny BV}}$ provided by the zero section is given by
\begin{align*}
{Q_{\textrm{\tiny BV}}}_{|_{\scrF_{\textrm{\tiny
  BRST}}}}=Q_{\textrm{\tiny BRST}} \ ,
\end{align*}
and so this part of the BV differential $Q_{\textrm{\tiny BV}}$
includes the kinematical gauge sector of our $L_{\infty}$-algebra via
dualization. The BV transformations of the antifields
$Q_{\textrm{\tiny BV}}e^{\dagger}$ and $Q_{\textrm{\tiny
    BV}}\om^{\dagger}$ incorporate the dynamical brackets of our
$L_{\infty}$-algebra, while $Q_{\textrm{\tiny BV}}\rho^\dag$ and
$Q_{\textrm{\tiny BV}}\xi^\dag$ encode brackets corresponding to the
Noether identities and the action of the gauge parameters on the space
of Noether identities~\cite{Henneaux:1989jq,Dragon:2012au,BVChristian}. This, together with the fact
that $Q_{\textrm{\tiny BV}}^{2}=0$, guarantee that the homotopy
relations are satisfied in any dimension $d\geq3$. The canonical symplectic
two-form \eqref{eq:BVomega} corresponds to the cyclic pairing of the
$L_\infty$-algebra introduced in Section~\ref{sec:cyclicpairing}, with
cyclicity being equivalent to the $Q_{\textrm{\tiny BV}}$-invariance
of $\boldsymbol\om_{\textrm{\tiny BV}}$, and the sign difference comes
from the degree shifting.
Then cyclic $L_\infty$-morphisms correspond to cohomomorphisms of
$\scrF_{\textrm{\tiny BV}}$, in the sense discussed in
Section~\ref{sec:dgca}, whose duals preserve the symplectic
structures. In particular, cyclic quasi-isomorphisms result into equivalent field theories, thus in our case relating equivalent gravity theories.

After some tedious but straightforward calculation, one may arrive at
\begin{align}
Q_{\textrm{\tiny BV}}{e^{\dagger}}^{\,a_{1}\cdots
  a_{d-1}}_{\mu_{1}\cdots \mu_{d-1}}=&-\,
                                       e^{[a_{1}}_{[\mu_{1}}\cdots
                                       e^{a_{d-3}}_{\mu_{d-3}}\,
                                       R^{a_{d-2}a_{d-1}]}_{\mu_{d-2}\mu_{d-1}]}
-\Lambda\, e^{[a_1}_{[\mu_1}\cdots e^{a_{d-1}]}_{\mu_{d-1}]}                                        \nn \\
& +
                                       d\,
                                       \big({e^{\dagger}}^{\,a_{1}\cdots
                                       a_{d-1}}_{[\mu_{1}\cdots
                                       \mu_{d-1}}\, \partial_{\sigma]}\xi^{\sigma}
                                       - {e^{\dagger}}^{[a_{1}\cdots
                                       a_{d-1}}_{\mu_{1}\cdots
                                       \mu_{d-1}}\,
                                       \rho^{b]}{}_{b}\big)+\partial_{\sigma}\big(\xi^{\sigma}\, {e^{\dagger}}^{\,a_{1}\cdots
  a_{d-1}}_{\mu_{1}\cdots \mu_{d-1}} \big) \ , \nn \\[6pt]
Q_{\textrm{\tiny BV}}{\om^{\dagger}}^{\,a_{1}\cdots
  a_{d-2}}_{\mu_{1}\cdots
  \mu_{d-1}}=& -\, e^{[a_1}_{[\mu_1}\cdots e^{a_{d-3}}_{\mu_{d-3}} \, T^{a_{d-2}]}_{\mu_{d-2}\mu_{d-1}]} \nn\\& +
               (d-1)\,\rho^{[b}{}_{b}\,
               {\om^{\dagger}}^{\,a_{1}\cdots a_{d-2}]}_{\mu_{1}\cdots
               \mu_{d-1}} + d\, {\om^{\dagger}}^{\,a_{1}\cdots
                              a_{d-1}}_{[\mu_{1}\cdots
                              \mu_{d-1}}\, \partial_{\sigma]}\xi^{\sigma}
                              + \partial_{\sigma}\big(\xi^{\sigma}\,
                              {e^{\dagger}}^{\,a_{1}\cdots
                              a_{d-1}}_{\mu_{1}\cdots \mu_{d-1}}\big)
                              \ , \nn\\[6pt]
Q_{\textrm{\tiny BV}}{\rho^{\dagger}}^{\,a_{1}\cdots
  a_{d-2}}_{\mu_{1}\cdots \mu_d}=& -\, \frac{d-1}2\, e_{b[\mu_{1}}\,
                                     {e^{\dagger}}^{\,ba_{1}\cdots
                                     a_{d-2}}_{\mu_{2}\cdots \mu_d]}
                                     + (-1)^{d-1}\, \om^{a_{1}}{}_{b[\mu_{1}}{}\,
                                     {\om^{\dagger}} ^{\,ba_{2}\cdots
                                     a_{d-2}}_{\mu_{2}\cdots \mu_d]}
                                     + \partial_{[\mu_{1}}
                                     {\omega^{\dagger}}^{\,a_{1}\cdots
                                     a_{d-2}}_{\mu_{2}\cdots \mu_d]}
                                     \nn \\
                                     & -\rho^{a_{1}}{}_b{}\,{\rho^{\dagger}}^{\,ba_{2}\cdots
                                     a_{d-2}}_{\mu_{1}\cdots \mu_d}
                                     + \partial_{\sigma}\big(\xi^{\sigma}\,
                                       {\rho^{\dagger}}^{\,a_{1}\cdots
                                       a_{d-2}}_{\mu_{1}\cdots
                                       \mu_d}\big) \ , \nn \\[6pt]
Q_{\textrm{\tiny BV}}{\xi_{\nu}^{\dagger}}_{\mu_{1}\cdots
  \mu_d}=&-\varepsilon_{a_{1}\cdots a_d}\,\Big(\partial_{\nu}{e}^{a_{1}}_{[\mu_{1}}\, {e^{\dagger}}^{\,a_{2}\cdots a_d}_{\mu_{2} \cdots \mu_d]}-\partial_{[\mu_{1}}  {e}^{a_{1}}_{|\nu|}\, {e^{\dagger}}^{\,a_{2}\cdots a_d}_{\mu_{2}\cdots \mu_d]} - {e}^{a_{1}}_{|\nu|}\, \partial_{[\mu_{1}} {e^{\dagger}}^{\,a_{2}\cdots a_d}_{\mu_{2}\cdots \mu_d]}
\nn \\&\phantom{{}-\varepsilon_{a_{1}\cdots
        a_d}\,\Big({}}+(-1)^{d-1}\, \partial_{\nu}{\om}^{a_{1}a_{2}}_{[\mu_{1}}\,
        {\om^{\dagger}}^{\,a_{3}\cdots a_d}_{\mu_{2}\cdots
        \mu_d]}-(-1)^{d-1}\, \partial_{[\mu_{1}} {\om}^{a_{1}a_{2}}_{|\nu|}\, {\om^{\dagger}}^{\,a_{3}\cdots a_{d}}_{\mu_{2}\cdots \mu_d]} \nn \\
&\phantom{{}-\varepsilon_{a_{1}\cdots
  a_d}\,\Big({}} - (-1)^{d-1}\, {\om}^{a_{1}a_{2}}_{|\nu|}\, \partial_{[\mu_{1}} {\om^{\dagger}}^{\,a_{3}\cdots a_d}_{\mu_{2}\cdots \mu_d]} -\partial_{\nu}\rho^{a_{1}a_{2}}\,
  {\rho^{\dag}}^{\,a_{3}\cdots a_d}_{\mu_{1}\cdots\mu_d} \Big) \nn \\
&\phantom{{}-\varepsilon_{a_{1}\cdots
  a_d}\,\Big({}}
  + \partial_{\nu}\xi^{\sigma}\, {\xi^{\dag}_{\sigma}}_{\mu_{1}\cdots
  \mu_d} + \partial_{\sigma}\big(\xi^{\sigma}\,
  {\xi_{\nu}^{\dag}}_{\mu_{1}\cdots \mu_d}\big) \ . \label{eq:QBV}
\end{align}
These expressions should be understood as valued in
$\text{\Large$\odot$}_\FR^\bullet \scrF_{\textrm{\tiny BV}}^{\star}$,
but for brevity we do not write the symmetrized tensor products
explicitly. The first two transformations dualize to the brackets involving the
Euler--Lagrange derivatives with respect to the coframe
field $e$ and the spin connection $\omega$ respectively, and the
actions of gauge transformations on them. The last two transformations
dualize to the brackets involving the Noether identities corresponding to local pseudo-orthogonal and diffeomorphism gauge
symmetries, and the action of gauge transformations on them. In
general, the explicit proof of this dualization is a long and
cumbersome calculation, which is also largely dependent on the
spacetime dimension $d$. We illustrate how this works explicitly in
Appendix~\ref{app:4dBVBRST} for the case~$d=4$.

\section{Three-dimensional gravity}
\label{sec:3dgrav}

We will now specialise our discussion to the three-dimensional case, which has special features compared to higher dimensionalities. In particular, in this case the pertinent $L_\infty$-algebra is a differential graded Lie algebra. Using the special feature of general
relativity in three spacetime dimensions, where it defines a
topological field theory in the sense that it is absent of local
propagating degrees of freedom, we can make contact with the
$L_\infty$-algebra formulations of Section~\ref{sec:TQFT} through
explicit $L_\infty$-quasi-isomorphisms. This will show, in particular,
that in the $L_\infty$-algebra framework,
three-dimensional gravity (including degenerate metrics) is
pertubatively \emph{off-shell} equivalent to a Chern--Simons gauge theory, extending
the well-known on-shell equivalence \cite{Witten:1988hc}. This result may be interepreted as an extension of the recent analogue result which applies to the strictly non-degenerate sector \cite{Cattaneo:2017ztd}, phrased in the dual BV-BRST framework.

\subsection{Field equations}
\label{sec:3deom}

Specialising the discussion of Section~\ref{sec:ECPgravity}, the Einstein--Cartan--Palatini action functional for gravity in $d=3$ dimensions including cosmological
constant is given by
\begin{align}
S_{\textrm{\tiny ECP}}(e,\omega):&=\int_M\, \Tr\Big(e \dwedge R + \frac{\Lambda}{3}\,
             e\dwedge e \dwedge e\Big) \nn \\[4pt]
&=\int_M\, \Tr \bigg( \Big(e^{a} \wedge  R^{bc} + \frac{\Lambda}{3}\, e^{a}\wedge e^{b} \wedge e^{c} \Big)\, {\tt E}_{a}\wedge {\tt E}_{b}\wedge {\tt E}_{c}\bigg) \nn \\[4pt]
&= \int_M\, \varepsilon_{abc} \, \Big(e^{a}\wedge R^{bc}+ \frac{\Lambda}{3}\, e^{a}\wedge
  e^{b} \wedge e^{c} \Big) \ ,
\label{eq:ECP3daction}\end{align}
where $\varepsilon_{abc}$ is the Levi--Civita tensor. In this section we shall work in Lorentzian signature $(p,q)=(1,2)$ for definiteness, but our considerations apply equally well in Euclidean signature without substantial change.
The field equations are
\begin{align} \label{eq:eom3d}
R+ \Lambda\, e \dwedge
e=0 \qquad \mbox{and} \qquad T =0 \ .
\end{align}
When $e$ is invertible, these are equivalent to the three-dimensional
Einstein field equations up to local $\sSO_+(1,2)$ Lorentz transformations, whose
solutions are spacetimes of constant curvature. This means that there are no gravitational waves on
three-dimensional spacetimes, but even though all classical spacetimes obtained as solutions to \eqref{eq:eom3d} are
locally gauge equivalent, they can have different topology~\cite{Witten:1988hc}.

\subsection{$L_\infty$-algebra formulation}
\label{sec:3d}

The $L_{\infty}$-algebra corresponding to the gravity theory in $d=3$
dimensions from Section~\ref{sec:3deom} is given by the vector space 
\begin{align}\label{eq:3dV}
V:= V_{0} \oplus V_{1} \oplus V_{2} \oplus V_3
\end{align}
where 
\begin{align} 
V_{0}&=\Gamma(TM)\times \Omega^{0}\big(M,\mathfrak{so}(1,2)\big) \ , \nn
  \\[4pt]
V_{1}&= \Omega^{1}(M,\FR^{1,2}) \times
       \Omega^{1}\big(M,\mathfrak{so}(1,2)\big) \ , \label{eq:3dV012} \\[4pt]
V_{2}&=\Omega^{2}\big(M,\midwedge^{2}(\FR^{1,2})\big) \times
       \Omega^{2}(M,\FR^{1,2}) \ , \nn \\[4pt]
V_3&= \Omega^1\big(M,\Omega^3(M)\big) \times \Omega^3(M,\FR^{1,2})
\ . \nn
\end{align} 
The brackets of Section~\ref{sec:LinftyECP} reduce in this case to
\begin{flalign} 
\ell_{1}(\xi,\rho)&=(0,\dd\rho) \ \in \ V_{1} \ , \nn \\[4pt]
\ell_{1}(e,\omega)&=(\dd \om,\dd e) \ \in \ V_{2} \ , \nn
\\[4pt]
\ell_{1}(E,{\mit\Omega})&=(0,\dd {\mit\Omega}) \ \in \ V_{3} \ , \nn 
\\[6pt]
\ell_{2}\big((\xi_{1},\rho_{1})\,,\,(\xi_{2},\rho_{2})\big)&=\big([\xi_{1},\xi_{2}]\,,\,-[\rho_{1},\rho_{2}]+\xi_1(\rho_{2})
- \xi_{2}(\rho_{1})\big) \ \in \ V_{0} \ , \nn
\\[4pt]
\ell_{2}\big((\xi,\rho)\,,\,(e,\omega)\big)&=\big(-\rho \cdot e
+\LL_{\xi}e \,,\, -[\rho,\omega]+\LL_{\xi}\om\big) \ \in \ V_{1} \
, \label{eq:3dbrackets} \\[4pt]
\ell_{2}\big((\xi,\rho) \,,\, (E,{\mit\Omega})\big)&=\big(-
[\rho, E]+\LL_{\xi}E \,,\, -\rho \cdot
{\mit\Omega}+\LL_{\xi}{\mit\Omega} \big) \ \in
\ V_{2} \ , \nn \\[4pt]
\ell_{2}\big((\xi,\rho)\,,\,({\CX},{\CP})\big)&= \big
(\dd x^\mu\otimes\Tr(\iota_\mu \dd \rho \dwedge
{\CP}) +\LL_{\xi}{\CX}\,,\,-\rho
\cdot {\CP} +\LL_{\xi} {\CP}\big) \ \in \ V_3 \ , \nn \\[4pt] 
\ell_{2}\big((e_{1},\omega_{1})\,,\,(e_{2},\omega_{2})\big)&=-\big([\omega_{1}, \omega_{2}]
+2\,\Lambda\, e_{2}\dwedge e_{1} \,,\, \omega_{1}\wedge
e_{2} +\omega_{2} \wedge e_{1} \big) \ \in \ V_{2} \ , \nn \\[4pt]
\ell_{2}\big((e,\om)\,,\,(E,{\mit\Omega})\big)&=\Big(\dd x^\mu\otimes\Tr \big( \iota_\mu \dd e \dwedge E + \iota_\mu \dd\om \dwedge {\mit\Omega} \nn   -\iota_\mu e \dwedge \dd E - \iota_\mu \om\dwedge \dd {\mit\Omega}\big) \,, \nn \\ 
&\phantom{{}\bigg(\Tr \big( \iota_\mu \dd e \dwedge E +{}}
E\wedge e - \omega \wedge {\mit\Omega} \Big)  \ \in \ V_3 \ , \nn
\end{flalign}
while all the rest of the brackets vanish. Thus three-dimensional gravity is organised by a differential graded Lie algebra. Again the Lie derivative
$\LL_{\xi}$ acts via the Leibniz rule on ${\CX}\in
\Omega^{1}(M)\otimes \Omega^{3}(M)$. The proof of the homotopy
relations in this case is given in Appendix~\ref{app:3dhomotopy}.

As designed by
\eqref{gaugetransfA}--\eqref{eq:Noether}, these encode 
the Euler--Lagrange derivatives
\begin{align*} 
\mathcal{F}(e,\omega) &= (R+\Lambda\, e\dwedge e,T) \\[4pt]
&=(\dd \om,\dd e)+ \frac{1}{2} \, ([\omega,\omega]+2\,\Lambda\, e\dwedge e,2\,\omega\wedge e) \\[4pt]
&=\ell_{1}(e,\omega)-\frac{1}{2}\,
  \ell_{2}\big((e,\omega)\,,\,(e,\omega)\big) \ .
\end{align*}
Moreover, the action functional \eqref{eq:ECP3daction} can be written as in \eqref{action} using the cyclic pairing~\eqref{eq:ECPpairing}:
\begin{align*}
S_{\textrm{\tiny ECP}}(e,\omega) &= \int_M\, \Tr
              \bigg(e\dwedge\Big(\dd\omega+\frac12\, [\omega,\omega]\Big) +
              \frac{\Lambda}3\, e\dwedge e\dwedge e\bigg) \\[4pt]
&= \int_M\, \Tr \Big(\frac12\, (e\dwedge\dd\omega+\omega\dwedge \dd
  e) + \frac1{3!}\, \big(e\dwedge[\omega,\omega]+2\,\omega\dwedge(\omega\wedge e)+2\,\Lambda\,
  e\dwedge e\dwedge e\big) \Big)
  \\[4pt]
&= \frac12\, \big\langle(e,\omega)\,,\,(\dd \om,\dd e) \big\rangle +
  \frac1{3!}\, \big\langle (e,\omega)\,,\,([\omega,\omega] + 2\,
  \Lambda\, e\dwedge e , 2\,\omega\wedge
  e)\big\rangle \\[4pt]
&= \frac12\, \big\langle (e,\omega)\,,\,\ell_1(e,\omega) \big\rangle -
  \frac1{3!}\, \big\langle (e,\omega)\,,\,\ell_2\big(
  (e,\omega)\,,\,(e,\omega)\big) \big\rangle \ , 
\end{align*}
where in the second equality we integrated by parts on
$e\dwedge\dd\omega$ and used $e\dwedge[\omega,\omega] = 
\omega\dwedge (\omega\wedge e)$ which follows from the identity
\eqref{weirdformula}.

\subsection{$L_\infty$-quasi-isomorphism with $BF$ and Chern--Simons formulations}
\label{sec:ECPCS}

Three-dimensional gravity is very similar to $BF$ theory in three dimensions: The action functional \eqref{eq:ECP3daction} is just the action functional \eqref{eq:BFaction} with $\CCW=\FR^{1,2}$ and $\frg=\mathfrak{so}(1,2)$, along with an extra cosmological constant term. The usual shift symmetry \eqref{eq:shift} of $BF$ theory correspondingly contains a slight modification to accomodate for the cosmological constant term: for $\tau \in \Omega^{0}(M,\FR^{1,2})$, the action is invariant under
\begin{align} \label{eq:deltatau}
\delta_\tau e = \dd\tau+\omega\cdot\tau \qquad \mbox{and} \qquad
\delta_\tau\omega =2\, \Lambda \, e\dwedge\tau 
\end{align}
with corresponding Noether identity
\begin{align*}
\dd^{\om}\CF_{e}=\dd^{\om}R+2\,\Lambda\,\dd^{\om}e\dwedge e=2 \, \Lambda\, \CF_{\om}\dwedge e
\end{align*}
in $\Omega^3(M,\midwedge^2(\FR^{1,2}))$, where we wrote $\CF_\om:=T=\dd^\om e$ and $\CF_e:=R+\Lambda\,e\dwedge e$ as
before, and used the second Bianchi identity $\dd^\om R=0$. 
One may readily check the covariance of the Euler--Lagrange derivatives under the new transformation:
\begin{align*}
\delta_{\tau}(\CF_{e},\CF_{\om})=(2\,\Lambda \, \CF_{\om}\dwedge \tau\,,\, \CF_{e} \wedge \tau) \ ,
\end{align*}
so that in this case covariance is preserved but through a mixing of the two Euler--Lagrange derivatives.

Despite the cosmological constant modification, this symmetry may still be used to compensate for the action of any infinitesimal diffeomorphism $\xi\in \Gamma(TM)$. Indeed by choosing $\tau_{\xi}:= \iota_{\xi} e$ and $\rho_{\xi}:= \iota_{\xi} \om$, one can verify 
\begin{align}\label{eq:shiftdiff}
\delta_{\xi}(e,\om):=(\LL_{\xi}e, \LL_{\xi}\om) = \delta_{(\tau_{\xi},\rho_{\xi})}(e,\om) +(\iota_{\xi} \CF_{e}, \iota_{\xi}\CF_{\om})
\end{align}
as in \eqref{eq:BFdiff}.
When $e$ is non-degenerate, that is, it is invertible as a bundle map, one may define the vector field $\xi_{\tau}:= e^{-1}(\tau)$ that generates a shift transformation by any $\tau \in \Omega^{0}(M,\FR^{1,2})$ via \eqref{eq:shiftdiff}.  Thus if one restricts the action functional \eqref{eq:ECP3daction} to non-degenerate coframe fields $e$, then the choice of generating set of gauge transformations is immaterial~\cite{Witten:1988hc,Carlip}.
However, in the $L_{\infty}$-algebra framework we need to allow for degenerate metrics in order for the space of dynamical fields to be a vector space. In this case, the two transformations are \emph{not} equivalent, and indeed the shift symmetry generates a larger symmetry distribution on the space of fields. Hence, with this line of reasoning one should attach the extra symmetries to the cochain complex of the $L_{\infty}$-algebra. Then three-dimensional gravity (including degenerate metrics) is a special case of three-dimensional $BF$ theory, with an additional cosmological constant term.

Extending the three-dimensional ECP complex to include the extra shift symmetry and its corresponding Noether identity leads to the graded vector space
$$
V_{\textrm{\tiny ECP}}^{\rm ext} := V^{\textrm{ext}}_{0} \oplus V^{\textrm{ext}}_{1} \oplus V^{\textrm{ext}}_{2} \oplus V^{\textrm{ext}}_3
$$
where
\begin{align} 
V^{\textrm{ext}}_{0}&=\Gamma(TM)\times \Omega^{0}(M,\FR^{1,2})\times \Omega^{0}\big(M,\mathfrak{so}(1,2)\big) \ , \nn
\\[4pt]
V_1^{\rm ext}=V_{1}&= \Omega^{1}(M,\FR^{1,2}) \times
\Omega^{1}\big(M,\mathfrak{so}(1,2)\big) \ , \label{eq:3dV012ext} \\[4pt]
V_2^{\rm ext}=V_{2}&=\Omega^{2}\big(M,\midwedge^{2}(\FR^{1,2})\big) \times
\Omega^{2}(M,\FR^{1,2}) \ , \nn \\[4pt]
V^\textrm{ext}_{3}&= \Omega^1\big(M,\Omega^3(M)\big)\times \Omega^{3}(M,\midwedge^{2} \FR^{1,2}) \times \Omega^3(M,\FR^{1,2})
\ , \nn
\end{align} 
with the brackets extending as 
\begin{flalign} 
\ell^{\textrm{ext}}_{1}(\xi,\tau,\rho)&=(\dd \tau, \dd\rho) \ \in \ V^{\textrm{ext}}_{1} \ , \nn \\[4pt]
\ell_{1}^{\textrm{ext}}(e,\omega)&=(\dd \om,\dd e) \ \in \ V^{\textrm{ext}}_{2} \ , \nn
\\[4pt]
\ell^{\textrm{ext}}_{1}(E,{\mit\Omega})&=(0,\dd E, \dd {\mit\Omega}) \ \in \ V^{\textrm{ext}}_{3} \ , \nn 
\\[6pt]
\ell^{\textrm{ext}}_{2}\big((\xi_{1},\tau_{1},\rho_{1})\,,\,(\xi_{2},\tau_{2}, \rho_{2})\big)&=\big([\xi_{1},\xi_{2}]\,,\, -\rho_{1}\cdot \tau_{2} + \rho_{2} \cdot \tau_{1} +\xi_{1}(\tau_{2})-\xi_{2}(\tau_{1})\, , \, \nn \\ & \hspace{3cm} -[\rho_{1},\rho_{2}]+\xi_1(\rho_{2})
- \xi_{2}(\rho_{1})\big) \ \in \ V^{\textrm{ext}}_{0} \ , \label{eq:3dbracketsext}
\\[4pt]
\ell^{\textrm{ext}}_{2}\big((\xi,\tau,\rho)\,,\,(e,\omega)\big)&=\big(-\rho \cdot e + \om \cdot \tau
+\LL_{\xi}e \,,\, -[\rho,\omega]+2\,\Lambda \, e \dwedge \tau + \LL_{\xi}\om\big) \ \in \ V^{\textrm{ext}}_{1} \
, \nn \\[4pt]
\ell^{\textrm{ext}}_{2}\big((\xi,\tau, \rho) \,,\, (E,{\mit\Omega})\big)&=\big(-
[\rho, E]+ 2\,\Lambda\, {\mit\Omega}\dwedge \tau+\LL_{\xi}E \,,\, -\rho \cdot
{\mit\Omega}+E\wedge \tau + \LL_{\xi}{\mit\Omega} \big) \ \in
\ V^{\textrm{ext}}_{2} \ , \nn \\[4pt]
\ell_{2}^{\textrm{ext}}\big((\xi,\tau, \rho)\,,\,({\CX},{\CT},{\CP})\big)&= \big
(\dd x^\mu\otimes\Tr(\iota_\mu \dd \rho \dwedge
{\CP}+\iota_{\mu} \dd \tau \dwedge \CT) +\LL_{\xi}{\CX}\,,\, \nn \\
& \hspace{3cm}-[\rho,\CT]+ \LL_{\xi} \CT \, , \,-\rho
\cdot {\CP} +\LL_{\xi} {\CP}\big) \ \in \ V^{\textrm{ext}}_3 \ , \nn \\[4pt] 
\ell^{\textrm{ext}}_{2}\big((e_{1},\omega_{1})\,,\,(e_{2},\omega_{2})\big)&=-\big([\omega_{1},\omega_{2}]
+2\,\Lambda\, e_{2}\dwedge e_{1} \,,\, \omega_{1}\wedge
e_{2} +\omega_{2} \wedge e_{1} \big) \ \in \ V^{\textrm{ext}}_{2} \ , \nn \\[4pt]
\ell_{2}^{\textrm{ext}}\big((e,\om)\,,\,(E,{\mit\Omega})\big)&=\big(\dd x^\mu\otimes\Tr ( \iota_\mu \dd e \dwedge E + \iota_\mu \dd\om \dwedge {\mit\Omega} \nn   -\iota_\mu e \dwedge \dd E - \iota_\mu \om\dwedge \dd {\mit\Omega}) \,, \nn \\ 
& \hspace{3cm}
-[\om, E] + 2\, \Lambda \,{\mit\Omega} \dwedge e\, , \,E\wedge e - \omega \wedge {\mit\Omega} \big)  \ \in \ V^{\textrm{ext}}_3 \ . \nn
\end{flalign}
The extra homotopy relations follow from identical calculations to those of Appendix~\ref{app:3dhomotopy}.
At this point one should further augment the complex by introducing a
copy of the redundant part of the symmetries at degree $-1$ and its
dual at degree $4$. The final extended $L_\infty$-algebra is then
quasi-isomorphic to the $L_\infty$-algebra constructed here but excluding the diffeomorphisms, in exactly the same way as in Section~\ref{sec:BFtheory}. Instead, we will circumvent this step in a more elegant way.

Recalling the equivalence between $BF$ theory in three dimensions and
Chern--Simons theory from Section~\ref{sec:3dBFCS}, it follows that
for $\Lambda=0$ the three-dimensional ECP theory is equivalent to
Chern--Simons theory based on the Lie algebra
$\FR^{1,2}\rtimes\mathfrak{so}(1,2)$. Following the observation
of~\cite{Witten:1988hc}, this equivalence can be extended to
$\Lambda\neq0$, and we shall show that the extended $L_{\infty}$-algebra based on \eqref{eq:3dV012ext} and \eqref{eq:3dbracketsext} is isomorphic to the extended Chern--Simons $L_\infty$-algebra based on \eqref{eq:CSVext}, \eqref{eq:CSell1ext} and \eqref{eq:CSell2ext} for a special choice of Lie algebra. That is, Einstein--Cartan--Palatini theory can be formulated as a
Chern--Simons gauge theory of the sort discussed in
Section~\ref{sec:CStheory} whose gauge group $\sG$ is the isometry
group of the constant curvature three-dimensional spacetime determined by the Einstein equations \eqref{eq:eom3d}: the Poincar\'e group
${\sf ISO}(1,2)=\FR^{1,2}\rtimes\sSO(1,2)$ for vanishing
cosmological constant $\Lambda=0$, the de~Sitter group $\sSO(1,3)$ for $\Lambda>0$, or the anti-de~Sitter group $\sSO(2,2)$ for $\Lambda<0$. The generators ${\tt P}_a$ and ${\tt J}_{ab}=-{\tt J}_{ba}$ of the Lie
algebra $\frg$ of $\sG$, with $a,b=1,2,3$, have Lie brackets
\begin{align*}
[{\tt P}_a,{\tt P}_b]_\frg&=2\,\Lambda\,{\tt J}_{ab} \ , \\[4pt]
[{\tt J}_{ab},{\tt P}_c]_\frg&=\tfrac{1}{2}\,\big(\eta_{bc}\,
{\tt P}_a-\eta_{ac}\,{\tt P}_b\big) \ , \\[4pt] [{\tt J}_{ab},{\tt J}_{cd}]_\frg&=\tfrac{1}{2}\,\big( \eta_{bc}\, {\tt J}_{ad} - \eta_{ac}\, {\tt J}_{bd} + \eta_{ad}\, {\tt J}_{bc} - \eta_{bd}\, {\tt J}_{ac} \big) \ ,
\end{align*}
and there is a natural invariant quadratic form $\Tr_\frg:\frg\otimes\frg\to\FR$
of split signature defined by~\cite{Witten:1988hc}
\begin{align*}
\Tr_\frg({\tt J}_{ab}\otimes {\tt P}_c)= \varepsilon_{abc} \qquad \mbox{and} \qquad
\Tr_\frg({\tt J}_{ab}\otimes {\tt J}_{cd})=0=\Tr_\frg({\tt P}_a\otimes {\tt P}_b) \ .
\end{align*}

By decomposing connections $A\in \Omega^1(M,\frg)$ as
\begin{align} \label{eq:Aeomega}
A=e^a\,{\tt P}_a+\omega^{ab}\,{\tt J}_{ab} \ \in \ \Omega^1(M,\frg) \ ,
\end{align}
a straightforward expansion of the Chern--Simons action
functional \eqref{eq:CSaction} using these commutation relations and invariant pairing shows that it
coincides with the action functional
\eqref{eq:ECP3daction} for Einstein--Cartan--Palatini gravity in three
dimensions:
\begin{align*}
S_{\textrm{\tiny CS}}(A)=S_{\textrm{\tiny ECP}}(e,\om) \ .
\end{align*}
Similarly, decomposing
gauge parameters $\lambda \in \Omega^{0}(M,\frg)$ as 
\begin{align} \label{eq:upstaulambda}
\lambda=\tau^a\,{\tt P}_a + \rho^{ab}\, {\tt J}_{ab} \ \in \ \Omega^0(M,\frg) \ ,
\end{align}
a straightforward expansion again shows that the standard gauge transformations $\delta_{\lambda} A$ in \eqref{eq:CSgaugetransf} 
are equivalent to $\delta_{(\tau,\rho)} (e,\om)$. Completely analogous statements follow for the action of diffeomorphisms, and for the forms of the Euler--Lagrange derivatives and Noether identities.

The precise statement of the equivalence above is that
the two underlying extended cyclic $L_{\infty}$-algebras are (strictly) isomorphic. The isomorphism is given by
\begin{align*}
\psi^{\textrm{\tiny ECP}}_{1}: V_{\textrm{\tiny ECP}}^{\textrm{ext}} &\longrightarrow V_{\textrm{\tiny CS}}^{\textrm{ext}}
\end{align*}
where
\begin{align*}
\psi^{\textrm{\tiny ECP}}_{1}(\xi,\tau,\rho)&= (\xi, \tau^{a}\,{\tt P}_{a}+\rho^{ab}\, {\tt J}_{ab}) \ , \\[4pt]
\psi^{\textrm{\tiny ECP}}_{1}(e,\om)&= e^{a}\,{\tt P}_{a} + \om^{ab}\,{\tt J}_{ab} \ , \\[4pt]
\psi^{\textrm{\tiny ECP}}_{1}(E,{\mit\Omega})&= {\mit\Omega}^{a}\, {\tt P}_{a} + E^{ab}\, {\tt J}_{ab} \ , \\[4pt]
\psi^{\textrm{\tiny ECP}}_{1}(\CX, \CT, \CP)&= (\CX, \CP^{a}\,{\tt
                                              P}_{a} + \CT^{ab}\,{\tt
                                              J}_{ab}) \ ,
\end{align*}
while the remaining maps $\psi^{\textrm{\tiny ECP}}_{n}: \midwedge^{n}
V_{\textrm{\tiny ECP}}^{\textrm{ext}} \rightarrow
V^{\textrm{ext}}_{\textrm{\tiny CS}}$ are set to $0$ for all $n \geq
2$. Since both sides are differential graded Lie algebras, and no
higher morphisms $\psi^{\textrm{\tiny ECP}}_{n}$ arise, the only
relation to check is the condition that the map $\psi^{\textrm{\tiny ECP}}_{1}$ is a morphism of differential graded Lie algebras. The calculations follow by a straightforward expansion of the brackets involved. Furthermore, the map is a cyclic $L_\infty$-morphism, which follows immediately by the definition of the pairing $\Tr_{\frg}$.

Following the discussion of redundant symmetries in Chern--Simons
theory from Section~\ref{sec:TQFT}, one may simply drop the redundant
diffeomorphism symmetries by composing with the quasi-isomorphism on
the Chern--Simons side,\footnote{Strictly speaking we should also
  include vector spaces $V^{\rm ext}_{-1}$ and $V^{\rm ext}_{4}$ for
  this composition to work, but this is easily done and we do not write them explicitly.} with no effect on the moduli space of classical solutions. Then this $L_\infty$-isomorphism shows that three-dimensional ECP theory is perturbatively off-shell equivalent to Chern--Simons theory with the appropriate gauge algebra. 
Note that while on the gravity side the cosmological constant
$\Lambda$ appears explicitly in the dynamical and kinematical
brackets, on the Chern--Simons side of the equivalence $\Lambda$ does
not appear in the definition of the $n$-brackets: it is reinterpreted as part of the structure constants of the chosen Lie algebra, and in this sense it is fully absorbed into the kinematical data instead.

We contrast the description we give here with that of the strong BV equivalence exhibited by \cite{Cattaneo:2017ztd} between three-dimensional non-degenerate ECP gravity and non-degenerate $BF$ theory. The symplectomorphism the authors present uses explicitly the inversion property of non-degenerate coframes, and as such it cannot be straightforwardly extended to the degenerate sector. From the $L_{\infty}$-algebra point of view, it is obvious we cannot intepret their map as an $L_{\infty}$-morphism since restricting to non-degenerate configurations automatically takes us out of the realm of vector spaces and $L_{\infty}$-algebras. Furthermore, our quasi-isomorphism may be dualised to a map that preserves the symplectic 2-form in the corresponding (degenerate) BV-BRST complexes which however will not be a bijection. For the equivalence to be apparent one really needs to work on the underlying $L_{\infty}$-algebras level, where we can invert maps -up to homotopy- which are not necessarily bijective. Another remark is that our maps point out the well known equivalence of $BF$ and certain Chern-Simons models, thus encoding the cosmological constant in the structure Lie algebra of Chern-Simons theory as in \cite{Witten:1988hc}. Indeed our interpretation of redundant symmetries and use of quasi-isomorphisms, including degenerate coframes, comes more closely to that of Witten \cite{Witten:1988hc}.
\section{Four-dimensional gravity}
\label{sec:4dgrav}

In this final section we briefly describe the analogous four-dimensional case, wherein the $L_\infty$-algebra formulation is no longer given by a differential graded Lie algebra.

\subsection{Field equations}

The discussion of Section~\ref{sec:ECPgravity} can also be specialised
to yield the Einstein--Cartan--Palatini action functional for gravity in $d=4$ dimensions including cosmological
constant, which is given by 
\begin{align}
S_{\textrm{\tiny ECP}}(e,\omega):&=\int_M\, \Tr\Big(\frac{1}{2}\, e \dwedge e \dwedge
              R + \frac{\Lambda}{4}\, e\dwedge e \dwedge e\dwedge
              e\Big) \nn \\[4pt] 
&=\int_M\, \Tr \bigg( \Big(\frac{1}{2}\,e^{a}\wedge e^{b} \wedge
  R^{cd} + \frac{\Lambda}{4}\, e^{a}\wedge e^{b} \wedge e^{c} \wedge
  e^{d} \Big)\, {\tt E}_{a}\wedge {\tt E}_{b}\wedge {\tt E}_{c}\wedge
  {\tt E}_{d}\bigg) \nn \\[4pt] \label{eq:ECP4daction}
&= \int_M\, \varepsilon_{abcd}\, \Big(\frac{1}{2}\, e^{a}\wedge
  e^{b}\wedge R^{cd}+ \frac{\Lambda}{4}\, e^{a}\wedge e^{b} \wedge
  e^{c}\wedge e^{d} \Big) \ , 
\end{align}
where again we work in Lorentzian signature $(p,q)=(1,3)$ for definiteness and identify the curvature as $R=R^{ab}\, {\tt E}_{a}\wedge {\tt E}_{b}
\in \Omega^{2}(M,\midwedge^{2}(\FR^{1,3}))$. The field equations are now 
\begin{align*}
e \dwedge R+\Lambda\, e\dwedge e \dwedge e=0 \qquad \mbox{and} \qquad
 e \dwedge T =0 \ .
\end{align*}
In this case, when $e$ is invertible, the second equation is 
equivalent to the torsion-free condition $T=0$, because the map
\begin{align*}
e\dwedge-:\Omega^2\big(M,\FR^{1,3}\big)\longrightarrow \Omega^{3}\big(M,\midwedge^{2}(\FR^{1,3})\big)
\end{align*}
is an isomorphism. This can be used to rewrite the action functional
\eqref{eq:ECP4daction} in terms of the curvature of the Levi--Civita
connection for the metric $g=\eta_{ab}\,e^a\otimes e^b$. Then the first equation can be reduced to the Einstein
equations in four dimensions (up to gauge equivalence).

\subsection{$L_\infty$-algebra formulation} 
\label{sec:Linfty4d}

Next we write out explicitly the $L_\infty$-algebra structure of four-dimensional
gravity from Section~\ref{sec:LinftyECP}, specialised to the case $d=4$. It is given by the vector
space 
$$
V:= V_{0} \oplus V_{1} \oplus V_{2} \oplus V_3
$$
where 
\begin{align*} 
V_{0}&=\Gamma(TM)\times \Omega^{0}\big(M,\mathfrak{so}(1,3)\big) \ ,
       \nn \\[4pt] 
V_{1}&= \Omega^{1}(M,\FR^{1,3}) \times
       \Omega^{1}\big(M,\mathfrak{so}(1,3) \big) \ , \\[4pt]
V_{2}&=\Omega^{3}\big(M,\midwedge^{3}(\FR^{1,3})\big) \times
       \Omega^{3}\big(M,\midwedge^{2}(\FR^{1,3})\big) \ , \\[4pt]
V_3&=\Omega^1\big(M,\Omega^4(M)\big) \times \Omega^4\big(M,\midwedge^2(\FR^{1,3})\big)
     \ .
\nn
\end{align*} 
The non-vanishing brackets may be read
off as follows: The $1$-bracket $\ell_{1}$ is defined by 
\begin{flalign*} 
 \ell_{1}(\xi,\rho)=(0,\dd\rho) \ \in \ V_{1} \ , \quad
 \ell_{1}(e,\omega)=(0,0) \ \in \ V_{2} \  \qquad \mbox{and} \qquad \ell_{1}(E,{\mit\Omega})=(0,-\dd {\mit\Omega}) \ \in \ V_{3} \ .
\end{flalign*}
The $2$-bracket $\ell_2$ is defined by 
\begin{flalign*}
 \ell_{2}\big((\xi_{1},\rho_{1})\,,\,(\xi_{2},\rho_{2})\big)&=
 \big([\xi_{1},\xi_{2}]\,,\,-[\rho_{1},\rho_{2}]+\xi_{1}(\rho_{2})
 - \xi_{2}(\rho_{1}) \big) \ \in \ V_{0} \ , \\[4pt]
 \ell_{2}\big((\xi,\rho)\,,\,(e,\omega)\big)&=(-\rho \cdot e
 +\LL_{\xi}e, -[\rho,\omega]+\LL_{\xi}\om) \ \in \ V_{1} \ , \\[4pt]
 \ell_{2}\big((\xi,\rho)\,,\,(E,{\mit\Omega}) \big)&=(-
 \rho\cdot E+\LL_{\xi}E , -[\rho
 , {\mit\Omega}]+\LL_{\xi}{\mit\Omega}) \
 \in \ V_{2} \ , \nn \\[4pt]
 \ell_{2}\big((\xi,\rho)\,,\,({\CX},{\CP})\big)&=
 \big (\dd x^\mu\otimes\Tr(\iota_\mu \dd \rho \dwedge
 {\CP}) +\LL_{\xi}{\CX}\,,\,-[\rho
 , {\CP}] +\LL_{\xi} {\CP}\big) \ \in \ V_3 \ , \\[4pt]
 \ell_{2}\big((e_{1},\omega_{1})\,,\,(e_{2},\omega_{2})\big)&=-(e_{1} \dwedge \dd \omega_{2}
 + e_{2} \dwedge \dd \omega_{1} , e_{1}
 \dwedge \dd e_{2} + e_{2} \dwedge \dd e_{1}) \ \in \ V_{2} \ , \nn \\[4pt]
 \ell_{2}\big((e,\om)\,,\,(E,{\mit\Omega})\big)&=\Big(\dd x^\mu\otimes\Tr \big( \iota_\mu \dd e \dwedge E - \iota_\mu \dd\om \dwedge {\mit\Omega} \nn   -\iota_\mu e \dwedge \dd E + \iota_\mu \om\dwedge \dd {\mit\Omega} \big) \,, \nn \\ 
 &\phantom{{}\bigg(\dd x^\mu\otimes\Tr \big( \iota_\mu \dd e \dwedge E +{}}
 \tfrac32 \, E\wedge e + [\omega, {\mit\Omega}]\Big) \ \in \ V_3 \ . \nn
\end{flalign*}
The $3$-bracket $\ell_{3}$ is defined by
\begin{align*}
& \ell_{3}\big((e_{1},\omega_{1})\,,\,(e_{2},\omega_{2})\,,\,(e_{3},\omega_{3})
 \big) \\[4pt]
& \hspace{1cm} =-\big( e_{1}\dwedge [\omega_{2} , \omega_{3}] + e_{2} \dwedge [\omega_{1} , \omega_{3}] + e_{3} \dwedge [\omega_{2} , \omega_{1}] +  3!\, \Lambda\, e_{1}\dwedge e_{2}\dwedge
e_{3} \,,\, \\
& \hspace{2cm} e_{1} \dwedge (\omega_{2} \wedge e_{3}) 
 +_{(2\leftrightarrow 3)} + e_{2}\dwedge (\omega_{1} \wedge e_{3}) +_{(1\leftrightarrow 3)} + e_{3}\dwedge (\omega_{2} \wedge e_{1})
 +_{(2\leftrightarrow 1)}  \big) \ \in \ V_{2} \ .
\end{align*} 
The calculations establishing the homotopy relations in this case are
formally identical to those of the three-dimensional case from
Appendix~\ref{app:3dhomotopy}, now including an extra coframe field $e$ where
the higher brackets occur. We do not detail these cumbersome
calculations and instead illustrate how the brackets follow from the dual picture
of the BV--BRST formalism in Appendix~\ref{app:4dBVBRST}. Using the nilpotency of the BV differential this may be seen as an alternative proof of the homotopy relations.

The Euler--Lagrange derivatives are encoded in the
expected way as
\begin{align*} 
\mathcal{F}(e,\omega)&=\big(
                       e\dwedge \dd \omega + e\dwedge \tfrac12\,[\omega ,
                       \omega]+ \Lambda\, e^{3}, e\dwedge \dd e +e\dwedge (\omega\wedge e) \big) \\[4pt] 
&=(0,0)+(e\dwedge \dd \omega, e \dwedge \dd e) + \big( e\dwedge
  \tfrac12\, [\omega , \omega]+\Lambda \, e^{3},e\dwedge (\omega\wedge e) \big) \\[4pt] 
&= \ell_{1}(e,\om)- \frac{1}{2}\,
  \ell_{2}\big((e,\om)\,,\,(e,\om)\big) -\frac{1}{3!} \,
  \ell_{3}\big((e,\om)\,,\,(e,\omega)\,,\,(e,\om) \big) \ .
\end{align*}
The action functional \eqref{eq:ECP4daction} can be written as in \eqref{action} using
the cyclic pairing \eqref{eq:ECPpairing} and the identity~\eqref{weirdformula}:
\begin{align*}
S_{\textrm{\tiny ECP}}(e,\omega) &= \int_M\, \Tr\bigg(\frac12\,
                e^2\dwedge\Big(\dd\omega+\frac12\,[\omega,
                \omega]\Big)+\frac{\Lambda}4\, e^4\bigg) \\[4pt]
&= \int_M\,
  \Tr\Big(\frac1{3!}\,(e^2\dwedge\dd\omega-2\,\omega\dwedge
  e\dwedge
  \dd e)+\frac14\,\big(e^2\dwedge[\omega,\omega]-2\,\omega \dwedge e \dwedge
  (\omega\wedge e) + \Lambda\, e^4\big) \Big) \\[4pt]
&= \frac12\,\big\langle (e,\omega)\,,\,(0,0)\big\rangle + \frac1{3!}\,
  \big\langle (e,\omega)\,,\,(2\,e\dwedge\dd\omega,2\, e\dwedge
  \dd e)\big\rangle \\
& \hspace{5cm} + \frac1{4!}\, \big\langle
  (e,\omega)\,,\, \big(3!\,
  e\dwedge\tfrac12\,[\omega,\omega] + 3!\,\Lambda\, e^3,3!\, e\dwedge (\omega\wedge e) \big)\big\rangle
  \\[4pt]
&= \frac12\, \big\langle (e,\omega)\,,\,\ell_1(e,\omega)\big\rangle -
  \frac1{3!}\, \big\langle
  (e,\omega)\,,\,\ell_2\big((e,\omega)\,,\,(e,\omega)\big) \big\rangle
  \\
& \hspace{5cm} - \frac1{4!}\, \big\langle (e,\omega)\,,\,
  \ell_3\big((e,\omega)\,,\,(e,\omega) \,,\,(e,\omega)\big)
  \big\rangle \ .
\end{align*}

\subsection{Differential graded Lie algebra formulations}

In analogy to what we did in the case of three-dimensional gravity, it
is natural at this point to ask if there is an equivalent formulation
of the four-dimensional ECP theory as a differential graded Lie
algebra, in the sense of a quasi-isomorphism with the
$L_\infty$-algebra of Section~\ref{sec:Linfty4d}; this would
correspond to a strictification of the $L_\infty$-algebra, which is
known to exist on abstract grounds~\cite{Berger2007}. A natural place to
look is again at the $L_\infty$-algebras underlying the $BF$ theories
from Section~\ref{sec:BFtheory}. In particular, one may consider a
four-dimensional $BF$ theory with an additional ``cosmological
constant term'': the action functional is given by\footnote{By
  dimension counting, such cosmological constant deformations of $BF$
  theories are only
  possible in dimensions $d=3,4$.}
\begin{align}\label{eq:BF4d}
S^\Lambda_{\textrm{\tiny BF}}(B,A)= \int_{M}\, \Tr \Big(B \dwedge F +\frac{\Lambda}{2}\, B\dwedge B \Big)
\end{align}
for $B\in \Omega^{2}\big(M,\midwedge^{2}(\FR^{1,3})\big)$ and $A\in
\Omega^{1}\big(M,\midwedge^{2}(\FR^{1,3})\big)$; equivalently, $B$ is
a two-form and $A$ is a connection one-form both valued in the Lie
algebra $\mathfrak{so}(1,3)$. 

As in three spacetime dimensions, the shift
symmetry survives the cosmological constant modification: for $\tau
\in \Omega^{1}(M,\FR^{1,3})$ we define
\begin{align*}
\delta_{\tau}(B,A)=(\dd^{A} \tau, \Lambda\, \tau)
\end{align*}
with the corresponding Noether identity
\begin{align*}
\dd^{A}\CF_{B}-\Lambda \, \CF_{A} = 0 \ .
\end{align*}
Again this symmetry renders diffeomorphisms redundant as in three
dimensions: the shift symmetry is again large enough to kill all
local degrees of freedom, so that the gauge theory is again
topological. However, now the addition of a cosmological constant
$\Lambda\neq0$ breaks the higher shift symmetry discussed in
Section~\ref{sec:highershift}: if $\epsi \in
\Omega^{0}\big(M,\midwedge^{2} (\FR^{1,3})\big)$, then 
$$
\big(\delta_{\tau+\dd^{A}\epsi} - \delta_{\tau}\big) (B,A) =
\big((\CF_{B}-\Lambda\, B)\wedge \epsi, \Lambda \, \dd^{A} \epsi\big)
\ \in \ V^{\textrm{\tiny BF}}_{1} \ ,
$$
and this is not proportional to the Euler--Lagrange derivatives. Of
course, when $\Lambda=0$ the two shift parameters $\tau +\dd^{A}
\epsi$ and $\tau$ induce the same transformation on the space of
fields, up to a term proportional to Euler--Lagrange derivatives, so
that the shift symmetry is reducible as discussed in Section~\ref{sec:highershift}.

The similarity between the four-dimensional ECP action functional
\eqref{eq:ECP4daction} and the action functional \eqref{eq:BF4d} is
evident: the former is given by restricting $B$ to decomposable
diagonal two-forms in
$\Omega^{2}\big(M,\midwedge^{2}(\FR^{1,3})\big)$. Of course, this
restriction breaks the shift symmetry invariance\footnote{If one
  restricts to nondegenerate coframe fields, one may derive a certain
  generalization of the three-dimensional shift symmetry~\cite{Montesinos:2020pxv},
  which is however equivalent to diffeomorphism invariance.} so that
there are 
gravitational waves in four dimensions: four-dimensional gravity
contains local degrees of freedom.
In the $L_{\infty}$-algebra framework, one can see this inequivalence
between the two theories simply by noting that there does not exist a
cyclic morphism between the two $L_{\infty}$-algebras: This can be
immediately seen at the level of the degree-preserving component
$\psi_{1}:V^{\textrm{\tiny BF}}\rightarrow V^{\textrm{\tiny EPC}}$, where the only possible
non-trivial map would be of the form 
\begin{align*}
\psi_{1}\big((\xi,\rho) +(e,\om) +(E,{\mit\Omega}) +(\CX,\CP)\big):=
  (0,\rho)+(0,\om)+(0,{\mit\Omega})+(0,\CP) \ ,
\end{align*}
but this does not commute with the $1$-brackets of the two sides, nor
does it preserve the cyclic pairings.\footnote{The same is true if one
  tries to build a morphism in the opposite direction.} Despite this,
the relation of the Einstein--Cartan--Palatini formulation of gravity
in three and four spacetime dimensions to topological field theories
has prompted studies of deformations of the four-dimensional $BF$
action functional \eqref{eq:BF4d} wherein the constraints above are
implemented dynamically, and thus taking the latter action functional as a candidate for quantization, see e.g. \cite{Freidel:2012np}.
It would be interesting and potentially fruitful to study such
deformations and their explicit relations to four-dimensional gravity
in the off-shell framework of $L_{\infty}$-algebras, which from a
mathematical perspective would give a concrete construction of the
strictification of the ECP $L_\infty$-algebra. 

\paragraph{Acknowledgements.}

We thank Athanasios Chatzistavrakidis, Larisa Jonke, Lukas M\"uller, Christian
S\"amann, Alexander Schenkel, Michele Schiavina, Peter Schupp and
Martin Wolf for helpful discussions
and correspondence.
This work was supported by the Action MP1405 ``Quantum
Structure of Spacetime'' from the European Cooperation in Science and Technology
(COST). The work of {\sc M.D.C.} and {\sc V.R.} is supported by Project
ON171031 of the Serbian Ministry of Education, Science and
Technological Development. The work of {\sc M.D.C.} and {\sc R.J.S.} was supported by the
Croatian Science Foundation Project IP--2018--01--7615 ``New Geometries
for Gravity and Spacetime''. The work of {\sc G.G.} is supported by a
Doctoral Training Grant from the UK Engineering and Physical Sciences
Research Council ({EPSRC}). The work of {\sc R.J.S.} was supported by
the Consolidated Grant ST/P000363/1 ``Particle Theory at the Higgs Centre''
from the UK Science and Technology Facilities Council (STFC).

\paragraph{Data availability.} 

No additional research data beyond the data included and cited in this work are needed to validate the research findings presented.

\appendix

\section{Calculations in three dimensions}

In this appendix we provide various explicit calculations for three-dimensional gravity, illustrating the main steps needed to establish the $L_\infty$-algebra relations of Section~\ref{sec:Linfty}.

\subsection{Homotopy relations}
\label{app:3dhomotopy}

The proof of the homotopy relations for the three-dimensional
Einstein--Cartan--Palatini theory of Section~\ref{sec:3d} is similar in spirit
to the proof of the homotopy relations for three-dimensional
Chern--Simons theory from \cite{Linfty}. In the present setting we must
deal with an extra field in the fundamental representation of the
gauge group and the extra diffeomorphism gauge symmetries. We check
these relations order by order in the total degree of the homotopy,
remembering that $|\ell_{n}|=2-n$. For brevity, in the
calculations below we set the cosmological constant term
to zero, $\Lambda=0$, with the results following straightforwardly for
$\Lambda\neq0$.

\subsubsection*{Differential conditions}

The homotopy relations $\CJ_1=0$ are the differential condition $\ell_1\circ\ell_1=0$. The only
fields we need to check this on is a pair of gauge parameters $(\xi,\rho)$ in degree~$0$:
\begin{align*}
\ell_{1}\big(\ell_{1}(\xi,\rho)\big)= \ell_{1}(0,\dd\rho)=
  (0,\dd^{2}\rho)= (0,0) \ ,
\end{align*}
and a pair of dynamical fields $(e,\om)$ in degree~$1$:
\begin{align*}
	\ell_{1}\big(\ell_{1}(e,\omega)\big)=\ell_{1}(\dd \om, \dd e)=(0,\dd^{2} e)=(0,0) \ .
\end{align*}

\subsubsection*{Leibniz rules}

The homotopy relations $\CJ_2=0$ are the graded Leibniz rule for the differential $\ell_1$ with
respect to the $2$-bracket $\ell_2$.
In this case, we may act non-trivially on fields whose total degrees
are $0$, $1$ and~$2$.

\paragraph{$\underline{{\rm Total~degree}~0:}$}
We act on two pairs of gauge transformations
$(\xi_{1},\rho_{1})$ and $(\xi_{2},\rho_{2})$, and we need to check 
\begin{align*}
\ell_{1}\Big(\ell_{2}\big((\xi_{1},\rho_{1})\,,\,(\xi_{2},\rho_{2})\big)\Big)=
\ell_{2}\big(\ell_{1}(\xi_{1},\rho_{1})\,,\,(\xi_{2},\rho_{2})\big)
  +
  \ell_{2}\big((\xi_{1},\rho_{1})\,,\,\ell_{1}(\xi_{2},\rho_{2})\big)
  \ .
\end{align*}
The left-hand side is
\begin{align*}
\ell_{1}\big([\xi_{1},\xi_{2}]\,,\,-[\rho_{1},\rho_{2}]+\xi_{1}(\rho_{2})
  - \xi_{2}(\rho_{1})
  \big)&=\Big(0,-\dd\big([\rho_{1},\rho_{2}]
         +\xi_{1}(\rho_{2}) - \xi_{2}(\rho_{1})\big) \Big) \\[4pt]
&=-\big(0,[\dd\rho_{1},\rho_{2}]+[\rho_{1},\dd\rho_{2}]+\dd
  \xi_{1}(\rho_{2}) -\dd \xi_{2}(\rho_{1}) \big) \ ,
\end{align*}
while the right-hand side is
\begin{align*} 
&\ell_{2}\big((0,\dd\rho_{1})\,,\,(\xi_{2},\rho_{2})
  \big)+\ell_{2}\big((\xi_{1},\rho_{1})\,,\,(0,\dd \rho_{2})
  \big) \\[4pt]
& \hspace{6cm} =-\big(0,-[\rho_{2},\dd
  \rho_{1}]+\LL_{\xi_{2}}\dd \rho_{1}\big) +
  \big(0,-[\rho_{1},\dd\rho_{2}]+\LL_{\xi_{1}}\dd\rho_{2}\big)
  \\[4pt]
& \hspace{6cm}
  =-\big(0,[\dd\rho_{1},\rho_{2}]+[\rho_{1},\dd\rho_{2}]+\dd
  \xi_{1}(\rho_{2}) -\dd\xi_{2}(\rho_{1}) \big) \ .
\end{align*}

\paragraph{$\underline{{\rm Total~degree}~1:}$}
We act on a pair of gauge transformations $(\xi,\rho)$ and one
pair of dynamical fields $(e,\omega)$, and we need to check
\begin{align*}
\ell_{1}\Big(\ell_{2}\big((e,\om)\,,\,(\xi,\rho)
  \big)\Big)=\ell_{2}\big(\ell_{1}(e,\om)\,,\,(\xi,\rho)
  \big)-\ell_{2}\big((e,\om)\,,\,\ell_{1}(\xi,\rho) \big) \ .
\end{align*}
The left-hand side is
\begin{flalign*}
\ell_{1}(\rho \cdot e - \LL_{\xi}e\,,\,[\rho,\om] -\LL_{\xi}
\om)&=\big(\dd([\rho,\om])-\dd\LL_{\xi}\om\,,\,\dd (\rho \cdot e)
-\dd\LL_{\xi}e \big) \\[4pt]
&= ([\dd
\rho,\om] +[\rho,\dd \om] - \LL_{\xi} \dd \om \,,\, \dd \rho \wedge e + \rho \cdot \dd e - \LL_{\xi}\dd e) 
\end{flalign*}
where we used the Cartan identity
\begin{align*}
\dd\circ\LL_\xi = \LL_\xi\circ\dd \ , 
\end{align*}
while the right-hand side is
\begin{align*}
\ell_{2}\big((\dd \om,\dd e) \,,\, (\xi,\rho)
  \big)-\ell_{2}\big((e,\om) \,,\,(0,\dd \rho) \big)&=([\rho,\dd
                                                         \om] -
                                                         \LL_{\xi}
                                                         \dd\om \,,\, \rho \cdot \dd e - \LL_{\xi} \dd e)
+([\dd \rho,\omega]\,,\, \dd \rho \wedge e) \ .
\end{align*}

\paragraph{$\underline{{\rm Total~degree}~2:}$}
We may act on two pairs of dynamical fields $(e_{1},\om_{1})$ and $(e_{2},\om_{2})$, and we need to check
\begin{align*}
\ell_{1}\Big(\ell_{2}\big((e_{1},\om_{1})\,,\,(e_{2},\om_{2})\big)\Big)=\ell_{2}\big(\ell_{1}(e_{1},\om_{1})\,,\,(e_{2},\om_{2})\big) - \ell_{2}\big((e_{1},\om_{1})\,,\,\ell_{1}(e_{2},\om_{2})\big) \ .
\end{align*}
The left-hand side is 
\begin{align*}
	\ell_{1}\big(-([\om_{2}, \om_{1}], \om_{1}\wedge e_{2} +\om_{2}\wedge e_{1})\big)= (0,-\dd \om_{1} \wedge e_{2} +\om_{1} \wedge \dd e_{2} -\dd \om_{2} \wedge e_{1} + \om_{2} \wedge \dd e_{1}) \ ,
\end{align*}
while the right-hand side is
\begin{align*}
	\ell_{2}\big((\dd \om_{1},\dd e_{1}) \,,\, (e_{2},\om_{2})\big) &- \ell_{2}\big((e_{1},\om_{1})\,,\,(\dd \om_{2}, \dd e_{2})\big) \\[4pt]
	&= - \big(\dd x^\mu\otimes\Tr(\iota_{\mu} \dd e_{2} \dwedge \dd \om_{1} + \iota_{{\mu}}\dd \om_{2} \dwedge \dd e_{1}) \,,\, \dd \om_{1} \wedge e_{2} - \om_{2} \wedge \dd e_{1}\big) \\
	& \quad \, -\big(\dd x^\mu\otimes\Tr(\iota_{\mu} \dd e_{1} \dwedge \dd \om_{2} + \iota_{{\mu}}\dd \om_{1} \dwedge \dd e_{2}) \,,\, \dd \om_{2} \wedge e_{1} - \om_{1} \wedge \dd e_{2}\big) \ .
\end{align*}
The second component here agrees with that of the left-hand side. 
The first component vanishes, because we can write the argument of the Hodge duality operator using the Leibniz rule for the contraction of a spacetime four-form in three dimensions:
\begin{align*}
\iota_{\mu} \dd e_{2} \dwedge \dd \om_{1} + \iota_{{\mu}}\dd \om_{2} \dwedge \dd e_{1} + \dd \om_{2} \dwedge \iota_{\mu} \dd e_{1} + \dd e_{2} \dwedge \iota_{{\mu}}\dd \om_{1}= \iota_{{\mu}} (\dd e_{2} \dwedge \dd \om_{1}+\dd \om_{2} \dwedge \dd e_{1}) =0 \ .
\end{align*}

We may also act on a pair of gauge transformations $(\xi,\rho)$ and a pair of Euler--Lagrange derivatives $(E,{\mit\Omega})$. Then we need to check 
\begin{align*}
\ell_{1}\Big(\ell_{2}\big((\xi,\rho)\,,\,(E,{\mit\Omega})\big)\Big)-\ell_{2}\big(\ell_{1}(\xi,\rho)\,,\,(E,{\mit\Omega})\big) - \ell_{2}\big((\xi,\rho)\,,\,\ell_{1}(E,{\mit\Omega})\big)=(0,0) \ .
\end{align*}
Expanding out the brackets we get
 \begin{align*}
 \ell_{1}(-[\rho,E] + \LL_{\xi}E,-\rho \cdot {\mit\Omega} + \LL_{\xi} {\mit\Omega})& -\ell_{2}\big((0,\dd \rho)\,,\,(E,{\mit\Omega})\big)- \ell_{2}\big((\xi,\rho)\,,\,(0,\dd {\mit\Omega})\big) \\[4pt]
 &=\big(0\,,\,-\dd(\rho\cdot {\mit\Omega})+\LL_{\xi}\dd
   {\mit\Omega} \big)-\big(\dd x^\mu\otimes\Tr(-\iota_{{\mu}}\dd \rho \dwedge \dd {\mit\Omega})\,,\,-\dd \rho \wedge {\mit\Omega}\big) \\
 &\phantom{{}={}}-\big(\dd x^\mu\otimes\Tr(\iota_{{\mu}} \dd  \rho \dwedge \dd {\mit\Omega})\,,\,-\rho \cdot \dd {\mit\Omega} + \LL_{\xi} \dd {\mit\Omega} \big)\\[4pt]
 &=(0,0) \ ,
 \end{align*}
where we used the Leibniz rule for the exterior derivative in the
first bracket.

\subsubsection*{Jacobi identities}

Since $\ell_{3}=0$ by definition, the homotopy relations $\CJ_3=0$ are the graded Jacobi identity
for the $2$-bracket $\ell_2$. In this case, we may act non-trivially
on fields whose total degrees are $0$, $1$, $2$ and $3$. 

\paragraph{$\underline{{\rm Total~degree}~0:}$}
We act on three pairs of gauge parameters, and we need
\begin{align*}
\ell_{2}\Big(\ell_{2}\big((\xi_{1},\rho_{1})\,,\,(\xi_{2},\rho_{2})\big)\,,\, (\xi_{3},\rho_{3})\Big)&+\ell_{2}\Big(\ell_{2}\big((\xi_{3},\rho_{3})\,,\,(\xi_{1},\rho_{1})\big)\,,\,(\xi_{2},\rho_{2})\Big) \nn \\ &\hspace{2cm} + \ell_{2}\Big(\ell_{2}\big((\xi_{2},\rho_{2})\,,\,(\xi_{3},\rho_{3})\big)\,,\,(\xi_{1},\rho_{1})\Big)=(0,0) \ .
\end{align*}
Expanding the first term we get 
\begin{flalign*}
\ell_{2}\Big(\big([\xi_{1},\xi_{2}]&\,,\,-[\rho_{1},\rho_{2}]+\xi_{1}(\rho_{2}) - \xi_{2}(\rho_{1})\big)\,,\,(\xi_{3},\rho_{3})\Big) \\
&=\Big(\big[[\xi_{1},\xi_{2}],\xi_{3}\big]\,,\,\big[[\rho_{1},\rho_{2}], \rho_{3}\big] -[\xi_{1}(\rho_{2}),\rho_{3}]+[\xi_{2}(\rho_{1}),\rho_{3}] +[\xi_{1},\xi_{2}](\rho_{3}) \\ &\hspace{4cm} + \xi_{3}([\rho_{1},\rho_{2}])-\xi_{3}\big(\xi_{1}(\rho_{2})\big) +\xi_{3}\big(\xi_{2}(\rho_{1})\big)\Big) \ .
\end{flalign*}
Permuting the indices $(1,2,3)$ and adding, one sees that the terms
containing a composition of two Lie brackets vanish due to the Jacobi identities for the Lie algebras $\Omega^{0}(M,\mathfrak{so}(1,2))$ and $\Gamma(TM)$. The remaining terms vanish since they form a representation of $\Gamma(TM)$ on $\Omega^{0}(M,\mathfrak{so}(1,2))$ and a derivation with respect to the Lie bracket of $\Omega^{0}(M,\mathfrak{so}(1,2))$. 

\paragraph{$\underline{{\rm Total~degree}~1:}$}
We act on two pairs of gauge parameters and one pair of dynamical fields, and we need 
\begin{align}
\ell_{2}\Big(\ell_{2}\big((\xi_{1},\rho_{1})\,,\,(\xi_{2},\rho_{2})\big)\,,\,(e,\om)\Big)&=\ell_{2}\Big(\ell_{2}\big((\xi_{1},\rho_{1})\,,\,(e,\om)\big)\,,\,(\xi_{2},\rho_{2})\Big) \nn\\ & \hspace{1cm} -\ell_{2}\Big(\ell_{2}\big((\xi_{2},\rho_{2})\,,\,(e,\om)\big)\,,\,(\xi_{1},\rho_{1})\Big) \ .
\label{eq:deg0totdeg1}\end{align}
Expanding the left-hand side we get
\begin{flalign*}
\ell_{2}\Big(\big([\xi_{1},\xi_{2}]&\,,\,-[\rho_{1},\rho_{2}]+\xi_{1}(\rho_{2}) -  \xi_{2}(\rho_{1})\big)\,,\,(e,\om)\Big) \\
&=\big([\rho_{1},\rho_{2}]\cdot e - \xi_{1}(\rho_{2})\cdot e +\xi_{2}(\rho_{1})\cdot e + \LL_{[\xi_{1},\xi_{2}]}e\,,\,\\
&\hspace{2cm} \big[[\rho_{1},\rho_{2}],\omega\big]-[\xi_{1}(\rho_{2}),\om]+[\xi_{2}(\rho_{1}),\om] +\LL_{[\xi_{1},\xi_{2}]}\om\big) \ ,
\end{flalign*}
while the first term in the right-hand side is 
\begin{flalign*}
\ell_{2}\big((-\rho_{1}\cdot e +\LL_{\xi_{1}}e\,,\,&-[\rho_{1},\om]+\LL_{\xi_{1}}\om)\,,\,(\xi_{2},\rho_{2})\big)\\[4pt] &= \big(-\rho_{2}\cdot (\rho_{1}\cdot e)+\rho_{2} \cdot (\LL_{\xi_{1}}e) +\LL_{\xi_{2}}(\rho_{1}\cdot e)-\LL_{\xi_{2}}\LL_{\xi_{1}}e\,,\,\\
&\hspace{2cm}-\big[\rho_{2},[\rho_{1},\om]\big]+[\rho_{2},\LL_{\xi_{1}}\om]-\LL_{\xi_{2}}\LL_{\xi_{1}}\om +\LL_{\xi_{2}}[\rho_{1},\om]\big) \ .
\end{flalign*}
Interchanging the indices $(1,2)$ and subtracting in this last
expression, we see that the two sides of \eqref{eq:deg0totdeg1} match 
for the same representation theoretic reasons as in total
degree~$0$. 

\paragraph{$\underline{{\rm Total~degree}~2:}$}
We may act on collections of one pair of
gauge parameters and two pairs of dynamical fields, or of two pairs of gauge parameters and one pair of Euler--Lagrange derivatives.
For the former case we need
\begin{align*}
\ell_{2}\Big(\ell_{2}\big((e_{1},\om_{1})\,,\,(e_{2},\om_{2}) \big)\,,\,(
\xi,\rho)\Big)&=-\ell_{2}\Big(\ell_{2}\big((
\xi,\rho)\,,\,(e_{1},\om_{1})\big)\,,\,(e_{2},\om_{2})\Big) \\[4pt]
& \hspace{1cm}-\ell_{2}\Big(\ell_{2}\big((
\xi,\rho)\,,\,(e_{2},\om_{2})\big)\,,\,(e_{1},\om_{1})\Big) \ .
\end{align*}
Expanding the left-hand side gives
\begin{align*}
-\ell_{2}\big(([\om_{2}, \om_{1}]&,\omega_{1}\wedge e_{2} +\om_{2}\wedge e_{1})\,,\,(
\xi,\rho)\big) \\[4pt]
&=\big(-[\rho,[\omega_{2},\omega_{1}]]
  +\LL_{\xi}[\omega_{2}, \omega_{1}] \,,\, \\
&\hspace{2cm}-\rho\cdot (\omega_{1}\wedge e_{2}) -\rho \cdot (\omega_{2}\wedge e_{1}) +\LL_{\xi}(\omega_{1}\wedge e_{2})+ \LL_{\xi}(\omega_{2}\wedge e_{1}) \big) \ ,
\end{align*}
while the right-hand side expands into
\begin{align*}
\ell_{2}\big((\rho\cdot e_{1}
  &-\LL_{\xi}e_{1}\,,\,[\rho,\omega_{1}]
    -\LL_{\xi}\om_{1})\,,\,(e_{2},\om_{2})\big)
    +\ell_{2}\big((\rho\cdot e_{2}
    -\LL_{\xi}e_{2}\,,\,[\rho,\omega_{2}]
    -\LL_{\xi}\om_{2})\,,\,(e_{1},\om_{1})\big)\\[4pt] 
&=-\big( [\om_{2},[\rho,\om_{1}]-\LL_{\xi}\om_{1}] \,,\, ([\rho,\om_{1}]-\LL_{\xi}\om_{1})\wedge e_{2} +\om_{2}\wedge(\rho \cdot e_{1} -\LL_{\xi}e_{1})\big)\\
&\hspace{1cm} -\big(
  [\om_{1},[\rho,\om_{2}]-\LL_{\xi}\om_{2}] \,,\,
  ([\rho,\om_{2}]-\LL_{\xi}\om_{2})\wedge e_{1}
  +\om_{1}\wedge(\rho \cdot e_{2} -\LL_{\xi}e_{2}) \big) \\[4pt]
&=\big(-[\rho ,[\omega_{2}, \omega_{1}]] + \LL_{\xi}[\om_{2},\om_{1}] \,,\, \\
&\hspace{1cm} -\rho \cdot (\omega_{1}\wedge e_{2})
  -\rho\cdot(\omega_2\wedge e_1) +\LL_{\xi}(\om_{1}\wedge e_{2})
  +\LL_{\xi}(\om_{2}\wedge e_{1})\big)
\end{align*}
as required, where in the last equality we used the Leibniz rules for the Lie
derivative $\LL_\xi$ and the action of the gauge parameter $\rho$
on the exterior products $\omega\wedge e$ and $[\omega_2,\omega_1]$. The check on collections of fields
involving two pairs of gauge parameters and one pair of Euler--Lagrange derivatives is formally identical to the proof of the total degree~$1$
relation \eqref{eq:deg0totdeg1}, since the dynamical fields and the
Euler--Lagrange derivatives live in the same representations of $\sSO_+(1,2)$ and the bracket $\ell_{2}\big((
\xi,\rho),(e,\omega)\big)$ is formally identical to $\ell_{2}\big((
\xi,\rho),(E,{\mit\Omega}) \big)$. 

\paragraph{$\underline{{\rm Total~degree}~3:}$}
The calculations now become considerably more involved and lengthy, so
we will organise the checks of the graded Jacobi identities in this case
into a sequence of Lemmas. 

\begin{lemma}[Contracted Schouten identity]
If $A,B\in\mathfrak{so}(1,2)\simeq \midwedge^{2}(\FR^{1,2})$ and $v\in\FR^{1,2}$, then
\begin{align} \label{eq:Schouten}
\varepsilon_{abc}\, A^{ab}\, B^c{}_d\,v^d =
  -2\,\varepsilon_{abc}\,B^a{}_d\,A^{db}\, v^c \ .
\end{align}
\end{lemma}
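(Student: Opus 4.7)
My approach is to derive this from the $\mathfrak{so}(1,2)$-invariance of the Levi--Civita tensor, which is the intrinsic origin of the Schouten-type identity in the title. Concretely, for any $B\in\mathfrak{so}(1,2)$ acting on $\varepsilon_{abc}$ as a three-index tensor, invariance gives
\begin{align*}
B^{e}{}_{a}\,\varepsilon_{ebc}+B^{e}{}_{b}\,\varepsilon_{aec}+B^{e}{}_{c}\,\varepsilon_{abe}=0 \ .
\end{align*}
The plan is to contract this relation with $A^{ab}v^{c}$ and reorganize.

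After contracting and relabeling dummy indices, the third term becomes $\varepsilon_{abc}\,A^{ab}(Bv)^{c}=\varepsilon_{abc}\,A^{ab}B^{c}{}_{d}v^{d}$, which is precisely the left-hand side of \eqref{eq:Schouten}. The first term reads $\varepsilon_{abc}(BA)^{ab}v^{c}$ with the matrix product $(BA)^{ab}:=B^{a}{}_{d}A^{db}$, while the second term reads $\varepsilon_{abc}\,A^{ab}B^{e}{}_{b}v^{c}$. Using antisymmetry $B^{e}{}_{b}=-B_{b}{}^{e}$ (equivalently $B^{\T}=-B$), this second term may be rewritten as $-\varepsilon_{abc}(AB)^{ab}v^{c}$.

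The remaining step is an antisymmetry argument combining $A$, $B$ and $\varepsilon$. Since $A,B\in\mathfrak{so}(1,2)$ are antisymmetric, one has $(AB)^{\T}=B^{\T}A^{\T}=BA$, so the matrices $AB$ and $BA$ are transposes of each other; contracting with the antisymmetric $\varepsilon_{abc}$ thus yields the sign flip $\varepsilon_{abc}(AB)^{ab}=-\varepsilon_{abc}(BA)^{ab}$. Substituting this into the relation obtained from the invariance of $\varepsilon$ gives the desired identity, since the sum of the first two terms collapses to $-2\,\varepsilon_{abc}(BA)^{ab}v^{c}$ while the third term gives the LHS.

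The only real obstacle is careful bookkeeping of index positions and of the signs arising from the indefinite metric $\eta$ used to raise and lower indices; there is no conceptual difficulty, as tracelessness of $\mathfrak{so}(1,2)$ elements and antisymmetry of $\varepsilon$ do all of the work. Equivalently, one could deduce the result purely from the Schouten identity $\delta^{[a}_{e}\varepsilon^{bcd]}=0$ valid in three dimensions, contracting with $A_{ab}B^{e}{}_{c}v_{d}$ and using antisymmetry of $A$ together with $\tr B=0$; this is arguably the more elementary route and justifies the terminology.
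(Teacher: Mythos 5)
Your proof is correct, but it takes a genuinely different route from the paper's. The paper starts from the dimension-dependent over-antisymmetrization (Schouten) identity $\varepsilon_{abc}\,\eta_{dh}+\varepsilon_{ach}\,\eta_{bd}-\varepsilon_{bch}\,\eta_{ad}-\varepsilon_{abh}\,\eta_{cd}=0$, which vanishes because it is antisymmetric in four indices in three dimensions, and then contracts it with the components of $A$, $B$ and $v$. You instead start from the infinitesimal invariance of $\varepsilon_{abc}$ under $\mathfrak{so}(1,2)$ --- equivalently the general relation $M^{e}{}_{a}\,\varepsilon_{ebc}+M^{e}{}_{b}\,\varepsilon_{aec}+M^{e}{}_{c}\,\varepsilon_{abe}=\tr(M)\,\varepsilon_{abc}$ specialized to the traceless $B$ --- contract with $A^{ab}v^{c}$, and then use antisymmetry of both $A$ and $B$ to establish $\varepsilon_{abc}\,(AB)^{ab}=-\varepsilon_{abc}\,(BA)^{ab}$ so that two of the three terms combine into the factor of $2$. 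I checked the contraction term by term and it closes correctly. Both arguments ultimately rest on the same three-dimensional fact, but your packaging is arguably more conceptual and is closer in spirit to how the paper itself proves the companion identity \eqref{weirdformula}, namely from the invariance of a top exterior vector under $\sSO_+(p,q)$; the paper's route for this particular lemma is more mechanical but requires no discussion of tracelessness or of the matrix transpose with respect to $\eta$. The only blemish is a typo: you write the second term as $\varepsilon_{abc}\,A^{ab}B^{e}{}_{b}v^{c}$, in which the index $b$ appears three times --- it should read $\varepsilon_{aec}\,A^{ab}B^{e}{}_{b}v^{c}$ --- but your subsequent rewriting of it as $-\varepsilon_{abc}\,(AB)^{ab}v^{c}$ is nonetheless the correct one, so this does not affect the validity of the argument.
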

\begin{proof}
We use the three-dimensional Schouten identity\footnote{This identity also holds in Euclidean signature.}
\begin{align*}
\varepsilon_{abc} \, \eta_{dh} + \varepsilon_{ach} \, \eta_{bd}-
  \varepsilon_{bch} \, \eta_{ad} - \varepsilon_{abh} \, \eta_{cd}=0 \ ,
\end{align*}
where $\eta$ is the Minkowski metric on $\FR^{1,2}$.
This identity holds since the left-hand side is antisymmetric in four
indices, whereas the indices vary through $1,2,3$ as we are working in
three dimensions, and hence it vanishes identically. Contracting it
with the components of $A=A^{ab}\,{\tt E}_{ba}$, $B=B^a{}_b\,{\tt
  E}^b{}_a$ and $v=v^a\,{\tt E}_a$ yields \eqref{eq:Schouten}.
\end{proof}

\begin{lemma}
If $(e_{1},\om_{1})$,
$(e_{2},\om_{2})$ and $(e_{3},\om_{3})$ are three pairs of dynamical
fields in $V_1$, then
 \begin{align} \label{eq:deg3eomegaid}
 \ell_{2}\Big(\ell_{2}\big((e_{1},\om_{1})\,,\,(e_{2},\om_{2})\big)\,,\,(e_{3},\om_{3})\Big) &+  \ell_{2}\Big(\ell_{2}\big((e_{3},\om_{3})\,,\,(e_{1},\om_{1})\big)\,,\,(e_{2},\om_{2})\Big) \nn \\
 & \hspace{1cm} +
   \ell_{2}\Big(\ell_{2}\big((e_{2},\om_{2})\,,\,(e_{3},\om_{3})\big)\,,\,(e_{1},\om_{1})\Big)
   = (0,0) \ .
 \end{align}
\end{lemma}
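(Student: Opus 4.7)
The plan is to prove \eqref{eq:deg3eomegaid} by direct expansion, treating the two components of $V_3 = \Omega^1(M,\Omega^3(M))\times\Omega^3(M,\FR^{1,2})$ separately. Graded antisymmetry between $V_1$ and $V_2$ lets me rewrite each cyclic term as
\begin{align*}
\ell_2\big(\ell_2((e_i,\om_i),(e_j,\om_j)),(e_k,\om_k)\big) = -\,\ell_2\big((e_k,\om_k),\,\ell_2((e_i,\om_i),(e_j,\om_j))\big) \ ,
\end{align*}
into which I substitute the inner bracket $\ell_2((e_i,\om_i),(e_j,\om_j)) = -([\om_i,\om_j]+2\Lambda\,e_j\dwedge e_i,\,\om_i\wedge e_j+\om_j\wedge e_i)$ and then apply the formula for $\ell_2((e,\om),(E,\mit\Omega))$ from \eqref{eq:3dbrackets}.

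For the second component, valued in $\Omega^3(M,\FR^{1,2})$, the cyclic sum reduces to
\begin{align*}
\sum_{\text{cyc}(1,2,3)}\Big([\om_i,\om_j]\wedge e_k - \om_k\wedge(\om_i\wedge e_j) - \om_k\wedge(\om_j\wedge e_i) + 2\Lambda\,(e_j\dwedge e_i)\wedge e_k\Big) \ .
\end{align*}
The $\om$--$e$ part vanishes by the module property of $\FR^{1,2}$ under $\mathfrak{so}(1,2)$, namely the Jacobi identity after rearranging compositions $\om_k\wedge\om_i\wedge e_j$ into commutators $[\om_k,\om_i]\wedge e_j$ and re-symmetrizing cyclically. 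The cosmological constant terms vanish by total antisymmetrization: three 1-forms traced against the internal $\varepsilon_{abc}$ saturate all internal indices, and the resulting cyclic sum is manifestly symmetric under all exchanges of $(1,2,3)$.

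The first component in $\Omega^1(M,\Omega^3(M))$ is the laborious part. Expanding $\ell_2((e_k,\om_k),(E_{ij},\mit\Omega_{ij}))$ with $(E_{ij},\mit\Omega_{ij}):=\ell_2((e_i,\om_i),(e_j,\om_j))$ produces monomials of the form $\iota_\mu\dd e_k\dwedge[\om_i,\om_j]$ and $\iota_\mu\dd\om_k\dwedge(\om_i\wedge e_j)$, together with $\iota_\mu e_k\dwedge \dd[\om_i,\om_j]$ and $\iota_\mu\om_k\dwedge\dd(\om_i\wedge e_j+\om_j\wedge e_i)$; the latter I expand via the Leibniz rule. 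Grouping the resulting terms by which exterior derivative $\dd e_m$ or $\dd\om_m$ they carry, and by the position of $\iota_\mu$, I then use the contracted Schouten identity from the preceding Lemma together with the $\mathfrak{so}(1,2)$-invariance of the trace (equivalently, the identity $\Tr(\rho\cdot\alpha\dwedge\beta) + \Tr(\alpha\dwedge\rho\cdot\beta) = 0$) to swap $\om$-factors and $e$-factors inside $\Tr$. Each monomial then appears in exactly two cyclic terms with opposite signs and cancels.

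The main obstacle will be the bookkeeping in this first component, where roughly twenty monomials must be paired up. To mitigate sign errors I would organise the computation by the distribution of the derivatives $\dd e_m,\dd\om_m$ among the three cyclic slots and handle the $\Lambda=0$ and cosmological-constant sectors separately; the latter reduces essentially to the same Schouten-type identity that killed the second component. As an independent consistency check, via the duality reviewed in Section~\ref{sec:BV-BRST} this identity is equivalent to cubic nilpotence of the BV differential $Q_{\textrm{\tiny BV}}$ on the antifields $e^\dag$ and $\om^\dag$, offering a cross-verification without redoing the direct expansion.
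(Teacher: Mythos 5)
Your overall strategy coincides with the paper's: expand the cyclic sum directly, dispose of the $\Omega^3(M,\FR^{1,2})$ component by the representation/Jacobi property of the $\mathfrak{so}(1,2)$-action on $\FR^{1,2}$, and attack the $\Omega^1(M,\Omega^3(M))$ component with the contracted Schouten identity and the $\sSO_+(1,2)$-invariance of $\Tr$. The second component is handled correctly. However, your closing mechanism for the first component --- ``each monomial then appears in exactly two cyclic terms with opposite signs and cancels'' --- is not what happens, and as stated it would not close the proof. After grouping the terms carrying a fixed triple of fields (say $\om_1,\om_2,e_3$) and applying the Schouten identity, one is left with roughly eight monomials such as $\iota_\mu\dd e_3{}^a\wedge\om_2{}^b{}_d\wedge\om_1{}^{dc}$, $\iota_\mu e_3{}^a\,\dd(\om_2{}^b{}_d\wedge\om_1{}^{dc})$, $e_3{}^a\wedge\iota_\mu\om_1{}^b{}_d\wedge\dd\om_2{}^{dc}$, etc., no two of which have the same shape; there is no pairwise cancellation. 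Instead, the Leibniz rules for $\dd$ and for $\iota_\mu$ assemble the whole group into
\begin{align*}
2\,\varepsilon_{abc}\,\iota_\mu\dd\big(e_3{}^a\wedge\om_2{}^b{}_d\wedge\om_1{}^{dc}\big) \ ,
\end{align*}
which vanishes only because it is the contraction of a four-form on a three-dimensional manifold. This dimensional vanishing is the essential final step of the argument and is absent from your proposal; without it the surviving sum has no reason to be zero.

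Two smaller points. First, the cosmological-constant contributions to the \emph{first} component (terms like $\iota_\mu\dd e_k\dwedge(e_j\dwedge e_i)$ and $\iota_\mu e_k\dwedge\dd(e_j\dwedge e_i)$) require the same four-form-in-three-dimensions argument, not a Schouten-type identity; in the \emph{second} component the $\Lambda$-terms $(e_j\dwedge e_i)\wedge e_k$ vanish upon symmetrization because the internal contraction is through the symmetric metric $\eta_{bc}$ paired with anticommuting one-forms (the identity $(e\dwedge e)\wedge e=0$ used in the paper's Noether discussion), not because all indices are saturated against $\varepsilon_{abc}$. Second, your BV cross-check is legitimate in principle (the paper makes the same point via $Q_{\textrm{\tiny BV}}^2=0$), but it is offered only as a consistency check and does not repair the gap in the direct computation.
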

\begin{proof}
Expanding the first term of \eqref{eq:deg3eomegaid}, we get
 \begin{align*}
 \ell_{2}\big((-[\om_{2}, \om_{1}]&, -\om_{1}\wedge e_{2}-\om_{2}\wedge e_{1})\,,\,(e_{3},\om_{3})\big)\\[4pt]
 &=-\Big(\dd x^\mu\otimes\Tr\big(\iota_{{\mu}}\dd e_{3} \dwedge (-[\om_{2}, \om_{1}])+\iota_{{\mu}}\dd\om_{3} \dwedge (-\om_{1}\wedge e_{2} -\om_{2}\wedge e_{1})\\
 &\phantom{{}=-\Big(\dd x^\mu\otimes\Tr\big({}}-\iota_{{\mu}} e_{3} \dwedge \dd(-[\om_{2}, \om_{1}])-\iota_{{\mu}}\om_{3}\dwedge\dd(-\om_{1}\wedge e_{2}-\om_{2}\wedge e_{1})\big) \,,\\
 &\phantom{{}=-\Big({}} \hspace{3cm} -[\om_{2}, \om_{1}] \wedge e_{3} +
   \om_{3}\wedge \om_{1}\wedge e_{2} + \om_{3}\wedge \om_{2} \wedge
   e_{1}\Big) \ .
 \end{align*}
The other two terms are the cyclic permutations among the indices
$(1,2,3)$. Writing these out one sees that the terms in the second
component cancel each other as required, by representation theoretic reasons.

Showing that the first component vanishes requires a bit more work. Writing out the other two permutations among the indices $(1,2,3)$, we collect terms 
involving the fields $\om_{1}, \om_{2}, e_{3}$ and evaluate the Hodge operator explicitly to obtain
\begin{align}
\varepsilon_{abc}\, \big(&2\, \iota_{\mu} \dd {e_{3}}^{a} \wedge {\om_{2}}^{b}{}_{d} \wedge {\om_{1}}^{dc} - 2\, \iota_{\mu} e_{3}{}^a \, \dd({\om_{2}}^{b}{}_{d}\wedge{\om_{1}}^{dc}) + \iota_{\mu} \dd {\om_{1}}^{bc}\wedge {\om_{2}}^{a}{}_{d} \wedge {e_{3}}^{d} \nn \\ &- \iota_{\mu} {\om_{1}}^{bc} \, \dd({\om_{2}}^{a}{}_{d}\wedge {e_{3}}^{d}) + \iota_{\mu} \dd {\om_{2}}^{bc} \wedge {\om_{1}}^{a}{}_{d}\wedge{e_{3}}^{d} -  \iota_{\mu}{\om_{2}}^{bc} \, \dd({\om_{1}}^{a}{}_{d}\wedge {e_{3}}^d) \big) \ . \label{eq:firstslot}
\end{align}	
We then rewrite the fourth term of \eqref{eq:firstslot} as
\begin{align*}
\varepsilon_{abc}\, \iota_{\mu}{\om_{1}}^{bc} \, \dd({\om_{2}}^{a}{}_{d}\wedge {e_{3}}^d)
&=2\,\varepsilon_{bca} \, \iota_{\mu} {\om_{1}}^{bd}\, \dd({\om_{2}}_{d}{}^{c}\wedge {e_{3}}^a) \\[4pt]
&= 2\,\varepsilon_{abc} \, ({e_{3}}^{a}\wedge \iota_{\mu} {\om_{1}}^{bd}\wedge \dd {\om_{2}}_{d}{}^{c} - \dd{e_{3}}^{a}\wedge \iota_{\mu} {\om_{1}}^{bd}\wedge {\om_{2}}_{d}{}^{c})
\end{align*}	
where in the first equality we used the contracted Schouten identity
\eqref{eq:Schouten} and in the last equality the Leibniz rule for the
exterior derivative. 
The sixth term is obtained from this by simply interchanging the indices $(1,2)$. For the third term of \eqref{eq:firstslot}, going through exactly the same steps as for the fourth term allows us to rewrite it as
\begin{align*}
\iota_{\mu} \dd {\om_{1}}^{bc}\wedge {\om_{2}}^{a}{}_{d} \wedge {e_{3}}^{d}= 2\,\varepsilon_{abc} \, {e_{3}}^{a} \wedge \iota_{\mu}\dd{\om_{1}}^{bd} \wedge {\om_{2}}_{d}{}^{c} \ ,
\end{align*}
and the fifth term is simply obtained from this by interchanging the indices $(1,2)$.
Collecting all the terms, the expression \eqref{eq:firstslot} becomes
\begin{align*}
&2\,\varepsilon_{abc} \, \big(\iota_{\mu} \dd {e_{3}}^{a} \wedge
  {\om_{2}}^{b}{}_{d} \wedge {\om_{1}}^{dc} -  \iota_{\mu} e_{3}{}^a
  \, \dd({\om_{2}}^{b}{}_{d}\wedge{\om_{1}}^{dc}) - {e_{3}}^{a}\wedge
  \iota_{\mu}{\om_{1}}^{b}{}_{d}\wedge \dd {\om_{2}}^{dc}
  \\&\phantom{{}2\,\varepsilon_{abc} \, \big({}} \hspace{2cm} + \dd {e_{3}}^{a}\wedge \iota_{\mu} {\om_{1}}^{b}{}_{d}\wedge {\om_{2}}^{dc} + {e_{3}}^{a}\wedge \dd {\om_{1}}^{b}{}_{d}\wedge \iota_{\mu} {\om_{2}}^{dc} - \dd {e_{3}}^a \wedge {\om_{1}}^{b}{}_{d}\wedge \iota_{\mu} {\om_{2}}^{dc}\\
&\phantom{{}2\,\varepsilon_{abc} \, \big({}} \hspace{4cm} +{e_{3}}^{a}\wedge
  \iota_{\mu} \dd {\om_{1}}^{b}{}_{d}\wedge {\om_{2}}^{dc} +
  {e_{3}}^{a} \wedge {\om_{1}}^{b}{}_{d}\wedge \iota_{\mu} \dd
  {\om_{2}}^{dc}\big) \\[4pt]
&\hspace{8cm} =2 \, \varepsilon_{abc} \, \iota_{\mu} \dd
  ({e_{3}}^{a}\wedge {\om_{2}}^{b}{}_{d} \wedge {\om_{1}}^{dc})
  \\[4pt]
&\hspace{10cm}=0
\end{align*}
where we successively used the Leibniz rules for the exterior derivative and the contraction, and the last quantity vanishes because it is the contraction of a four-form in three dimensions.
The remaining terms are simply permutations of the indices $(1,2,3)$,
and so they all vanish as well. This completes the proof of the
homotopy identity \eqref{eq:deg3eomegaid}.
\end{proof}

\begin{lemma}
If $(\xi,\rho)\in V_0$ is a pair of gauge parameters, $(e,\om)\in
V_1$ is a pair of dynamical fields, and 
$(E,{\mit\Omega})\in V_2$ is a pair of Euler--Lagrange derivatives, then
 \begin{align}
 \ell_{2}\Big(\ell_{2}\big((\xi,\rho)\,,\,(e,\om)\big)\,,\,(E,{\mit\Omega})\Big)
   & +
     \ell_{2}\Big(\ell_{2}\big((E,{\mit\Omega})\,,\,(\xi,\rho)\big)\,,\,(e,\om)\Big)
     \nn \\
 & \hspace{3cm}+
   \ell_{2}\Big(\ell_{2}\big((e,\om)\,,\,(E,{\mit\Omega})\big)\,,\,(\xi,\rho)\Big)
   =(0,0) \ . \label{eq:deg3laomOmid}
 \end{align}
\end{lemma}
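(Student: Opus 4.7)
By bilinearity in the gauge parameter, I would split $(\xi,\rho) = (\xi,0) + (0,\rho)$ and verify the graded Jacobi identity \eqref{eq:deg3laomOmid} separately for the local pseudo-orthogonal sector $(\xi=0)$ and for the diffeomorphism sector $(\rho=0)$. Next, because the target space $V_3$ decomposes as $\Omega^1(M,\Omega^3(M))\times\Omega^3(M,\FR^{1,2})$, I would treat the two components independently, noting that the $\Omega^3(M,\FR^{1,2})$-component only involves the second slots of the brackets \eqref{eq:3dbrackets} and is essentially representation-theoretic, while the $\Omega^1(M,\Omega^3(M))$-component is the Noether-identity slot which carries the combinatorial weight of the proof.

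For the $\rho$-sector, I would expand each of the three nested brackets and collect all contributions separately. The second component is then a matter of checking the identity
\[
\rho\cdot(E\wedge e - \om\dwedge\Omega) = (\rho\cdot E)\wedge e + E\wedge(\rho\cdot e) - [\rho,\om]\dwedge\Omega - \om\dwedge(\rho\cdot\Omega) \ ,
\]
which is just the Leibniz property of the $\mathfrak{so}(1,2)$-action on the relevant multivector representations. The first component, which involves expressions of the form $\dd x^\mu\otimes\Tr(\iota_\mu\,\dd\rho\dwedge(E\wedge e - \om\dwedge\Omega))$ on one side and analogous terms built from the transformed fields on the other, would be reduced using the invariance of the top exterior vector in $\FR^{1,2}$ under $\mathfrak{so}(1,2)$ together with the Leibniz rule for $\dd$ acting on $\rho\dwedge\Omega$ and $\rho\cdot e$; the contracted Schouten identity \eqref{eq:Schouten} is not needed here since no three copies of $\mathfrak{so}(1,2)$-valued one-forms are being contracted against $\varepsilon$.

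For the $\xi$-sector (diffeomorphism part), the second component reduces to verifying
\[
\LL_\xi(E\wedge e - \om\dwedge\Omega) = (\LL_\xi E)\wedge e + E\wedge\LL_\xi e - (\LL_\xi\om)\dwedge\Omega - \om\dwedge\LL_\xi\Omega \ ,
\]
which is the standard Leibniz rule for the Lie derivative. The harder calculation is the first component: here I would systematically apply Cartan's magic formula \eqref{eq:CartanLie}, the commutator identity \eqref{eq:Cartanidiota} in the form $[\LL_\xi,\iota_{\partial_\mu}] = \iota_{[\xi,\partial_\mu]}$, and integration-by-parts type rearrangements of $\iota_\mu\circ\dd$ versus $\dd\circ\iota_\mu$. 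The strategy is to rewrite every term of the schematic form $\dd x^\mu\otimes\Tr(\iota_\mu\,\dd(\LL_\xi e)\dwedge E + \cdots)$ by pushing the Lie derivative past $\iota_\mu$, producing a bulk term $\LL_\xi\CX$ (which cancels against the outer gauge action on the Noether slot) plus a correction term proportional to $\dd x^\mu\otimes\iota_{[\xi,\partial_\mu]}(\cdots)$ that recombines with $\ell_2((\xi,0),(\CX,\CP))$ after reindexing.

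The main obstacle I anticipate is bookkeeping: the first component contains eight independent schematic terms (involving $\dd e, \dd\om, e, \om$ paired against $E$ or $\Omega$), and each must be transported through the three nested brackets of \eqref{eq:deg3laomOmid} without losing track of signs or of which copy of $\dd x^\mu\otimes\iota_\mu$ is being acted upon. My plan to manage this is to organise the calculation as a table indexed by the monomials $\{\iota_\mu\dd e\dwedge E,\;\iota_\mu e\dwedge\dd E,\;\iota_\mu\dd\om\dwedge\Omega,\;\iota_\mu\om\dwedge\dd\Omega\}$ and verify closure of each row, using only the Leibniz rules for $\dd$, $\iota_\mu$ and $\LL_\xi$, and the Cartan identities \eqref{eq:CartanLie}, \eqref{eq:Cartanidiota}. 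No new ingredient beyond those already employed in the earlier total-degree checks is required, so the proof reduces to a careful but routine verification.
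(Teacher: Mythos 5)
Your overall organisation --- splitting $(\xi,\rho)$ into its two sectors, separating the two components of the target $V_3$, disposing of the $\Omega^3(M,\FR^{1,2})$-component by Leibniz/representation-theoretic arguments, and handling the diffeomorphism part of the Noether slot with Cartan calculus --- matches the paper's proof in all but one place. Your $\xi$-sector plan (commuting $\LL_\xi$ past $\dd x^\mu\otimes\iota_\mu$ via $[\LL_\xi,\iota_{\partial_\mu}]=\iota_{[\xi,\partial_\mu]}$ together with $\LL_\xi\dd x^\mu=\partial_\nu\xi^\mu\,\dd x^\nu$) is a legitimate and arguably cleaner alternative to the paper's route, which expands everything in a coframe basis with Maurer--Cartan structure functions $\dd e^c=\tfrac12\,k^c{}_{ab}\,e^a\wedge e^b$.

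The genuine gap is in the $\rho$-sector of the first component, where you assert that the contracted Schouten identity \eqref{eq:Schouten} is not needed ``since no three copies of $\mathfrak{so}(1,2)$-valued one-forms are being contracted against $\varepsilon$''. This misreads the identity: \eqref{eq:Schouten} relates $\varepsilon_{abc}\,A^{ab}\,B^c{}_d\,v^d$ to $\varepsilon_{abc}\,B^a{}_d\,A^{db}\,v^c$ for \emph{two} elements $A,B$ of $\mathfrak{so}(1,2)\simeq\midwedge^2(\FR^{1,2})$ and one vector $v$, and that is exactly the configuration that survives the easy cancellations. Concretely, after the terms in which undifferentiated $\rho$ acts on a top exterior vector are killed by $\sSO_+(1,2)$-invariance, one is left with the pair $\dd x^\mu\otimes\Tr\big((\iota_\mu\dd\rho)\wedge e\dwedge E\big)$ and $\dd x^\mu\otimes\Tr\big(\iota_\mu\dd\rho\dwedge(E\wedge e)\big)$ (plus the analogous pair built from $\om$ and ${\mit\Omega}$), i.e.\ $\varepsilon_{abc}\,\partial_\mu\rho^a{}_d\,e^d\wedge E^{bc}$ versus $\varepsilon_{abc}\,\partial_\mu\rho^{ab}\,E^c{}_d\wedge e^d$. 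These are two inequivalent index contractions of $\dd\rho$, $E$ and $e$ against $\varepsilon_{abc}$, and their cancellation is precisely what \eqref{eq:Schouten} (the over-antisymmetrized $\varepsilon$--$\eta$ identity in three dimensions) delivers; it does not follow from the Leibniz rule for $\dd$ or from a single application of top-vector invariance, so your plan as written stalls on these cross-terms. The fix is either to invoke \eqref{eq:Schouten} directly, as the paper does, or to apply top-vector invariance twice --- once for $\dd\rho$ acting on $e\dwedge E$ and once for $E$ acting on $\dd\rho\dwedge e$ --- and combine the two relations; either way this is the one nontrivial algebraic input of the lemma and must appear somewhere in your argument.
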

\begin{proof}
The first term of \eqref{eq:deg3laomOmid} expands as
 \begin{align}
 \ell_{2}\big( (-\rho\cdot e +\LL_{\xi}
   e,&-[\rho,\om]+\LL_{\xi}\om)\,,\,(E,{\mit\Omega})\big) \nn \\[4pt]
 &=\Big(\dd x^\mu\otimes\Tr\big(\iota_{\mu}\dd(-\rho\cdot e +\LL_{\xi}e)\dwedge E
   +\iota_{\mu} \dd(-[\rho,\om]+\LL_\xi \om)\dwedge {\mit\Omega}
   \label{eq:expand1} \\
 &\phantom{{}=\bigg(\dd x^\mu\otimes\Tr\big({}}-\iota_{\mu}(-\rho\cdot e
   +\LL_{\xi}e)\dwedge \dd E -
   \iota_{\mu}(-[\rho,\om]+\LL_{\xi}\om)\dwedge \dd {\mit\Omega}
   \big) \,, \nn \\
 &\phantom{{}=\bigg({}} \hspace{4cm} E\wedge(-\rho \cdot e +
   \LL_{\xi} e) -(-[\rho,\om]+\LL_{\xi}\om)\wedge
   {\mit\Omega} \Big) \ . \nn
 \end{align}
The second term expands as 
 \begin{align}
 -\ell_{2}\big( (-[\rho, E] +\LL_{\xi} E,&-\rho\cdot
                                                {\mit\Omega}+\LL_{\xi}{\mit\Omega})\,,\,(e,\om)\big)
   \nn \\[4pt]
 &=\Big(\dd x^\mu\otimes\Tr\big(\iota_{\mu} \dd e \dwedge (-[\rho , E] +
   \LL_{\xi}E)+\iota_{\mu} \dd \om \dwedge (-\rho \cdot
   {\mit\Omega}+\LL_{\xi} {\mit\Omega}) \label{eq:expand2} \\
 &\phantom{{}=\bigg(\dd x^\mu\otimes\Tr\big({}}-\iota_{\mu} e \dwedge \dd (-[\rho
   ,E] + \LL_{\xi}E) - \iota_{\mu}\om \dwedge \dd (-\rho \cdot
   {\mit\Omega} + \LL_{\xi}{\mit\Omega})\, \big) \,, \nn \\
 &\phantom{{}=\bigg({}} \hspace{4cm}(-[\rho ,E]+
   \LL_{\xi}E)\wedge e - \om \wedge (-\rho \cdot {\mit\Omega}
   +\LL_{\xi}{\mit\Omega})\Big) \ . \nn
 \end{align}
The third term expands as
\begin{align}
\ell_{2}\Big(& \big(\dd x^\mu\otimes\Tr ( \iota_{\mu}\dd e \dwedge E +
               \iota_{\mu}\dd\om \dwedge {\mit\Omega} -\iota_{\mu}e
               \dwedge \dd E - \iota_{\mu}\om\dwedge \dd {\mit\Omega}
               ) \,,\, E\wedge e - \omega \wedge
               {\mit\Omega} \big) \,,\,(\xi,\rho)
               \Big) \label{eq:expand3} \\[4pt]
&=-\Big(\dd x^\mu\otimes\Tr\big(\iota_{\mu}\dd \rho \dwedge (E\wedge e -\om \wedge
  {\mit\Omega})\big) \nn \\ & \hspace{4cm} + \LL_{\xi}\big(\dd x^\mu\otimes(\iota_{\mu} \dd e \dwedge E-\iota_{\mu} e
  \dwedge \dd E + \iota_{\mu}\dd\om \dwedge {\mit\Omega}
  -\iota_{\mu}\om\dwedge \dd {\mit\Omega} ) \big) \,, \nn \\
&\phantom{{}=-\bigg(\Tr\big((\iota_{\mu}\dd \rho \dwedge (E\wedge
  e{}}\hspace{4cm}-\rho\cdot (E\wedge e-\om \wedge {\mit\Omega})
  +\LL_{\xi}(E\wedge e-\om \wedge {\mit\Omega})\Big) \ . \nn
\end{align} 
The second components in all three expanded expressions cancel each
other out, as a consequence of the Leibniz rules for both
gauge transformations in $(\xi,\rho)$. Again, for the first
components we need to work a bit harder.

Firstly, we collect terms in the first components involving the local
Lorentz fields $(\rho,e)$ and $\dd E$. These amount to 
\begin{align*}
\dd x^\mu\otimes\Tr(\rho \cdot \iota_{\mu} e \dwedge \dd E + \iota_{\mu} e\dwedge
  \rho \cdot \dd E) = \dd x^\mu\otimes\Tr \big(\rho \cdot (\iota_{\mu} e \dwedge
  \dd E)\big) =0
\end{align*}
where the vanishing of the last term follows since the local infinitesimal Lorentz transformation
$\rho$ acts on the top exterior vector in $\FR^{1,2}$, hence it is
invariant under finite $\sSO_+(1,2)$ Lorentz transformations and so the infinitesimal
transformation is zero; this is exactly the same argument which shows that
the action
functional \eqref{eq:ECP3daction} is invariant under local
Lorentz transformations. Similar arguments show that the terms involving $(\rho,\om)$ and
$\dd{\mit\Omega}$, $(\rho,\dd e)$ and $E$, and $(\rho,\dd \om)$
and ${\mit\Omega}$ cancel each other out.

Secondly, we collect terms involving $(\dd \rho,e)$ and $E$. These amount to
\begin{align*}
\dd x^\mu\otimes\Tr\big(-\iota_{\mu}(\dd \rho & \wedge e)\dwedge E+\iota_{\mu}
                      e\dwedge (\dd\rho \wedge E)-\iota_{\mu}\dd
                      \rho \dwedge (E\wedge e) \big) \\[4pt]
&=\dd x^\mu\otimes\Tr \big( \dd \rho \wedge (\iota_{\mu}e\dwedge
  E)\big) -\dd x^\mu\otimes\Tr\big((\iota_{\mu}\dd \rho)\wedge e \dwedge E -
  \iota_{{\mu}} \dd \rho \dwedge (E\wedge e)\big ) \ .
\end{align*}
The first term is zero using the $\sSO_+(1,2)$-invariance as before.
For the last two terms, we evaluate the Hodge operator explicitly to
write them as
\begin{align*}
\varepsilon_{abc}\,(\partial_{\mu} \rho^{a}{}_{d} \, e^{d} \wedge
  {E}^{bc} - \partial_{\mu}\rho^{ab} \, {E}^{c}{}_{d}\wedge e^{d})  \ .
\end{align*}
Now using the contracted Schouten identity \eqref{eq:Schouten}, the
first term here becomes
\begin{align*}
\varepsilon_{abc}\,\partial_{\mu} \rho^{a}{}_{d} \, e^{d} \wedge
  {E}^{bc}=-2\, \varepsilon_{abc} \, \partial_{\mu} \rho^{ad}\, {E}_{d}{}^{b}\wedge e^{c}
\end{align*}
and similarly for the second term:
\begin{align*}
-\varepsilon_{abc} \, \partial_{\mu}\rho^{ab} \, {E}^{c}{}_{d}\wedge
  e^{d} = -2 \,
  \varepsilon_{abc}\, \partial_{\mu} \rho^{a}{}_{d}\, {E}^{db}\wedge
  e^{c} \ .
\end{align*}
Hence the two terms cancel when added. Similarly the terms involving
$(\dd \rho,\om)$ and ${\mit\Omega}$ cancel.

Thirdly, we check the terms in the first components involving $(\xi,e)$ and
$E$. For this, we assume that the coframe field $e=e^a\,{\tt E}_a$ is invertible,
so that $\{e^a\}$ forms a basis for one-forms
$\Omega^{1}(M)$.\footnote{If $\{e^a\}$ is degenerate, one can always
  choose another basis and perform similar calculations. Here we make
  this assumption in order to streamline the computation.} The
coframe basis obeys Maurer--Cartan equations
\begin{align*}
\dd e^{c} = \tfrac{1}{2}\,k^{c}{}_{ab}\, e^{a} \wedge e^{b}
\end{align*}
where the local structure functions $k^c{}_{ab}$ are antisymmetric
in their lower indices. We can similarly write $\dd x^\mu\otimes\iota_\mu=e^a\otimes\iota_a$, where $\iota_a$ are the contractions with
vectors in the corresponding dual basis for vector fields
$\Gamma(TM)$, that is, $\iota_a(e^b)=\delta_a^b$. Collecting the
relevant terms from the first expansion \eqref{eq:expand1} and
expressing the arguments of the Hodge operator $\Tr$ in this basis
gives 
\begin{align*}
-\LL_{\xi}e^{a}\otimes ({\tt E}_{a}\dwedge \dd E) +
  (\LL_{\xi}k^{c}{}_{ab})\,e^{a}\otimes(e^{b}\,{\tt E}_{c}\dwedge E) 
                                                                 +k^{c}{}_{ab}\,
                                                                 \LL_{\xi}e^{a}\otimes(e^{b}\,{\tt E}_{c}\dwedge
                                                                 E)                                                                  -
                                                                 k^{c}{}_{ba}\,
                                                                 e^{a}
                                                                 \otimes(\LL_{\xi}
                                                                 e^{b}\,{\tt E}_{c}\dwedge
                                                                 E) \ .
\end{align*} 
Similarly, from the second expansion \eqref{eq:expand2} we get
\begin{align*}
k^{c}{}_{ba}\, e^{b}\otimes(e^{a}\,{\tt E}_{c}\dwedge \LL_{\xi}E)- e^{b}\otimes ({\tt E}_{b}\dwedge \dd \LL_{\xi} E)
\end{align*}
and lastly from the third expansion \eqref{eq:expand3} we get
\begin{align*}
-\LL_{\xi}(k^{c}{}_{ba}\, e^{b})\otimes(e^{a}\,{\tt E}_{c}\dwedge E) -
  k^{c}{}_{ba}\, e^{b}\otimes \LL_{\xi}(e^{a}\, {\tt E}_{c}\dwedge E) +
  \LL_{\xi}e^{b}\otimes({\tt E}_{b}\dwedge \dd E) + e^{b}\otimes(
  {\tt E}_{b}\dwedge \LL_{\xi}\dd E) \ .
\end{align*}
Using the Leibniz rule for the Lie derivative $\LL_\xi$ and the fact
that it commutes with the exterior derivative $\dd$, the third term completely
cancels with the first two terms. One similarly
checks the vanishing of the terms containing $(\xi,\om)$ and
${\mit\Omega}$. This completes the proof of the homotopy
identity~\eqref{eq:deg3laomOmid}. 
\end{proof}

\begin{lemma}
If
$(\xi_{1},\rho_{1})$ and $(\xi_{2},\rho_{2})$ are two pairs of
gauge transformations in $V_0$, and $({\CX},{\CP})\in V_3$
is a pair of Noether identities, then
\begin{align}
\ell_{2}\Big(\ell_{2}\big((\xi_1,\rho_1)\,,\,(\xi_2,\rho_2)\big)\,,\,({\CX},{\CP})\Big)
  &+
    \ell_{2}\Big(\ell_{2}\big(({\CX},{\CP})\,,\,(\xi_1,\rho_1)\big)\,,\,(\xi_2,\rho_2)\Big)
    \nn \\
& \hspace{1cm} +
  \ell_{2}\Big(\ell_{2}\big((\xi_2,\rho_2)\,,\,({\CX},{\CP})\big)\,,\,(\xi_1,\rho_1)\Big)
  = (0,0) \ . \label{eq:deg3laLaid}
\end{align}
\end{lemma}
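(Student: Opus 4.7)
The plan is to expand each of the three iterated brackets explicitly using the definitions in \eqref{eq:3dbrackets} and project onto the two components of $V_{3} = \Omega^{1}(M,\Omega^{3}(M)) \times \Omega^{3}(M,\FR^{1,2})$. Since $|V_{0}|=0$ and $|V_{3}|=3$, the graded antisymmetry reduces to $\ell_{2}((\CX,\CP),(\xi,\rho)) = -\ell_{2}((\xi,\rho),(\CX,\CP))$ with no additional signs. On the second $\FR^{1,2}$-valued factor the bracket acts as the linear representation $-\rho\cdot\CP+\LL_{\xi}\CP$ of the gauge algebra $\Gamma(TM)\ltimes\Omega^{0}(M,\mathfrak{so}(1,2))$ on $\Omega^{3}(M,\FR^{1,2})$; the cyclic sum then vanishes by precisely the same representation-theoretic argument used to verify the Jacobi identity in total degrees~$0$, $1$ and~$2$ earlier in the appendix.

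The substantive content is therefore carried by the $\Omega^{1}(M,\Omega^{3}(M))$-valued first component, whose bracket contains the extra ``cocycle'' term $\dd x^{\mu}\otimes\Tr(\iota_{\mu}\dd\rho\dwedge\CP)$ in addition to $\LL_{\xi}\CX$. I would sort the resulting contributions into three groups. First, the pure $\CX$-terms assemble into $\LL_{[\xi_{1},\xi_{2}]}\CX + \LL_{\xi_{2}}\LL_{\xi_{1}}\CX - \LL_{\xi_{1}}\LL_{\xi_{2}}\CX$, which vanishes by \eqref{eq:CartanidLiecomm}. Second, the terms mixing $\dd[\rho_{1},\rho_{2}]$ with $\CP$ from Term~1 cancel the $\rho_{i}\cdot\CP$ contributions of Terms~2 and~3 after invoking $\mathfrak{so}(1,2)$-invariance of the Hodge pairing to rewrite $\Tr(\iota_{\mu}\dd\rho_{j}\dwedge \rho_{i}\cdot\CP) = -\Tr(\iota_{\mu}[\rho_{i},\dd\rho_{j}]\dwedge\CP)$ and applying the Leibniz rule $\dd[\rho_{1},\rho_{2}] = [\dd\rho_{1},\rho_{2}]+[\rho_{1},\dd\rho_{2}]$.

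The remaining terms couple a diffeomorphism generator to the cocycle piece and must cancel. Expanding $\LL_{\xi_{2}}\bigl(\dd x^{\mu}\otimes\Tr(\iota_{\mu}\dd\rho_{1}\dwedge\CP)\bigr)$ via the trivial coproduct produces three pieces: one acting on $\dd x^{\mu}$, one acting on $\dd\rho_{1}$ through $\iota_{\mu}$, and one acting on $\CP$. Using the coordinate-independence of the operator $\sum_{\mu}\dd x^{\mu}\otimes\iota_{\mu}$ together with the Cartan identity \eqref{eq:Cartanidiota}, the first two pieces recombine to $\dd x^{\mu}\otimes\Tr(\iota_{\mu}\LL_{\xi_{2}}\dd\rho_{1}\dwedge\CP)$, which cancels the $\LL_{\xi_{2}}(\dd\rho_{1}) = \dd\xi_{2}(\rho_{1})$ contribution of Term~1 (using $\xi(\rho)=\LL_{\xi}\rho$ on $0$-forms and that $\dd$ commutes with $\LL_{\xi}$); the third piece cancels symmetrically between Terms~2 and~3 via the $\xi_{1}\leftrightarrow\xi_{2}$ swap.

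The main obstacle will be this last combinatorial step, in particular handling the coordinate-dependent correction $\LL_{\xi_{i}}(\dd x^{\mu})$ and the noncommutativity $\LL_{\xi_{i}}\circ\iota_{\mu}\neq\iota_{\mu}\circ\LL_{\xi_{i}}$, which could in principle spoil the cancellation. The cleanest way around this, and the strategy I would adopt to avoid index gymnastics, is to note that $\sum_{\mu}\dd x^{\mu}\otimes\iota_{\mu}$ is the identity endomorphism of $T^{*}M$ pulled up to $\Omega^{1}(M)\otimes\Omega^{3}(M)$, and hence commutes with any Lie derivative acting as a derivation on this product; this observation collapses all the coordinate artifacts into a single application of the Cartan identity~\eqref{eq:CartanidLiecomm}, closing the check.
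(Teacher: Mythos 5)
Your proposal is correct and follows essentially the same route as the paper's proof: the second component vanishes by the module structure of $\Omega^{3}(M,\FR^{1,2})$ over the gauge algebra, while the first component is handled by the identical three-way grouping (double Lie derivatives via \eqref{eq:CartanidLiecomm}, single Lie derivatives via the Leibniz rule on the cocycle term, and the pure $\rho$-terms via $\mathfrak{so}(1,2)$-invariance of the top exterior vector). Your explicit treatment of the coordinate artifacts through the naturality of $\dd x^{\mu}\otimes\iota_{\mu}$ is just a more careful phrasing of the paper's observation that the Lie derivatives there act on the functions $\iota_{\mu}\dd\rho=\partial_{\mu}\rho$, so $\LL_{\xi}$ commutes with $\dd$ on the zero-form $\rho$.
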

\begin{proof}
The first term of \eqref{eq:deg3laLaid} expands as
\begin{align*}
\ell_{2}\big(&([\xi_1,\xi_2],-[\rho_1,\rho_2]+\LL_{\xi_{1}}\rho_2
               -\LL_{\xi_{2}}\rho_1)\,,\,({\CX},{\CP})\big)
  \\ & \hspace{2cm}
       =\Big(\dd x^\mu\otimes\Tr\big(\partial_{\mu}(-[\rho_1,\rho_2]+\LL_{\xi_{1}}\rho_2
       -\LL_{\xi_{2}}\rho_1)\dwedge {\CP}\big) + \LL_{[\xi_1,\xi_2]}{\CX} \,, \\
& \hspace{6cm} -(-[\rho_1,\rho_2]+\LL_{\xi_1}\rho_2
  -\LL_{\xi_2}\rho_1)\cdot {\CP} + \LL_{[\xi_1,\xi_2]}
  {\CP}\Big) \ .
\end{align*}
The second term expands as
\begin{align*}
-\ell_{2}\Big(\big(&\dd x^\mu\otimes\Tr(\partial_{\mu}\rho_{1}\dwedge
                     {\CP})+\LL_{\xi_1}{\CX},-\rho_1\cdot
                     {\CP}+\LL_{\xi_1}{\CP}\big)\,,\,(\xi_2,\rho_2)\Big)
  \\
& =\Big( \dd x^\mu\otimes\Tr \big(\partial_{\mu} \rho_{2}\dwedge(-\rho_1 \cdot
  {\CP} +\LL_{\xi_1}{\CP})\big)
  +\LL_{\xi_2}\big(\dd x^\mu\otimes\Tr(\partial_{\mu}\rho_1\dwedge
  {\CP}) \big) + \LL_{\xi_2} \LL_{\xi_1}
  {\CX} \,, \\
&\phantom{{}=\bigg({}} \hspace{4cm} +\rho_2\cdot(\rho_1\cdot
  {\CP})- \rho_2 \cdot \LL_{\xi_1}{\CP} -
  \LL_{\xi_2}(\rho_1 \cdot {\CP}) +
  \LL_{\xi_{2}}\LL_{\xi_{1}} {\CP}\Big) \ .
\end{align*}
The third term is just the negative of the second term with the indices $(1,2)$ interchanged. 
The second component of the identity \eqref{eq:deg3laLaid} is verified
by simply noting that the space $\Omega^{3}(M,\FR^{1,2})$ in which it
lives is a module over the Lie algebra
$\Gamma(TM)\rtimes\Omega^{0}\big(M,\mathfrak{so}(1,2)\big)$, just like
in the proof for two pairs of gauge parameters and one pair of
dynamical fields in total degree~$1$.

For the first component we verify this by collecting similar terms in
turn, starting with the terms involving the action of two Lie derivatives:
\begin{align*}
\LL_{[\xi_1,\xi_2]}{\CX} +
  \LL_{\xi_2}\LL_{\xi_1}{\CX}-\LL_{\xi_1}\LL_{\xi_2}{\CX}=0 \ ,
\end{align*}
where we used the Cartan identity \eqref{eq:CartanidLiecomm}.
Next we collect terms in the first component with one Lie derivative acting:
\begin{align*}
\dd x^\mu\otimes(\partial_{\mu} \LL_{\xi_1}\rho_2\dwedge {\CP} ) - \dd x^\mu\otimes(\partial_{\mu} \LL_{\xi_2}\rho_1\dwedge {\CP} ) + \dd x^\mu\otimes(\partial_\mu \rho_2\dwedge \LL_{\xi_1}{\CP})+\LL_{\xi_2} \big(\dd x^\mu\otimes(\partial_\mu \rho_{1}\dwedge {\CP})\big)\\
- \dd x^\mu\otimes(\partial_\mu \rho_1\dwedge \LL_{\xi_2}{\CP})-\LL_{\xi_1} \big(\dd x^\mu\otimes(\partial_\mu \rho_{2}\dwedge
  {\CP})\big)=0 \ ,
\end{align*}
where the vanishing is easily seen by using the Leibniz rule for the
Lie derivatives acting in the fourth and sixth terms, which is
possible in this case as the Lie derivatives here act on
functions. Lastly we collect terms with no Lie derivatives acting,
which are the leftover terms involving $\rho_1$ and~$\rho_2$:
\begin{align*}
-\dd x^\mu\otimes\Tr\big(\partial_{\mu}([\rho_1,\rho_2])\dwedge
  {\CP}\big)-\dd x^\mu\otimes\Tr\big(\partial_\mu
  \rho_2\dwedge \rho_1 \cdot {\CP} \big)+\dd x^\mu\otimes\Tr\big(\partial_\mu \rho_1\dwedge \rho_2 \cdot
  {\CP} \big) \ .
\end{align*}
Applying the derivatives $\partial_\mu$ using the Leibniz rule, the
terms involving $\partial_\mu \rho_2$ give
\begin{align*}
-\dd x^\mu\otimes\Tr\big([\rho_1,\partial_\mu \rho_2]\dwedge {\CP}\big)
  - \dd x^\mu\otimes\Tr\big(\partial_\mu \rho_2 \dwedge \rho_1
  \cdot {\CP}\big)=-\dd x^\mu\otimes\Tr\big(\rho_1\cdot
  (\partial_{\mu}\rho_{2}\dwedge{\CP})\big) = 0
\end{align*}
since the top exterior vector component is invariant under $\sSO_+(1,2)$-transformations, as before. Similarly the terms involving
$\partial_\mu \rho_1$ vanish, which completes the proof of the
final homotopy identity \eqref{eq:deg3laLaid} in total degree~$3$. 
\end{proof}

As all higher homotopy relations vanish trivially, this completes the
proof that the structure defined by
\eqref{eq:3dV}--\eqref{eq:3dbrackets} is indeed an
$L_{\infty}$-algebra.

\subsection{Covariant homotopy relations}
\label{app:3dcovhomrels}

We shall now provide details of some illustrative checks of the
homotopy relations for the covariant $L_\infty$-algebra of
Section~\ref{sec:covariant} (in the case $d=3$ and
$\Lambda=0$). Since all $1$-brackets are unchanged by the
covariantization, the differential conditions $\CJ^{\rm cov}_1=0$ hold
just like in the non-covariant case. On the other hand, the
$2$-brackets are almost all modified. For example, in the first
bracket of \eqref{eq:ell2cov} the Lie derivatives are removed as they would
otherwise spoil the closure relation for gauge transformations, while
in the sixth bracket the last term from the first component is removed
by the new covariant
Noether identity \eqref{covnoether}. It is then a straightforward proof, very similar to the non-covariant case, that the remaining modifications ensure that altogether the Leibniz rules $\CJ_2^{\rm cov}=0$ hold.

Compared to the non-covariant case of Appendix~\ref{app:3dhomotopy},
the main new features that arise are due to the fact that the covariant
$L_\infty$-algebra for three-dimensional gravity is no longer a
differential graded Lie algebra: there are new non-vanishing higher
brackets \eqref{eq:ell3cov} and \eqref{eq:ell4cov}. These incorporate
the covariant gauge transformations \eqref{covgauge} and their closure,
together with the covariant Noether identity
\eqref{covnoether}. Correspondingly, new higher homotopy identities
should be checked, so we will focus on those.

\subsubsection*{Homotopy Jacobi identities}

The modification of the $2$-brackets leads to the six non-trival
$3$-brackets \eqref{eq:ell3cov}, and these have to be included when proving 
the homotopy relations $\CJ_3^{\rm cov}=0$, which are given in general in \eqref{I3}. The calculations are very similar to the non-covariant case, bearing in mind that 
some of the terms that came from non-covariant $2$-brackets will now appear as a part of the covariant 
$3$-brackets. We illustrate this by explicitly demonstrating the identity
\begin{equation}
\CJ_3^{\rm cov} \big( (\xi_1, \rho_1)\,,\, (\xi_2, \rho_2)\,,\, (e,\omega) \big)   =(0,0) \ .\label{I3Example}
\end{equation}
Firstly, we write out the homotopy relation explicitly to get
\begin{align*}
&\CJ_3^{\rm cov} \big( (\xi_1, \rho_1)\,,\, (\xi_2, \rho_2)\,,\, (e,\omega) \big)  \\[4pt]
& \hspace{1cm} = \ell_{1}^{\rm cov}\Big(\ell_{3}^{\rm
  cov}\big((\xi_1,\rho_1)\,,\, (\xi_2,\rho_2)\,,\, (e,\omega)
  \big)\Big) + \ell_{3}^{\rm cov}\big(\ell_{1}^{\rm
  cov}(\xi_1,\rho_1)\,,\, (\xi_2,\rho_2)\,,\, (e,\omega) \big) \\
& \hspace{2cm} + \ell_{3}^{\rm cov}\big((\xi_1,\rho_1)\,,\, \ell_{1}^{\rm cov}(\xi_2,\rho_2)\,,\, (e,\omega) \big)
+ \ell_{3}^{\rm cov}\big((\xi_1,\rho_1)\,,\, (\xi_2,\rho_2)\,,\, \ell_{1}^{\rm cov}(e,\omega) \big)\\
& \hspace{3cm} + \ell_{2}^{\rm cov}\Big(\ell_{2}^{\rm cov}\big( (\xi_1,\rho_1)\,,\, (\xi_2,\rho_2)\big) \,,\, (e,\omega) \Big)
+ \ell_{2}^{\rm cov}\Big(\ell_{2}^{\rm cov}\big( (e,\omega)\,,\, (\xi_1,\rho_1)\big) \,,\, (\xi_2,\rho_2) \Big)\\
& \hspace{4cm} + \ell_{2}^{\rm cov}\Big(\ell_{2}^{\rm cov}\big( (\xi_2,\rho_2)\,,\, (e,\omega)\big) \,,\, (\xi_1,\rho_1) \Big) \ .
\end{align*}
Then we separately consider terms coming from $3$-brackets and terms coming from $2$-brackets. The 
non-vanishing $3$-bracket terms are given by
\begin{align}
&\ell_{1}^{\rm cov} \big( 0\,,\, -\iota_{\xi_1}\iota_{\xi_2}\dd\omega\big) + \ell_{3}^{\rm cov} \big( (0,\dd\rho_1)\,,\,(\xi_2,\rho_2)\,,\, (e,\omega)   \big)  
+ \ell_{3}^{\rm cov} \big((\xi_1,\rho_1)\,,\, (0,\dd\rho_2)\,,\,
  (e,\omega)   \big) \nn \\[4pt]
& \hspace{3cm} = \big( \iota_{\xi_2}\dd\rho_1 \cdot e - \iota_{\xi_1}\dd\rho_2 \cdot e \,,\, -\dd\iota_{\xi_1}\iota_{\xi_2}\dd\omega 
+ \iota_{\xi_2}[\dd\rho_1,\omega] - \iota_{\xi_1}[\dd\rho_2,\omega]
  \big) \nn \\[4pt]
& \hspace{4cm} = \big( \LL_{\xi_2}\rho_1 \cdot e - \LL_{\xi_1}\rho_2 \cdot e \,,\, -\dd\iota_{\xi_1}\iota_{\xi_2}\dd\omega 
+ \iota_{\xi_2}[\dd\rho_1,\omega] - \iota_{\xi_1}[\dd\rho_2,\omega]
  \big) \ , \label{eq:I}
\end{align}
where in the last line we used $\iota_{\xi}\dd\rho = \LL_\xi \rho$. The $2$-bracket terms are given by
\begin{align}
& \ell_{2}^{\rm cov}\big(( [\xi_1,\xi_2]\,,\, -[\rho_1, \rho_2]) \,,\, (e,\omega) \big)
- \ell_{2}^{\rm cov}\big(( -\rho_1\cdot e + \LL_{\xi_1}e \,,\, -[\rho_1, \omega] + \iota_{\xi_1}\dd\omega ) 
\,,\, 
(\xi_2,\rho_2) \big) \nn \\
& \hspace{4cm} + \ell_{2}^{\rm cov}\big(( -\rho_2\cdot e + \LL_{\xi_2}e \,,\, -[\rho_2, \omega] + \iota_{\xi_2}\dd\omega) 
\,,\, 
(\xi_1,\rho_1) \big) \nn \\[4pt]
& \hspace{2cm} = \big( \LL_{\xi_1}\rho_2 \cdot e - \LL_{\xi_2}\rho_1 \cdot e \,,\, \iota_{[\xi_1,\xi_2]}\dd\omega 
-\iota_{\xi_2}[\dd\rho_1,\omega] +  \iota_{\xi_1}[\dd\rho_2,\omega] + \iota_{\xi_2}\dd\iota_{\xi_1}\dd\omega - 
\iota_{\xi_1}\dd\iota_{\xi_2}\dd\omega \big) \nn \\[4pt]
& \hspace{3cm} = \big( \LL_{\xi_1}\rho_2 \cdot e - \LL_{\xi_2}\rho_1 \cdot e \,,\,  -\dd\iota_{\xi_2}\iota_{\xi_1}\dd\omega 
-\iota_{\xi_2}[\dd\rho_1,\omega] +  \iota_{\xi_1}[\dd\rho_2,\omega]
  \big) \ , \label{eq:II}
\end{align}
where in the last line we used the Cartan identity \eqref{eq:Cartanidiota}.
Adding \eqref{eq:I} and \eqref{eq:II} together then proves
\eqref{I3Example}. With similar techniques, one can show the remaining
homotopy relations $\CJ_3^{\rm cov}=0$.

\subsubsection*{Higher homotopy identities}

The remaining new homotopy relations to consider are $\CJ_4^{\rm
  cov}=0$, which are generally given by \eqref{I4}. 
We will only prove here the homotopy relations involving non-zero
$4$-brackets. Recall that there are three non-zero $4$-brackets given by
\eqref{eq:ell4cov}. 

\paragraph{$\underline{{\rm Total~degree}~2:}$} 
The first non-trival homotopy relation is given by
\begin{align}
&\CJ_4^{\rm cov} \big( (\xi_1,\rho_1)\,,\, (\xi_2,\rho_2)\,,\, (e_1,
  \omega_1)\,,\, (e_2, \omega_2)\big) \nn \\[4pt]
& =   \ell_1^{\rm cov} \Big( \ell_4^{\rm cov} \big( (\xi_1,\rho_1)\,,\, (\xi_2,\rho_2)\,,\, (e_1, \omega_1)\,,\, (e_2, 
\omega_2)\big) \Big) \label{eq:J4deg2}
\\
& - \ell_2^{\rm cov}\Big( \ell_3^{\rm cov} \big( (\xi_1,\rho_1)\,,\, (\xi_2,\rho_2)\,,\, (e_1, \omega_1)\big) \,,\, 
(e_2, \omega_2) \Big)
- \ell_2^{\rm cov}\Big( \ell_3^{\rm cov} \big( (\xi_1,\rho_1)\,,\, (\xi_2,\rho_2)\,,\, (e_2, \omega_2)\big) \,,\, (e_1, 
\omega_1) \Big) \nn \\
& - \ell_2^{\rm cov}\Big((\xi_1,\rho_1)\,,\, \ell_3^{\rm cov} \big( (\xi_2,\rho_2)\,,\, (e_1, \omega_1) \,,\, (e_2, 
\omega_2)\big) \Big)
+ \ell_2^{\rm cov}\Big((\xi_2,\rho_2)\,,\, \ell_3^{\rm cov} \big( (\xi_1,\rho_1)\,,\, (e_1, \omega_1) \,,\, (e_2, 
\omega_2)\big) 
\Big) \nn \\
& + \ell_3^{\rm cov}\Big( \ell_2^{\rm cov} \big( (\xi_1,\rho_1)\,,\, (\xi_2,\rho_2)\big) \,,\, (e_1, \omega_1) \,,\, 
(e_2, \omega_2) \Big)
- \ell_3^{\rm cov}\Big( \ell_2^{\rm cov} \big( (\xi_1,\rho_1)\,,\, (e_1, \omega_1) \big) \,,\, (\xi_2,\rho_2) \,,\, 
(e_2, \omega_2) \Big) \nn \\
& - \ell_3^{\rm cov}\Big( \ell_2^{\rm cov} \big( (\xi_1,\rho_1)\,,\, (e_2, \omega_2) \big) \,,\, (\xi_2,\rho_2) \,,\, 
(e_1, \omega_1) 
\Big) \nn \\
& - \ell_3^{\rm cov}\Big(  (\xi_1,\rho_1)\,,\, \ell_2^{\rm cov} \big( (\xi_2,\rho_2) \,,\, (e_1, \omega_1)\big) \,,\, 
(e_2, \omega_2) \Big)
- \ell_3^{\rm cov}\Big(  (\xi_1,\rho_1)\,,\, \ell_2^{\rm cov} \big( (\xi_2,\rho_2) \,,\, (e_2, \omega_2)\big) \,,\, 
(e_1, \omega_1) \Big) \nn
\end{align}
where we wrote only the non-vanishing brackets. We split this long
equation into three types of contributions:
\begin{align*}
\CJ_4^{\rm cov} \big( (\xi_1,\rho_1)\,,\, (\xi_2,\rho_2)\,,\, (e_1,
  \omega_1)\,,\, (e_2, \omega_2)\big) = \rm{I}^{(2)} (\ell_1^{\rm cov}\circ\ell_4^{\rm cov}) + \rm{II}^{(2)} (\ell_2^{\rm cov}\circ\ell_3^{\rm 
cov}) + {\rm III}^{(2)} (\ell_3^{\rm cov}\circ\ell_2^{\rm cov}) \ . 
\end{align*}
The first term is given by
\begin{equation}
\rm{I}^{(2)} = (0, \dd\iota_{\xi_1}\iota_{\xi_2}[\omega_1,\omega_2]) \ .\nn 
\end{equation}
The second term has four contributions:
\begin{align*}
\rm{II}^{(2)} = \rm{II}^{(2)}_1+ \rm{II}^{(2)}_2 + \rm{II}^{(2)}_3 + \rm{II}^{(2)}_4
\end{align*}
where
\begin{align*}
\rm{II}^{(2)}_1 =& \, (-\iota_{\xi_1}\iota_{\xi_2}\dd\omega_1\cdot e_2,
             -[\iota_{\xi_1}\iota_{\xi_2}\dd\omega_1, \omega_2 ]) \ , \\[4pt]
\rm{II}^{(2)}_2 =& \, {\rm II}^{(2)}_1 \,(e_1\leftrightarrow e_2, \omega_1
             \leftrightarrow \omega_2) \ , \\[4pt]
\rm{II}^{(2)}_3 =& \, \big(-\rho_1\cdot(\iota_{\xi_2}\omega_1\cdot e_2) - \rho_1\cdot(\iota_{\xi_2}\omega_2\cdot e_1) 
+ \LL_{\xi_1}(\iota_{\xi_2}\omega_1\cdot e_2) + \LL_{\xi_1}(\iota_{\xi_2}\omega_2\cdot e_1)  \,,\, \\
& \quad\quad -\big[\rho_1, [\iota_{\xi_2}\omega_1, \omega_2]\big] - \big[\rho_1, [\iota_{\xi_2}\omega_2, \omega_1]\big] 
+ \iota_{\xi_1}\dd[\iota_{\xi_2}\omega_1, \omega_2] +
  \iota_{\xi_1}\dd[\iota_{\xi_2}\omega_2, \omega_1] \big) \ , \\[4pt]
\rm{II}^{(2)}_4 =& \, -\rm{II}^{(2)}_3 \, (\xi_1\leftrightarrow \xi_2, \rho_1 \leftrightarrow \rho_2) \ .
\end{align*}
The third term has five contributions:
\begin{align*}
{\rm III}^{(2)} = {\rm III}^{(2)}_1+{\rm III}^{(2)}_2+{\rm III}^{(2)}_3+{\rm III}^{(2)}_4 + {\rm III}^{(2)}_5
\end{align*}
where
\begin{align*}
{\rm III}^{(2)}_1 =& \, -\big( (\iota_{[\xi_1,\xi_2]}\omega_1) \cdot e_2 + (\iota_{[\xi_1,\xi_2]}\omega_2) \cdot e_1\,,\,  
[\iota_{[\xi_1,\xi_2]}\omega_1 , \omega_2] +
              [\iota_{[\xi_1,\xi_2]}\omega_2, \omega_1] \big) \ , \\[4pt]
{\rm III}^{(2)}_2 =& \, -\big( -[\rho_1,\iota_{\xi_2}\omega_1]\cdot e_2 + \iota_{\xi_2}\iota_{\xi_2}\dd\omega_1\cdot e_2
-\iota_{\xi_2}\omega_2 \cdot (\rho_1\cdot e_1) + \iota_{\xi_2}\omega_2 \cdot \LL_{\xi_1}e_1\,,\, \\
&\qquad\qquad -\big[[\rho_1, \iota_{\xi_2}\omega_1], \omega_2\big] + [\iota_{\xi_2}\iota_{\xi_2}\dd\omega_1 ,\omega_2] 
- \big[\iota_{\xi_2}\omega_2, [\rho_1,\omega_1]\big] +
  [\iota_{\xi_2}\omega_2, \iota_{\xi_1}\dd\omega_1] \big) \ , \\[4pt]
{\rm III}^{(2)}_3 =& \, {\rm III}^{(2)}_2 \, (e_1\leftrightarrow e_2, \omega_1
              \leftrightarrow \omega_2) \ , \\[4pt]
{\rm III}^{(2)}_4 =& \, -{\rm III}^{(2)}_2 \, (\xi_1\leftrightarrow \xi_2, \rho_1
              \leftrightarrow \rho_2) \ , \\[4pt]
{\rm III}^{(2)}_5 =& \, {\rm III}^{(2)}_4 \, (e_1\leftrightarrow e_2, \omega_1 \leftrightarrow \omega_2) \ .
\end{align*}

We now collect all the terms in the first slots of the brackets. Most
of the terms cancel straightforwardly and we are 
left with
\begin{equation}
(\LL_{\xi_1}\iota_{\xi_2}\omega_1 - \iota_{\xi_1}\LL_{\xi_2}\omega_1)\cdot e_2 + (\LL_{\xi_1}\iota_{\xi_2}\omega_1 - 
\iota_{\xi_1}\LL_{\xi_2}\omega_2)\cdot e_1 \ . \nn
\end{equation}
Using the Cartan formula for the Lie derivative and noting that $\iota_{\xi_1}\iota_{\xi_2}\omega= 0$, since 
$\omega$ is a one-form, we see that the remaining terms also
cancel. In this way we have shown that the first slot 
of the brackets in \eqref{eq:J4deg2} is 
equal to zero.
Collecting all the terms in the second slots of the brackets, we notice that the double commutators combine into Jacobi 
identities and thus vanish. Some of the single commutators cancel straightforwardly. The remaining ones also 
cancel, but one has to use the Cartan formula for the Lie derivative
and the identity (\ref{eq:Cartanidiota}). This completes the proof
of the homotopy relation $\CJ_4^{\rm cov}=0$ for
\eqref{eq:J4deg2}.

\paragraph{$\underline{{\rm Total~degree}~4:}$} 
Finally, we have to check that the homotopy relation
\begin{equation}
\CJ_4^{\rm cov} \big( (\xi_1,\rho_1)\,,\, (\xi_2,\rho_2)\,,\, (e, \omega)\,,\, (\CX,\CP) \big)   = (0,0) \label{eq:J4deg4}
\end{equation}
holds. This relation has $13$ non-vanishing terms given by
\begin{align*}
&\CJ_4^{\rm cov} \big( (\xi_1,\rho_1)\,,\, (\xi_2,\rho_2)\,,\, (e, \omega)\,,\, (\CX,\CP)\big) \nn \\[4pt]
& =   \ell_1^{\rm cov} \Big( \ell_4^{\rm cov} \big( (\xi_1,\rho_1)\,,\, (\xi_2,\rho_2)\,,\, (e, \omega)\,,\, 
(\CX,\CP)\big) \Big) 
\\
& \quad - \ell_4^{\rm cov} \big(  \ell_1^{\rm cov}(\xi_1,\rho_1)\,,\, (\xi_2,\rho_2)\,,\, (e, \omega)\,,\, (\CX,\CP) \big) - 
\ell_4^{\rm cov} \big(  
(\xi_1,\rho_1)\,,\, \ell_1^{\rm cov} (\xi_2,\rho_2)\,,\, (e, \omega)\,,\, (\CX,\CP) \big) \nn \\
& \quad - \ell_2^{\rm cov}\Big( \ell_3^{\rm cov} \big( (\xi_1,\rho_1)\,,\, (\xi_2,\rho_2)\,,\, (e, \omega)\big) \,,\, 
(\CX,\CP) \Big)
- \ell_2^{\rm cov}\Big( \ell_3^{\rm cov} \big( (\xi_1,\rho_1)\,,\, (\xi_2,\rho_2)\,,\, (\CX,\CP)\big) \,,\, (e, \omega) 
\Big) \nn \\
& \quad - \ell_2^{\rm cov}\Big((\xi_1,\rho_1)\,,\, \ell_3^{\rm cov} \big( (\xi_2,\rho_2)\,,\, (e, \omega) \,,\, (\CX,\CP) 
\big) \Big)
+ \ell_2^{\rm cov}\Big((\xi_2,\rho_2)\,,\, \ell_3^{\rm cov} \big( (\xi_1,\rho_1)\,,\, (e, \omega) \,,\, (\CX,\CP)\big) 
\Big) \nn \\
& \quad + \ell_3^{\rm cov}\Big( \ell_2^{\rm cov} \big( (\xi_1,\rho_1)\,,\, (\xi_2,\rho_2)\big) \,,\, (e, \omega) \,,\, (\CX, 
\CP) \Big)
- \ell_3^{\rm cov}\Big( \ell_2^{\rm cov} \big( (\xi_1,\rho_1)\,,\, (e, \omega) \big) \,,\, (\xi_2,\rho_2) \,,\, (\CX, 
\CP) \Big) \nn \\
& \quad - \ell_3^{\rm cov}\Big( \ell_2^{\rm cov} \big( (\xi_1,\rho_1)\,,\, (\CX, \CP) \big) \,,\, (\xi_2,\rho_2) \,,\, (e, 
\omega) 
\Big) \nn \\
& \quad - \ell_3^{\rm cov}\Big(  (\xi_1,\rho_1)\,,\, \ell_2^{\rm cov} \big( (\xi_2,\rho_2) \,,\, (e, \omega)\big) \,,\, (\CX, 
\CP) \Big)
- \ell_3^{\rm cov}\Big(  (\xi_1,\rho_1)\,,\, \ell_2^{\rm cov} \big( (\xi_2,\rho_2) \,,\, (\CX, \CP)\big) \,,\, (e, 
\omega) \Big) \ .
\end{align*} 
As previously, we group terms according to the order of brackets as
\begin{align*}
\CJ_4^{\rm cov} \big( (\xi_1,\rho_1)\,,\, (\xi_2,\rho_2)\,,\, (e, \omega)\,,\, (\CX, \CP)\big) = {\rm I}^{(4)} (\ell_1^{\rm cov}\circ\ell_4^{\rm cov}) + {\rm II}^{(4)} (\ell_2^{\rm cov}\circ\ell_3^{\rm cov}) + {\rm III}^{(4)} (\ell_3^{\rm cov}\circ\ell_2^{\rm cov}) \ . 
\end{align*}
The first term has three contributions:
\begin{align*}
{\rm I}^{(4)} = {\rm I}^{(4)}_1 + {\rm I}^{(4)}_2 + {\rm I}^{(4)}_3
\end{align*}
where
\begin{align*}
{\rm I}^{(4)}_1 &= \big( 0\, ,\, \dd\omega\wedge \iota_{\xi_1}\iota_{\xi_2}\CP - \omega\wedge 
\dd\iota_{\xi_1}\iota_{\xi_2}\CP\big) \ , \\[4pt]
{\rm I}^{(4)}_2 &= \big( \dd x^{\mu} \otimes \Tr(\iota_{\mu}\iota_{\xi_2}[\dd\rho_1,\omega]\dwedge \CP)\,,\,0 
\big) \ , \\[4pt]
{\rm I}^{(4)}_3 &= -{\rm I}^{(4)}_2 (\xi_1\leftrightarrow \xi_2, \rho_1 \leftrightarrow \rho_2) \ .
\end{align*}
The second term has four contributions:
\begin{align*}
{\rm II}^{(4)} = {\rm II}^{(4)}_1 + {\rm II}^{(4)}_2 + {\rm II}^{(4)}_3 + {\rm II}^{(4)}_4
\end{align*}
where
\begin{align*}
{\rm II}^{(4)}_1 &= \big( 0\, ,\, -\iota_{\xi_1}\iota_{\xi_2}\dd\omega\cdot \CP \big) \ , \\[4pt]
{\rm II}^{(4)}_2 &= -\big( \dd x^{\mu} \otimes \Tr(\iota_{\mu}\dd\omega\dwedge 
\dd\iota_{\xi_1}\iota_{\xi_2}\CP)\,,\, -\omega\wedge \dd\iota_{\xi_1}\iota_{\xi_2}\CP\big) \ , \\[4pt]
{\rm II}^{(4)}_3 &= \Big( \LL_{\xi_2}\big(\dd x^{\mu} \otimes \Tr(\iota_{\mu}\iota_{\xi_1}\dd\omega\dwedge 
\CP) \big) \,,\, 0\Big) \ , \\[4pt]
{\rm II}^{(4)}_4 &= -{\rm II}^{(4)}_3 (\xi_1\leftrightarrow \xi_2, \rho_1 \leftrightarrow \rho_2) \ .
\end{align*}
Finally, the third term has five contributions:
\begin{align*}
{\rm III}^{(4)} = {\rm III}^{(4)}_1 + {\rm III}^{(4)}_2 + {\rm III}^{(4)}_3 + 
{\rm III}^{(4)}_4 + {\rm III}^{(4)}_5
\end{align*}
where
\begin{align*}
{\rm III}^{(4)}_1 &= \big( \dd x^{\mu} \otimes \Tr(\iota_{\mu}\iota_{[\xi_1,\xi_2]}\dd\omega\dwedge\CP) \, ,\, 0 
\big) \ , \\[4pt]
{\rm III}^{(4)}_2 &= \Big( \dd x^{\mu} \otimes \Tr\big(\iota_{\mu}\iota_{\xi_2}\dd(-[\rho_1,\omega]+ 
\iota_{\xi_1}\dd\omega) \dwedge\CP\big)\,,\, 0\Big) \ , \\[4pt]
{\rm III}^{(4)}_3 &= -\Big( \dd x^{\mu} \otimes \Tr\big(\iota_{\mu}\iota_{\xi_2}\dd\omega\dwedge 
(\rho_1\cdot \CP) \big) \,,\, 0\Big) \ , \\[4pt]
{\rm III}^{(4)}_4 &= -{\rm III}^{(4)}_2 (\xi_1\leftrightarrow \xi_2, \rho_1 \leftrightarrow \rho_2) \ , \\[4pt]
{\rm III}^{(4)}_5 &= -{\rm III}^{(4)}_3 (\xi_1\leftrightarrow \xi_2, \rho_1 \leftrightarrow \rho_2) \ .
\end{align*}

The terms in the second slots combine into
\begin{align*}
\dd\omega\wedge\iota_{\xi_1}\iota_{\xi_2}\CP-\iota_{\xi_1}\iota_{\xi_2}
\dd\omega\cdot
\CP &= \iota_{\xi_1}(\dd\omega\wedge\iota_{\xi_2}\CP)-\iota_{\xi_1}\dd\omega\wedge\iota_{\xi_2}
\CP+\iota_{\xi_2}\iota_{\xi_1}\dd\omega\cdot\CP \\[4pt]
&= \iota_{\xi_1}(\dd\omega\wedge\iota_{\xi_2}\CP)+\iota_{\xi_2}(\iota_{\xi_1}\dd\omega\wedge\CP) \\[4pt]
&= 0 \ ,
\end{align*}
since the remaining two terms are contractions of four-forms, which are identically equal to zero in three dimensions.
For the terms in the first slots, we split them into terms with gauge parameters $\rho$ and terms without them. The terms 
with gauge parameters combine into
\begin{align*}
& \dd x^{\mu} \otimes \Tr \big( -[\rho_1, \iota_{\mu}\iota_{\xi_2}\dd\omega]\dwedge\CP 
-\iota_{\mu}\iota_{\xi_2}\dd\omega \dwedge (\rho_1\cdot\CP) + [\rho_2, \iota_{\mu}\iota_{\xi_1}\dd\omega]\dwedge\CP 
\iota_{\mu}\iota_{\xi_1}\dd\omega \dwedge (\rho_2\cdot\CP) \big) \\[4pt]
& \hspace{3cm} = \dd x^{\mu} \otimes \Tr \big(-\rho_1\cdot (\iota_{\mu}\iota_{\xi_2}\dd\omega\dwedge\CP) + 
\rho_2\cdot(\iota_{\mu}\iota_{\xi_1}\dd\omega\dwedge\CP) \big) \\[4pt]
& \hspace{6cm} = 0 \ ,
\end{align*}
where the vanishing of the last terms follows again from the invariance of a top exterior vector under local Lorentz transformations.
The terms without gauge parameters are given by
\begin{align}
& \!\!\!\! -\dd x^{\mu} \otimes \Tr( \iota_{\mu}\dd\omega\dwedge 
\dd\iota_{\xi_1}\iota_{\xi_2}\CP ) + \LL_{\xi_2}\big(\dd x^{\mu} \otimes \Tr(\iota_{\mu}\iota_{\xi_1}\dd\omega\dwedge 
\CP)\big) - \LL_{\xi_1}\big(\dd x^{\mu} \otimes \Tr(\iota_{\mu}\iota_{\xi_2}\dd\omega\dwedge 
\CP)\big) \label{eq:withoutrho} \\
& \quad + \dd x^{\mu} \otimes \Tr(\iota_{\mu}\iota_{[\xi_1,\xi_2]}\dd\omega\dwedge\CP) + \dd x^{\mu} \otimes \Tr\big( (\iota_{\mu}\iota_{\xi_2}\dd\iota_{\xi_1}\dd\omega) \dwedge\CP\big)
- \dd x^{\mu} \otimes \Tr\big( (\iota_{\mu}\iota_{\xi_1}\dd\iota_{\xi_2}\dd\omega) \dwedge\CP\big) \ . \nn
\end{align}
Using the Cartan identity (\ref{eq:Cartanidiota}), the Leibniz tule for the Lie derivative and noting that
\begin{equation*}
(\LL_{\xi_2}\dd x^{\mu}) \otimes \Tr\big(\iota_{\mu}\iota_{\xi_1}\dd\omega\dwedge 
\CP\big) =   \dd x^{\mu} \otimes \Tr\big( \iota_{\mu}\LL_{\xi_2}\iota_{\xi_1}\dd\omega - 
\LL_{\xi_2}\iota_{\mu}\iota_{\xi_1}\dd\omega \dwedge\CP \big) \ ,
\end{equation*}
the terms \eqref{eq:withoutrho} combine into
\begin{align*}
-\dd x^{\mu} \otimes \Tr\big(& \iota_{\mu}\dd\omega\dwedge \dd\iota_{\xi_1}\iota_{\xi_2}\CP  
+ \iota_{\mu}\iota_{\xi_1}\dd\omega\dwedge \dd\iota_{\xi_2}\CP 
- \iota_{\mu}\iota_{\xi_2}\dd\omega\dwedge \dd\iota_{\xi_1}\CP \\
& + ( \iota_{\mu}\iota_{\xi_2}\dd\iota_{\xi_1}\dd\omega 
- \iota_{\mu}\iota_{\xi_1}\dd\iota_{\xi_2}\dd\omega  - \iota_{\mu}\dd\iota_{\xi_1}\iota_{\xi_2}\dd\omega 
)\dwedge\CP \big) \\[4pt]
& \hspace{1cm} = -\dd x^{\mu} \otimes \Tr\big( \iota_{\mu}\dd\omega\dwedge \iota_{[\xi_2,\xi_1]}\CP - 
\iota_{[\xi_2,\xi_1]}\iota_\mu\dd\omega\dwedge\CP \big)\\[4pt]
& \hspace{2cm} = -\dd x^{\mu} \otimes \Tr\big( - \iota_{[\xi_2,\xi_1]}( \iota_\mu\dd\omega\dwedge\CP) \big) \\[4pt]
& \hspace{3cm} =0 \ ,
\end{align*}
where the final term vanishes since it is a contraction of a four-form, which is identically zero in three dimensions. This completes the proof
of the homotopy relation \eqref{eq:J4deg4}.

\subsection{$L_\infty$-morphism relations}
\label{app:3dmorrels}

We will now prove that the maps $\{\psi^{\rm cov}_n\}$ defined in
Section~\ref{sec:ECPiso} satisfy the $L_\infty$-morphism relations
given by
\eqref{eq:morphismrels} (for the case $d=3$). 
We shall use the facts $V^{\rm cov}=V$, $\psi^{\rm cov}_{1}= {\rm id}_{V}$ and
$\psi^{\rm cov}_{n}=0$ for $n\geq 3$ implicitly with no further mention
below. As with the homotopy relations, we shall proceed by degree $n$,
remembering that $|\psi^{\rm cov}_n|=1-n$. Since $\ell_{n}^{\rm cov}=\ell_{n}$
for $n=1$ and for all $n\geq5$, the relations are
immediate and trivially satisfied in these degrees. As previously, we
set $\Lambda=0$ throughout to simplify the presentation.

\subsubsection*{$\boldsymbol{n=2}$: Internal degree $\boldsymbol{0}$}

To show that the map $\psi^{\rm cov}_1$ preserves the $2$-brackets up to a
homotopy generated by $\psi^{\rm cov}_2$, we may act non-trivially on fields
whose total degrees are $0$, $1$, $2$ and $3$.

\paragraph{$\underline{{\rm Total~degree}~0:}$} 
We act on two pairs of gauge transformations
$(\xi_{1},\rho_{1})$ and $(\xi_{2},\rho_{2})$, and we need to check 
\begin{align*}
-\psi^{\rm cov}_{2}\big(\ell^{\rm cov}_{1}(\xi_{1},\rho_{1})\,,\,
  (\xi_{2},\rho_{2})\big) + \psi^{\rm cov}_{2}\big(\ell_{1}^{\rm cov}&(\xi_{2},\rho_{2})\,,\,(\xi_{1},\rho_{1})\big) + \ell_{2}^{\rm cov}\big((\xi_{1},\rho_{1})\,,\,(\xi_{2},\rho_{2})\big) \\[4pt]
&=
  \ell_{1}\Big(\psi^{\rm cov}_{2}\big((\xi_{1},\rho_{1})\,,\,(\xi_{2},\rho_{2})\big)\Big)
  +\ell_{2}\big((\xi_{1},\rho_{1})\,,\,(\xi_{2},\rho_{2})\big)
  \ .
\end{align*}
Expanding the left-hand side we get
\begin{align*}
(0,-\iota_{\xi_{2}} \dd \rho_{1})+(0,\iota_{\xi_{1}}\dd \rho_{2}) + \big([\xi_{1},\xi_{2}],-[\rho_{1},\rho_{2}]\big)=\big([\xi_{1},\xi_{2}], \LL_{\xi_{1}}\rho_{2}-\LL_{\xi_{2}} \rho_{1} - [\rho_{1},\rho_{2}]\big)
\end{align*}
where we used $\iota_{\xi_{1}} \dd \rho_{2} = \LL_{\xi_{1}} \rho_{2}$
since $\rho_{2}$ is a zero-form, and similarly for $\rho_1$. This is
easily seen to
coincide with the right-hand side, since
$\psi^{\rm cov}_{2}((\xi_{1},\rho_{1}),(\xi_{2},\rho_{2}))=(0,0)$.

\paragraph{$\underline{{\rm Total~degree}~1:}$}
We act on a pair of gauge transformations $(\xi,\rho)$ and one
pair of dynamical fields $(e,\omega)$, and we need to check
\begin{align*}
-\psi^{\rm cov}_{2}\big(\ell_{1}^{\rm cov}(\xi,\rho)\,,\,(e,\om)\big)+\psi^{\rm cov}_{2}\big(\ell_{1}^{\rm cov}(e,\om)\,,\,&(\xi,\rho)\big)+\ell_{2}^{\rm cov}\big((\xi,\rho)\,,\,(e,\om)\big) \\[4pt]
&=\ell_{1}
  \Big(\psi^{\rm cov}_{2}\big((\xi,\rho)\,,\,(e,\om)\big)\Big)+\ell_{2}\big((\xi,\rho)\,,\,(e,\om)\big)
  \ .
\end{align*}
Expanding the left-hand side we get
\begin{align*}
-\psi^{\rm cov}_{2}\big((0,\dd \rho)\,,\, (e, \om)\big)+\psi^{\rm cov}_{2}\big( (\dd
  \om,\dd e)\,,\, (\xi,\rho)\big) +\big(\LL_{\xi} e -\rho \cdot e\,,\,&
  \iota_{\xi} \dd \om - [\rho,\om]\big)\\[4pt] & = \big(\LL_{\xi} e -\rho \cdot e\,,\, \iota_{\xi} \dd \om - [\rho,\om]\big)
\end{align*}
while the right-hand side is given by
\begin{align*}
(0,-\dd \iota_{\xi} \om) + \big(\LL_{\xi} e - \rho\cdot e\,,\, \LL_{\xi}\om -[\rho,\om]\big)=\big(\LL_{\xi} e -\rho \cdot e\,,\, \iota_{\xi} \dd \om - [\rho,\om]\big)
\end{align*}
where we used the Cartan formula \eqref{eq:CartanLie}.

\paragraph{$\underline{{\rm Total~degree}~2:}$}
Checking the $L_\infty$-morphism relation on two pairs of dynamical
fields $(e_{1},\om_{1})$ and $(e_{2},\om_{2})$ is immediate, since all
values of $\psi^{\rm cov}_{2}$ vanish by definition and $\ell_{2}^{\rm
  cov}\big((e_{1},\om_{1})\,,\,(e_{2},\om_{2})\big)
=\ell_{2}\big((e_{1},\om_{1})\,,\,(e_{2},\om_{2})\big).$ 
We may also act on a pair of gauge transformations $(\xi,\rho)$ and a pair of Euler--Lagrange derivatives $(E,{\mit\Omega})$. Then we need to check
\begin{align*}
-\psi^{\rm cov}_{2}\big(\ell_{1}^{\rm cov}(\xi,\rho)\,,\,(E,{\mit\Omega})\big)+\psi^{\rm cov}_{2}\big(\ell_{1}^{\rm cov}(E,{\mit\Omega})\,,\,&(\xi,\rho)\big)+\ell_{2}^{\rm cov}\big((\xi,\rho)\,,\,(E,{\mit\Omega})\big) \\[4pt] 
&=\ell_{1}
  \Big(\psi^{\rm cov}_{2}\big((\xi,\rho)\,,\,(E,{\mit\Omega})\big)\Big)+\ell_{2}\big((\xi,\rho)\,,\,(E,{\mit\Omega})\big)
  \ .
\end{align*}
Expanding the left-hand side gives
\begin{align*}
(0,\iota_{\xi} \dd \mit\Omega) +(\LL_{\xi} E -\rho \cdot E, \dd \iota_{\xi} \mit\Omega -\rho\cdot \mit\Omega) = (\LL_{\xi} E - \rho \cdot E, \LL_{\xi} \mit\Omega - \rho \cdot \mit\Omega)
\end{align*}
where we used the Cartan formula \eqref{eq:CartanLie}, which easily agrees with the
right-hand side.

\paragraph{$\underline{{\rm Total~degree}~3:}$} 
We may act on a pair of gauge transformations $(\xi,\rho)$ and a pair of Noether identities $(\CX,\CP)$. We need to check
\begin{align*}
-\psi^{\rm cov}_{2}\big(\ell_{1}^{\rm cov}(\xi,\rho)\,,\,(\CX,\CP)\big)+\psi^{\rm cov}_{2}\big(\ell_{1}^{\rm cov}(\CX,\CP)\,,\,&(\xi,\rho)\big)+\ell_{2}^{\rm cov}\big((\xi,\rho)\,,\,(\CX,\CP)\big) \\[4pt]
&=\ell_{1}
  \Big(\psi^{\rm cov}_{2}\big((\xi,\rho)\,,\,(\CX,\CP)\big)\Big)+\ell_{2}\big((\xi,\rho)\,,\,(\CX,\CP)\big)
  \ .
\end{align*}
Expanding the left-hand side we get
\begin{align*}
\big(\dd x^{\mu}\otimes \Tr (\iota_{\mu}\dd \rho\dwedge \CP),0\big)+
  (\LL_{\xi}\CX, -\rho \cdot \CP)=\big(\LL_{\xi} \CX +\dd
  x^{\mu}\otimes \Tr (\iota_{\mu}\dd \rho\dwedge \CP)\,,\, -\rho \cdot
  \CP \big) \ ,
\end{align*}
while the right-hand side is
\begin{align*}
(0,-\dd \iota_{\xi} \CP) +\big(\LL_{\xi} \CX + \dd x^{\mu}\otimes \Tr(\iota_{\mu}\dd \rho \dwedge \CP)\,,\,-\rho \cdot \CP + \LL_{\xi} \CP\big)=\big(\LL_{\xi} \CX +\dd x^{\mu}\otimes \Tr (\iota_{\mu}\dd \rho\dwedge \CP)\,,\, -\rho \cdot \CP \big)
\end{align*}
where we used the fact that $\CP$ is a top form, so that $\LL_{\xi} \CP = \dd \iota_{\xi} \CP$.

We may also act on a pair of dynamical fields $(e,\om)$ and a pair of Euler--Lagrange derivatives $(E,{\mit\Omega})$. We need to check
\begin{align*}
-\psi^{\rm cov}_{2}\big(\ell_{1}^{\rm cov}(e,\om)\,,\,(E,{\mit\Omega})\big)+\psi^{\rm cov}_{2}\big(\ell_{1}^{\rm cov}(E,{\mit\Omega})\,,\,&(e,\om)\big)+\ell_{2}^{\rm cov}\big((e,\om)\,,\,(E,{\mit\Omega})\big) \\[4pt] 
&=\ell_{1}
  \Big(\psi^{\rm cov}_{2}\big((e,\om)\,,\,(E,{\mit\Omega})\big)\Big)+\ell_{2}\big((e,\om)\,,\,(E,{\mit\Omega})\big)
  \ .
\end{align*}
Expanding the left-hand side gives
\begin{align*}
\big(\dd x^{\mu}\otimes\Tr(-\iota_{\mu}\dd\om \dwedge {\mit\Omega}+ \iota_{\mu}\dd e \dwedge E+\iota_{\mu} \dd \om \dwedge {\mit\Omega} - \iota_{\mu} e\dwedge \dd E)\,,\, E\wedge e -\om \wedge {\mit\Omega} \big)
\end{align*}
which easily coincides with the right-hand side.

\subsubsection*{$\boldsymbol{n=3}$: Internal degree $\boldsymbol{-1}$}

The non-trivial $L_\infty$-morphism relations in this case comprise
fields whose total degrees are $1$, $2$, $3$ and $4$. We will also
implicitly use the fact that the non-covariant $L_\infty$-algebra in
three dimensions is a differential graded Lie algebra, so the bracket
$\ell_{3}$ vanishes identically.

\paragraph{$\underline{{\rm Total~degree}~1:}$} 
We may act on two pairs of gauge transformations $(\xi_{1},\rho_{1})$
and $(\xi_{2},\rho_{2})$, and a pair of dynamical fields $(e,\om)$. We need to check
\begin{align*}
\psi^{\rm cov}_{2}\Big(\ell_{2}^{\rm
  cov}&\big((\xi_{1},\rho_{1})\,,\,(\xi_{2},\rho_{2})\big)\,,\,(e,\om)\Big)+\psi^{\rm cov}_{2}\Big(\ell_{2}^{\rm
                                    cov}\big((e,\om)\,,\,(\xi_{1},\rho_{1})\big)\,,\,(\xi_{2},\rho_{2})\Big)\\
                                  & \hspace{1cm} +\psi^{\rm cov}_{2}\Big(\ell_{2}^{\rm cov}\big((\xi_{2},\rho_{2})\,,\,(e,\om)\big)\,,\,(\xi_{1},\rho_{1})\Big)+\ell_{3}^{\rm cov}\big((\xi_{1},\rho_{1})\,,\,(\xi_{2},\rho_{2})\,,\,(e,\om)\big) \\[4pt]
&=\ell_{2}\Big(\psi^{\rm cov}_{1}(\xi_{1},\rho_{1})\,,\,\psi^{\rm cov}_{2}\big((\xi_{2},\rho_{2})\,,\,(e,\om)\big)\Big)-\ell_{2}\Big(\psi^{\rm cov}_{1}(\xi_{2},\rho_{2})\,,\,\psi^{\rm cov}_{2}\big((\xi_{1},\rho_{1})\,,\,(e,\om)\big)\Big) \\
&\hspace{0.5cm}-\ell_{2}\Big(\psi^{\rm cov}_{1}(e,\om)\,,\,\psi^{\rm cov}_{2}\big((\xi_{1},\rho_{1})\,,\,(\xi_{2},\rho_{2})\big)\Big)
  \ .
\end{align*}
The left-hand side expands as
\begin{align*}
\big(0,-\iota_{[\xi_{1},\xi_{2}]}\om\big)+\big(0,-\iota_{\xi_{2}}\iota_{\xi_{1}}\dd
                                                   \om
                                                   +[\rho_{1},\iota_{{\xi}_{2}}\om]\big)-\big(0,&-\iota_{\xi_{1}}\iota_{\xi_{2}}
                                                   \dd \om
                                                   +[\rho_{2},\iota_{\xi_{1}}\om]\big)+(0,-\iota_{\xi_{1}}\iota_{\xi_{2}}
                                                   \dd \om\big)
  \\[4pt] 
&=\big(0,\iota_{\xi_{2}}\dd \iota_{\xi_{1}}\om -\iota_{\xi_{1}}\dd \iota_{\xi_{2}}\om+[\rho_{1},\iota_{\xi_{2}}\om]-[\rho_{2},\iota_{\xi_{1}}\om]\big)
\end{align*}
where we used the Cartan identity \eqref{eq:Cartanidiota}.
Expanding the right-hand side yields
\begin{align*}
\big(0,-\LL_{\xi_{1}}\iota_{\xi_{2}} \om +[\rho_{1},\iota_{\xi_{2}}\om]\big)-\big(0,&-\LL_{\xi_{2}}\iota_{\xi_{1}}\om +[\rho_{2},\iota_{\xi_{1}}\om]\big) \\[4pt]
&=\big(0,-\iota_{\xi_{1}}\dd\iota_{\xi_{2}}\om +[\rho_{1},\iota_{\xi_{2}}\om]+\iota_{\xi_{2}}\dd \iota_{\xi_{1}}\om-[\rho_{2},\iota_{\xi_{1}}\om]\big)
\end{align*}
where we used $\LL_{\xi_{1}}\iota_{\xi_{2}}\om=\iota_{\xi_{1}}\dd \iota_{\xi_{2}}\om$, since $\om$
is a one-form.

\paragraph{$\underline{{\rm Total~degree}~2:}$} 
We may act on two pairs of gauge transformations $(\xi_{1},\rho_{1})$
and $(\xi_{2},\rho_{2})$, and a pair of Euler--Lagrange derivatives
$(E,{\mit\Omega})$; in this case all pairings appearing involve
$\psi^{\rm cov}_{2}$ and $\ell_{3}$, which vanish individually by definition.
We may also act on a pair of gauge transformations $(\xi,\rho)$, and
two pairs of dynamical fields $(e_{1},\om_{1})$ and
$(e_{2},\om_{2})$. Then we need to check
\begin{align*}
\psi^{\rm cov}_{2}\Big(\ell_{2}^{\rm
  cov}&\big((\xi,\rho)\,,\,(e_{1},\om_{1})\big)\,,\,(e_{2},\om_{2})\Big)-\psi^{\rm cov}_{2}\Big(\ell_{2}^{\rm
                            cov}\big((e_{2},\om_{2})\,,\,(\xi,\rho)\big)\,,\,(e_{1},\om_{1})\Big)\\
                          & \hspace{1cm} +\psi^{\rm cov}_{2}\Big(\ell_{2}^{\rm cov}\big((e_{1},\om_{1})\,,\,(e_{2},\om_{2})\big)\,,\,(\xi,\rho)\Big)+\ell_{3}^{\rm cov}\big((\xi_{1},\rho_{1})\,,\,(\xi_{2},\rho_{2})\,,\,(e,\om)\big) \\[4pt]
&=\ell_{2}\Big(\psi^{\rm cov}_{1}(\xi,\rho)\,,\,\psi^{\rm cov}_{2}\big((e_{1},\om_{1})\,,\,(e_{2},\om_{2})\big)\Big)+\ell_{2}\Big(\psi^{\rm cov}_{1}(e_{1},\om_{1})\,,\,\psi^{\rm cov}_{2}\big((\xi,\rho)\,,\,(e_{2},\om_{2})\big)\Big) \\
&\hspace{0.5cm}+\ell_{2}\Big(\psi^{\rm cov}_{1}(e_{2},\om_{2})\,,\,\psi^{\rm cov}_{2}\big((\xi,\rho)\,,\,(e_{1},\om_{1})\big)\Big)
  \ .
\end{align*}
Expanding the left-hand side we get
\begin{align*}
-\big(\iota_{\xi}\om_{1} \wedge e_{2}+\iota_{\xi}\om_{2}\wedge
  e_{1}\,,\,[\iota_{\xi}\om_{1},\om_{2}]+[\iota_{\xi}\om_{2},\om_{1}]\big)
  \ ,
\end{align*}
which is equal to the expansion of the right-hand side.

\paragraph{$\underline{{\rm Total~degree}~3:}$} The
$L_\infty$-morphism relation involving three pairs of dynamical fields
$(e_{1},\om_{1})$, $(e_{2},\om_{2})$ and $(e_{3},\om_{3})$ is immediate, since the brackets involved in the dynamics are identical in
both versions of the theory and the map $\psi^{\rm cov}_{2}$ is trivial on the
vectors involved. We may also act on two
pairs of gauge transformations $(\xi_{1},\rho_{1})$ and $(\xi_{2},\rho_{2})$, and a pair of Noether identities $(\CX,\CP)$. We need to check
\begin{align*}
\psi^{\rm cov}_{2}\Big(\ell_{2}^{\rm cov}&\big((\xi_{1},\rho_{1})\,,\,(\xi_{2},\rho_{2})\big)\,,\,(\CX,\CP)\Big)+\psi^{\rm cov}_{2}\Big(\ell_{2}^{\rm cov}\big((\CX,\CP)\,,\,(\xi_{1},\rho_{1})\big)\,,\,(\xi_{2},\rho_{2})\Big)\\&+\psi^{\rm cov}_{2}\Big(\ell_{2}^{\rm cov}\big((\xi_{2},\rho_{2})\,,\,(\CX,\CP)\big)\,,\,(\xi_{1},\rho_{1})\Big)+\ell_{3}^{\rm cov}\big((\xi_{1},\rho_{1})\,,\,(\xi_{2},\rho_{2})\,,\,(\CX,\CP)\big) \\[4pt]
&\hspace{0.5cm} =\ell_{2}\Big(\psi^{\rm cov}_{1}(\xi_{1},\rho_{1})\,,\,\psi^{\rm cov}_{2}\big((\xi_{2},\rho_{2})\,,\,(\CX,\CP)\big)\Big)-\ell_{2}\Big(\psi^{\rm cov}_{1}(\xi_{2},\rho_{2})\,,\,\psi^{\rm cov}_{2}\big((\xi_{1},\rho_{1})\,,\,(\CX,\CP)\big)\Big) \\
&\hspace{2cm}-\ell_{2}\Big(\psi^{\rm cov}_{1}(\CX,\CP)\,,\,\psi^{\rm cov}_{2}\big((\xi_{1},\rho_{1})\,,\,(\xi_{2},\rho_{2})\big)\Big)
  \ .
\end{align*}
Expanding the left-hand side, we get
\begin{align*}
(0,-\iota_{[\xi_{1},\xi_{2}]}\CP)+\big(0,&\,\iota_{\xi_{2}}(\rho_{1}\cdot\CP)\big)-\big(0,\iota_{\xi_1}(\rho_{2}\cdot
                                   \CP)\big)-(0,\dd
                                   \iota_{\xi_{1}}\iota_{\xi_{2}}\CP) \\[4pt]
&=\big(0,-2\,\dd \iota_{\xi_{1}}\iota_{\xi_{2}}\CP-\iota_{\xi_{1}}\dd \iota_{\xi_{2}}\CP + \iota_{\xi_{2}}\dd \iota_{\xi_{1}} \CP+\rho_{1}\cdot \iota_{\xi_{2}}\CP -\rho_{2}\cdot\iota_{\xi_{1}} \CP \big)
\end{align*}
where we used the Cartan identity \eqref{eq:Cartanidiota}.
The right-hand side expands as
\begin{align*}
\big(0,-\rho_{1}\cdot(-\iota_{\xi_{2}}\CP)+\LL_{\xi_{1}}&(-\iota_{\xi_{2}}\CP)\big)-\big(0,-\rho_{2}(-\iota_{\xi_{1}}\CP)-\LL_{\xi_{2}}(-\iota_{\xi_{1}}\CP)\big)
  \\[4pt]
&=\big(0,\rho_{1}\cdot \iota_{\xi_{2}}\CP -\rho_{2}\cdot\iota_{\xi_{1}}\CP-\iota_{\xi_{1}}\dd \iota_{\xi_{2}}\CP + \iota_{\xi_{2}}\dd \iota_{\xi_{1}} \CP-2\,\dd \iota_{\xi_{1}}\iota_{\xi_{2}}\CP\big)
\end{align*}
where we used Cartan's magic formula.

We may further act on a pair of gauge transformations $(\xi,\rho)$, a
pair of dynamical fields $(e,\om)$ and a pair of Euler--Lagrange derivatives 
$(E,{\mit\Omega})$. We need to check
\begin{align*}
\psi^{\rm cov}_{2}\Big(\ell_{2}^{\rm cov}&\big((\xi,\rho)\,,\,(e,\om)\big)\,,\,(E,{\mit\Omega})\Big)+\psi^{\rm cov}_{2}\Big(\ell_{2}^{\rm cov}\big((E,{\mit\Omega})\,,\,(\xi,\rho)\big)\,,\,(e,\om)\Big)\\&+\psi^{\rm cov}_{2}\Big(\ell_{2}^{\rm cov}\big((e,\om)\,,\,(E,{\mit\Omega})\big)\,,\,(\xi,\rho)\Big)+\ell_{3}^{\rm cov}\big((\xi,\rho)\,,\,(e,\om)\,,\,(E,{\mit\Omega})\big) \\[4pt]
&\hspace{1cm} =\ell_{2}\Big(\psi^{\rm cov}_{1}(\xi,\rho)\,,\,\psi^{\rm cov}_{2}\big((e,\om)\,,\,(E,{\mit\Omega})\big)\Big)+\ell_{2}\Big(\psi^{\rm cov}_{1}(e,\om)\,,\,\psi^{\rm cov}_{2}\big((\xi,\rho)\,,\,(E,{\mit\Omega})\big)\Big) \\
&\hspace{2cm}+\ell_{2}\Big(\psi^{\rm cov}_{1}(E,{\mit\Omega})\,,\,\psi^{\rm cov}_{2}\big((\xi,\rho)\,,\,(e,\om)\big)\Big)
  \ .
\end{align*}
Expanding the left-hand side, we have
\begin{align*}
\big(0,\iota_{\xi}(E\wedge e- \om \wedge
  {\mit\Omega})\big)+\big(-\iota_{\xi}\om \cdot E,-\om \wedge
  \iota_{\xi}{\mit\Omega} -\iota_{\xi}(E\wedge e)\big)=\big(-\iota_{\xi}\om
  \cdot E, -\iota_{\xi}\om \cdot {\mit\Omega}\big)
\end{align*}
where we used $\iota_{\xi}(\om\wedge {\mit\Omega})=\iota_{\xi}\om
\cdot {\mit\Omega} - \om \wedge \iota_{\xi}{\mit\Omega}$. This easily
agrees with the right-hand side.

\paragraph{$\underline{{\rm Total~degree}~4:}$} 
The $L_\infty$-morphism relation involving a pair of gauge
transformations $(\xi,\rho)$, and two pairs of Euler--Lagrange
derivatives $(E_{1},{\mit\Omega}_{1})$ and $(E_{2},{\mit\Omega}_{2})$, is trivial since all terms vanish individually, by definition of $\psi^{\rm cov}_{2}$ and $\ell_{3}^{\rm cov}$.
We may also act on a pair
of gauge transformations $(\xi,\rho)$, a pair of dynamical fields $(e,\om)$ and a pair of Noether identities $(\CX,\CP)$. We need to check
\begin{align*}
\psi^{\rm cov}_{2}\Big(\ell_{2}^{\rm cov}&\big((\xi,\rho)\,,\,(e,\om)\big)\,,\,(\CX,\CP)\Big)-\psi^{\rm cov}_{2}\Big(\ell_{2}^{\rm cov}\big((\CX,\CP)\,,\,(\xi,\rho)\big)\,,\,(e,\om)\Big)\\&+\psi^{\rm cov}_{2}\Big(\ell_{2}^{\rm cov}\big((e,\om)\,,\,(\CX,\CP)\big)\,,\,(\xi,\rho)\Big)+\ell_{3}^{\rm cov}\big((\xi,\rho)\,,\,(e,\om)\,,\,(\CX,\CP)\big) \\[4pt]
&\hspace{1cm}=\ell_{2}\Big(\psi^{\rm cov}_{1}(\xi,\rho)\,,\,\psi^{\rm cov}_{2}\big((e,\om)\,,\,(\CX,\CP)\big)\Big)+\ell_{2}\Big(\psi^{\rm cov}_{1}(e,\om)\,,\,\psi^{\rm cov}_{2}\big((\xi,\rho)\,,\,(\CX,\CP)\big)\Big) \\
&\hspace{2cm}+\ell_{2}\Big(\psi^{\rm cov}_{1}(\CX,\CP)\,,\,\psi^{\rm cov}_{2}\big((\xi,\rho)\,,\,(e,\om)\big)\Big)
  \ .
\end{align*}
Expanding the left-hand side, we have
\begin{align*}
\Big(-\dd x^{\mu}\otimes \Tr\big((\iota_{\mu}\iota_{\xi} \dd \om
  -[\rho ,\iota_{\mu}\om])\dwedge\CP\big),&\,0\Big)+\Big(-\dd
                                            x^{\mu}\otimes
                                            \Tr\big(\iota_{\mu}\om
                                            \dwedge(-\rho\cdot\CP)\big),0\Big)\\
                                          &+\Big(\dd x^{\mu}\otimes
                                            \Tr\big(\iota_{\mu}\iota_{\xi}\dd
                                            \om \dwedge
                                            \CP\big),0\Big) \\[4pt]
& \hspace{1cm} =\Big(\dd x^{\mu}\otimes \Tr \big(\rho \cdot
  (\iota_{\mu}\om \dwedge \CP)\big),0\Big) \\[4pt] & \hspace{2cm} = (0,0)
                                                     \ ,
\end{align*}
where we used the Leibniz rule to pull out $\rho$ as an action on a
top exterior vector-valued form, which vanishes by invariance of top
exterior vectors under Lorentz transformations. On the other hand, the right-hand side expands as
\begin{align*}
\Big(-\LL_{\xi}&\,\big(\dd x^{\mu} \otimes \Tr(\iota_{\mu}\om \dwedge \CP) \big)\,,\,0 \Big) - \Big (\dd x^{\mu}\otimes \Tr \big(\iota_{\mu}\dd \om \dwedge \iota_{\xi} \CP - \iota_{\mu}\om \dwedge \dd \iota_{\xi} \CP\big)\,,\, -\om\wedge \iota_{\xi} \CP\Big)\\
&\hspace{4cm}+\Big(\dd x^{\mu}\otimes \Tr \big(\iota_{\mu}\dd \iota_{\xi}\om\dwedge \CP\big)\,,\,-\iota_{\xi}\om \cdot \CP \Big) \\[4pt]
&=\Big(\dd x^{\mu} \otimes \Tr \big((-\iota_{\mu}\iota_{\xi} \dd \om -\iota_{\mu} \dd \iota_{\xi}\om)\dwedge \CP -\iota_{\mu} \om \dwedge \dd \iota_{\xi} \CP -\iota_{\mu}\dd \om \dwedge \iota_{\xi} \CP + \iota_{\mu}\om \dwedge \dd \iota_{\xi} \CP\big) \\
&\hspace{1cm} +\dd  x^\mu \otimes \Tr \big(\iota_{\mu} \dd \iota_{\xi} \om \dwedge \CP\big)\,,\,\iota_{\xi}(\om \wedge \CP) \Big) \\[4pt]
&= \Big( \dd x^{\mu}\otimes \Tr \big(-\iota_{\mu}\iota_{\xi} \dd \om
  \dwedge \CP-\iota_{\mu} \dd \om \dwedge \iota_{\xi}\CP\big)\,,\, 0\Big)
  \\[4pt]
&=\Big( \dd x^{\mu}\otimes \Tr \big(-\iota_{\mu}\iota_{\xi} \dd \om
  \dwedge \CP+\iota_{\mu}\iota_{\xi} \dd \om \dwedge \CP\big)\,,\, 0\Big) \\[4pt]
&= (0,0) \ ,
\end{align*}
where in the first equality we expanded the Lie derivative using
Cartan's magic formula, along with the derivation property of the
contraction, and in the second equality we used $\om\wedge \CP=0$ as it is a four-form in three dimensions. Then using the Cartan identity
\begin{align*}
\iota_{\xi_1}\circ\iota_{\xi_2} = -\iota_{\xi_2}\circ\iota_{\xi_1}
\end{align*}
we wrote $
\iota_{\mu}\dd \om \dwedge \iota_{\xi} \CP =
-\iota_{\xi}(\iota_{\mu}\dd \om \dwedge \CP)+\iota_{\xi}\iota_{\mu}\dd
\om \dwedge \CP=-\iota_{\mu}\iota_{\xi} \dd \om \dwedge \CP
$.

Lastly, we can also act on two pairs of dynamical fields
$(e_{1},\om_{1})$ and $(e_{2},\om_{2})$, and one pair of Euler--Lagrange derivatives $(E,{\mit\Omega})$. We need to check
\begin{align*}
\psi^{\rm cov}_{2}&\Big(\ell_{2}^{\rm
  cov}\big((e_{1},\,\om_{1})\,,\,(e_{2},\om_{2})\big)\,,\,(E,{\mit\Omega})\Big)+\psi^{\rm cov}_{2}\Big(\ell_{2}^{\rm
                    cov}\big((E,{\mit\Omega})\,,\,(e_{1},\om_{1})\big)\,,\,(e_{2},\om_{2})\Big)\\
                  &-\psi^{\rm cov}_{2}\Big(\ell_{2}^{\rm
                    cov}\big((e_{2},\om_{2})\,,\,(E,{\mit\Omega})\big)\,,\,(e_{1},\om_{1})\Big)+\ell_{3}^{\rm
                    cov}\big((e_{1},\om_{1})\,,\,(e_{2},\om_{2})\,,\,(E,{\mit\Omega})\big)
  \\[4pt] 
& \hspace{1cm} =-\ell_{2}\Big(\psi^{\rm cov}_{1}(e_{1},\om_{1})\,,\,\psi^{\rm cov}_{2}\big((e_{2},\om_{2})\,,\,(E,{\mit\Omega})\big)\Big)-\ell_{2}\Big(\psi^{\rm cov}_{1}(e_{2},\om_{2})\,,\,\psi^{\rm cov}_{2}\big((e_{1},\om_{1})\,,\,(E,{\mit\Omega})\big)\Big) \\
&\hspace{2cm}+\ell_{2}\Big(\psi^{\rm cov}_{1}(E,{\mit\Omega})\,,\,\psi^{\rm cov}_{2}\big((e_{1},\om_{1})\,,\,(e_{2},\om_{2})\big)\Big)
  \ .
\end{align*}
Expanding the left-hand side, we have
\begin{align*}
\Big(\dd x^{\mu}&\otimes \Tr \big(\iota_{\mu} \om_{2} \dwedge (E\wedge e_{1} -\om_{1} \wedge {\mit\Omega})\big),0\Big)+\Big(\dd x^{\mu}\otimes \Tr \big(\iota_{\mu} \om_{1} \dwedge (E\wedge e_{2} -\om_{2} \wedge {\mit\Omega})\big), 0 \Big) \\
&- \Big(\dd x^{\mu}\otimes \Tr\big(\iota_{\mu}\om_{1}\dwedge(E\wedge e_{2})+\iota_{\mu} \om_{2} \dwedge (E\wedge e_{1})-\iota_{\mu} \om_{1}\dwedge(\om_{2}\wedge {{\mit\Omega}})
- \iota_{\mu} \om_{2}\dwedge(\om_{1}\wedge {{\mit\Omega}} )\big),0 \Big)\\[4pt]
& \hspace{2cm} = (0,0) \ ,
\end{align*}
while the right-hand side vanishes since all terms vanish individually by definition of $\psi^{\rm cov}_{2}$.

\subsubsection*{$\boldsymbol{n=4}$: Internal degree $\boldsymbol{-2}$}

The $L_\infty$-morphism relations in this case act non-trivially on
fields whose total degrees are $2$, $3$, $4$ and~$5$. We will also
implicitly use the fact that the non-covariant brackets $\ell_{3}$ and
$\ell_{4}$ vanish identically. One may easily
check that all terms in the identity vanish individually for all
possible combinations of fields of total degree~$3$, so the only
non-trivial checks required are in the remaining three total degrees.

\paragraph{$\underline{{\rm Total~degree}~2:}$} 
When acting on three pairs of gauge transformations
$(\xi_{1},\rho_{1})$, $(\xi_{2},\rho_{2})$ and $(\xi_{3},\rho_{3})$,
and a pair of Euler--Lagrange derivatives $(E,{\mit\Omega})$, all
terms vanish individually by definition of $\psi^{\rm cov}_{2}$, $\ell_{3}^{\rm
  cov}$ and $\ell_{4}^{\rm cov}$. We may also act on two pairs of gauge transformations $(\xi_{1},\rho_{1})$
and $(\xi_{2},\rho_{2})$, and two pairs of dynamical fields $(e_{1},\om_{1})$
and $(e_{2},\om_{2})$. We need to check
\begin{align*}
&-\psi^{\rm cov}_{2}\Big(\ell^{\rm cov}_{3}\big((\xi_{1},\rho_{1}),(\xi_{2},\rho_{2}),(e_{1},\om_{1})\big),(e_{2},\om_{2})\Big)-\psi^{\rm cov}_{2}\Big(\ell^{\rm cov}_{3}\big((\xi_{1},\rho_{1}),(\xi_{2},\rho_{2}),(e_{2},\om_{2})\big),(e_{1},\om_{1})\Big)\\
& \hspace{0.5cm} -\psi^{\rm cov}_{2}\Big(\ell^{\rm cov}_{3}\big((\xi_{1},\rho_{1})\,,\,(e_{1},\om_{1})\,,\,(e_{2},\om_{2})\big)\,,\,(\xi_{2},\rho_{2})\Big) \\
& \hspace{1cm} +\psi^{\rm cov}_{2}\Big(\ell^{\rm cov}_{3}\big((\xi_{2},\rho_{2})\,,\,(e_{1},\om_{1})\,,\,(e_{2},\om_{2})\big)\,,\,(\xi_{1},\rho_{1})\Big)+\ell_{4}^{\rm cov}\big((\xi_{1},\rho_{1})\,,\,(\xi_{2},\rho_{2})\,,\,(e_{1},\om_{1})\,,\,(e_{2},\om_{2})\big) \\[4pt]
&\hspace{2cm}
  =-\ell_{2}\Big(\psi^{\rm cov}_{2}\big((\xi_{1},\rho_{1})\,,\,(\xi_{2},\rho_{2})\big)\,,\,\psi^{\rm cov}_{2}\big((e_{1},\om_{1})\,,\,(e_{2},\om_{2})\big)\Big)
  \\ & \hspace{3cm} -\ell_{2}\Big(\psi^{\rm cov}_{2}\big((\xi_{1},\rho_{1})\,,\,(e_{1},\om_{1})\big)\,,\,\psi^{\rm cov}_{2}\big((\xi_{2},\rho_{2})\,,\,(e_{2},\om_{2})\big)\Big)\\
&\hspace{4cm}
  +\ell_{2}\Big(\psi^{\rm cov}_{2}\big((\xi_{2},\rho_{2})\,,\,(e_{1},\om_{1})\big)\,,\,\psi^{\rm cov}_{2}\big((\xi_{1},\rho_{1})\,,\,(e_{2},\om_{2})\big)\Big)
  \ .
\end{align*}
Expanding the left-hand side, we get
\begin{align*}
&-\big(0,-[\iota_{\xi_{1}}\om_{1},\iota_{\xi_{2}}
  \om_{2}]-[\iota_{\xi_{1}}\om_{2},\iota_{\xi_{2}}
  \om_{1}]\big)+\big(0,-[\iota_{\xi_{2}}\om_{1},\iota_{\xi_{1}}
  \om_{2}]-[\iota_{\xi_{2}}\om_{2},\iota_{\xi_{1}}
  \om_{1}]\big)+\big(0,\iota_{\xi_{1}}\iota_{\xi_{2}}[\om_{1},\om_{2}]\big)
  \\[4pt]
& \hspace{3cm} =\big(0,-[\iota_{\xi_{2}}\om_{1},\iota_{\xi_{1}}
  \om_{2}]-[\iota_{\xi_{2}}\om_{2},\iota_{\xi_{1}} \om_{1}]\big) \ ,
\end{align*}
after expanding $\iota_{\xi_{1}}\iota_{\xi_{2}}[\om_{1},\om_{2}]$
using the Leibniz rule and
cancelling terms by antisymmetry. This is easily seen to be the same
as the simple expansion of the right-hand side.

\paragraph{$\underline{{\rm Total~degree}~4:}$} 
The only non-trivial check required in this case, whereby not every term
in the identity vanishes individually, is when acting on two pairs of
gauge transformations $(\xi_{1},\rho_{1})$ and $(\xi_{2},\rho_{2})$, a
pair of dynamical fields $(e,\om)$ and a pair of Noether identities $(\CX,\CP)$. Excluding terms that vanish by definition of $\psi^{\rm cov}_{2}$, it remains to check
\begin{align*}
\ell_{4}^{\rm
  cov}\big((\xi_{1},\rho_{1})\,,\,(\xi_{2},\rho_{2})\,,\,(e,\om)\,,\,(\CX,\CP)\big)&=
                                                                                     -\ell_{2}\Big(\psi^{\rm cov}_{2}\big((\xi_{1},\rho_{1})\,,\,(e,\om)\big)\,,\,\psi^{\rm cov}_{2}\big((\xi_{2},\rho_{2})\,,\,(\CX,\CP)\big)\Big)\\
& \hspace{1cm}
  +\ell_{2}\Big(\psi^{\rm cov}_{2}\big((\xi_{2},\rho_{2})\,,\,(e,\om)\big)\,,\,\psi^{\rm cov}_{2}\big((\xi_{1},\rho_{1})\,,\,(\CX,\CP)\big)\Big)
  \ .
\end{align*}
The left-hand side expands as
\begin{align*}
(0,\om \wedge \iota_{\xi_{1}}\iota_{\xi_{2}}\CP)&=\big(0,-\iota_{\xi_{1}}(\om\wedge \iota_{\xi_{2}}\CP)+\iota_{\xi_{1}}\om \cdot \iota_{\xi_{2}}\CP\big)\\[4pt]
&=\big(0,\iota_{\xi_{1}}\iota_{\xi_{2}}(\om\wedge \CP)-\iota_{\xi_{1}}(\iota_{\xi_{2}}\om\cdot \CP) +\iota_{\xi_{1}}\om \cdot \iota_{\xi_{2}}\CP\big)\\[4pt]
&=\big(0,-\iota_{\xi_{2}}\om\cdot\iota_{\xi_{1}} \CP
  +\iota_{\xi_{1}}\om \cdot \iota_{\xi_{2}}\CP\big) \ ,
\end{align*}
which is precisely the expansion of the right-hand side.

\paragraph{$\underline{{\rm Total~degree}~5:}$} 
Again there is only one non-trivial check required, now when acting on
a pair gauge transformations $(\xi,\rho)$, two pairs of dynamical
fields $(e_{1},\om_{1})$ and $(e_{2},\om_{2})$, and a pair of Noether identities $(\CX,\CP)$. Excluding again terms that vanish by definition of $\psi^{\rm cov}_{2}$, it remains to check
\begin{align*}
&-\psi^{\rm cov}_{2}\Big(\ell_{3}\big((\xi,\rho)\,,\,(e_{1},\om_{1})\,,\,(e_{2},\om_{2})\big)\,,\,(\CX,\CP)\Big)+\ell_{4}^{\rm
           cov}\big((\xi,\rho)\,,\,(e_{1},\om_{1})\,,\,(e_{2},\om_{2})\,,\,(\CX,\CP)\big)
  \\[4pt]
& \hspace{1cm}
  =\ell_{2}\Big(\psi^{\rm cov}_{2}\big((\xi,\rho)\,,\,(e_{1},\om_{1})\big)\,,\,\psi^{\rm cov}_{2}\big((e_{2},\om_{2})\,,\,(\CX,\CP)\big)\Big) \\
& \hspace{2cm} +\ell_{2}\Big(\psi^{\rm cov}_{2}\big((\xi,\rho)\,,\,(e_{2},\om_{2})\big)\,,\,\psi^{\rm cov}_{2}\big((e_{1},\om_{1})\,,\,(\CX,\CP)\big)\Big)
  \ .
\end{align*}
Expanding the left-hand side, we get
\begin{align*}
\psi^{\rm cov}_{2}\big((\iota_{\xi}&\om_{1}\wedge e_{2} +\iota_{\xi} \om_{2}\wedge
e_{1},[\iota_{\xi}\om_{1},\om_{2}]+[\iota_{\xi}\om_{2},\om_{1}])\,,\,(\CX,\CP)\big)
        \\ & \hspace{2cm} +\Big(\dd
  x^{\mu}\otimes\Tr\big((\iota_{\mu}[\iota_{\xi}\om_{1},\om_{2}]+\iota_{\mu}[\iota_{\xi}\om_{2},\om_{1}])\dwedge
  \CP\big)\,,\,0\Big) \\[4pt]
& \hspace{3cm} =-\Big(\dd
  x^{\mu}\otimes\Tr\big((\iota_{\mu}[\iota_{\xi}\om_{1},\om_{2}]+\iota_{\mu}[\iota_{\xi}\om_{2},\om_{1}])\dwedge
  \CP\big)\,,\,0\Big) \\
& \hspace{4cm} +\Big(\dd x^{\mu}\otimes\Tr\big((\iota_{\mu}[\iota_{\xi}\om_{1},\om_{2}]+\iota_{\mu}[\iota_{\xi}\om_{2},\om_{1}])\dwedge \CP\big)\,,\,0\Big)\\[4pt]
& \hspace{5cm} = (0,0) \ ,
\end{align*}
while the right-hand side expands as
\begin{align*}
\ell_{2}\Big(\big(0,-\iota_{\xi}\om_{1}\big)\,,\,\big(-\dd
  x^{\mu}\otimes \Tr(\iota_{\mu}\om_{2}\dwedge \CP),0\big)\Big)
  +_{(1\leftrightarrow 2)}=(0,0) \ .
\end{align*}

This completes the proof that the maps $\{\psi^{\rm cov}_n\}$ from
Section~\ref{sec:ECPiso} indeed do define an $L_{\infty}$-morphism (in the case $d=3$).

\section{Calculations in four dimensions}
\label{app:4dBVBRST}

In this appendix we illustrate the explicit
dualization of the BV--BRST formalism, focusing on the case $d=4$. That is, we will show that the brackets defined in
Section~\ref{sec:Linfty4d} are dual to the non-covariant BV differential of \cite{ECBV}. We have already done this in
Section~\ref{sec:ECPBRST} for the kinematical sector of the
Einstein--Cartan--Palatini theory in any dimension $d$, so we only
need to check that the BV transformations of the antifields, given in
\eqref{eq:QBV}, dualize to the remaining brackets of the dynamical
sector and those on the space of Noether identities. This may be seen as an alternative proof of the $d=4$ homotopy relations by appealing to the duality with the BV--BRST
formalism from Section~\ref{sec:BV-BRST}, where they are automatically
guaranteed to hold by nilpotency of the BV differential $Q_{\textrm{\tiny BV}}^2=0$.  Again we shall
set $\Lambda=0$ for brevity in these calculations.

\subsubsection*{Dynamical brackets}

We start from the first transformation of \eqref{eq:QBV} specialised
to the case $d=4$:
\begin{align} \label{eq:QBVedag4d}
Q_{\textrm{\tiny BV}}{e^{\dagger}}^{\,a_{1}a_{2} a_{3}}_{\mu_{1}\mu_{2}
  \mu_{3}}=-e^{[a_{1}}_{[\mu_{1}}\,R^{a_{2}a_{3}]}_{\mu_{2}\mu_{3}]}+
  4\,\big({e^{\dagger}}^{\,a_{1}a_{2}a_{3}}_{[\mu_{1}\mu_{2}
  \mu_{3}}\,\partial_{\sigma]}\xi^{\sigma} -
  {e^{\dagger}}^{[a_{1}a_{2}a_{3}}_{\mu_{1}\mu_{2}\mu_{3}}\,
  \rho^{d]}{}_{d}\big) +\partial_{\sigma}\big(\xi^{\sigma}\,
  {e^{\dagger}}^{\,a_{1}a_{2} a_{3}}_{\mu_{1}\mu_{2} \mu_{3}}\big) \ .
\end{align}
Dualizing we retrieve the dynamical brackets of our
$L_\infty$-algebra for the coframe field $e$ in four dimensions, as we now demonstrate. 

For $(e,\om) \in  {\scrF_{\textrm{\tiny BV}}}\,_{0}$, using the
natural duality pairing $\langle-|-\rangle$ between
$\text{\Large$\odot$}^\bullet\scrF_{\textrm{\tiny BV}}$ and $\text{\Large$\odot$}_\FR^\bullet\scrF_{\textrm{\tiny BV}}^\star$ we obtain
\begin{align*}
\langle Q_{\textrm{\tiny BV}}{e^{\prime\,\dagger}}^{\,a_{1}a_{2}a_{3}}_{\mu_{1}\mu_{2}\mu_{3}}|e\odot \om\rangle &=\langle- e_{[\mu_{1}}^{\prime\,[a_{1}}\odot \partial_{\mu_{2}}\om^{\prime\,a_{2}a_{3}]}_{\mu_{3}]}|e\odot \om\rangle  \nn\\[4pt] &=-{e}^{[a_{1}}_{[\mu_{1}}\, \partial_{\mu_{2}}{\om}^{a_{2}a_{3}]}_{\mu_{3}]}\nn \\[4pt]
&=\langle{e^{\prime\,\dagger}}^{\,a_{1}a_{2}a_{3}}_{\mu_{1}\mu_{2}\mu_{3}}|-e\dwedge
  \dd \om\rangle \\[4pt]
&=:(-1)^{|Q_{\textrm{\tiny BV}}|\,
  |e^{\prime\,\dagger}|}\,\langle{e^{\prime\,\dagger}}^{\,a_{1}a_{2}a_{3}}_{\mu_{1}\mu_{2}\mu_{3}}|{\DD_{\textrm{\tiny
  BV}}}\,_2(e\odot \om)\rangle \ ,
\end{align*}
where 
$$
\DD_{\textrm{\tiny BV}} := Q^\star_{\textrm{\tiny
    BV}}:\text{\Large$\odot$}^\bullet \scrF_{\textrm{\tiny BV}}
\longrightarrow \text{\Large$\odot$}^\bullet \scrF_{\textrm{\tiny BV}}
$$
is determined by the decomposition
$$
{\rm pr}_{\scrF_{\textrm{\tiny BV}}} \circ \DD_{\textrm{\tiny BV}} = \sum_{n=1}^\infty\, {\DD_{\textrm{\tiny
      BV}}}\,_n
$$
with component maps ${\DD_{\textrm{\tiny BV}}}\,_{n}: \text{\Large$\odot$}^{n} \scrF_{\textrm{\tiny BV}}
\rightarrow \scrF_{\textrm{\tiny BV}}$. 
Thus ${\DD_{\textrm{\tiny BV}}}\,_2(e\odot \om)= e\dwedge \dd\om$, and so
\begin{align*}
\ell_{2}({}^{s^{-1}}e\wedge {}^{s^{-1}}\om)&=s^{-1}\circ {\DD_{\textrm{\tiny
                                   BV}}}\,_2\circ (s\otimes
                                   s)({}^{s^{-1}}e\wedge {}^{s^{-1}}\om) \\[4pt]
  &=(-1)^{|{}^{s^{-1}}e|}\,s^{-1}\circ {\DD_{\textrm{\tiny BV}}}\,_2(e\odot \om)\nn \\[4pt]
&=- {}^{s^{-1}}e\dwedge \dd {}^{s^{-1}}\om
\end{align*}
as required.

Similarly, for $e,\om_{1},\om_{2} \in  {\scrF_{\textrm{\tiny BV}}}\,_{0}$
we get
\begin{align*}
\langle Q_{\textrm{\tiny BV}}{e^{\prime\,\dagger}}^{\,a_{1}a_{2}a_{3}}_{\mu_{1}\mu_{2}\mu_{3}}|e\odot \om_{1}\odot \om_{2}\rangle &=\langle- e^{\prime\,[a_{1}}_{[\mu_{1}}\odot \om^{\prime\,a_{2}}{}_{|c|\mu_{2}}\odot \om^{\prime\,|c|a_{3}]}_{\mu_{3}]}|e\odot \om_{1}\odot \om_{2}\rangle \nn \\[4pt]
&=-\big({e}^{[a_{1}}_{[\mu_{1}}\,
  {\om_{1}}^{a_{2}}{}_{|c|\mu_{2}}\, {\om_{2}}^{|c|a_{3}]}_{\mu_{3}]}+
  {e}^{[a_{1}}_{[\mu_{1}}\, {\om_{2}}^{a_{2}}{}_{|c|\mu_{2}}\,
  {\om_{1}}^{|c|a_{3}]}_{\mu_{3}]}\big)\nn \\[4pt] 
&=\langle {e^{\prime\,\dagger}}^{\,a_{1}a_{2}a_{3}}_{\mu_{1}\mu_{2}\mu_{3}}|- e\dwedge [\om_{1},\om_{2}]\rangle \nn \\[4pt]
&=:\langle
  {e^{\prime\,\dagger}}^{\,a_{1}a_{2}a_{3}}_{\mu_{1}\mu_{2}\mu_{3}}|-(-1)^{|Q_{\textrm{\tiny
  BV}}|\, |e^{\prime\,\dagger}|} \, {\DD_{\textrm{\tiny BV}}}\,_3(e\odot \om_{1}
  \odot \om_{2})\rangle \ .
\end{align*}
Thus ${\DD_{\textrm{\tiny BV}}}\,_3(e\odot \om_{1}\odot \om_{2})= e\dwedge [\om_{1},\om_{2}]$, and so 
\begin{align*}
\ell_{3}({}^{s^{-1}}e\wedge {}^{s^{-1}}\om_{1}\wedge
                        {}^{s^{-1}}\om_{2})&=s^{-1}\circ {\DD_{\textrm{\tiny
                                        BV}}}\,_3\circ (s\otimes s\otimes s)({}^{s^{-1}}e\wedge {}^{s^{-1}}\om_{1}\wedge {}^{s^{-1}}\om_{2})\nn \\[4pt] &= (-1)^{2\,|{}^{s^{-1}}e|+|{}^{s^{-1}}\om_{1}|}\,( {}^{s^{-1}}e\dwedge [ {}^{s^{-1}}\om_{1}, {}^{s^{-1}}\om_{2}])\nn \\[4pt]
&=- {}^{s^{-1}}e\dwedge[{}^{s^{-1}}\om_{1}, {}^{s^{-1}}\om_{2}]
\end{align*}
as required.

Next, we note that the Lie derivative appears explicitly in \eqref{eq:QBVedag4d}, as the
fourth and second terms expand into 
\begin{align*}
&\partial_{\sigma} \xi^{\sigma}\, {e^{\dagger}}^{\,a_{1}a_{2}a_{3}}_{\mu_{1}\mu_{2}\mu_{3}} + \xi^{\sigma}\,\partial_{\sigma}{e^{\dagger}}^{a_{1}a_{2}a_{3}}_{\mu_{1}\mu_{2}\mu_{3}}+{e^{\dagger}}^{\,a_{1}a_{2}a_{3}}_{\mu_{1}\mu_{2}\mu_{3}}\,\partial_{\sigma}\xi^{\sigma}-{e^{\dagger}}^{\,a_{1}a_{2}a_{3}}_{[\mu_{1}\mu_{2}|\sigma|}\,\partial_{\mu_{3}]}\xi^{\sigma}+{e^{\dagger}}^{\,a_{1}a_{2}a_{3}}_{[\mu_{1}|\sigma|\mu_{2}}\,\partial_{\mu_{3}]}\xi^{\sigma}-{e^{\dagger}}^{\,a_{1}a_{2}a_{3}}_{\sigma[\mu_{1}\mu_{2}}\,\partial_{\mu_{3}]}\xi^{\sigma} \nn \\[4pt]
& \hspace{4cm} =\xi^{\sigma}\,\partial_{\sigma}{e^{\dagger}}^{\,a_{1}a_{2}a_{3}}_{\mu_{1}\mu_{2}\mu_{3}} +\partial_{[\mu_{3}}\xi^{\sigma}\,{e^{\dagger}}^{\,a_{1}a_{2}a_{3}}_{\mu_{1}\mu_{2}]\sigma} +\partial_{[\mu_{2}}\xi^{\sigma}\,{e^{\dagger}}^{\,a_{1}a_{2}a_{3}}_{\mu_{1}|\sigma|\mu_{3}]} +\partial_{[\mu_{1}}\xi^{\sigma}\,{e^{\dagger}}^{\,a_{1}a_{2}a_{3}}_{|\sigma|\mu_{2}\mu_{3}]} 
\end{align*}
where we used $|\xi|=1$ and $|e^{\dagger}|=-1$. This expression
extracts the components of the Lie derivative of a three-form by
dualization, as
expected. Explicitly, for $\xi\in {\scrF_{\textrm{\tiny BV}}}\,_{-1}$ and
$e^{\dagger}\in {\scrF_{\textrm{\tiny BV}}}\,_{1}$ we get
\begin{align*}
&\langle Q_{\textrm{\tiny BV}}{e^{\prime\,\dagger}}^{\,a_{1}a_{2}a_{3}}_{\mu_{1}\mu_{2}\mu_{3}}|\xi\odot e^{\dagger}\rangle  \nn \\[4pt]
& \qquad =\langle \xi^{\prime\,\sigma}\odot\partial_{\sigma}{e^{\prime\,\dagger}}^{\,a_{1}a_{2}a_{3}}_{\mu_{1}\mu_{2}\mu_{3}} +\partial_{[\mu_{3}}\xi^{\prime\,\sigma}\odot{e^{\prime\,\dagger}}^{\,a_{1}a_{2}a_{3}}_{\mu_{1}\mu_{2}]\sigma} +\partial_{[\mu_{2}}\xi^{\prime\,\sigma}\odot{e^{\prime\,\dagger}}^{\,a_{1}a_{2}a_{3}}_{\mu_{1}|\sigma|\mu_{3}]} +\partial_{[\mu_{1}}\xi^{\prime\,\sigma}\odot{e^{\prime\,\dagger}}^{\,a_{1}a_{2}a_{3}}_{|\sigma|\mu_{2}\mu_{3}]} | \xi\odot e^{\dagger}\rangle \nn\\[4pt]
& \qquad =(-1)^{|\xi'|\, |e^{\prime\,\dagger}|}\,
  (\LL_{\xi}e^{\dagger})^{a_{1}a_{2}a_{3}}_{\mu_{1}\mu_{2}\mu_{3}}
  \\[4pt]
& \qquad
  =\langle{e^{\prime\,\dagger}}^{\,a_{1}a_{2}a_{3}}_{\mu_{1}\mu_{2}\mu_{3}}|\LL_{\xi}e^{\dagger}\rangle\\[4pt]
& \qquad =: -(-1)^{|Q_{\textrm{\tiny BV}}|\,
  |e^{\prime\,\dagger}|}\,\langle{e^{\prime\,\dagger}}^{\,a_{1}a_{2}a_{3}}_{\mu_{1}\mu_{2}\mu_{3}}|{\DD_{\textrm{\tiny
  BV}}}\,_2(\xi\odot e^{\dagger})\rangle \ .
\end{align*}
Thus ${\DD_{\textrm{\tiny BV}}}\,_2(\xi\odot e^{\dagger})=\LL_{\xi}e^{\dagger}$, and so 
\begin{align*}
\ell_{2}({}^{s^{-1}}\xi\wedge {}^{s^{-1}}e^{\dagger})=s^{-1}\circ {\DD_{\textrm{\tiny BV}}}\,_2\circ
  (s\otimes s)({}^{s^{-1}}\xi\wedge {}^{s^{-1}} e^{\dagger})= \LL_{{}^{s^{-1}}\xi}{}^{s^{-1}}e^{\dagger}
\end{align*}
as required.

Lastly, we note that the part concerning the local Lorentz transformations in
\eqref{eq:QBVedag4d} may be expanded as
\begin{align*}
4 \, \rho^{[d}{}_{d}\, {e^{\dagger}}^{\,a_{1}a_{2}a_{3}]}_{\mu_{1}\mu_{2}\mu_{3}}= -\rho^{[a_{1}}{}_{d}\,
  {e^{\dagger}}^{|d|a_{2}a_{3}]}_{\mu_{1}\mu_{2}\mu_{3}}-\rho^{[a_{2}}{}_{d}\,
  {e^{\dagger}}^{\,a_{1}|d|a_{3}]}_{\mu_{1}\mu_{2}\mu_{3}} -
  \rho^{[a_{3}}{}_{d}\, {e^{\dagger}}^{\,a_{1}a_{2}]d}_{\mu_{1}\mu_{2}\mu_{3}}
\end{align*}
where we used antisymmetry of $\rho^{ab}$. This contains the action of an infinitesimal Lorentz transformation on a
three-vector: Indeed, the corresponding Euler--Lagrange derivative transforms
as
\begin{align*}
\rho \cdot (e\dwedge R) = \rho
\cdot e\dwedge R +e\dwedge [\rho,R] = \big(\rho^{a}{}_{d}\, e^{d}\wedge R^{bc}+ e^{a}\wedge
\rho^{b}{}_{d}\, R^{dc}+ e^{a}\wedge\rho^{c}{}_{d}\,
R^{bd}\big)\,{\tt E}_{a}\wedge {\tt E}_{b} \wedge {\tt E}_{c} \ .
\end{align*}
Dualizing for $\rho \in {\scrF_{\textrm{\tiny BV}}}\,_{-1}$ and
$e^{\dagger}\in {\scrF_{\textrm{\tiny BV}}}\,_{1}$ we obtain
\begin{align*}
\langle Q_{\textrm{\tiny BV}}{e^{\prime\,\dagger}}^{\,a_{1}a_{2}a_{3}}_{\mu_{1}\mu_{2}\mu_{3}}|\rho\odot e^{\dagger}\rangle &=\langle -\rho^{\prime\,[a_{1}}{}_{d}\odot {e^{\prime\,\dagger}}^{|d|a_{2}a_{3}]}_{\mu_{1}\mu_{2}\mu_{3}}-\rho^{\prime\,[a_{2}}{}_{d}\odot {e^{\prime\,\dagger}}^{\,a_{1}|d|a_{3}]}_{\mu_{1}\mu_{2}\mu_{3}} - \rho^{\prime\,[a_{3}}{}_{d}\odot {e^{\prime\,\dagger}}^{\,a_{1}a_{2}]d}_{\mu_{1}\mu_{2}\mu_{3}}|\rho\odot e^{\dagger}\rangle \nn \\[4pt]
&=-(-1)^{|\rho'|\,
  |e^{\prime\,\dagger}|}\,\big({\rho}^{[a_{1}}{}_{d}\,
  {e^{\dagger}}^{|d|a_{2}a_{3}]}_{\mu_{1}\mu_{2}\mu_{3}}-{\rho}^{[a_{2}}{}_{d}\,
  {e^{\dagger}}^{\,a_{1}|d|a_{3}]}_{\mu_{1}\mu_{2}\mu_{3}} -
  {\rho}^{[a_{3}}{}_{d}\,
  {e^{\dagger}}^{\,a_{1}a_{2}]d}_{\mu_{1}\mu_{2}\mu_{3}} \big) \nn \\[4pt]
&=\langle
  {e^{\prime\,\dag}}^{\,a_{1}a_{2}a_{3}}_{\mu_{1}\mu_{2}\mu_{3}}|\rho\cdot
  e^{\dagger}\rangle \\[4pt]
&=:(-1)^{|Q_{\textrm{\tiny BV}}|\,|e^{\prime\,\dagger}|}\,\langle
  {e^{\prime\,\dag}}^{\,a_{1}a_{2}a_{3}}_{\mu_{1}\mu_{2}\mu_{3}}|{\DD_{\textrm{\tiny
  BV}}}\,_2(\rho\odot e^{\dagger})\rangle \ .
\end{align*}
Thus ${\DD_{\textrm{\tiny BV}}}\,_2(\rho\odot e^{\dagger})=-\rho\cdot e^{\dagger}$ and so
\begin{align*}
\ell_{2}({}^{s^{-1}}\rho\wedge {}^{s^{-1}}e^{\dagger})=s^{-1}\circ
  {\DD_{\textrm{\tiny BV}}}\,_2 \circ (s\otimes s)({}^{s^{-1}}\rho\wedge {}^{s^{-1}}e^{\dagger})=- {}^{s^{-1}}\rho\cdot {}^{s^{-1}} e^{\dagger}
\end{align*}
as desired.
Hence we have recovered the full dynamical $L_{\infty}$-algebra for
the coframe field $e$ given in Section~\ref{sec:Linfty4d}. One
similarly obtains the dynamical brackets for the connection $\om$ from the
second transformation in \eqref{eq:QBV}.

\subsubsection*{Noether identity brackets}

Now we consider the third transformation of \eqref{eq:QBV} specialised
to $d=4$ dimensions:
\begin{align}
Q_{\textrm{\tiny BV}}{\rho^{\dagger}}^{\,a_{1}
  a_{2}}_{\mu_{1}\mu_2\mu_3 \mu_{4}}=& \ -\tfrac32\, e_{d[\mu_{1}}\,
                                     {e^{\dagger}}^{\,da_{1}
                                     a_{2}}_{\mu_{2}\mu_3 \mu_{4}]}
                                     - \om^{a_{1}}{}_{d[\mu_{1}}{}\,
                                     {\om^{\dagger}} ^{\,da_{2}
                                     }_{\mu_{2}\mu_3 \mu_{4}]}
                                     + \partial_{[\mu_{1}}
                                     {\omega^{\dagger}}^{\,a_{1}
                                     a_{2}}_{\mu_{2}\mu_3 \mu_{4}]}
                                     \nn \\ & -\rho^{a_{1}}{}_d{}\,{\rho^{\dagger}}^{\,da_{2}
                                     }_{\mu_{1}\mu_2\mu_3 \mu_{4}}
                                     + \partial_{\sigma}\big(\xi^{\sigma}\,
                                       {\rho^{\dagger}}^{\,a_{1}
                                       a_{2}}_{\mu_{1}\mu_2\mu_3
                                       \mu_{4}}\big) \ .
\label{eq:QBVlambdadag4d}\end{align}
The first three terms extract the components of the brackets corresponding to the Noether identity for local $\sSO_+(1,3)$ Lorentz transformations:
\begin{align} \label{eq:Noetherrot4d}
-\dd^\omega\CF_\omega-\tfrac32\,\CF_e\wedge e = 0 \ ,
\end{align}
as we now demonstrate. 

Dualizing as we did previously, for $\om^\dag\in{\scrF_{\textrm{\tiny BV}}}\,_{1}$ we get
\begin{align*}
\langle Q_{\textrm{\tiny BV}}{\rho^{\prime\,\dagger}}^{\,a_{1}
  a_{2}}_{\mu_{1}\mu_2\mu_3 \mu_{4}} | \om^\dag\rangle = \partial_{[\mu_1}\om^\dag_{\mu_2\mu_3\mu_4]} = \langle {\rho^{\prime\,\dagger}}^{\,a_{1}
  a_{2}}_{\mu_{1}\mu_2\mu_3 \mu_{4}}|\dd\om^\dag\rangle =: \langle {\rho^{\prime\,\dagger}}^{\,a_{1}
  a_{2}}_{\mu_{1}\mu_2\mu_3 \mu_{4}}|{\DD_{\textrm{\tiny BV}}}\,_1\om^\dag\rangle
\end{align*}
where we used $|Q_{\textrm{\tiny BV}}|=1$ and $|\om^\dag|=2$. Thus ${\DD_{\textrm{\tiny BV}}}\,_1\om^\dag=\dd\om^\dag$ and so
\begin{align*}
\ell_1({}^{s^{-1}}\om^\dag) = s^{-1}\circ {\DD_{\textrm{\tiny BV}}}\,_1\circ s({}^{s^{-1}}\om^\dag) = \dd{}^{s^{-1}}\om^\dag \ .
\end{align*}
Next, for $e\in {\scrF_{\textrm{\tiny BV}}}\,_{0}$ and $e^\dag\in {\scrF_{\textrm{\tiny BV}}}\,_{1}$ we obtain
\begin{align*}
\langle Q_{\textrm{\tiny BV}}{\rho^{\prime\,\dagger}}^{\,a_{1}
  a_{2}}_{\mu_{1}\mu_2\mu_3 \mu_{4}} | e\odot e^\dag\rangle &= -\tfrac32\,
e_{d[\mu_1}\,e^{\dag\,da_1a_2}_{\mu_2\mu_3\mu_4]} \\[4pt]
&= -\tfrac32\,(e^\dag\wedge e)^{a_1a_2}_{\mu_1\mu_2\mu_3\mu_4} \\[4pt]
&= \langle {\rho^{\prime\,\dagger}}^{\,a_{1}
  a_{2}}_{\mu_{1}\mu_2\mu_3 \mu_{4}}| -\tfrac32\,e^\dag\wedge e\rangle \\[4pt]
  &=: \langle {\rho^{\prime\,\dagger}}^{\,a_{1}
  a_{2}}_{\mu_{1}\mu_2\mu_3 \mu_{4}}| {\DD_{\textrm{\tiny BV}}}\,_2(e\odot e^\dag)\rangle \ .
\end{align*}
Thus ${\DD_{\textrm{\tiny BV}}}\,_2(e\odot e^\dag) = -\frac32\, e^\dag\wedge e$ and so
\begin{align*}
\ell_2({}^{s^{-1}}e\wedge{}^{s^{-1}}e^\dag)=s^{-1}\circ {\DD_{\textrm{\tiny BV}}}\,_2\circ(s\otimes s)({}^{s^{-1}}e\wedge{}^{s^{-1}}e^\dag) = -s^{-1}\circ{\DD_{\textrm{\tiny BV}}}\,_2(e\odot e^\dag) = \tfrac32\, {}^{s^{-1}}e^\dag\wedge{}^{s^{-1}}e \ .
\end{align*}
Continuing in an identical fashion, for $\om\in {\scrF_{\textrm{\tiny BV}}}\,_{0}$ and $\om^\dag\in {\scrF_{\textrm{\tiny BV}}}\,_{1}$ we find
\begin{align*}
\ell_2({}^{s^{-1}}\om\wedge{}^{s^{-1}}\om^\dag) = {}^{s^{-1}}\om\wedge{}^{s^{-1}}\om^\dag \ .
\end{align*}
The Noether identity \eqref{eq:Noetherrot4d} is then encoded in $Q_{\textrm{\tiny BV}}^2\rho^\dag=0$.

It is easy to see that the fourth term in \eqref{eq:QBVlambdadag4d}
dualizes to the action of a local Lorentz transformation on a two-vector: For
$\rho\in {\scrF_{\textrm{\tiny BV}}}\,_{-1}$ and $\rho^\dag\in
{\scrF_{\textrm{\tiny BV}}}\,_{2}$, via similar manipulations we find
$$
\ell_2({}^{s^{-1}}\rho\wedge{}^{s^{-1}}\rho^\dag) =
-{}^{s^{-1}}\rho\cdot{}^{s^{-1}}\rho^\dag \ .
$$
Lastly, the fifth term in \eqref{eq:QBVlambdadag4d} extracts the
components of the Lie derivative of a four-form: For $\xi\in
{\scrF_{\textrm{\tiny BV}}}\,_{-1}$ and $\rho^\dag\in
{\scrF_{\textrm{\tiny BV}}}\,_{2}$ we obtain
\begin{align*}
\langle Q_{\textrm{\tiny BV}}{\rho^{\prime\,\dagger}}^{\,a_{1}
  a_{2}}_{\mu_{1}\mu_2\mu_3 \mu_{4}} |\xi\odot\rho^\dag\rangle
  = \partial_\sigma\big(\xi^\sigma\, {\rho^{\dagger}}^{\,a_{1}
  a_{2}}_{\mu_{1}\mu_2\mu_3 \mu_{4}}\big) = \langle {\rho^{\prime\,\dagger}}^{\,a_{1}
  a_{2}}_{\mu_{1}\mu_2\mu_3 \mu_{4}} | \LL_\xi\rho^\dag\rangle =: \langle {\rho^{\prime\,\dagger}}^{\,a_{1}
  a_{2}}_{\mu_{1}\mu_2\mu_3 \mu_{4}} | {\DD_{\textrm{\tiny
  BV}}}\,_2(\xi\odot\rho^\dag)\rangle 
\end{align*}
and so
$$
\ell_2({}^{s^{-1}}\xi\wedge{}^{s^{-1}}\rho^\dag)=s^{-1}\circ
{\DD_{\textrm{\tiny BV}}}\,_2 \circ(s\otimes
s)({}^{s^{-1}}\xi\wedge{}^{s^{-1}}\rho^\dag) =
\LL_{{}^{s^{-1}}\xi}{}^{s^{-1}}\rho^\dag \ .
$$
These thus recover the actions of local Lorentz transformations and diffeomorphisms
on the Noether identities corresponding to the $\sSO_+(1,3)$ gauge
symmetry. One similarly obtains the brackets for the Noether
identities corresponding to diffeomorphisms, together with the action
of gauge transformations on them, from the fourth transformation in
\eqref{eq:QBV}.

\bigskip


\begin{thebibliography}{99} 

%\cite{Blumenhagen:2010hj}
\bibitem{Blumenhagen:2010hj}
  R.~Blumenhagen and E.~Plauschinn,
  ``Nonassociative gravity in string theory?,''
  J.\ Phys.\ A {\bf 44} (2011) 015401
  %doi:10.1088/1751-8113/44/1/015401
  [arXiv:1010.1263 [hep-th]].
  %%CITATION = doi:10.1088/1751-8113/44/1/015401;%%
  
%\cite{Lust:2010iy}
\bibitem{Lust:2010iy}
  D.~L\"ust,
  ``T-duality and closed string noncommutative (doubled) geometry,''
  JHEP {\bf 1012} (2010) 084
  %doi:10.1007/JHEP12(2010)084
  [arXiv:1010.1361 [hep-th]].
  %%CITATION = doi:10.1007/JHEP12(2010)084;%%

%\cite{Blumenhagen:2011ph}
\bibitem{Blumenhagen:2011ph}
  R.~Blumenhagen, A.~Deser, D.~L\"ust, E.~Plauschinn and F.~Rennecke,
  ``Non-geometric fluxes, asymmetric strings and nonassociative geometry,''
  J.\ Phys.\ A {\bf 44} (2011) 385401
  %doi:10.1088/1751-8113/44/38/385401
  [arXiv:1106.0316 [hep-th]].
  %%CITATION = doi:10.1088/1751-8113/44/38/385401;%%
  
%\cite{Mylonas:2012pg}
\bibitem{Mylonas:2012pg}
  D.~Mylonas, P.~Schupp and R.~J.~Szabo,
  ``Membrane sigma-models and quantization of non-geometric flux backgrounds,''
  JHEP {\bf 1209} (2012) 012
  %doi:10.1007/JHEP09(2012)012
  [arXiv:1207.0926 [hep-th]].
  %%CITATION = doi:10.1007/JHEP09(2012)012;%%

%\cite{Blumenhagen:2013zpa}
\bibitem{Blumenhagen:2013zpa}
  R.~Blumenhagen, M.~Fuchs, F.~Hassler, D.~L\"ust and R.~Sun,
  ``Nonassociative deformations of geometry in double field theory,''
  JHEP {\bf 1404} (2014) 141
  %doi:10.1007/JHEP04(2014)141
  [arXiv:1312.0719 [hep-th]].
  %%CITATION = doi:10.1007/JHEP04(2014)141;%%

%\cite{Mylonas:2013jha}
\bibitem{Mylonas:2013jha}
  D.~Mylonas, P.~Schupp and R.~J.~Szabo,
  ``Non-geometric fluxes, quasi-Hopf twist deformations and nonassociative quantum mechanics,''
  J.\ Math.\ Phys.\  {\bf 55} (2014) 122301
  %doi:10.1063/1.4902378
  [arXiv:1312.1621 [hep-th]].
  %%CITATION = doi:10.1063/1.4902378;%%
  
%\cite{Aschieri:2015roa}
\bibitem{Aschieri:2015roa}
  P.~Aschieri and R.~J.~Szabo,
  ``Triproducts, nonassociative star products and geometry of $R$-flux string compactifications,''
  J.\ Phys.\ Conf.\ Ser.\  {\bf 634} (2015) 012004
  %doi:10.1088/1742-6596/634/1/012004
  [arXiv:1504.03915 [hep-th]].
  %%CITATION = doi:10.1088/1742-6596/634/1/012004;%%
  
%\cite{Blumenhagen:2016vpb}
\bibitem{Blumenhagen:2016vpb}
  R.~Blumenhagen and M.~Fuchs,
  ``Towards a theory of nonassociative gravity,''
  JHEP {\bf 1607} (2016) 019
  %doi:10.1007/JHEP07(2016)019
  [arXiv:1604.03253 [hep-th]].
  %%CITATION = doi:10.1007/JHEP07(2016)019;%%

%\cite{Aschieri:2017sug}
\bibitem{Aschieri:2017sug}
  P.~Aschieri, M.~Dimitrijevi\'c \'Ciri\'c and R.~J.~Szabo,
  ``Nonassociative differential geometry and gravity with non-geometric fluxes,''
  JHEP {\bf 1802} (2018) 036
  %doi:10.1007/JHEP02(2018)036
  [arXiv:1710.11467 [hep-th]].
  %%CITATION = doi:10.1007/JHEP02(2018)036;%%

%\cite{Barnes:2015uxa}
\bibitem{Barnes:2015uxa}
  G.~E.~Barnes, A.~Schenkel and R.~J.~Szabo,
  ``Nonassociative geometry in quasi-Hopf representation categories II: Connections and curvature,''
  J.\ Geom.\ Phys.\  {\bf 106} (2016) 234--255
  %doi:10.1016/j.geomphys.2016.04.005
  [arXiv:1507.02792 [math.QA]].
  %%CITATION = doi:10.1016/j.geomphys.2016.04.005;%%

%\cite{Barnes:2016cjm}
\bibitem{Barnes:2016cjm}
  G.~E.~Barnes, A.~Schenkel and R.~J.~Szabo,
  ``Working with nonassociative geometry and field theory,''
  Proc. Sci. {\bf 263} (2016) 081
  %doi:10.22323/1.263.0081
  [arXiv:1601.07353 [hep-th]].
  %%CITATION = doi:10.22323/1.263.0081;%%

%\cite{Berends:1984rq}
\bibitem{Berends:1984rq}
  F.~A.~Berends, G.~J.~H.~Burgers and H.~van Dam,
  ``On the theoretical problems in constructing interactions involving higher spin massless particles,''
  Nucl.\ Phys.\ B {\bf 260} (1985) 295--322.
  %doi:10.1016/0550-3213(85)90074-4
  %%CITATION = doi:10.1016/0550-3213(85)90074-4;%%

%\cite{Zwiebach:1992ie}
\bibitem{Zwiebach:1992ie}
  B.~Zwiebach,
  ``Closed string field theory: Quantum action and the BV master equation,''
  Nucl.\ Phys.\ B {\bf 390} (1993) 33--152
  %doi:10.1016/0550-3213(93)90388-6
  [arXiv:hep-th/9206084].
  %%CITATION = doi:10.1016/0550-3213(93)90388-6;%%

\bibitem{LadaStasheff92} 
T.~Lada and J.~Stasheff,
  ``Introduction to sh Lie algebras for physicists,''
  Int.\ J.\ Theor.\ Phys.\  {\bf 32} (1993) 1087--1104
  %doi:10.1007/BF00671791
  [arXiv:hep-th/9209099].
  %%CITATION = doi:10.1007/BF00671791;%%

\bibitem{Linfty} 
O.~Hohm and B.~Zwiebach,
  ``$L_{\infty}$-algebras and field theory,''
  Fortsch.\ Phys.\  {\bf 65} (2017)  1700014
  %doi:10.1002/prop.201700014
  [arXiv:1701.08824 [hep-th]].
  %%CITATION = doi:10.1002/prop.201700014;%%

%\cite{Stasheff97}
\bibitem{Stasheff97}
J.~Stasheff, 
 ``The (secret?) homological algebra of the Batalin-Vilkovisky approach,''
Contemp. Math. \textbf{219} (1998) 195-210 [arXiv:9712157 [hep-th]].
  
\bibitem{BVChristian} 
B.~Jur\v{c}o, L.~Raspollini, C.~S\"amann and M.~Wolf,
``$L_\infty$-algebras of classical field theories and the Batalin--Vilkovisky formalism,''
Fortsch.\ Phys.\  {\bf 67} (2019) 1900025
  [arXiv:1809.09899 [hep-th]].
  %%CITATION = ARXIV:1809.09899;%%

%\cite{Kontsevich:1997vb}
\bibitem{Kontsevich:1997vb}
  M.~Kontsevich,
  ``Deformation quantization of Poisson manifolds,''
  Lett.\ Math.\ Phys.\  {\bf 66} (2003) 157--216
  %doi:10.1023/B:MATH.0000027508.00421.bf
  [arXiv:q-alg/9709040].
  %%CITATION = doi:10.1023/B:MATH.0000027508.00421.bf;%%
  
%\cite{Blumenhagen:2018kwq}
\bibitem{Blumenhagen:2018kwq}
  R.~Blumenhagen, I.~Brunner, V.~G.~Kupriyanov and D.~L\"ust,
  ``Bootstrapping noncommutative gauge theories from $L_\infty$-algebras,''
  JHEP {\bf 1805} (2018) 097
  %doi:10.1007/JHEP05(2018)097
  [arXiv:1803.00732 [hep-th]].
  %%CITATION = doi:10.1007/JHEP05(2018)097;%%

%\cite{Kupriyanov:2019cug}
\bibitem{Kupriyanov:2019cug}
  V.~G.~Kupriyanov,
  ``Noncommutative deformation of Chern--Simons theory,''
  Eur.\ Phys.\ J.\ C {\bf 80} (2020)  42
  %doi:10.1140/epjc/s10052-019-7573-y
  [arXiv:1905.08753 [hep-th]].

%\cite{Cattaneo:2015xca}
\bibitem{Cattaneo:2015xca}
  A.~S.~Cattaneo and M.~Schiavina,
  ``BV--BFV approach to general relativity: Einstein--Hilbert action,''
  J.\ Math.\ Phys.\  {\bf 57} (2016) 023515
  %doi:10.1063/1.4941410
  [arXiv:1509.05762 [math-ph]].
  %%CITATION = doi:10.1063/1.4941410;%%

%\cite{Reiterer:2018wcb}
\bibitem{Reiterer:2018wcb}
  M.~Reiterer and E.~Trubowitz,
  ``The graded Lie algebra of general relativity,''
  arXiv:1812.11487 [math-ph].
  %%CITATION = ARXIV:1812.11487;%%

%\cite{Nutzi:2018vkl}
\bibitem{Nutzi:2018vkl}
  A.~N\"utzi and M.~Reiterer,
  ``Scattering amplitudes in YM and GR as minimal model brackets and their recursive characterization,''
  arXiv:1812.06454 [math-ph].
  %%CITATION = ARXIV:1812.06454;%%

\bibitem{ECBV} 
A.~S.~Cattaneo and M.~Schiavina,
  ``BV--BFV approach to general relativity: Palatini--Cartan--Holst action,''
  arXiv:1707.06328 [math-ph].
  %%CITATION = ARXIV:1707.06328;%%

%\cite{degenerated'AuriaRegge}
\bibitem{degenerated'AuriaRegge}
R.~D'Auria and T.~Regge,
``Gravity theories with asymptotically flat instantons,''
Nucl. Phys. B \textbf{195} (1982) 308-324. 

%\cite{degenerateTseytlin}
\bibitem{degenerateTseytlin} 
A.~A.~Tseytlin, ``On The First Order Formalism In Quantum Gravity,'' J. Phys. A \textbf{15} (1982) L105.

%\cite{degenerateGiddings}
\bibitem{degenerateGiddings}
S.~B.~Giddings, ``Spontaneous breakdown of diffeomorphism invariance,'' Phys. Lett. B \textbf{268} (1991) 17-20.

%\cite{NCProc}
\bibitem{NCProc}
M.~Dimitrijevi\'c \'Ciri\'c, G.~Giotopoulos, V.~Radovanovi\'c and
R.~J.~Szabo, 
``Homotopy Lie Algebras of Gravity and their Braided Deformations,'' PoS CORFU \textbf{2019}, 198 (2020)
[arXiv:2005.00454 [hep-th]].
%%CITATION = ARXIV:2005.00454;%%

%\cite{SymmBreaking}
\bibitem{SymmBreaking}
C.~Elliott, O.~Gwilliam,
``Spontaneous symmetry breaking: a view from derived geometry,''
[arXiv:2008.03599 [math-ph]].


%\cite{Cattaneo:2017ztd}
\bibitem{Cattaneo:2017ztd}
A.~S.~Cattaneo, M.~Schiavina and I.~Selliah,
``BV-equivalence between triadic gravity and $BF$ theory in three dimensions,''
Lett.\ Math.\ Phys.\  \textbf{108} (2018) 1873--1884
%doi:10.1007/s11005-018-1060-5
[arXiv:1707.07764 [math-ph]].
%3 citations counted in INSPIRE as of 20 Feb 2020

%\cite{Berktav:2019sgl}
\bibitem{Berktav:2019sgl}
  K.~\.I.~Berktav,
  ``Stacky formulations of Einstein gravity,''
  arXiv:1907.00665 [math.AG].
  %%CITATION = ARXIV:1907.00665;%%

%\cite{Berger2007}
\bibitem{Berger2007}
C. Berger and I. Moerdijk, 
``Resolution of coloured operads and rectification of homotopy
algebras,'' 
Contemp. Math. {\bf 431} (2007) 31--58 
[arXiv:math.AT/0512576].
  
%\cite{Freidel:2012np}
\bibitem{Freidel:2012np}
  L.~Freidel and S.~Speziale,
  ``On the relations between gravity and $BF$ theories,''
  SIGMA {\bf 8} (2012) 032
  %doi:10.3842/SIGMA.2012.032
  [arXiv:1201.4247 [gr-qc]].
  %%CITATION = doi:10.3842/SIGMA.2012.032;%%
  
\bibitem{LadaMarkl94} 
T.~Lada and M.~Markl,
  ``Strongly homotopy Lie algebras,''
  Commun. Algebra {\bf 23} (1995) 2147--2161
  [arXiv:hep-th/9406095].
  %%CITATION = HEP-TH/9406095;%%
  
%\cite{Kajiura:2003ax}
\bibitem{Kajiura:2003ax}
  H.~Kajiura,
  ``Noncommutative homotopy algebras associated with open strings,''
  Rev.\ Math.\ Phys.\  {\bf 19} (2007) 1--99
  %doi:10.1142/S0129055X07002912
  [arXiv:math-qa/0306332].
  %%CITATION = doi:10.1142/S0129055X07002912;%%
 
\bibitem{Costello}
 K.~Costello and O.~Gwilliam,
 \emph{Factorization Algebras in Quantum Field Theory}
 (Cambridge University Press, 2017).
 
\bibitem{Avery:2015rga}
  S.~G.~Avery and B.~U.~W.~Schwab,
  ``Noether's second theorem and Ward identities for gauge symmetries,''
  JHEP {\bf 1602} (2016) 031
  %doi:10.1007/JHEP02(2016)031
  [arXiv:1510.07038 [hep-th]].
  %%CITATION = doi:10.1007/JHEP02(2016)031;%%

\bibitem{Henneaux:1989jq}
  M.~Henneaux,
  ``Lectures on the antifield-BRST formalism for gauge theories,''
  Nucl.\ Phys.\ Proc.\ Suppl.\ A  {\bf 18} (1990) 47--106.
  %doi:10.1016/0920-5632(90)90647-D
  %%CITATION = doi:10.1016/0920-5632(90)90647-D;%%
  
%\cite{Kajiura:2001ng}
\bibitem{Kajiura:2001ng}
  H.~Kajiura,
  ``Homotopy algebra morphism and geometry of classical string field theory,''
  Nucl.\ Phys.\ B {\bf 630} (2002) 361--432
  %doi:10.1016/S0550-3213(02)00174-8
  [arXiv:hep-th/0112228].
  %%CITATION = doi:10.1016/S0550-3213(02)00174-8;%%

%\cite{Fukaya:2001uc}
\bibitem{Fukaya:2001uc}
  K.~Fukaya,
  ``Deformation theory, homological algebra and mirror symmetry,''
  in: {\sl Geometry and Physics of Branes}, eds. U.~Bruzzo,  V.~Gorini
  and U.~Moschella (Institute of Physics, 2003) 121--209.
  %%CITATION = INSPIRE-579509;%%

\bibitem{Blumenhagen:2018shf}
  R.~Blumenhagen, M.~Brinkmann, V.~G.~Kupriyanov and M.~Traube,
  ``On the uniqueness of $L_\infty$-bootstrap: Quasi-isomorphisms are Seiberg--Witten maps,''
  J.\ Math.\ Phys.\  {\bf 59} (2018) 123505
 % doi:10.1063/1.5048352
  [arXiv:1806.10314 [hep-th]].
  %%CITATION = doi:10.1063/1.5048352;%%
 
\bibitem{ChristianLoops} 
B.~Jur\v{c}o, L.~Raspollini, C.~S\"amann and M.~Wolf,
``Loop Amplitudes and Quantum Homotopy algebras,''
JHEP {\bf 07} (2020) 003 [arXiv:1912.06695 [hep-th]].	
 
\bibitem{Mnev:2017notes}
P.~Mnev, ``Lectures on Batalin-Vilkovisky formalism and application in topological quantum field theory,'' [arXiv:1707.08096 [math-ph]].

\bibitem{Barnich:2016rwk}
  G.~Barnich, P.~Mao and R.~Ruzziconi,
  ``Conserved currents in the Cartan formulation of general relativity,''
  arXiv:1611.01777 [gr-qc].
  %%CITATION = ARXIV:1611.01777;%%
  
\bibitem{Montesinos:2018ujm}
  M.~Montesinos, R.~Romero and B.~D\'{\i}az,
  ``Symmetries of first-order Lovelock gravity,''
  Class.\ Quant.\ Grav.\  {\bf 35} (2018) 235015
  %doi:10.1088/1361-6382/aaea21
  [arXiv:1808.03723 [gr-qc]].
  %%CITATION = doi:10.1088/1361-6382/aaea21;%%

%\cite{Birmingham:1991ty}
\bibitem{Birmingham:1991ty}
  D.~Birmingham, M.~Blau, M.~Rakowski and G.~Thompson,
  ``Topological field theory,''
  Phys.\ Rept.\  {\bf 209} (1991) 129--340.
  %doi:10.1016/0370-1573(91)90117-5
  %%CITATION = doi:10.1016/0370-1573(91)90117-5;%%
  
%\cite{BFasHigher}
\bibitem{BFasHigher}
  F.~Girelli and H.~Pfeiffer,
  ``Higher gauge theory: Differential versus integral formulation,''
  J.\ Math.\ Phys.\  {\bf 45} (2004) 3949--3971
  %doi:10.1063/1.1790048
  [arXiv:hep-th/0309173].
  %%CITATION = doi:10.1063/1.1790048;%%

\bibitem{Jackiw1980}
R.~Jackiw, ``Invariance, Symmetry and Periodicity in Gauge Theories,'' Acta Phys. Austr. Suppl. \textbf{XXII}, 383 (1980).

\bibitem{SUGRAbook}
D.~Z.~Freedman and A.~Van~Proeyen,
\emph{Supergravity} (Cambridge University Press, 2012).

\bibitem{Prabhu2017}
K.~Prabhu, ``The First Law of Black Hole Mechanics for Fields with Internal Gauge Freedom,'' Class. Quant. Grav. {\bf 34} (2017), no. 3 035011 [arXiv:1511.0388[gr-qc]].	

\bibitem{Costello2}
  K.~Costello,
  \emph{Renormalization and Effective Field Theory}
  (American Mathematical Society, 2011).
  
\bibitem{BB} 
L.~Baulieu and M.~P.~Bellon,
  ``$p$-forms and supergravity: Gauge symmetries in curved space,''
  Nucl.\ Phys.\ B {\bf 266} (1986) 75--124.
  %doi:10.1016/0550-3213(86)90178-1
  %%CITATION = doi:10.1016/0550-3213(86)90178-1;%%
  
\bibitem{MSS}
O.~Moritsch, M.~Schweda and S.~P.~Sorella,
  ``Algebraic structure of gravity with torsion,''
  Class.\ Quant.\ Grav.\  {\bf 11} (1994) 1225--1242
  %doi:10.1088/0264-9381/11/5/010
  [arXiv:hep-th/9310179].
  %%CITATION = doi:10.1088/0264-9381/11/5/010;%%

\bibitem{Piguet} 
O.~Piguet,
  ``Ghost equations and diffeomorphism invariant theories,''
  Class.\ Quant.\ Grav.\  {\bf 17} (2000) 3799--3806
  %doi:10.1088/0264-9381/17/18/314
  [arXiv:hep-th/0005011].
  %%CITATION = doi:10.1088/0264-9381/17/18/314;%%

\bibitem{BV81}
I.~A.~Batalin and G.~A.~Vilkovisky,
  ``Gauge algebra and quantization,''
  Phys.\ Lett.\ B {\bf 102} (1981) 27--31.
  %doi:10.1016/0370-2693(81)90205-7
  %%CITATION = doi:10.1016/0370-2693(81)90205-7;%%
  
\bibitem{Dragon:2012au}
  N.~Dragon and F.~Brandt,
  ``BRST symmetry and cohomology,''
  %doi:10.1142/9789814412551_0001
  in: {\sl Strings, Gauge Fields, and the Geometry Behind}, eds. A.~Rebhan, L.~Katzarkov, J.~Knapp, R.~Rashkov and E.~Scheidegger (World Scientific, 2012) 3--86
  [arXiv:1205.3293 [hep-th]].
  %%CITATION = doi:10.1142/9789814412551_0001;%%

\bibitem{Witten:1988hc}
  E.~Witten,
  ``$(2+1)$-dimensional gravity as an exactly soluble system,''
  Nucl.\ Phys.\ B {\bf 311} (1988) 46--78.
  %doi:10.1016/0550-3213(88)90143-5
  %%CITATION = doi:10.1016/0550-3213(88)90143-5;%%
  
\bibitem{Carlip}  
S. Carlip,
\emph{Quantum Gravity in 2+1 Dimensions} (Cambridge University Press,
1998).

%\cite{Montesinos:2020pxv}
\bibitem{Montesinos:2020pxv}
  M.~Montesinos, R.~Romero and D.~Gonzalez,
  ``The gauge symmetries of $f(R)$ gravity with torsion in the Cartan formalism,''
  Class.\ Quant.\ Grav.\  {\bf 37} (2020) 045008
  %doi:10.1088/1361-6382/ab6272
  [arXiv:2001.08759 [gr-qc]].
  %%CITATION = doi:10.1088/1361-6382/ab6272;%%
  
\end{thebibliography}
\end{document}